\numberwithin{equation}{section}
\def\red{\textcolor[rgb]{0.00,0.00,0.00}}
\def\blue{\textcolor[rgb]{0.00,0.00,0.00}}
\renewcommand{\up}{{\mathop{\uparrow}}}
\renewcommand{\down}{{\mathop{\downarrow}}}
\newcommand{\sH}{{\sf H}}
\newcommand{\sN}{{\sf N}}
\newcommand{\sK}{{\sf K}}
\newcommand{\sV}{{\tt V}}
\newcommand{\RR}{\mathbb{R}}
\newcommand{\CC}{\mathbb{C}}
\newcommand{\PSU}{\mathrm{PSU}}
\renewcommand{\bO}{\mathbb O}
\newcommand{\Pin}{\mathop{{\rm Pin}}\nolimits}
\newcommand{\Out}{\mathop{{\rm Out}}\nolimits}
\newcommand{\Conj}{\mathop{{\rm Conj}}\nolimits}
\newcommand{\Inv}{\mathop{{\rm Inv}}\nolimits}
\renewcommand{\phi}{\varphi}
\newcommand{\Stand}{\mathop{{\rm Stand}}\nolimits}
\newcommand{\Mob}{{\rm\textsf{M\"ob}}}
\newcommand{\ucW}{{\uline \cW}}
\newcommand{\uW}{{\uline W}}
\newcommand{\uG}{{\underline{G}}}
\renewcommand\mlabel{\label} 
\newtheorem{assumption}{Assumption}
\begin{document}

\title{Covariant homogeneous nets of standard subspaces }

\author{
{\bf Vincenzo Morinelli}\\
Dipartimento di Matematica, Universit\`a di Roma ``Tor
Vergata''\\ 
E-mail: {\tt morinell@mat.uniroma2.it}
\and
{\bf Karl-Hermann  Neeb}\\
Department  Mathematik, FAU Erlangen-N\"urnberg, \\ 
E-mail: {\tt  neeb@math.fau.de}}

\date{}
\maketitle 

\begin{abstract}
Rindler wedges are fundamental localization regions in AQFT. They are determined by the one-parameter group of boost symmetries fixing the wedge. The algebraic canonical construction of the free field provided by Brunetti-Guido-Longo (BGL) arises from the wedge-boost identification, the BW property and the 
PCT Theorem. 

In this paper we generalize this picture in the following way. Firstly, given a $\mathbb Z_2$-graded Lie group we define a (twisted-)local poset of abstract wedge regions. We classify (semisimple) Lie algebras 
supporting abstract wedges and study special wedge configurations. This allows us to exhibit an analog of the Haag-Kastler one-particle net axioms for such general Lie groups without referring to any specific spacetime. This set of axioms supports a first quantization net obtained by generalizing the BGL construction. The construction is possible for a large family of 
Lie groups and provides several new models. We further comment on orthogonal wedges and extension of symmetries.

\end{abstract}
\tableofcontents

\section{Introduction}

Quantum Field Theory (QFT) lives in a tension between the locality principle and the underlying group of symmetries characterizing the theory. On one hand, 
it is a physical principle that every interesting quantity of a theory 
should be deducible by local measurements, namely---in the language of 
Algebraic Quantum Field Theory (AQFT)---by the structure 
of the local algebras (see e.g.~\cite{Ha96}). 
On the other hand, the symmetries of a theory provide a 
feature to describe physical objects,  a ``key to nature’s secrets,'' 
as it happens in the standard model \cite{We05, We11}.

In AQFT, models are specified by a net of von Neumann algebras associated to 
causally complete spacetime regions satisfying fundamental quantum and relativistic principles, such as 
isotony, locality, covariance, positivity of the energy, and 
existence of a vacuum state. An important bridge between the geometry 
and the algebraic structure is the Bisognano--Wichmann (BW) 
property of (A)QFT  claiming that  the modular group of the algebra 
associated to any Rindler wedge $W$ inside 
Minkowski spacetime with respect to the vacuum state 
implements unitarily the covariant one-parameter group of boosts fixing the wedge $W$. As a consequence, the algebraic structure of the model, through the Tomita--Takesaki theory, contains the information about the symmetry group acting on the model. Starting with the BW property, 
one can enlarge the symmetry group of a QFT \cite{GLW, MT18}, find new relations among field theories \cite{GLW, LMPR19, MR}, establish 
proper relations among spin and statistics \cite{GL95}, 
and compute entropy in QFT \cite{LX, Witt}. For 
recent results on this property we refer to \cite{Gu19, DM}.

Particles are field-derived concepts that can be described as unitary positive energy representations of the symmetry group. They are building blocks 
to construct Quantum Field Theories. The operator-valued 
distribution $\Phi_U$ defining the free field 
associated to any  particle $U$ is not provided by a canonical construction, see e.g.~\cite{BGL02,LMR16}. On the other hand, 
the von Neumann algebra net generated by $\Phi_U$ satisfies the Bisognano--Wichmann property and the PCT Theorem\footnote{The spacetime reflection 
$j_1(t,x_1,\ldots,x_n)=(-t,-x_1,\ldots, x_n) $ is implemented by the 
modular conjugation corresponding to the standard right wedge $W_1$.}. 
These properties provide the tools for 
a canonical construction of the free algebra net \cite{BGL02}: 
Segal's second quantization gives the vacuum representation 
of the Weyl algebra on the Fock space associated 
with the one-particle Hilbert space.  The Araki lattice of von Neumann algebras is uniquely determined by the local one-particle structure 
encoded in the lattice of closed real subspaces, 
the first quantization \cite{Ar63}.  
As a result of the Tomita--Takesaki modular theory for real subspaces, the set of real states for a particle $U$ localized in a wedge region is uniquely determined by the couple 
$(e^{-2\pi K_W}, U(j_W))$ where  $U(j_W)$ is the antiunitary 
implementation of the wedge reflection and $K_W$ is the generator of the 
one-parameter group of boosts associated to the wedge $W$. 
They satisfy the Tomita relation $U(j_W) e^{2\pi K_W}U(j_W)= e^{-2\pi K_W}$. The 
one-particle states and the local algebra associated to bounded causally 
complete regions are obtained by wedge state spaces and algebra intersection, 
respectively. 

Conversely, every pair $(x,\sigma)$, consisting of an element 
of the Poincar\'e--Lie algebra and an involution $\sigma$ satisfying 
$\Ad(\sigma)x = x$ specifies for every (anti-)unitary representation 
$(U,\cH)$ of the Poincar\'e group a pair $(\Delta, J) 
= (e^{2\pi i \partial U(x)}, U(\sigma))$ that in turn defines a standard 
subspace $\sV \subeq \cH$.  
This construction, called the {\it BGL construction}, was introduced in
 \cite{BGL02} and allows us to 
observe: \textit{The algebraic construction of the free fields is uniquely determined by its symmetries and the correspondence 
between spacetime regions and their relative position with symmetries}. 
In this sense, due to the one-to-one correspondence between 
boosts and the corresponding wedges, one should be able to 
specify the underlying symmetry structure of 
a quantum field theory without any reference to the spacetime. Then one can reconstruct the spacetime features, such 
as locality and region inclusions from the symmetry group. 

With this claim in mind,  we generalize the above picture as follows. Given 
a suitable Lie group~$G$, we first define an abstract wedge space. 
We then endow the wedge space with a $G$-action, 
a notion of causal complement and 
an order structure. 
  Eventually, starting from an (anti-)unitary representation of a 
graded Lie group $G$, we construct the analogue of the BGL 
one-particle net by the abstract setting.
 
We now collect the motivation and additional explanations of 
the fundamental structure we will use. In order to obtain 
a one-particle net by the Tomita--Takesaki theory we need to start with a 
graded Lie group $G=G^\up\rtimes \Z_2$, such as 
the improper M\"obius group $\PGL_2(\R)$ 
or the proper Poincar\'e group $\cP_+$. 
For the moment,  we assume that $Z(G^\up)=\{e\}$ and that $G^\up$ is connected. 

The key features of our approach are the following: \\
$\bullet$ \textit{Abstract boost generator.} The abstract one-parameter group 
of boosts are generated by elements $x$ 
in the Lie algebra $\fg$ of $G$ defining a three grading $\fg=\g_1 \oplus \g_0 \oplus \g_{-1}$ in the adjoint representation by 
$\g_j = \ker(\ad x - j \id_\g)$. 
To see how this complies with the well known models, see 
Examples~\ref{ex:models}. 
We call such elements $x\in\g$ Euler elements 
because they corresponds to the linear Euler vector field 
on the open embedding $\g_1 \into G/{\bf P}, x \mapsto \exp(x) {\bf P}$, 
where ${\bf P} \subeq G$ is the connected subgroup corresponding to the 
Lie algebra $\g_0 + \g_{-1}$. For more on the underlying geometry of 
theses spaces, we refer to  \cite{BN04}. 

$\bullet$ \textit{The wedge reflection} is obtained by 
analytic continuation of the one-parameter group of boosts 
associated to the wedge at $i\pi$. For instance, on Minkowski space,  the wedge reflection 
$j_1=\Lambda_1(i\pi)$ is obtained by analytic extension of the one-parameter group of boosts in the first direction $\Lambda_1(t)=\exp(\sigma_1t)$ where 
$(\sigma_i)_{i=1,2,3}$ are the Pauli matrices.  In our general setting, the reflection $\sigma$, called Euler involution, associated to an Euler element $x$ is determined by the analytic continuation 
of the one-parameter group in the adjoint representation of the Lie algebra 
via $\Ad(\sigma)=e^{\pi i \ad x}$ (see \eqref{eq:eul}). 

$\bullet$ \textit{Euler wedge.} An Euler wedge is defined as a 
couple $W=(x_W, \sigma_W)$ of an Euler element and the related Euler involution.
 The need to use the couple is to 
implement the $G$-action 
on the wedge space (see \eqref{eq:homact}) and to establish 
 the relation with the standard subspaces $\sV$ and the 
corresponding modular objects $(\Delta_\sV, J_\sV)$. 
We further remark that,  in principle, it is not necessary to  assume that the involution $\sigma_W$ satisfies 
\[ \Ad(\sigma_W)=e^{\pi i \ad x_W} \] 
in the adjoint representation, 
but only satisfying the proper commutation relation 
$\Ad(\sigma_W) x_W =  x_W$, cf.~Proposition \ref{prop:2.1}.  
 
$\bullet$ \textit{$G^\uparrow$-covariance}. There is an action 
of the group $G$ on the wedge space 
given by an adjoint action on both components 
that takes care of the grading (see \eqref{eq:cG-act}). 
In this way the language of Euler wedges is consistent with the one of the standard subspaces, cf.~Section~\ref{sec:1}.
 
$\bullet$ \textit{Locality.} Complementary wedges correspond to 
inverted one-parameter groups of boosts. For instance dilations associated to causally complementary intervals in chiral theory or boosts associated to complementary 
wedges are inverse to each other.  On the abstract wedge space this is captured by defining the complementary wedge of  $W=(x,\sigma)$ 
by $W'=(-x, \sigma)$.

 $\bullet$ \textit{Isotony.}  By the existence of a (positive) 
invariant 
cone $C$ in the Lie algebra $\fg$, it is possible to define a wedge endomorphism 
semigroup defining the wedge inclusion relation. Given an Euler wedge $W=(x,\sigma)$, the generators in the positive cone lying in the subspaces 
$\fg_{\pm1}$ define proper wedge inclusions as each of them generates with $x$ 
a translation-dilation group (isomorphic to the affine 
group of the real line); see \cite{Bo92, Wi92, Wi93} and in particular 
\cite{Bo00}.  
This is the case of wedge endomorphisms in Minkowski spacetime given by lightlike shifting or M\"obius  transformations mapping an interval into itself as the translations  
do for the half-lines. 
\textit{These properties define a local partially ordered set of wedges that can support key features of an AQFT structure.}
 
It is important to note that the wedge space \textbf{only  depends 
on the Lie group and its Lie algebra}, and the order structure on 
the invariant cone $C \subeq \g$. 
The relations among the wedges specify  
the abstract spacetime structure to a large extent. 
For example, $\PSL_2(\R)$ is the symmetry  group  for the 2-dimensional de Sitter spacetime and for the chiral circle. If one considers $\PSL_2(\R)$ 
with the trivial cone in $\fsl_2(\R)$---no proper 
inclusions of wedges---then it describes a QFT on de Sitter spacetime; if one considers 
$C\subset \fsl_2(\R)$ as in $\eqref{eq:cone}$, inclusion relations 
among wedges arise, and we obtain the wedge space on $\bS^1$. 
\begin{footnote}{In \cite{GL03} it is used that the 
$2$-dimensional de Sitter space 
$\dS^2 \cong \SO_{1,2}(\R)^\up/\SO_{1,1}(\R)^\up$ 
has the same abstract wedge space as the circle $\SO_{1,2}(\R)/{\bf P}$ 
to set up a $\dS/CFT$ correspondence.}
\end{footnote}
This correspondence between isotony and and positivity of the energy was also studied in \cite{GL03}; see also 
\cite{Bo92, Bo00, Wi92, Wi93} and \cite{NO17, Ne19, Ne19b}. 
For recent classification results for the triples
 $(\g,x,C)$, we refer to \cite{Oeh20, Oeh20b}. 
  
There is more interesting structure on the abstract wedge space:
  
$\bullet$ \textit{Orthogonal wedges}: We call two abstract wedges  
$W_1 = (x_1, \sigma_1)$ and $W_2 = (x_2,\sigma_2)$ orthogonal if 
$\sigma_1(x_2)=-x_2$, i.e., $W_2$ is reflected into its complement~$W_2'$. 
Examples of orthogonal wedges are coordinate wedges on Minkowski spacetime\footnote{For instance  $W_i$ and $W_j$ for $i\neq j$, where $W_i=\{(t,x)\in\RR^{1+s}:|t|<x_i\}$}, or the upper and the right half-circle in chiral theories 
on $\bS^1$. 
This notion, which immediately generalizes to the abstract setting, plays 
 a central role in  spin-statistics relations \cite{GL95} and the 
nuclearity property in conformal field theory \cite{BDL}.

$\bullet$ \textit{Symmetric wedges.} A wedge $W$ is called 
symmetric  if there exists $g\in G^\up$, such that $g.W=~W'$. 
For instance, any couple of wedge regions,  in $1+s$-dimensional 
 Minkowski spacetime with $s\geq2$, are transformed one into the other by the 
action of the Poincar\'e group $G^\up=\cP_+^\uparrow$.  
 On the other hand, in $1+1$-dimensional Minkowski space, the right and the left wedges are not symmetric. Indeed 
\[ W_{R}=\{(t,x)\in\RR^{1+1}: |t|< x\} \quad \mbox{ and } \quad 
W_{L}=\{(t,x)\in\RR^{1+1}: |t|<- x\} \] 
 belong to disjoint transitive families with respect to the 
$\cP_+^\uparrow$-action. Further examples of symmetric wedges are intervals in 
conformal theories on the circle. Half-lines in the real line are not symmetric wedges with respect to the translation-dilation group.  A transitive family of wedges has the feature that algebras associated to complementary wedges are - by covariance - unitary equivalent. On the other side, there is  no contradiction in having a $G^\up$-covariant net of von Neumann algebras on a transitive family of non-symmetric wedges with trivial algebras associated to the 
family of complements.

In the first part of the paper we define and investigate the 
abstract structure we have described. 
When the center $Z(G^\up)$ is non trivial, for instances when covering groups are considered, a generalized notion of complementary wedges has to be introduced. Indeed, while Euler elements are uniquely determined as generators of one-parameter 
groups in $G^\up$, several involutions  $\sigma$ 
satisfying $\Ad(\sigma)=e^{\pi i \ad x}$ 
can be associated to the same Euler element~$x$. 
In an analogous way, different  wedge complements 
can be labeled by central elements. 
We classify wedge orbits and define a notion of a central wedge complement.  
Furthermore, if $W'$ does not belong to the $G^\up$-orbit of $W$, a new action of $G$ on the wedge space is defined. This happens for instance in fermionic nets.

Having specified the abstract structures, we are prepared to 
answer the following question:\\
\textit{``Which  Lie algebras/groups support such a structure?" } 
To this end, we first classify Euler elements in real simple Lie algebras 
in Theorem~\ref{thm:classif-symeuler}. 
The key point of this classification is that 
Euler elements are conjugate under inner automorphisms  
to elements in any given Cartan subspace of hyperbolic elements. 
Here the restriction to simple Lie algebras is not 
restrictive because any symmetric Euler element is contained in a 
semi-simple Lie subalgebra. Furthermore, an  Euler 
element is symmetric if and only if it is contained in 
an $\fsl_2(\R)$-subalgebra 
(see Theorem~\ref{thm:automaticsym} for these results).  
As a consequence, there is a large family of real 
Lie algebras supporting such wedge structures which properly contains the well known models.  Note that, 
for a Lie algebra $\fg$ containing an Euler element 
$x \in \g$, there always exists a graded Lie group $G$  
with Lie algebra $\fg$ and a corresponding Euler 
wedge $(x,\sigma)$. 

The second part of the paper is devoted to  nets of standard subspaces.\\\textit{Is it possible to construct one-particle models supporting this abstract setting?} 
Starting with a $G^{\uparrow}$-orbit $\cW_+$ in the wedge space, 
we describe a set of axioms which, for the well known models,  reflect 
fundamental quantum and relativistic 
principles corresponding to the one-particle Haag--Kastler axioms. 
This set of axioms is fulfilled by extending the BGL construction to every graded Lie group~$G$, supporting a 
suitable wedge space. 
A twisted locality relation among complementary wedges 
is introduced in order to relate central complementary wedges. 

\textit{Do we get any new models out of this general construction?} The answer is affirmative. All the simple Lie algebras whose 
restricted root system appears in Theorem \ref{thm:classif-symeuler} 
correspond to a graded Lie group with a non-trivial wedge space. 
There are for instance 
Lie algebras of type $E_7$ that do not correspond to any known models. In 
this context the Jordan spacetimes of 
G\"unaydin \cite{Gu93, Gu00, Gu01} and the simple spacetime manifolds 
in the sense of Mack--de Riese \cite{MdR07} are homogeneous 
spaces of simple hermitian Lie groups whose Lie algebras contain 
Euler elements, and the corresponding abstract wedges correspond to domains 
in these causal manifolds. These Lie groups have many (anti-)unitary 
representations, some of them with positive energy with 
respect to a non-trivial invariant cone $C$ in the Lie algebra. 
As a consequence, they support many one-particle nets 
\cite{NO20} and second quantization 
models of von Neumann algebras whose physical meaning has to be investigated.

The structure of this paper is as follows: In Section~2 the wedge space 
is defined and its properties are studied. 
A  number of examples are discussed in detail 
to show how  the abstract setting applies to the known models and realizes 
the well known structure. In Section~3 we study the Euler elements in 
Lie algebras. We relate orthogonal and symmetric wedges and provide a classification of Lie algebras supporting (symmetric) Euler elements.
In Section~4 we apply this structure to define and construct one-particle nets associated to graded Lie groups supporting a wedge structure. We further stress new models, orthogonal wedges and  extension of symmetries. 
\red{We hope this paper is approachable for the 
Lie Theory community as well as 
the Algebraic Quantum Field Theory community. }

\subsection*{Acknowledgments}
VM was supported by the European Research Council Advanced Grant 669240 QUEST and by Indam from March 2019 to February 2020.
VM acknowledges the MIUR Excellence Department Project awarded to
the Department of Mathematics, University of Rome ``Tor Vergata'', CUP E83C18000100006 and the University of Rome ``Tor Vergata'' funding scheme ``Beyond Borders'', CUP E84I19002200005.

\nin KHN acknowledges support by DFG-grant NE 413/10-1.

\section{The abstract setting} \mlabel{sec:1} 

In this section we develop an abstract perspective on 
wedge domains in spacetimes, phrased completely in group theoretic terms. 
As wedge domains are supposed to correspond to standard subspaces 
in Hilbert spaces, we orient our approach on how standard subspaces 
are parametrized. 

Let $\Stand(\cH)$ denote the set of standard subspaces of the complex 
Hilbert space~$\cH$. 
In Section~\ref{sec:4} we shall see that every standard subspace 
$\sV$ determines a pair $(\Delta_\sV, J_\sV)$ of modular objects 
and that $\sV$ can be recovered from this pair by 
$\sV = \Fix(J_\sV \Delta_\sV^{1/2})$. 
This observation can be used to obtain a representation theoretic 
parametrization of $\Stand(\cH)$: 
each standard subspace $\sV$ specifies a continuous homomorphism 
\begin{equation}
  \label{eq:uv-rep-1}
 U^\sV \: \R^\times \to \AU(\cH)\quad \mbox{ by } \quad 
U^\sV(e^t) := \Delta_\sV^{-it/2\pi}, \quad 
U^\sV(-1) := J_\sV.
\end{equation}
We thus obtain a bijection between $\Stand(\cH)$ and 
the set $\Hom_{\rm gr}(\R^\times, \AU(\cH))$ of continuous 
morphisms of graded topological groups. 

The space $\Stand(\cH)$ carries three important features: 
\begin{itemize}
\item[$\bullet$] an order structure, defined by  set inclusion
\item[$\bullet$] a duality operation $\sV \mapsto \sV' = \{ \xi \in \cH 
\: (\forall v \in \sV)\, \Im \la \xi, v \ra = 0\}$
\item[$\bullet$] the action of $\AU(\cH)$ as a symmetry group. 
\end{itemize}
The order structure is hard to express in terms of the modular 
groups (see \cite{Ne19b} for some first steps in this direction), 
but the duality operation corresponds to inversion
\begin{equation}
  \label{eq:duality-onepar}
U^{\sV'}(r) = U^{\sV}(r^{-1}) \quad \mbox{ for }\quad r \in \R^\times,
\end{equation}
and the action of $\AU(\cH)$ translates into 
\begin{equation}
  \label{eq:symact}
 U^{g\sV}(r) = g U^{\sV}(r^{\eps(g)}) g^{-1} 
\quad \mbox{ for } \quad g \in \AU(\cH), r \in \R^\times,
\end{equation}
where $\eps(g) = 1$ if $g$ is unitary and $\eps(g) = -1$ otherwise. 
So unitary operators $g \in \U(\cH)$ simply act by conjugation, 
but antiunitary operators also involve inversion. In particular, 
$J_\sV \sV = \sV'$ corresponds to 
\[  U^{\sV'}(r) = J_\sV U^{\sV}(r^{-1}) J_\sV = U^{\sV}(r^{-1})  
\quad \mbox{ for }\quad r \in \R^\times.\] 
We now develop the corresponding structures by replacing 
$\AU(\cH)$ by a finite dimensional graded Lie group. 

\subsection{Group theoretical setting}\label{sect:G}

The basic ingredient of our approach is a finite dimensional 
{\it graded Lie group} 
$(G,\eps_G)$, i.e., $G$ is a Lie group and $\eps_G \:  G \to \{\pm 1\}$ 
a continuous homomorphism. We write 
\[ G^\up = \eps_G^{-1}(1) \quad \mbox{ and } \quad G^\down = \eps_G^{-1}(-1),\] 
so that $G^\up \trile G$ is a 
normal subgroup of index $2$ and $G^\down = G \setminus G^\up$. 
We also fix a pointed closed convex cone $C \subeq \g$ satisfying 
\begin{equation}
  \label{eq:Cinv}
\Ad(g) C = \eps_G(g) C \quad \mbox{ for } \quad g \in G.
\end{equation}
As we shall see in the following, for graded Lie groups, it is more natural 
to work with the {\it twisted adjoint action} 
\begin{equation}
  \label{eq:adeps}
  \Ad^\eps \: G \to \Aut(\g), \qquad 
\Ad^\eps(g) := \eps_G(g) \Ad(g),
\end{equation}
so that \eqref{eq:Cinv} actually means that $C$ is invariant under 
the twisted adjoint action. The cone $C$ will play a role 
in specifying an order structure. It is related to positive spectrum 
conditions on the level of unitary representations. 
We also allow $C=\{0\}$. For instance, the Lie algebra 
$\g = \so_{1,d}(\R)$ of the Lorentz group $G = \OO_{1,d}(\R)$, 
the isometry group of de Sitter space time $\dS^d$, 
 contains no non-trivial invariant cone. 

\subsection{The space $\Hom_{\rm gr}(\R^\times,G)$ and 
abstract wedges} 

In this section we define the fundamental objects we will need in the forthcoming discussion.
We write $\Hom_{\rm gr}(\R^\times,G)$ for the space of 
continuous morphisms of graded Lie groups $\R^\times \to G$, where
$\R^\times$ is endowed with its canonical grading by $\eps(r) := \sgn(r)$.
On this space $G$ acts by 
\begin{equation}
  \label{eq:homact}
 (g.\gamma)(r) := g \gamma(r^{\eps_G(g)}) g^{-1}, 
\end{equation}
where the twist is motivated by formula \eqref{eq:duality-onepar}.
Elements of $G^\up$ simply act by conjugation. 

Since we are dealing with Lie groups, we also have the following simpler 
description of the space $\Hom_{\rm gr}(\R^\times,G)$ by the set  
\[ \cG := \{ (x,\sigma)\in\g \times G^\down \: \sigma^2 = e, \Ad(\sigma)x = x\}.\] 

\begin{prop} \label{prop:2.1}The map 
\begin{equation}
  \label{eq:Psi}
\Psi\: \Hom_{\rm gr}(\R^\times,G) \to 
\cG, \quad \gamma \mapsto (\gamma'(0), \gamma(-1)) 
\end{equation} 
is a bijection. It is equivariant with respect to the action of $G$ 
on $\cG$  by 
\begin{equation}
  \label{eq:cG-act}
 g.(x,\sigma) := (\Ad^\eps(g)x, g\sigma g^{-1}).
\end{equation}
\end{prop}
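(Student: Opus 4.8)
The plan is to verify directly that $\Psi$ is well-defined, construct an explicit inverse, and check equivariance; none of these steps is deep, but care is needed with the grading bookkeeping. First I would show that $\Psi$ lands in $\cG$. Given $\gamma \in \Hom_{\rm gr}(\R^\times, G)$, the restriction $\gamma|_{\R_{>0}}$ is a continuous one-parameter group $t \mapsto \gamma(e^t)$, so $x := \gamma'(0) \in \g$ is well-defined and $\gamma(e^t) = \exp(tx)$. Since $\gamma$ is graded and $\sgn(-1) = -1$, the element $\sigma := \gamma(-1)$ lies in $G^\down$, and $\sigma^2 = \gamma(1) = e$. Finally, conjugating the one-parameter group by $\sigma$ and using $(-1)\cdot e^t \cdot (-1) = e^t$ in $\R^\times$ (i.e. $\gamma(-1)\gamma(e^t)\gamma(-1) = \gamma(e^t)$) gives $\Ad(\sigma)\exp(tx) = \exp(tx)$ for all $t$, hence $\Ad(\sigma)x = x$ after differentiating. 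So $(x,\sigma) \in \cG$.

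Next I would construct the inverse map. Given $(x,\sigma) \in \cG$, define $\gamma \: \R^\times \to G$ by $\gamma(e^t) := \exp(tx)$ on $\R_{>0}$ and $\gamma(-e^t) := \exp(tx)\sigma$ on $\R_{<0}$ (equivalently $\gamma(r) := \exp((\log|r|)x)$ for $r > 0$ and $\gamma(r) := \exp((\log|r|)x)\sigma$ for $r<0$). One checks this is a homomorphism using $\sigma^2 = e$ and $\Ad(\sigma)x = x$: the only nontrivial products are $\gamma(r)\gamma(s)$ with both $r,s < 0$, where $\exp((\log|r|)x)\sigma\exp((\log|s|)x)\sigma = \exp((\log|r|)x)\exp((\log|s|)x)\sigma^2 = \exp((\log|rs|)x)$, and this equals $\gamma(rs)$ since $rs > 0$; the mixed-sign cases are similar. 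Continuity is clear, and $\gamma$ is graded since $\gamma(r) \in G^\up$ for $r>0$ and $\gamma(r) \in G^\down$ for $r<0$ (as $\sigma \in G^\down$). Then $\Psi(\gamma) = (\gamma'(0), \gamma(-1)) = (x, \sigma)$, and conversely starting from $\gamma$ and forming $(x,\sigma)$ recovers $\gamma$ because a continuous one-parameter group is determined by its generator; this gives the bijection.

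For equivariance, I would compute $\Psi(g.\gamma)$ from \eqref{eq:homact}. For $r > 0$ we have $\eps_G(g)$ acting on $r$ by $r \mapsto r^{\eps_G(g)}$, and differentiating $t \mapsto g\gamma(e^{\eps_G(g)t})g^{-1} = g\exp(\eps_G(g)tx)g^{-1}$ at $t=0$ yields $\eps_G(g)\Ad(g)x = \Ad^\eps(g)x$, the first component of \eqref{eq:cG-act}. For $r = -1$ we get $(g.\gamma)(-1) = g\gamma((-1)^{\eps_G(g)})g^{-1} = g\gamma(-1)g^{-1} = g\sigma g^{-1}$ since $(-1)^{\pm 1} = -1$. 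Hence $\Psi(g.\gamma) = (\Ad^\eps(g)x, g\sigma g^{-1}) = g.\Psi(\gamma)$, as claimed. The one point warranting a remark — and the closest thing to an obstacle — is that one should confirm the $\R^\times$-action implicit in \eqref{eq:homact} is genuinely the reparametrization $r \mapsto r^{\eps_G(g)}$ of $\R^\times$ and that $(-1)^{-1} = -1$, so that the $\sigma$-component transforms by plain conjugation with no sign twist; once this is pinned down, the verification is routine. I would also note in passing that $\Ad^\eps$ restricted to $G^\up$ agrees with $\Ad$, consistent with the statement that elements of $G^\up$ act by conjugation on both descriptions.
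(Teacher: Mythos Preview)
Your proof is correct and complete. The paper states Proposition~\ref{prop:2.1} without proof, evidently regarding the verification as routine; your argument supplies exactly the details one would expect---well-definedness of $\Psi$, an explicit inverse built from $(x,\sigma)$ via $\gamma(e^t)=\exp(tx)$ and $\gamma(-e^t)=\exp(tx)\sigma$, and the equivariance check using $(-1)^{\pm 1}=-1$---and there are no gaps.
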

\nin Note that center $Z(G^\up)$ of $G^\up$ acts trivially on the Lie algebra 
but it may act non-trivially on involutions in $G^\down$.

\begin{rem} \mlabel{rem:2.2}
For every involution $\sigma \in G^\down$, 
the involutive automorphism $\sigma_G(g) := \sigma g \sigma$ 
defines the structure of a symmetric Lie group $(G^\up,\sigma_G)$,  
and $G \cong G^\up \rtimes \{\id, \sigma\}$, 
so that we can translate between  $G$ as a graded Lie group 
and the pair $(G^\up,\sigma_G)$, without loosing 
information. 
\end{rem} 

To indicate the analogy of elements 
of $\cG$ with the wedge domains in QFT, we shall often 
denote the elements of $\cG$ by $W = (x,\sigma)$. 
\begin{defn}
  \mlabel{def:1.1} 
(a) We assign to 
$W = (x,\sigma) \in \cG$ the one-parameter  group 
\begin{equation}
  \label{eq:deflambdaw}
 \lambda_W \: \R \to G^\up \quad \mbox{ by } \quad 
\lambda_W(t) := \exp(t x)
 \end{equation} 
Then we have  the graded homomorphism 
\[ \gamma_W \: \R^\times \to G, \quad 
\gamma_W(e^t) := \lambda_W(t), \qquad 
\gamma_W(-1) := \sigma.\] 
Note that $\Psi(\gamma_W) = W$ in terms of~\eqref{eq:Psi}. 
\end{defn}

\begin{defn} \mlabel{def:euler}
 (a) We call an element $x$ of the finite dimensional 
real Lie algebra $\g$ an 
{\it Euler element} if $\ad x$ is diagonalizable with 
$\Spec(\ad x) \subeq \{-1,0,1\}$, so that the eigenspace 
decomposition with respect to $\ad x$ defines a $3$-grading 
of~$\g$: 
\[ \g = \g_1(x) \oplus \g_0(x) \oplus \g_{-1}(x), \quad \mbox{ where } \quad 
\g_\nu(x) = \ker(\ad x - \nu \id_\g)\] 
(see \cite{BN04} for more details 
on Euler elements in more general Lie algebras). 
Then $\sigma_x(y_j) = (-1)^j y_j$ for $y_j \in \g_j(x)$ 
defines an involutive automorphism of $\g$. 

For an Euler element we write $\cO_x = \Inn(\g)x \subeq \g$ for the 
orbit of $x$ under the group $\Inn(\g) = \la e^{\ad \g} \ra$ of 
inner automorphisms.\begin{footnote}
{For a Lie subalgebra $\fs \subeq \g$, we write 
$\Inn_\g(\fs)= \la e^{\ad \fs} \ra \subeq \Aut(\g)$ for the subgroup 
generated by $e^{\ad \fs}$.}  
\end{footnote}
We say that $x$ is {\it symmetric} if $-x \in \cO_x$. 

We write $\cE(\g)$ for the set of {non-zero} Euler elements in~$\g$ 
and $\cE_{\rm sym}(\g) \subeq \cE(\g)$ for the subset of symmetric Euler elements.

\nin (b) An element $(x,\sigma) \in \cG$ is called an 
{\it Euler couple} or {\it Euler wedge}   
if 
\begin{equation}\label{eq:eul}
\Ad(\sigma)=e^{\pi i \ad x}.\end{equation} 
Then $\sigma$ is called an {\it Euler involution}. 
We write $\cG_E \subeq \cG$ for the subset of Euler couples 
and note that the relation $e^{\pi i \ad x} = e^{-\pi i \ad x}$ implies 
that the subset $\cG_E$ is invariant under the $G$-action.  
\end{defn} 

For an Euler element $x \in\cE(\g)$, the relation 
\eqref{eq:eul} only determines $\sigma$ up to an element 
$z \in G^\up \cap \ker(\Ad)$ 
for which $(\sigma z)^2 = e$, i.e., $\sigma z \sigma = z^{-1}$. 
Note that, if $G^\up$ is connected, then $G^\up \cap \ker(\Ad) = Z(G^\up)$ 
is the center of $G^\up$. 
The couples $(x,\sigma)$ that we have seen in the physics literature 
are all Euler couples (cf.\ \cite[Ex.~5.15]{NO17}). 
This ensures many properties, such as 
the proper relation between spin and statistics, see for instance \cite{GL95}. 

\begin{defn} \mlabel{def:abs-struc}
(a) (Duality operation) For $W = (x,\sigma) \in \cG$, we define 
$W' := (-x,\sigma)$. Under~$\Psi$, this operation corresponds to inverting
the homomorphism $\R^\times \to G$ pointwise.
Note that $(W')' = W$ and $(gW)' = gW'$ for $g \in G$ 
by \eqref{eq:cG-act}.  

\nin (b) (Order structure on $\cG$) We now define an order 
structure on $\cG$ that depends on the invariant cone 
$C$ from \eqref{eq:Cinv}. We associate to $W = (x,\sigma) \in \cG$ 
  \begin{itemize}
\item the Lie wedge 
\[ L_W := L(x,\sigma) := 
C_+(W) \oplus \underbrace{(\g^{\sigma}\cap \ker(\ad x))}_{\g_W :=} \oplus C_-(W), \] 
where 
\[ C_\pm(W) = \pm C \cap \g^{-\sigma} \cap \ker(\ad x \mp \1)
\quad \mbox{ and } \quad \g^{\pm \sigma} 
:= \{ y \in \g \: \Ad(\sigma)(y) = \pm y \}. \] 
\item $\g(W) := L_W - L_W$, the Lie algebra generated by $L_W$.
\item the semigroup associated to the triple $(C,x,\sigma)$: 
\[ \cS_W := \exp(C_+(W)) G^\up_{W} \exp(C_-(W)) 
= G^\up_{W} \exp\big(C_+(W) + C_-(W)\big),\] 
where 
\[ G^\up_{W} = \{g \in G^\up \: g.W = W \} 
= \{ g \in G^\up\: \sigma_G(g) = g, \Ad(g)x = x\} \] 
is the stabilizer of 
$W = (x,\sigma)$ in $G^\up$ (cf.\ \cite[Thm.~3.4]{Ne19b}).
\begin{footnote}{In \cite{Ne19b} it is shown that the 
different descriptions as a product of two sets 
(polar decomposition) and a product of two abelian subsemigroups 
and a group yield the same set $\cS_W$ which 
actually is a subsemigroup.} 
\end{footnote}
\item the subgroups $G^\up(W) := \la \exp \g(W) \ra G^\up_{W}$ and 
$G(W) := G^\up(W) \{e,\tau\}$ with Lie algebra $\g(W)$. 
  \end{itemize}
As the unit group of $\cS_W$ is given by 
$\cS_W \cap \cS_W^{-1} = G^\up_{W}$ (\cite[Thm.~III.4]{Ne19b}), the semigroup 
$\cS_W$ defines a $G^\up$-invariant partial order on the orbit 
$G^\up.W \subeq \cG$ by 
\begin{equation}
  \label{eq:cG-ord}
g_1.W \leq g_2.W  \quad :\Longleftrightarrow \quad 
g_2^{-1}g_1 \in \cS_W.
\end{equation}
In particular, $g.W \leq W$ is equivalent to $g \in \cS_W$. 
\end{defn}

We have the following relations among these objects:

\begin{lem} \mlabel{lem:1.1} 
For every $W=(x_W,\sigma_W) \in \cG$, $g \in G$, and $t \in \R$, 
the following assertions hold: 
  \begin{itemize}
\item[\rm(i)] $\lambda_W(t) W = W, \lambda_W(t) W' = W'$ and $\blue{\sigma_W.} W = W'.$ 
\item[\rm(ii)] $\sigma_{W'} = \sigma_W$ and 
$\lambda_{W'}(t) = \lambda_W(-t)$. 
\item[\rm(iii)] $\sigma_W$ commutes with $\lambda_W(\R)$. 
\item[\rm(iv)] $L_{W'} = - L_W$ and $\cS_{W'} = \cS_W^{-1}$. 
\item[\rm(v)] $C_\pm(g.W) = \Ad(g) C_{\pm \eps_G(g)}(W)$,  
$L_{g.W} = \Ad(g)L_W$, and $\cS_{g.W} = g \cS_W g^{-1}$. 
\item[\rm(vi)] For $W_1, W_2 \in \cG$, the relation 
$W_1 \leq W_2$ in $\cG$ implies 
$g.W_1 \leq g.W_2$. 
\end{itemize}
\end{lem}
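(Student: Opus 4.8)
The plan is to obtain (i)--(iii) by direct substitution into the action formula~\eqref{eq:cG-act} on~$\cG$, and to deduce (iv)--(vi) from the definitions of $L_W$ and $\cS_W$ together with the structural description of $\cS_W$ from~\cite{Ne19b}; the only step that is not pure bookkeeping will be the $G^\down$-part of the conjugation rule $\cS_{g.W} = g\cS_W g^{-1}$. For (i), substitute $\lambda_W(t) = \exp(tx_W)\in G^\up$ into~\eqref{eq:cG-act}: since $\ad x_W$ annihilates $x_W$ the twisted adjoint action fixes $\pm x_W$, and since $\Ad(\sigma_W)x_W = x_W$ we have $\sigma_W\exp(tx_W)\sigma_W^{-1} = \exp(t\Ad(\sigma_W)x_W) = \exp(tx_W)$, so conjugation by $\lambda_W(t)$ fixes the second coordinate; as $\eps_G(\lambda_W(t)) = 1$ this gives $\lambda_W(t).W = W$ and $\lambda_W(t).W' = W'$, and (iii) is precisely this commutation relation. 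For $\sigma_W.W$ use $\eps_G(\sigma_W) = -1$ and $\sigma_W^2 = e$: $\sigma_W.(x_W,\sigma_W) = (-\Ad(\sigma_W)x_W,\,\sigma_W\sigma_W\sigma_W^{-1}) = (-x_W,\sigma_W) = W'$. Item (ii) is immediate from $W' = (-x_W,\sigma_W)$ and the definitions of $\sigma_{W'}$ and $\lambda_{W'}$.

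For (iv), since $\g^{\pm\sigma_W}$ and the $\ad x_W$-eigenspaces are linear subspaces, replacing $x_W$ by $-x_W$ interchanges $\ker(\ad x_W - \1)$ with $\ker(\ad x_W + \1)$ while intersecting with $\pm C$ contributes an overall sign, so $C_\pm(W') = -C_\mp(W)$ and $\g_{W'} = \g_W$, whence $L_{W'} = -L_W$. From $(hW)' = hW'$ and $(W')' = W$ we get $G^\up_{W'} = G^\up_W$, so the polar decomposition $\cS_W = \exp(C_+(W))\,G^\up_W\,\exp(C_-(W))$ gives $\cS_{W'} = \exp(-C_-(W))\,G^\up_W\,\exp(-C_+(W)) = \cS_W^{-1}$. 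For (v), writing $\sigma' := g\sigma_W g^{-1}$, conjugation by $\Ad(g)$ gives $\g^{\pm\sigma'} = \Ad(g)\g^{\pm\sigma_W}$; from $\ad(\Ad^\eps(g)x_W) = \eps_G(g)\,\Ad(g)\,(\ad x_W)\,\Ad(g)^{-1}$ one gets $\ker(\ad(\Ad^\eps(g)x_W) \mp \1) = \Ad(g)\ker(\ad x_W \mp \eps_G(g)\1)$; and $\pm C = \Ad(g)(\pm\eps_G(g)C)$ by~\eqref{eq:Cinv}. Intersecting these three identities yields $C_\pm(g.W) = \Ad(g)C_{\pm\eps_G(g)}(W)$ and $\g_{g.W} = \Ad(g)\g_W$, hence $L_{g.W} = \Ad(g)L_W$ (the two summands $C_{\pm\eps_G(g)}(W)$ merely reorder $C_\pm(W)$). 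Since also $G^\up_{g.W} = gG^\up_W g^{-1}$, the polar decomposition gives $\cS_{g.W} = g\cS_W g^{-1}$ at once whenever $\eps_G(g) = 1$.

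The case $\eps_G(g) = -1$ is the one genuinely new point. Here I would write $g = g_0\sigma_W$ with $g_0 := g\sigma_W \in G^\up$, use (i) to rewrite $g.W = g_0.W'$, and combine the already-established $G^\up$-case applied to $W'$ with (iv) to get $\cS_{g.W} = g_0\cS_{W'}g_0^{-1} = g_0\cS_W^{-1}g_0^{-1}$; comparing with $g\cS_W g^{-1} = g_0\,\sigma_W\cS_W\sigma_W^{-1}\,g_0^{-1}$ reduces the claim to the identity $\sigma_W\cS_W\sigma_W = \cS_W^{-1}$. I would prove this from the second description $\cS_W = G^\up_W\exp(C_+(W)+C_-(W))$: each element of $G^\up_W$ is $\sigma_G$-fixed (so $\sigma_W G^\up_W\sigma_W = G^\up_W$), $\Ad(\sigma_W)$ acts as $-\id$ on $C_\pm(W)\subseteq\g^{-\sigma_W}$, and $\Ad(G^\up_W)$ preserves $C_\pm(W)$ (so $G^\up_W$ normalizes $\exp(C_+(W)+C_-(W))$); these give $\sigma_W\cS_W\sigma_W = G^\up_W\exp(-(C_+(W)+C_-(W))) = \cS_W^{-1}$.

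Finally, (vi) follows from (v): if $W_1\le W_2$ then, by definition of the order on the common $G^\up$-orbit, there are $W_0\in\cG$ and $h_1,h_2\in G^\up$ with $W_i = h_i.W_0$ and $h_2^{-1}h_1\in\cS_{W_0}$; for $g\in G$ we have $g.W_i = (gh_ig^{-1}).(g.W_0)$ with $gh_ig^{-1}\in G^\up$ by normality of $G^\up$, and $(gh_2g^{-1})^{-1}(gh_1g^{-1}) = g(h_2^{-1}h_1)g^{-1} \in g\cS_{W_0}g^{-1} = \cS_{g.W_0}$ by (v), so $g.W_1\le g.W_2$. The only obstacle worth flagging is therefore the identity $\sigma_W\cS_W\sigma_W = \cS_W^{-1}$: it is what forces one to invoke the product description of $\cS_W$ from~\cite{Ne19b}, whereas all the remaining content reduces to linear-algebra bookkeeping with the $\ad x_W$- and $\sigma_W$-eigenspaces.
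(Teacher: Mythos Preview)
Your proof is correct and follows essentially the same route as the paper's. The only organizational difference is in part~(v) for $g\in G^\down$: the paper writes $g.W=(g\sigma_W).W'$ and applies the already-proved $G^\up$-case together with (iv) to obtain $C_\pm(g.W)=\Ad(g)C_{\mp}(W)$, then reads off $\cS_{g.W}=g\cS_Wg^{-1}$ from the symmetric description $\cS_W=G^\up_W\exp(C_+(W)+C_-(W))$; you instead compute $C_\pm(g.W)$ by direct intersection of eigenspaces and then isolate the identity $\sigma_W\cS_W\sigma_W=\cS_W^{-1}$, which you establish from the same symmetric description. Both arguments ultimately rest on the product structure of $\cS_W$ from~\cite{Ne19b}, and your explicit verification of $\sigma_W\cS_W\sigma_W=\cS_W^{-1}$ spells out precisely the step the paper leaves implicit.
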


\begin{prf} (i) For $W = (x,\sigma)\in \cG$, the first two relations 
follow from the fact that $\exp(\R x)$ commutes with $x$ and $\sigma$. 
The second follows from 
$\blue{\sigma_W. W} = \sigma.(x,\sigma) = (-\Ad(\sigma) x, \sigma) = (-x, \sigma) = W'.$ 

\nin (ii) is clear from the definition of $W'$. 

\nin (iii) follows from (i). 

\nin (iv) follows from $C_\pm(W') = - C_{\mp}(W)$. 

\nin (v) The assertion is clear for $g \in G^\up$. 
For $g \in G^\down$, we have $g\sigma \in G^\up$, so that 
\begin{align*}
 C_\pm(g.W) 
&= C_\pm(g\sigma.W') 
= \Ad(g\sigma)C_{\pm}(W')
= -\Ad(g\sigma)C_{\mp}(W)
= \Ad(g)C_{\mp}(W)\\
&= \Ad(g)C_{\pm \eps_G(g)}(W).
\end{align*}
This implies in particular that $L_{g.W} = \Ad(g)L_W$. 
From $G^\up_{g.W} = g G^\up_{W} g^{-1}$, we thus obtain 
$\cS_{g.W} = g \cS_W g^{-1}$. 

\nin (vi) If $W_1 \leq W_2$, then $W_1 = s.W_2$ for $s \in \cS_{W_2}$. 
Then $g.W_1 = gs.W_2 = gsg^{-1}.(g.W_2)$ with 
$gsg^{-1} \in g \cS_{W_2} g^{-1} = \cS_{g.W_2}$ implies 
$g.W_1 \leq g.W_2$. 
\end{prf}

In this discussion we started with a Lie group. 
We remark that one can also start with a Lie algebra as follows: 
Consider a quadruple $(\g, \sigma_\g, h, C)$ of a 
Lie algebra $\g$, an involutive automorphism $\sigma_\g$ of $\g$, 
and a pointed closed convex invariant cone $C \subeq \g$ with 
$\sigma_\g(C) = - C$. Then $\sigma_\g$ integrates to an automorphism 
$\sigma_G$ of  the $1$-connected Lie group $G^\up$ with 
Lie algebra $\g$, so that we obtain all the data required above 
with $G := G^\up \rtimes \{\id_G,\sigma_G\}$.

 For two such quadruples 
$(\g_j, \tau_{\g,j}, h_j, C_j)_{j =1,2}$, a homomorphism 
$\phi \: \g_1 \to \g_2$ of Lie algebras 
is compatible with this structure if 
\[  \phi \circ \tau_{\g,1} = \tau_{\g,2} \circ \phi,\qquad 
\phi(h_1) =h_2 \quad \mbox{ and } \quad \phi(C_1) \subeq C_2.\] 
We thus obtain a category whose objects are the quadruples 
$(\g, \tau_\g, h, C)$ and its morphisms are the compatible 
homomorphisms. 

{A similar category can be defined on the group level, but there 
are some subtle ambiguities concerning the possible extensions 
of the group structure from $G^\up$ to $G$.}

\begin{rem}\mlabel{rem:twists-grad} 
(Twisted extensions of $G^\up$ to $G$) 
We start with a graded group $G$ for which $G^\down$ contains an involution 
$\sigma$, so that $G \cong G^\down \rtimes \{e,\sigma\}$, where 
$\sigma$ acts on $G^\up$ by the automorphism $\sigma_G(g) := \sigma g \sigma$. 
This defines a split group extension 
\[  G^\up \to G \to \Z_2 \] 
and we are now asking for other group extensions 
\[  G^\up \to \hat G \to \Z_2 \] 
for which the elements in $\hat G^\down$ define the same element in the group  
$\Out(G^\up) = \Aut(G^\up)/\Inn(G^\up)$ of outer automorphisms of~$G^\up$. 
These extensions are parametrized 
by the group 
\[ Z(G^\up)^+ := \{ z \in Z(G^\up) \: \sigma_G(z) = z\},\] 
by assigning to $z \in Z(G^\up)^+$ the group structure on 
$G^\up \times \{1,-1\}$ given by 
\begin{equation}
  \label{eq:grpext}
(g,1)(g',\eps') = (gg',\eps'), \quad  
(e,-1)(g',1) = (\sigma_G(g'),-1) \quad \mbox{ and } \quad 
(e,-1)^2 = (z,1).
\end{equation}
We write $\hat G_z$ for the corresponding Lie group. 
Basically, this means that the element $\hat\sigma := (e,-1)$ 
has the same commutation relations with $G^\up$ but its square is 
$z$ instead of~$e$: 
\red{\begin{equation}
  \label{eq:hatsigmarel}
\hat\sigma g \hat\sigma^{-1} = \sigma_G(g) \quad \mbox{ for }\quad  g\in G 
\quad \mbox{ and }\quad \hat\sigma^2 = z.
\end{equation}}
 For two elements $z,z' \in Z(G^\up)^+$, 
the corresponding extensions are equivalent if and only if 
\begin{equation}
  \label{eq:defb}
z^{-1}z' \in B := \{ w \sigma_G(w) \: w \in Z(G^\up)\}.
\end{equation}
This follows from \cite[Thm 18.1.13]{HN12}, combined with 
\cite[Ex.~18.3.5(b)]{HN12}.\\
\nin (a) For $G = \OO_{n}(\R)$, $n > 3$, and $G^\up = \SO_{n}(\R)$, 
the situation depends on the parity of $n$. 
If $n$ is odd, then $Z(G^\up) = \{e\}$ and no twists exist. 
If $n$ is even, then $Z(G^\up) = \{\pm \1\} = Z(G)$. 
Therefore $Z(G^\up)^+ = \{ \pm \1 \} \not= B = \{e\}$. 
We therefore have one twisted group 
$\hat G = \SO_n(\R) \{e,\hat\sigma\}$, 
where $\sigma \in \OO_n(\R)$ corresponds to a hyperplane reflection, 
and $\hat\sigma^2 = - \1$ in $\hat G$.

\nin (b) The same phenomenon occurs for Spin groups. 
Let  $G := \Pin_n(\R) \cong\Spin_n(\R) \rtimes \{e,\sigma\}$, 
where $\sigma$ corresponds to a hyperplane reflection. 
If $n$ is odd, then $Z(\Spin_n(\R)) = \{  e,z\}$ contains two elements, 
and we have a twisted group 
\[ \hat G = \Spin_n(\R) \{ e, \hat\sigma \} 
\quad \mbox{ with } \quad 
\hat\sigma^2 = z \] 
(cf.\ \cite[Rem.~B.3.25]{HN12}). If $n$ is even, then the situation is more 
complicated because the center of $\Spin_n(\R)$ has order~$4$.

\nin (c) For $G = \tilde\Mob \rtimes \{e,\sigma\}$, 
where $\sigma$ corresponds to a reflection $\sigma(x) = -x$ 
on $\R^\infty \cong \bS^1$, 
we have $Z(G^\up) \cong \Z$ and $\sigma_G(z) = z^{-1}$ for $z \in Z(G^\up)$. 
Hence $Z(G^\up)^+ = \{e\}$, so that there are no twists.

\nin(d) 
If $G = \Mob^{(2n)} \rtimes \{e,\sigma\}$, where 
$\Mob^{(2n)}$ is the covering of $\Mob$ of even order, then 
$Z(G^\up) \cong \Z_{2n}$ and $\sigma_G(z) = z^{-1}$ for $z \in Z(G^\up)$. 
Therefore $Z(G^\up)^+ = \{ e, \gamma\}$, where $\gamma$ is the 
unique non-trivial involution in $Z(G^\up)$ and $B = \{e\}$. 
Hence there exists a non-trivial twist 
\red{$\hat G = G^\up\{ e,\hat\sigma\}$} with $\hat\sigma^2 = \gamma$.

\nin(e) As we shall see in Example~\ref{ex:2.13} 
below, it may happen that, for the twisted 
groups $\hat G_z$, the coset $\hat G_z^\down$ contains no involutions. 
In this example $G^\up = \SL_2(\R)$ and 
\red{$G = G^\up \{e,\gamma\}$} with $\gamma^2 = -  \1$.

In general, elements in $\hat G_z^\down$ are of the form 
$g\hat\sigma$ with \red{$g \in G^\up$, and then}
\begin{equation}
  \label{eq:ztrivrel}
(g\hat\sigma)^2 = g \hat\sigma g \hat\sigma
 = g \sigma_G(g) \hat\sigma^2 = g \sigma_G(g) z.
\end{equation}
Hence $\hat G_z^\down$ contains an involution if and only if 
\[ z \in 
\red{\{ \sigma_G(g)^{-1} g^{-1} \: g \in G^\up\} }
=  \{ g \sigma_G(g) \: g \in G^\up\}.\] 
If $z = g \sigma_G(g)$ for some $g \in G^\up$, then 
conjugating with $g$ implies that $g$ and $\sigma_G(g)$ commute. 

The discussion in Example~\ref{ex:2.13} shows that 
\red{\eqref{eq:ztrivrel}  is not satisfied}
for $z = -\1$ and the Euler involution of $G^\up = \SL_2(\R)$. 
For any odd degree covering $\SL_2(\R)^{(2k+1)} \to \SL_2(\R)$,  
the central involution is mapped onto $-\1$, so that 
this \red{observation} carries over to odd coverings of $\SL_2(\R)$. 

The situation changes if we consider 
$G^\up = \SL_2(\red{\C})$ instead. Then 
$g := i \pmat{1 & 0 \\ 0 & -1}$ satisfies 
$g^2 = -\1$, so that the group  
\red{$\hat G  = G^\up \{ \1,\hat\sigma\}$} with 
$\hat\sigma^2 = - \1$ contains the non-trivial involution 
$g\hat\sigma \in \hat G^\down$. As this involution is central, 
$\hat G \cong \hat G^\up \times \Z_2$ is a direct product.
\end{rem}

\subsection{The abstract wedge space, some fundamental examples} 

\begin{defn} {\rm(The abstract wedge space)} 
{From here on, we always assume that $\cG \not=\eset$, i.e., that 
$G^\down$ contains an involution $\sigma$. Then 
\[ G \cong G^\up \rtimes \{\id,\sigma\} \]
(cf.\ Remark~\ref{rem:2.2}). }
For a fixed couple $W_0 = (h,\sigma) \in \cG$, the orbits 
\[ \cW_+ (W_0):= G^\up.W_0 \subeq \cG  \quad \mbox{ and } \quad 
\cW(W_0) := G.W_0 \subeq \cG  \] 
are called the {\it positive} and the {\it full wedge space containing $W_0$}. 
\end{defn}

\begin{rem} \mlabel{rem:stab}  
(a) As $\sigma.W_0 = (-h,\sigma) = W_0'$, we have 
$\cW(W_0) = \cW_+(W_0) \cup \cW_+(W_0')$, and 
$\cW(W_0)$ coincides with $\cW_+(W_0)$ if and only if 
$W_0' = (-h,\sigma) \in \cW_+(W_0)$. 
This is equivalent to the existence of an element 
$g \in G^\up$ with $g.W_0 = W_0'$, i.e., 
$g \in (G^\up)^\tau$ with $\Ad(g)h = - h$. 

\nin (b) If $W_0$ is an Euler couple, then $\cW(W_0)$ is a family of Euler 
couples, and we shall see below that in this case we have $\cW(W_0) = \cW_+(W_0)$ 
in many  important cases. 
\end{rem}

We collect some fundamental examples, starting from the low dimensional cases, that we shall refer to throughout the paper.

\begin{exs} \mlabel{ex:models} (a) The smallest 
example is the abelian group $G = \R \times \{\pm 1 \}$, 
where $G^\up = \R$, $C = \{0\}$ 
and $L = \g$. For $W_0 = (h,\sigma)$ with $h = 1$ and $\sigma= (0,1)$, we then have 
the one-point set $\cW_+ = \{ (h,\sigma)\}$, 
and $\cW = \{ (h,\sigma), (-h,\sigma)\}$. \\

\nin  (b) \textbf{The affine group} $G :=\Aff(\R)\cong \R \rtimes \R^\times$ 
of the real line is two-dimensional. 
Its elements are denoted $(b,a)$, and they act by 
$(b,a)x = ax + b$ on the real line.
The identity component 
$G^\up = \R \rtimes \R^\times_+$   
acts by orientation preserving maps, and $G^\down$ consists of 
reflections $r_p(x) = 2p-x$, $p \in \R$.\\
Let $\zeta(t) = (t,1)$ and $\delta(t) = (0,e^t)$ 
be the translation and dilation one-parameter groups, respectively. 
We write $\lambda = (0,1) \in \g = \R \rtimes \R$ for the infinitesimal 
generator of $\delta$, which is an Euler element. 
Therefore $W := (\lambda, r_0)$ is an Euler couple.

The cone $C = \R_+ \times \{0\} \subeq \g$ 
satisfies the invariance condition \eqref{eq:Cinv} 
and the corresponding semigroup $\cS_W$ 
is 
\[ \cS_W = [0,\infty) \rtimes \R^\times_+ 
= \{ g = (b,a) \: g.0 = b \geq 0\} 
= \{ g \in G^\up \: g\R_+ \subeq \R_+\}. \] 
Therefore the map 
\[ \cW_+(W) \ni {g.(\lambda, r_0) }\mapsto gW \] 
defines an order preserving 
bijection between the abstract wedge space $\cW_+(W) \subeq \cG$ 
and the set $\cI_+(\R) = \{ (t,\infty) \: t \in \R\}$ of 
of lower bounded open intervals in $\R$. Accordingly, we may write 
$W_{(t,\infty)}=(\Lambda_{(t,\infty)},r_t) 
:= \zeta(t) W = (\Ad(\zeta(t))\lambda, r_t)$ for 
$t \in \R$. 
Acting with reflections, we also obtain the couples 
\[ W_{(-\infty,t)}:=(\Lambda_{(-\infty,t)},r_t) = r_t.W_{(t,\infty)} = (-\Ad(\zeta(t))\lambda, r_t)\] 
corresponding to past pointing half-lines $(-\infty,t)\subset\R$. 
We thus obtain a bijection between the full wedge space 
$\cW(W)$ and the set $\cI(\R)$ of open semibounded intervals in~$\R$. 
We shall denote with $\delta_I$ the one-parameter group of dilations 
with generator $\lambda_I$ corresponding to the half line~$I$. 

The set $\cE(\g) = \Ad(G^\up)\{\pm \lambda\}$ 
of {non-zero} Euler elements in $\g$ consists of two $G^\up$-orbits 
and, for each non-zero Euler element 
$\pm\Ad(\zeta(t))\lambda \in \cE(\g)$, 
the reflection $r_t$ is the unique partner for which 
$(\pm\Ad(\zeta(t))\lambda, r_t) \in \cG$. Accordingly, 
Euler couples in $\cG$ are in one-to-one correspondence 
with semi-infinite open intervals in $\R$. 

\nin (c) 
\textbf{The M\"obius group} $G :=\Mob_2 := \PGL_2(\R)\cong \GL_2(\R)/\R^\times$ 
acts on the compactification $\oline\R = \R \cup \{\infty\}$ 
of the real line by 
\[ g.x := \frac{a x + b}{cx + d} \quad \mbox{ on } \quad 
 \oline\R:= \R \cup \{\infty\}, \qquad 
\mbox{ for }\quad g = \pmat{a & b \\ c & d}\in \GL_2(\R).\] 
We write $G^\up = \Mob \cong \PSL_2(\R)$ for the subgroup of orientation 
preserving maps. 
The Cayley transform 
\[ C:\overline\RR \to \bS^1 := \{z\in\CC: |z|=1\}, \quad 
C(x) := \frac{i-x}{i+x}, \qquad C(\infty) := -1, \] 
is a homeomorphism, identifying $\oline\R$ with the circle. 
Its inverse is the stereographic map 
\[ C^{-1}:\bS^1 \to \oline \R, \quad 
C^{-1}(z) = i\frac{1-z}{1+z}.\] 
It maps the upper semicircle $\{ z \in \bS^1 \: \Im z > 0\}$ 
to the positive half line $(0,+\infty)$. 
The Cayley transform intertwines the action of $\Mob$ on $\oline\R$ 
with the action of $\PSU_{1,1}(\C) =\SU_{1,1}(\C)/\{\pm \1\}$, given by  
\[  \pmat{\alpha & \beta \\ \oline\beta & \oline\alpha}.z 
:= \frac{\alpha z + \beta}{\oline\beta z + \oline\alpha}
\quad \mbox{ for } \quad z \in \bS^1, 
\pmat{\alpha & \beta \\ \oline\beta & \oline\alpha} 
\in \SU_{1,1}(\C).\] 

The three-dimensional Lie group $\Mob$ is 
generated by the following one-parameter subgroups:
 \begin{itemize}
 \item Rotations: $\rho(\theta)(x) = \frac{\cos(\theta/2)x + \sin(\theta/2)}
{-\sin(\theta/2)x + \cos(\theta/2)}$ for $\theta \in \R$; 
note that $C(\rho(\theta)x) = e^{i\theta} C(x)$. 
 \item Dilations: $\delta(t)(x)=  e^t x$ for $t \in \R$. 
 \item Translation: $\zeta(t) x = x+t$ for $t \in \R$. 
 \end{itemize}
In the circle picture  $\delta$ and $\zeta$ 
will be denoted by $\delta_{\cap}$ and $\zeta_\cap$, 
referring to the upper semicircle with endpoints $\{-1,1\} = 
C(\{0,\infty\})$.  
Note that $-1$ is the unique fixed point of $\zeta_\cap$ 
and one of the two fixed points $\{\pm 1\}$ of 
$\delta_\cap$. On the circle, 
$\rho(\pi)$ maps $1$ to $-1$ and exchanges the upper and the lower semicircle. 
Accordingly, $\zeta_\cup=\rho(\pi)\zeta\rho(\pi) $ is the subgroup 
of conjugated translations fixing the point $1\in \bS^1$.

We write $\textbf{K} = \rho(\R)$, $\textbf{A} = \delta(\R)$, 
$\bN^+ = \zeta(\R)$ and $\bN^- = \zeta_\cup(\R)$ 
for the corresponding one-dimensional subgroups of $\Mob$, 
and $\mbP^+ = \bA\bN^+ = \Mob_\infty$, $\mbP^- := \bA \bN^- = \Mob_0$ 
for the stabilizer groups of $\infty$ and $0$ in $\Mob$.
We observe that $\oline \R \cong \Mob/\mbP^-$ and that 
the circle group $K= \PSO_2(\R)$ acts simply transitively on $\oline\R$. 

On the compactified line, the point reflection $\tau(x) = -x$ in $0$ 
acts on the Lie algebra by 
\begin{equation}
  \label{eq:Adtau}
\Ad(\tau)\pmat{a & b \\ c & -a} 
= \pmat{-1 & 0 \\ 0 & 1}
\pmat{a & b \\ c & -a} 
\pmat{-1 & 0 \\ 0 & 1}
= \pmat{a & -b \\ -c & -a}.
\end{equation}
Note that $\tau \in G^\down$. 

The infinitesimal generator 
$h := \pmat{\frac{1}{2}  & 0 \\ 0 & -\frac{1}{2} }$ of $\delta$ 
is an Euler element and  $W := (h,\tau)$ is an Euler couple. 
Since $\Mob_2 \cong \PGL_2(\R) \cong \Aut(\fsl_2(\R))$, 
for any Euler couple $(x,\tau)$, the involution $\tau$ 
is determined by the requirement that it acts on $\g=\fsl_2(\R)$ 
by $e^{\pi i \ad x}$. We conclude that the action of 
$G^\up = \Mob$ on the set of Euler couples is transitive, i.e., 
$\cG_E = G^\up.(h,\tau)$. 

To see the geometric side of Euler couples, 
let us call a non-dense, non-empty open connected subset 
$I \subeq \bS^1$ an {\it interval} and 
write $\cI(\bS^1)$ for the set of intervals in $\bS^1$. 
It is easy to see that $\Mob$ acts transitively on $\cI(\bS^1)$. 
To determine the stabilizer of an interval, we consider 
the upper half circle, which corresponds to the half line 
$(0,\infty) \subeq \oline\R$. 
Each element $g \in \Mob$ mapping $(0,\infty)$ onto itself fixes 
$0$ and $\infty$. Since it is completely determined by the image 
of a third point, it is of the form $\delta(t)$ if $g.1 = e^t$. 
Therefore the stabilizer of $(0,\infty)$ in $\Mob$ is 
the subgroup $\delta(\R)$, which coincides with 
the stabilizer of $h$ under the adjoint action. 
This already shows that $\cW_+(W)$ and $\cI(\bS^1)$ 
are isomorphic homogeneous spaces of $\Mob$.
In particular, we can associate to an interval $I = g(0,\infty)$ 
the reflection $\tau_I=g \tau g^{-1}$ and the one-parameter group 
$\delta_I := g \delta g^{-1}$. 
Note that $\tau_I$ is an orientation reversing involution mapping 
$I$ to the complementary open interval~$I'$. 
We write $x_I := \Ad(g)h$ for the infinitesimal generator 
of $\delta_I$, so that the assignment 
$I \mapsto x_I$ defines an equivariant
bijection $\cI(\bS^1) \to \cE(\g)$. 
The  anticlockwise orientation of $\bS^1$, which 
can also be considered as a causal structure, is used here to 
pick the sign of $x_I$ in such a way that the flow $\delta_I$ 
is counter clockwise (future pointing) on $I$. Accordingly, 
$x_{I'} = -x_I$ corresponds to the complementary interval~$I'$.

To identify the natural order on the abstract wedge space $\cG_E = \cW_+(W)$, 
we consider for $X = \pmat{a & b \\ c & -a} \in \g = \fsl_2(\R)$ 
the corresponding fundamental vector field 
\[ {V_X}(x) = \frac{d}{dt}\Big|_{t = 0} \exp(tX).x 
= (a - d)x + b - c x^2 = b + 2 ax - c x^2.\] 
This shows that 
\begin{equation}\label{eq:cone} C := \{ X \in \g \: {V_X} \geq 0\} 
= \Big\{  X= \pmat{a & b \\ c & -a} \: 
b \geq 0, c \leq 0, a^2 \leq -bc\Big\}
\end{equation}
is a pointed generating invariant cone in $\g$. 
The Lie wedge specified by the triple $(h,\tau, C)$ is 
\[ L_W = L(h,\tau,C) 
= \underbrace{\R_+ \pmat{0 & 1 \\ 0 & 0}}_{C_+} \oplus \R h \oplus 
\underbrace{\R_+ \pmat{0 & 0 \\ -1 & 0}}_{C_-} 
= \Big\{ \pmat{a & b \\ c & -a} \: a \in \R, b\geq 0, c\leq 0\Big\}.\] 
We further have $G(W) = G^\up$, and the associated semigroup is 
\[ \cS_W = \exp(C_+) \exp(\R h) \exp(C_-) 
= \{ g \in G^\up \: g(0,\infty) \subeq (0,\infty) \}.\]
Therefore the map 
\begin{equation}
  \label{eq:iso1}
\cG_E =  \cW_+(W) = \cW(W) \to \cI(\bS^1), \quad 
g.W \mapsto g (0,\infty) 
\end{equation}
defines an order preserving  bijection between the 
abstract wedge space $\cW(W)$ and the ordered set~$\cI(\bS^1)$. \\

\nin (d) We now consider \textbf{the universal covering of the M\"obius group}~$\Mob$. 
Concretely, we put $G := \tilde\Mob \rtimes \{\1,\tilde\tau\}$, where 
$\tilde\tau$ acts on $\tilde\Mob$ by integrating $\Ad(\tau)$ from 
\eqref{eq:Adtau} to an automorphism of $\tilde\Mob$. 
The group $G$ is a graded Lie group and $G^\up := \tilde\Mob$ 
is its identity component. 
We have a covering homomorphism 
$q_G: G \to \Mob_2$ whose kernel $Z(\tilde\Mob) \cong \Z$ is discrete cyclic. 
We write $\tilde \rho$, $\tilde \delta$, $\tilde\zeta$  and $\tilde \zeta_\cup$ 
for the canonical lifts of the one-parameter groups 
$\rho$, $\delta$, $\zeta$, $\zeta_\cup$ of $\Mob$, 
$\tilde \mbP^+ := \tilde\delta(\R) \tilde\zeta(\R)$, 
and $\tilde \mbP^- := \tilde\delta(\R) \tilde\zeta_\cup(\R)$. 

The action of $\Mob$ on $\bS^1$ lifts canonically 
to an action of the connected group 
$G^\up = \tilde\Mob$ on the universal covering $\tilde{\bS^1} \cong \R$, 
where we fix the covering map 
$q_{\bS^1} \: \R \to \tilde{\oline\R}$, defined by 
$q_{\bS^1}(\theta) = \tilde\rho(\theta).0$, 
which corresponds to the map $\theta \mapsto e^{i\theta} 
= C(\tilde\rho(\theta).0)$ in the circle picture. 
We may thus identify $\tilde{\bS^1}$ with the homogeneous space 
$\tilde\Mob/\tilde\mbP^-\cong\R$. As conjugation with $\tilde\tau$ on 
$\tilde\Mob$ preserves the subgroup $\tilde\mbP^-$, it also acts on 
$\tilde{\bS^1}$. From \eqref{eq:Adtau} it follows that 
it simply acts by the point reflection $\tilde\tau.x = - x$ in the base 
point~$0$. We also note that $Z := \ker(q_G)  = \tilde\rho(2\pi \Z)$ is 
the group of deck transformations of the covering 
$q_{\bS^1}$, which acts by 
\begin{equation}
  \label{eq:deck}
 \tilde\rho(2\pi n).x = x + 2 \pi n \quad \mbox{ for } \quad n \in \Z.
\end{equation}

We call  a non-empty interval $I \subeq \R$ {\it admissible} 
if its length is strictly smaller than $2\pi$ 
and write $\cI(\R)$ for the set of admissible intervals. 
An interval $I \subeq \R$ is admissible if and only if 
there exists an interval $\uline I \in \cI(\bS^1)$ 
such that $I$ is a connected component of $q_{\bS^1}^{-1}(\uline I)$. 
The group $Z$ 
acts transitively on the set of these connected components. 
As $\Mob$ acts transitively on $\cI(\bS^1)$, 
it follows that the group $\tilde\Mob$ acts transitively on the set 
$\cI(\R)$, and that 
composition with $q_{\bS^1}$ yields an equivariant covering map 
\begin{equation}
  \label{eq:iso2a}
\cI(\R) \cong \tilde\Mob/\tilde\delta(\R) \to  \cI(\bS^1)
\cong \Mob/\delta(\R), \quad 
I \mapsto q_{\bS^1}(I). 
\end{equation}

We further have: 
\begin{itemize}
 \item The group $\tilde\mbP^+ = \tilde\delta(\R) \tilde\zeta(\R)$ 
fixes the points $\{(2k+1)\pi\: k\in\Z\}$. 
\item For $I \in \cI(\bS^1)$, let 
$\tilde\delta_I$ be the lift of the one-parameter group $\delta_I$.  
Then $\tilde\delta_I$ 
preserves every interval in the preimage $q_{\bS^1}^{-1}(I)$. 
\item The inverse images of $\tau \in \Mob_2$ 
in $\tilde\Mob_2$ 
are the elements $\tilde\tau_n := \tilde\rho(2\pi n)\tilde\tau$, 
$n \in \Z$. These are involutions, acting by 
\begin{equation}
  \label{eq:taun-act}
  \tilde\tau_n(x)= 2\pi n -x \quad \mbox{ for } \quad x\in\R
\end{equation}
which is a point reflection in the point $\pi n$.   
All pairs $(h, \tilde\tau_n)$ are Euler couples in $\cG(\tilde\Mob_2)$, 
and from the discussion of the set of Euler couples 
$\cG_E(\Mob_2)$ under (c), we know that the involutions 
$\tilde\tau_n$ exhaust all possibilities for supplementing $h$ to 
an Euler couple. 

There is an interesting difference to the situation for 
$\Mob_2$, where $\Mob$ acts transitively on the set $\cG_E(\Mob_2)$ 
of Euler couples. To see what happens for $\tilde\Mob_2$, 
recall that the stabilizer of the element $(h,\tau)\in \cG_E(\Mob_2)$ 
in $\Mob$ is the subgroup $\delta(\R)$. 
Its inverse image is the group 
\[ \tilde\delta(\R) \tilde\rho(2\pi \Z) \cong \R \times \Z.\] 

An element $g \in \tilde\Mob$ fixes $(h,\tilde\tau_n)$ if and only if 
$\Ad(g)h = h$ and $g \tilde\tau_n g^{-1} = \tilde\tau_n$. 
The first condition is equivalent to $g$ being of the form 
\[ g = \tilde\delta(t) \tilde\rho(2\pi k)
\quad \mbox{  for some } \quad t\in \R, k \in \Z.\] 
The second condition is equivalent to 
$\tilde\tau g \tilde\tau = \tilde\tau_n g \tilde\tau_n = g$, 
which takes the form 
\[ \tilde\delta(t) \tilde\rho(-2\pi k) = \tilde\delta(t) \tilde\rho(2\pi k),\] 
and this is equivalent to $k = 0$. We conclude that the stabilizer 
of $(h,\tilde\tau_n)$ is 
\begin{equation}
  \label{eq:stab-couple}
\tilde\Mob_{(h,\tilde\tau_n)} = \tilde\delta(\R).
\end{equation}
We also note that 
\[ \tilde\rho(\pi k).(h,\tilde\tau_n)
= ( (-1)^k h,  \tilde\rho(\pi k)\tilde\tau_n\tilde\rho(-\pi k))
= ((-1)^k h,  \tilde\rho(2\pi k)\tilde\tau_n)
= ((-1)^k h,  \tilde\tau_{n+k}).\]
We conclude that the group $\tilde\Mob$ does not act transitively 
on the set $\cG_E$ of Euler couples. It has two orbits: 
\begin{equation}
  \label{eq:twoorb}
\cG_E(\tilde\Mob_2) = G^\up.W_0 \dot\cup G^\up.{W_1}
= \cW_{+}(W_0) \dot\cup \cW_{+}({W_1}) 
\quad \mbox{ for } \quad 
W_0 := (h,\tilde\tau_0), W_1 := (h,\tilde\tau_1).
\end{equation}
We also refer to Example~\ref{ex:halfnet} for a discussion of 
this issue from a different perspective. 

\item The subgroup $\tilde\delta(\R)$ preserves every interval which is a non-trivial 
orbit of $\tilde\delta(\R)$, acting on $\R$. If, conversely, $g \in \tilde\Mob$ 
preserves such an interval, then its image in $\Mob$ is contained 
in $\delta(\R)$, so that 
\[ g = \tilde\delta(t) \tilde\rho(2\pi k)\quad \mbox{  for some } \quad t\in \R, 
k \in \Z.\] 
As every open orbit of $\tilde\delta(\R)$ is an interval of length 
$\pi$, the element $g$ can only preserve such an orbit if $k = 0$. 
This shows that $\tilde\Mob_{(h,\tilde\tau_n)}$ also is the stabilizer 
group of any open $\tilde\delta(\R)$-orbit in $\R$. 
We conclude that, for the Euler couple $W_0 = (h, \tilde\tau_0)$, the map 
\begin{equation}
{\Phi} \:  \cW_+(W_0) \to \cI(\R), \quad 
g.(h,\tilde\tau_0) \mapsto g (0,\pi) 
\end{equation}
defines a $G^\up$-equivariant  bijection between the abstract 
wedge space $\cW_+(W_0) \subeq \cG$ 
and the  set $\cI(\R)$ of admissible intervals in~$\R$.
Since the full group $G$ acts on the space $\cI(\R)$ of intervals, 
$\Phi$ can be used to transport this action to a $G$-action 
 on the space $\cW_+(W_0)$, extending 
the action of the subgroup $G^\up$. 
Since 
$\tau_0(0,\pi)=(-\pi,0)=\rho(-\pi)(0,\pi),$ we have 
$$\Phi^{-1}(\tau_0(0,\pi))=\Phi^{-1}(\rho(-\pi)(0,\pi))
=\rho(-\pi).\Phi((0,\pi))^{-1}=(-h,\rho(-2\pi)\tau_0),$$ 
so that $\tau_0.W_0:=(-h,\rho(-2\pi)\tau_0)$. 
By $G^\up$-equivariance of the map $\Phi$, we conclude that the action
of $G^\down$ on $\cW_+(W_0)$ is given by 
\begin{equation}\label{eq:left} g *_{\rho(-2\pi)} (x,\sigma) := (\Ad^\eps(g), \rho(-2\pi) g \sigma g^{-1}) 
\quad \mbox{ for every } \quad g \in G^\down.\end{equation}
Here we use that $\tilde\rho(-2\pi) \in Z(G^\up)$.
Note that we have chosen $(0,\pi)$ to be the image of $W_0$ throught $\Phi$. Further possible actions come from the identifications 
\begin{equation}
{\Phi_n} \:  \cW_+(W_n) \to \cI(\R), \quad 
g.(h,\tilde\tau_n) \mapsto g (0,\pi) 
\quad \mbox{ with } \quad 
W_n=(h,\tau_n), \end{equation}
and one can likewise  see that
\[  g *_{\alpha_n} (x,\sigma) := (\Ad^\eps(g), \alpha_n g \sigma g^{-1}) 
\quad \mbox{ for } \quad g \in G^\down \quad \mbox{ and } \quad 
\alpha_n = \tilde\rho((2n-1)2\pi) \in Z(G^\up), \] 
extends the action of $G^\up$ on $\cW_+(W_n)$ to $G$ and $\Phi=\Phi_0$ for $n=0$ 
(see also \eqref{eq:twistact} and Section~\ref{sect:alftwist} for this kind of action).
\end{itemize}

\nin (e) Let $q \: \Mob^{(n)} \to \Mob$ 
be \textbf{the $n$-fold covering group} of 
$\Mob$ and 
$\rho^{(n)}, \delta^{(n)}, \zeta^{(n)}$ and 
$\zeta_{\cup}^{(n)}$ be the lifts of the corresponding one-parameter groups 
of $\Mob$. We further put $\mbP^{-,(n)} := \delta^{(n)}(\R) \zeta_{\cup}^{(n)}(\R)$, 
so that we obtain an $n$-fold covering 
\[ q_n \: \bS^1_n := \Mob^{(n)}/\mbP^{-,(n)}  \to \bS^1= \Mob/\mbP^{-}, \quad 
g \mbP^{-,(n)} \mapsto q(g) \mbP^{-} \] 
of the circle, and the action of the one-parameter group 
$\rho^{(n)}$ induces a diffeomorphism 
\[ \R/2\pi n \Z \to \bS^1_n, \quad 
[t] \mapsto \rho^{(n)}(t).0 \] 
The set of wedges can be described analogously 
to the case (d), but there is a difference depending on the parity 
of~$n$. If $n$ is even, the group $G^\up$ has two orbits in the set 
$\cG_E$ of Euler couples, but if $n$ is odd, there is only one. 
Indeed, for $n = 2k$, the element $\rho^{(n)}(2\pi k)$ acts as an involution 
on $\bS^1_n$. So it fixes all Euler 
couples $(h,\tilde\tau_n)$, even if it does NOT fix any proper  interval 
 in $\bS^1_n$ (see also Example~\ref{ex:halfnet}). \\

\nin (f) The example arising most prominently in physics is the 
proper \textbf{Poincar\'e group }
\[ G := \cP_+ := \R^{1,d} \rtimes \SO_{1,d}(\R), \qquad 
G^\up := \cP_+^\up := \R^{1,d} \rtimes \SO_{1,d}(\R)^\up.\] 
It acts on $1+d$-dimensional Minkowski space $\R^{1,d}$ 
as an isometry group of the Lorentzian metric given by 
$(x,y)= x_0y_0-\bx\by$ for $x = (x_0, \bx) \in \R^{1,d}$. 
Writing 
\[ V_+ := \{ (x_0, \bx) \in \R^{1,d} \: x_0 > 0, x_0^2 > \bx^2\}\] 
for the open future light cone, the 
grading on $G$ is specified by time reversal, i.e., 
$gV_+ = \eps(x,g) V_+$. In particular $C := \oline{V_+}$ is a pointed 
closed convex cone satisfying \eqref{eq:Cinv}. 
For $d > 1$, this is, up to sign, the only non-zero pointed 
invariant cone in the Lie algebra $\g$. 

The generator $k_{1} \in \so_{1,d}(\R)$ of the Lorentz boost on the 
$(x_0,x_1)$-plane 
\[  k_1(x_0,x_1,x_2, \ldots, x_{d}) = (x_1, x_0, x_2, \ldots, x_{d})\] 
is an Euler element. It combines with the spacetime 
reflection $j_1(x) =(-x_0,-x_1,x_2,\ldots,x_d)$ 
to the Euler couple $(k_1, j_1)$. 
We associate to $(k_1, j_1)$ the spacetime region 
\[ W_1=\{x\in\RR^{1+d}: |x_0|<x_1\}, \] 
the {\it standard right wedge}, 
 and note that $W_1$ is invariant under $\exp(\R k_{1})$. 
It turns out that the semigroup $\cS_{( k_1,j_1)}$ associated to the 
couple $( k_1,j_1)$ in Definition~\ref{def:abs-struc} satisfies 
\begin{equation}
  \label{eq:semiw1}
\cS_{( k_1, j_1)} 
= \{ g \in G \: gW_1 \subeq W_1\} =: \cS_{W_1}
\end{equation}
(see \cite[Lemma~4.12]{NO17}). 
From \eqref{eq:semiw1} it follows that the map 
\begin{equation}
  \label{eq:abs-conc-wedge-poincare}
\cW_+ = \cW = G^\up.(k_1, j_1) \ni g.( k_1, j_1) \mapsto gW_1 
\end{equation}
defines an order preserving 
 bijection between the abstract wedge space $\cW \subeq \cG$ 
and the set of wedge domains in Minkowski space~$\R^{1+d}$. 
For an abstract wedge $W = ( k_W, j_W) \in \cW$, 
the Euler element $k_W$ is the corresponding boost generator. 
For an axial wedge $W_i:=\{x\in\RR^{1+d}:|x_0|<x_i\}$, $i =1,\ldots, n$, 
the corresponding Euler couple will be denoted $(k_i,j_i)$.
\end{exs}

\subsection{Nets of wedges, isotony, central locality and covering groups}

In the following sections we will focus on the description of 
relative positions of  wedges, 
in particular wedge inclusions and the locality principle.
\subsubsection{Wedge inclusion}\label{sect:wedgeinc}

Firstly consider this wedge inclusion configuration called \textit{half-sided modular inclusion}:
\begin{defn}
Let $W_0 = (x,\sigma) \in \cG$ and 
$y \in \pm C$ with $[x,y] = \pm y$. Then $\exp(y) \in \cS_{W_0}$ 
(Definition~\ref{def:abs-struc}(b)), so that 
\[ W_1  := \exp(y).W_0 \leq W_0.\] 
We then call $W_1 \leq W_0$ a {\it $\pm$half-sided modular inclusion}. 
\end{defn}

The next lemma shows that any wedge inclusion can be described in terms of positive and negative half-sided modular inclusions. 

\begin{lem}\label{lem:isoW} If $W_1 \leq W_3$ in $\cG$, then there exists an 
element $W_2 \in \cG$ with $W_1 \leq W_2 \leq W_3$ for which 
the inclusion $W_1 \leq W_2$ is $+$half-sided modular and 
the inclusion $W_2 \leq W_3$ is $-$half-sided modular. 
\end{lem}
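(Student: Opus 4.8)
We are given $W_1 \leq W_3$ in $\cG$, meaning there is $s \in \cS_{W_3}$ with $W_1 = s.W_3$. By the polar decomposition of $\cS_{W_3}$ recalled in Definition~\ref{def:abs-struc}(b), we can write $s = \exp(y_+)\,g\,\exp(y_-)$ with $y_\pm \in C_\pm(W_3)$ and $g \in G^\up_{W_3}$. The plan is to insert an intermediate wedge by splitting this product: set $W_2 := \exp(y_-).W_3$ (or a conjugate thereof), and verify that $W_1 \leq W_2$ is $+$half-sided modular while $W_2 \leq W_3$ is $-$half-sided modular.

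**Carrying it out.** First I would observe that $\exp(y_-) \in \cS_{W_3}$ with $y_- \in C_-(W_3)$, so $W_2 := \exp(y_-).W_3 \leq W_3$ is by definition a $-$half-sided modular inclusion (here I use $[x_{W_3}, y_-] = -y_-$ and $y_- \in -C \cap \g^{-\sigma}$, which is exactly the condition in the definition). For the other half, I need $W_1 \leq W_2$ to be $+$half-sided modular, i.e.\ of the form $\exp(z).W_2$ with $z \in C_+(W_2)$. Now $W_1 = \exp(y_+)\,g\,\exp(y_-).W_3$. Since $g \in G^\up_{W_3}$ fixes $W_3$, I can absorb it: using Lemma~\ref{lem:1.1}(v), $\cS_{g.W} = g\cS_W g^{-1}$, and more relevantly, conjugating $\exp(y_-)$ past $g$ gives $g\exp(y_-) = \exp(\Ad(g)y_-)\,g$, and since $\Ad(g)$ preserves the decomposition data of $W_3$ (as $g$ stabilizes $W_3$), $\Ad(g)y_- \in C_-(W_3)$ as well. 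So without loss of generality I may rearrange and assume the stabilizer part sits on the far right: $W_1 = \exp(y_+)\,\exp(y_-')\,.\,W_3$ with $W_2 = \exp(y_-').W_3$ and $W_1 = \exp(y_+).W_2$. It remains to check that $y_+ \in C_+(W_2)$. This is where I use Lemma~\ref{lem:1.1}(v) again: $C_+(W_2) = C_+(\exp(y_-').W_3) = \Ad(\exp(y_-'))\,C_+(W_3)$ since $\exp(y_-') \in G^\up$. So I need $\exp(-y_-') y_+ \in C_+(W_3)$, equivalently $y_+ \in \Ad(\exp(y_-'))C_+(W_3)$. But $y_+ \in C_+(W_3)$ and $[y_-', y_+]$ lies in $\g_0(x_{W_3}) = \ker(\ad x_{W_3})$, so $\Ad(\exp(y_-'))y_+ = y_+ + [y_-',y_+] + \cdots$ need not equal $y_+$.

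**The main obstacle.** The delicate point is precisely this: after writing $W_1 = \exp(y_+')\,.\,W_2$ for the appropriate $y_+'$, one must confirm $y_+' \in C_+(W_2)$, which requires both that $y_+'$ lies in the cone $C$ and that it lies in the correct eigenspace $\ker(\ad x_{W_2} - \1)$ and the correct $\sigma_{W_2}$-eigenspace. The eigenspace and involution conditions are automatic once we know $\exp(y_-')$ maps $W_3$'s structure to $W_2$'s structure equivariantly (Lemma~\ref{lem:1.1}(v)); the real content is the cone membership. The cleanest route is to avoid commuting the exponentials at all: instead, directly argue that since $s = \exp(y_+)g\exp(y_-) \in \cS_{W_3}$ and the polar decomposition is genuine (each factor individually lies in $\cS_{W_3}$, per the cited \cite{Ne19b}), the element $\exp(y_-)$ alone gives $W_2 := \exp(y_-).W_3 \leq W_3$, and then $W_1 = (\exp(y_+)g)\,.\,W_2$ with $\exp(y_+)g \in \cS_{W_3}$; one checks $\exp(y_+)g \in \cS_{W_2}$ because $\cS_{W_2} = \exp(y_-)\cS_{W_3}\exp(-y_-)$ is a reasonable identity to probe (it would follow if $\cS_W$ is monotone in $W$ in the appropriate sense), and that this element is in fact a $+$half-sided modular generator for $W_2$. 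Here the honest worry is whether the intermediate element produced is genuinely of half-sided modular type rather than a general element of $\cS_{W_2}$; resolving this likely needs the explicit structure that $C_+(W_3) \subseteq \g_1(x_{W_3})$ is abelian and that $\Ad(\exp(y_-'))$ acts on it by a unipotent transformation landing back in $C$ by invariance of $C$ under $\Ad(G^\up)$ restricted appropriately — but since $\Ad(g)C = \eps_G(g)C$ and $\exp(y_-') \in G^\up$, we do get $\Ad(\exp(y_-'))C = C$, so $\Ad(\exp(y_-'))y_+ \in C$, and intersecting with the correct eigenspace (which is $\Ad(\exp(y_-'))$-invariant image of $\g_1(x_{W_3})$, namely $\g_1(x_{W_2})$) closes the argument. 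So the final step I would write carefully is: $y_+' := \Ad(\exp(y_-'))y_+ \in \Ad(\exp(y_-'))\big(C \cap \g^{-\sigma_{W_3}} \cap \g_1(x_{W_3})\big) = C \cap \g^{-\sigma_{W_2}} \cap \g_1(x_{W_2}) = C_+(W_2)$, using Lemma~\ref{lem:1.1}(v) throughout, and this is exactly a $+$half-sided modular generator for $W_2$, completing the chain $W_1 = \exp(y_+').W_2 \leq W_2 = \exp(y_-').W_3 \leq W_3$.
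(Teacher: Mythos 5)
There is a genuine gap, and it is exactly the one you flag mid-argument but then fail to resolve. Using the decomposition $s = \exp(y_+)\,g\,\exp(y_-)$ and setting $W_2 := \exp(y_-').W_3$ with $y_-' = \Ad(g)y_-$, you correctly get $W_1 = \exp(y_+).W_2$; but this is not a $+$half-sided modular inclusion because $y_+ \in C_+(W_3)$, not $C_+(W_2)$. Your proposed fix, namely the chain $W_1 = \exp(y_+').W_2$ with $y_+' := \Ad(\exp(y_-'))y_+$, is simply false as an identity: writing $a := \exp(y_-')$, one has
\[
\exp(y_+').W_2 = a\exp(y_+)a^{-1}.(a.W_3) = a\exp(y_+).W_3 = \exp(y_-')\exp(y_+).W_3,
\]
whereas $W_1 = \exp(y_+)\exp(y_-').W_3$; these disagree unless $y_+$ and $y_-'$ commute, which fails in general. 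So the element you exhibit in $C_+(W_2)$ does not generate the inclusion $W_1 \leq W_2$.

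The paper avoids this by using the alternate factorization $\cS_{W_3} = \exp(C_-(W_3))\exp(C_+(W_3))G^\up_{W_3}$ (valid by the results cited in the footnote to Definition~\ref{def:abs-struc}), writing $s = g_-g_+g_0$ with the $\exp(C_-)$-factor on the \emph{left}. Then $W_2 := g_-.W_3$ and
\[
W_1 = g_-g_+.W_3 = g_-\exp(y_+)g_-^{-1}.(g_-.W_3) = \exp(\Ad(g_-)y_+).W_2,
\]
and now the conjugation genuinely lands where you need it: $\Ad(g_-)y_+ \in \Ad(g_-)C_+(W_3) = C_+(g_-.W_3) = C_+(W_2)$ by Lemma~\ref{lem:1.1}(v). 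The point is that the outer factor must be $g_-$, so that conjugating the inner $\exp(y_+)$ past it pushes the generator into the cone of the intermediate wedge; with your ordering the conjugation goes the wrong way. Your instinct that Lemma~\ref{lem:1.1}(v) should close the argument is correct, but it only does so after choosing the factorization of $\cS_{W_3}$ with $\exp(C_-)$ first.
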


\begin{prf} That $W_1 \leq W_3$ means that $W_1 = s W_3$ for some 
\[ s \in \cS_{W_3} = \exp(C_-(W_3)) \exp(C_+(W_3)) G^\up_{W_3}.\] 
Accordingly, we write $s = g_- g_+ g_0$ and 
observe that $W_1 = g_- g_+ W_3$ because $g_0 W_3 = W_3$. 
Put $W_2 := g_- W_3$. Then $W_2 \leq W_3$ and $g_+ W_3 \leq W_3$ 
implies $W_1 = g_- g_+ W_3 \leq g_- W_3 = W_2$. 

Further, the inclusion $W_2 \leq W_3$ is $-$half-sided modular 
because $g_- \in \exp(C_-(W_3))$. 
Likewise the inclusion $g_+ W_3 \leq W_3$ is $+$half-sided modular, 
and therefore $W_1 \leq W_2$ is also $+$half-sided modular.
\end{prf}

\subsubsection{Central locality}\label{sect:alftwist}

For a wedge $W = (x,\sigma)$, the dual wedge 
$W' = (-x,\sigma)$ need not be contained in the orbit 
$\cW_+ = G^\up.W$. If, however, $G^\up$ has a non-trivial central
subgroup $Z$ such that, modulo $Z$, the complement $W'$ is contained in $\cW_+$, 
then we use central elements $\alpha \in Z$ 
to define ``twisted complements'' $W^{'\alpha}$ which are contained in $\cW_+$, 
and this in turn leads to a twisted action of the full group $G$ on $\cW_+$. 
We also obtain on $\cW_+$ a complementation map $W \mapsto W^{'\alpha}$.

{Let $Z \subeq Z(G^\up)$ 
be a closed normal subgroup of $G$,  
and $q \: G \to \uG := G/Z$  be the corresponding surjective morphism of graded Lie groups 
with kernel $Z$. If $Z$ is discrete, then  $q$ is a covering map. 
The morphism of graded Lie groups $q$ induces a natural map 
{
  \begin{equation}
    \label{eq:cge}
  q_\cG \: \cG(  G) \to \uline\cG 
:= \{ (x,\uline\sigma)\in\g \times \uG^\down \: \uline\sigma^2 = e, 
\Ad_\g(\uline\sigma)x = x\},
\quad (x,\sigma) \mapsto (x, q(\sigma)), 
  \end{equation}
where $\Ad_\g \: \uG \to  \Aut(\g)$ denotes the factorized 
adjoint action which exists because $Z = \ker(q)$ acts trivially on $\g$.
It restricts to a map 
\begin{equation}
  \label{eq:cge2}
 \cG_E(G) \to \uline\cG_E 
:= \{ (x,\uline\sigma)\in \cE(\g) \times \uG^\down \: \uline\sigma^2 = e, 
\Ad_\g(\uline\sigma) = e^{\pi i \ad x}\}.
\end{equation}
} As the following example shows, neither of these maps is always surjective.
The main obstruction is that, 
although the differential $\L(q)\: \L(G) \to \L(\uG)$ is surjective, 
there may be involutions $\tau \in \uG^\down$ for which 
no involution $\sigma \in   G^\down$ with $q(\sigma) = \tau$ exists. 
This phenomenon is tightly related to the twisted groups $\hat G_z$ discussed 
in Remark~\ref{rem:twists-grad} because these twists disappear 
for $z \in Z$ in $\hat G/Z \cong G/Z$.}

\begin{ex} \mlabel{ex:2.13}
We consider the graded Lie group 
\[ G := \SL_2(\R)
 \{ \1,\gamma \} \subeq \SL_2(\C), \quad \mbox{ where } \quad 
\gamma := \pmat{ i & 0 \\ 0 & -i} 
\quad \mbox{ satisfies} \quad \gamma^2 = - \1. \] 
It has  two connected component and $G^\up = \SL_2(\R)$.\begin{footnote}
{This is the twisted version of the $2$-fold cover of the extended 
M\"obius groups; see Remark~\ref{rem:twists-grad}(d).}
\end{footnote}
The subgroup $Z := \{\pm \1\}$ is central and the quotient map 
$q \:   G \to \uG :=   G/Z$ 
is a $2$-fold covering. The Euler element 
$x := \frac{1}{2}\pmat{1 & 0 \\ 0 & -1} \in \g = \fsl_2(\R)$ 
combines with the involution $q(\gamma) \in \uG^\down$ to 
the Euler couple $(x,q(\gamma)) \in \uline\cG$.
However, the set $\cG(G)$ is empty because $G^\down$ contains no involution. 
In fact, for $g = \pmat{a & b \\ c & d} \in \SL_2(\R)$, the condition 
that $g\gamma$ is an involution is equivalent to 
\[ \pmat{-a & b \\ c & -d} = \gamma g \gamma = g^{-1} 
= \pmat{ d & -b \\ -c & a}. \] 
This is equivalent to $a = -d$ and $b = c = 0$, contradicting 
that $1 = \det(g) = -a^2$. 
We conclude in particular that the maps 
$\cG(G) \to \uline\cG$ and 
$\cG_E(G) \to \cG_E(\uG)$ are not surjective. 
\end{ex}

We now discuss $G^\up$-orbits in $\cG(G)$. 
{In the examples we have in mind, the central subgroup $Z$ is discrete.} 

\nin\textbf{Involution lifts and central wedge orbit.} 
Each element $\sigma \in  G^\down$ acts in the same way on 
the abelian normal subgroup $ Z$ by the involution 
\[ \sigma_ Z \:  Z \to  Z, \quad \gamma \mapsto \gamma^\sigma
:= \sigma \gamma \sigma\] 
which restricts to an involution $\sigma_ Z \in \Aut( Z)$ 
because $ Z$ is central in $G^\up$ and a normal subgroup of~$G$. 
In the following we shall need the subgroups 
\begin{equation}
  \label{eq:gammasubgrps}
 Z^- := \{ \gamma \in  Z\: \gamma^\sigma = \gamma^{-1}\}
\supeq    Z_1 := \{\gamma^\sigma\gamma^{-1} \: \gamma\in  Z\}.
\end{equation} 
For $\gamma \in  Z^-$, the element $\gamma^2 = (\gamma^\sigma\gamma^{-1})^{-1}$ 
is contained in $ Z_1$, so that the quotient group 
$ Z^-/ Z_1$ 
is an elementary abelian $2$-group, i.e., isomorphic to 
$\Z_2^{(B)}$ for some index set $B$. 

For an involution $\sigma \in   G^\down$ and $\beta \in Z(G^\up)$, the element 
$\beta\sigma \in G^\down$  is an involution if and only if 
$\beta \in  Z^-$. Therefore 
\begin{equation}
  \label{eq:*act}
\alpha*(x,\sigma):= (x,\alpha\sigma) 
\end{equation}
defines an action of $ Z^-$ on $\cG(G)$, commuting with the conjugation action 
of $G^\up$ and satisfying  
  \begin{equation}
    \label{eq:gdowneq}
g.(\alpha * (x,\sigma)) 
= \alpha^{-1} * (g.(x,\sigma)) \quad \mbox{ for }\quad 
g \in G^\down, \alpha \in Z^-.
  \end{equation}
For $W = (x,\sigma) \in \cG(  G)$, the fiber over 
$\uW := (x,q(\sigma))$ is thus given by 
\begin{equation}
  \label{eq:fib}
 Z^-* W:=  \{ (x,\alpha\sigma) \: \alpha \in  Z^- \}.
\end{equation}
The subgroup $ Z\subeq   G^\up$ acts by conjugation on the fiber $Z^-*W$: 
\[ \gamma.(x,\sigma) 
= (x, \gamma \sigma \gamma^{-1}) 
= (x, \gamma (\gamma^\sigma)^{-1} \sigma),\] 
so that the quotient group $ Z^-/ Z_1$ 
parametrizes the $ Z$-conjugation orbits in the fiber $ Z^-*W$.
\begin{footnote}{
Considering $ Z$ as a module $\Z_2$-module 
via the involution $\sigma_ Z$, we have 
$Z^1(\Z_2,  Z)\cong  Z^-$ and 
$B^1(\Z_2,  Z) \cong  Z_1,$ 
so that the cohomology group is 
$H^1(\Z_2,  Z) := 
Z^1(\Z_2,  Z)/B^1(\Z_2,  Z) \cong  Z^-/ Z_1$. 
We refer to \cite[Ex.~18.3.15]{HN12} or 
\cite[Thm.\ IV.7.1]{ML63} for more on group cohomology.
}\end{footnote}
Here is an example.
\begin{ex} \label{ex:halfnet}(a) If $ Z \cong \Z$ and $n^\sigma = -n$, then 
$ Z^- = \Z$ and $ Z_1 = 2 \Z$, so that 
$ Z^-/ Z_1 \cong \Z/2\Z$. 

\nin (b) If $ Z = \Z_n$ and $\oline n^\sigma = -\oline n$, then 
$ Z^- = \Z_n$ and $ Z_1 = 2 \Z_n$, so that 
\[   Z^-/ Z_1\cong
\begin{cases}
\Z/2\Z &\text{ if } n \ \text{ is } \text{ even}  \\ 
\{0\} &\text{ if } n \ \text{ is } \text{ odd.}
\end{cases} \] 
\end{ex}

\nin\textbf{Wedge $G^\up$-orbits.} 
{Let $W = (x,\sigma) \in \cG_E(G)$ and $\uline W = (x,q(\sigma)) 
\in \uline\cG$.}
{In general the group $  G^\up$ does not act transitively on 
the inverse image of the  orbit $\ucW_+ := 
\uG^\up.\uW\subeq  \uline\cG$ under $q_\cG$. 
We now describe how this set decomposes into orbits.}
By the transitivity of the $\uG^\up$-action on $\ucW_{+}$, it suffices to consider the 
orbits of the stabilizer 
\[ G^\up_{\uW}=\{g\in G^\up: q(g).\underline W=\underline W\} \] 
on the fiber $Z^- * W$. 
That $g \in   G^\up$ fixes $\uW$ implies 
in particular that {$g\sigma g^{-1}\sigma  = g (g^\sigma)^{-1} \in  Z$.}
This leads to a homomorphism 
\begin{equation}
  \label{eq:def-delta}
\partial \:   G^\up_\uW \to  Z^-, \quad 
g \mapsto g (g^\sigma )^{-1}
\quad \mbox{ with }\quad g.(x,\sigma) 
= (\Ad(g)x, g\sigma g^{-1}) 
= (x, \partial(g)\sigma) .
\end{equation}
As $ Z \subeq   G^\up_{\uW}$, the image $ Z_2 := \partial(  G^\up_{\uW})$ 
is a subgroup containing $ Z_1$.

\begin{ex} 
(An example where $Z_1 \not= Z_2$) We consider the group 
$G = \tilde\Mob \rtimes \{\1,\tilde\tau\}$ from Example~\ref{ex:models}(d) 
and the canonical homomorphism 
\[ q \: G \to \uline G := \SL_2(\R) \rtimes \{\1,\sigma\}, \quad 
\sigma := \pmat{-1 & 0 \\ 0 & 1}\] 
whose  kernel is the central subgroup $Z := 2 Z(G^\up) 
\subeq Z(G^\up) \cong \Z$ of 
index two. 
Now $W = (h,\tilde\tau) \in \cG(G)$ is an Euler couple mapped to 
$\uline W = (h,\sigma) \in \uline\cG$. As $z^{\tilde \tau} = z^{-1}$ for 
every $z \in Z$, we have $Z = Z^-$ and $Z_1 = 2Z$ is a subgroup 
of index~$2$. To calculate $Z_2$, we observe that 
\[ \uG^\up_{\uline W} = \uG_h = \exp(\R h) \{ \pm \1\} \quad \mbox{ and } \quad 
G^\up_{\uline W} = \exp(\R h) Z(G^\up).\] 
We conclude that 
\[ Z_2 = \partial\big(G^\up_{\oline W}\big) = \partial(Z(G^\up)) =  
2 Z(G^\up) = Z^- \not= Z_1.\] 

The situation changes if we consider $Z=Z(G^\up)$ and the center-free 
group $\uG = \Mob \rtimes \{\1,\tau\}$ 
instead. Then $Z = Z^- = Z(G^\up)$ and $Z_1 = Z_2 = 2 Z$.
\end{ex}

As the $G^\up$ orbits in 
$q_\cG^{-1}(\uline G^\up.\uline W) = q_\cG^{-1}(\uline \cW_+)$ 
correspond to the $G^\up_{\uline W}$-orbits in the fiber 
$q_\cG^{-1}(\uline W) = Z^- * W$, we obtain the following lemma.

\begin{lemma} The quotient group $Z^-/Z_2$ parametrizes the set of 
$G^\up$-orbits in $q_\cG^{-1}(\ucW_{+})$. 
\end{lemma}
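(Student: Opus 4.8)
The statement to prove is: \emph{The quotient group $Z^-/Z_2$ parametrizes the set of $G^\up$-orbits in $q_\cG^{-1}(\ucW_+)$.} The plan is to reduce everything to the action of the stabilizer $G^\up_{\uline W}$ on the fiber $Z^- * W$ and then compute the orbit set of that action directly.

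First I would record the reduction already announced in the text: by transitivity of the $\uG^\up$-action on $\ucW_+$, every point of $q_\cG^{-1}(\ucW_+)$ is $G^\up$-conjugate to a point lying in the single fiber $q_\cG^{-1}(\uline W) = Z^- * W$ (using Remark~\ref{rem:2.2} to lift an element of $\uG^\up$ through $q$, since $Z = \ker q \subeq G^\up$). Hence the map sending a $G^\up$-orbit to its intersection with $Z^- * W$ is a bijection onto the set of $G^\up_{\uline W}$-orbits in $Z^- * W$; two points of the fiber are $G^\up$-conjugate iff they are $G^\up_{\uline W}$-conjugate, because an element of $G^\up$ carrying one fiber point to another must stabilize $\uline W = q_\cG(W)$.

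Next I would make the $G^\up_{\uline W}$-action on the fiber explicit. By \eqref{eq:fib}, $Z^- * W = \{(x,\alpha\sigma) : \alpha \in Z^-\}$, so the fiber is a torsor (in particular a free homogeneous space) under the $*$-action of the abelian group $Z^-$ from \eqref{eq:*act}. For $g \in G^\up_{\uline W}$, formula \eqref{eq:def-delta} gives $g.(x,\sigma) = (x, \partial(g)\sigma) = \partial(g) * (x,\sigma)$, and more generally, since conjugation by $g$ commutes with the $Z^-$-action (both $Z^-$ and $G^\up$ act, the first by left-multiplication on the involution, the second by conjugation, and $Z^- \subeq Z(G^\up)$ is central so these commute — this is the content of the sentence after \eqref{eq:*act}), we get $g.(\alpha * W) = \alpha * (g.W) = (\alpha\,\partial(g)) * W$ for all $\alpha \in Z^-$. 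Thus, transporting the free $Z^-$-action to identify the fiber $Z^- * W$ with $Z^-$ itself (via $\alpha * W \leftrightarrow \alpha$), the $G^\up_{\uline W}$-action becomes translation by the subgroup $\partial(G^\up_{\uline W}) = Z_2 \subeq Z^-$.

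Finally, the orbit set of the translation action of a subgroup $Z_2$ on the abelian group $Z^-$ is exactly the quotient $Z^-/Z_2$; this is immediate and carries the quotient-group structure. Composing with the bijection from the first step yields the asserted parametrization of $G^\up$-orbits in $q_\cG^{-1}(\ucW_+)$ by $Z^-/Z_2$. The only point requiring a little care — the ``main obstacle'', such as it is — is checking that the $Z^-$-action on the fiber is \emph{free and transitive}, so that the identification $Z^- * W \cong Z^-$ is legitimate: transitivity is built into the description \eqref{eq:fib}, and freeness holds because $\alpha\sigma = \alpha'\sigma$ in $G$ forces $\alpha = \alpha'$; everything else is a formal consequence of the commutation relation $g.(\alpha * W) = \alpha * (g.W)$ together with $\partial$ being a homomorphism with image $Z_2$.
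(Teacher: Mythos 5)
Your proof is correct and takes essentially the same approach as the paper: the paper's own discussion preceding the lemma (transitivity of $\uG^\up$ on $\ucW_+$, reduction to $G^\up_{\uline W}$-orbits in the fiber $Z^-*W$, identification of the fiber as a $Z^-$-torsor, and the fact that $G^\up_{\uline W}$ acts through $\partial$ by $Z_2$-translations) is exactly what you have spelled out. Your additions — verifying freeness of the $*$-action, checking that a fiber-preserving $g\in G^\up$ lies in $G^\up_{\uline W}$, and noting the commutation $g.(\alpha*W)=\alpha*(g.W)$ — are precisely the routine details the paper leaves implicit, and they are all correct.
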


\nin{\textbf{$\alpha$-twisted complement.} The following definition 
generalizes the notion of complementary wedge given in Definition \ref{def:abs-struc} (a).}
\begin{defn} For $\alpha \in{ Z^-}$, we define the 
{\it $\alpha$-twisted  complement}  of 
$W = (x,\sigma) \in \cG(  G)$ by 
\[ (x,\sigma)^{'\alpha} := {(-x,\alpha\sigma)}.\] 
\end{defn}
We will refer to  couples of the form $W^{'\alpha}$ as 
complementary wedges. 
We consider $W^{'\alpha}$ as a ``complement'' of $W$ 
{because $q_\cG$} maps $W^{'\alpha}$ to $W'$ 
(see item (a) below).

\begin{lem}\mlabel{lem:twistact}
For each $\alpha \in Z^-$, the 
 $\alpha$-twisted complementation $W \mapsto W^{'\alpha}$ 
satisfies: 
\begin{itemize}
\item[\rm(a)] For $\alpha \in Z^-$, $W^{'\alpha}$ is mapped by $q_\cG$ onto the complement $W' = (-x, q(\sigma))$ of $\uW = (x,q(\sigma))$. 
\item[\rm(b)] The $\alpha$-twisted complementation is not involutive if $\alpha^2 \not=e$. 
\item[\rm(c)] The map ${}^{'\alpha} \: \cG(  G) \to \cG(  G), 
(x,\sigma) \mapsto (-x, \alpha\sigma)$ 
is $  G^\up$-equivariant.
\item[\rm(d)] In terms of the action \eqref{eq:*act} of $Z^-$ 
on $\cG(G)$, we have
\begin{equation}
  \label{eq:compl-alphaact}
 W^{'\alpha} = \alpha * W' \quad \mbox{ for } \quad 
W \in \cG(G), \alpha \in Z^-.
\end{equation}
\item[\rm(e)] The prescription 
\begin{equation} \label{eq:twistact}
 g *_\alpha (x,\sigma) 
:=\begin{cases}
g.(x,\sigma) & \text{ for } g \in G^\up \\ 
g.(\alpha^{-1} * (x,\sigma)) 
= \alpha * (g.(x,\sigma)) & \text{ for } g \in G^\down.
\end{cases}
\end{equation}
defines an action of $G$ on $\cG(G)$. 
This action  satisfies 
\begin{equation}
  \label{eq:alpha-act2}
 W^{'\alpha} = \sigma *_\alpha W \quad \mbox{ for } \quad 
W  = (x,\sigma) \in \cG(G), \alpha \in  Z^-.
  \end{equation}
If $W^{'\alpha}\in G^\up.W$, then $\cW_+ = G^\up.W$ is invariant 
under the full group $G$ with respect to the $\alpha$-twisted 
action. 
\item[\rm(f)] There exists an $\alpha \in Z^-$ with 
$W^{'\alpha} \in \cW_+$ if and only if $\uline W' := (-x,q(\sigma)) 
\in G^\up.\uline W$. If this is the case, then 
$W^{'\beta} \in \cW_+$ for $\beta \in Z^-$ if and only if 
$\beta^{-1}\alpha \in Z_2$. 
{In this case, the twisted actions of $g \in G^\down$ are related by 
$g*_\beta = (\beta \alpha^{-1}) * g *_\alpha$.}
\end{itemize}
\end{lem}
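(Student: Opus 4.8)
The plan is to derive (a)--(d) directly from the definitions, to obtain (e) from a four--case verification of the action axiom together with the decomposition $G = G^\up \cup G^\up\sigma$, and to reduce (f) to the fibre analysis of $q_\cG$ carried out just before the lemma; the genuinely delicate point will be the very last assertion of (f). Throughout I will use that $Z \subeq Z(G^\up)$ acts trivially on $\g$, that $Z^-$ (hence also $Z_1$ and $Z_2$) is a subgroup of the abelian group $Z$, that $\alpha * (x,\sigma) = (x,\alpha\sigma)$ defines an action of $Z^-$ on $\cG(G)$ commuting with the conjugation action of $G^\up$, and the twist rule \eqref{eq:gdowneq}, i.e.\ $g.(\alpha * W) = \alpha^{-1} * (g.W)$ for $g \in G^\down$ and $\alpha \in Z^-$.

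For (a) I apply $q_\cG$ to $W^{'\alpha} = (-x,\alpha\sigma)$ and use $q(\alpha) = e$ to get $q_\cG(W^{'\alpha}) = (-x,q(\sigma)) = \uW'$, the complement of $\uW = (x,q(\sigma))$; that $W^{'\alpha} \in \cG(G)$ follows because $(\alpha\sigma)^2 = \alpha\alpha^\sigma = e$ (as $\alpha\in Z^-$) and $\Ad(\alpha\sigma)(-x) = -\Ad(\sigma)x = -x$. Part (d) is the one--line identity $\alpha * W' = \alpha * (-x,\sigma) = (-x,\alpha\sigma) = W^{'\alpha}$, and part (b) follows from $(W^{'\alpha})^{'\alpha} = (x,\alpha^2\sigma)$, which equals $W$ exactly when $\alpha^2 = e$. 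For (c) one may either compute that $(g.W)^{'\alpha} = (-\Ad(g)x,\, \alpha g\sigma g^{-1})$ and $g.(W^{'\alpha}) = (-\Ad(g)x,\, g\alpha g^{-1}\, g\sigma g^{-1})$ coincide since $g\alpha g^{-1} = \alpha$ for $g \in G^\up$, or deduce it from (d) and the $G^\up$--equivariance of $W \mapsto W'$ and of $\alpha * (\cdot)$.

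For (e), the two formulas for $g \in G^\down$ agree by \eqref{eq:gdowneq} with $\alpha$ replaced by $\alpha^{-1}$. To see that $*_\alpha$ is an action I check $(g_1 g_2) *_\alpha W = g_1 *_\alpha (g_2 *_\alpha W)$ in the four cases given by $\eps_G(g_1)$ and $\eps_G(g_2)$: when both lie in $G^\up$ it is associativity of the original $G$--action; the two mixed cases use that $\alpha * (\cdot)$ commutes with the $G^\up$--action; and when $g_1, g_2 \in G^\down$ one applies \eqref{eq:gdowneq} once and cancels $\alpha * (\alpha^{-1} * (\cdot))$. The identity $W^{'\alpha} = \sigma *_\alpha W$ then follows from $\sigma.W = (\Ad^\eps(\sigma)x, \sigma) = (-x,\sigma) = W'$ and (d). Finally, if $W^{'\alpha} \in G^\up.W = \cW_+$, then $\cW_+$ is $G^\up$--stable by definition, and since $\sigma *_\alpha (g.W) = (\sigma g) *_\alpha W = (\sigma_G(g)\sigma) *_\alpha W = \sigma_G(g).W^{'\alpha}$ ranges over $G^\up.W^{'\alpha} = \cW_+$ as $g$ ranges over $G^\up$, the full group $G = G^\up \cup G^\up\sigma$ preserves $\cW_+$ under $*_\alpha$.

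For (f), the ``only if'' direction is immediate from (a), since applying $q_\cG$ sends $W^{'\alpha} \in \cW_+$ to $\uW' \in \uG^\up.\uW = \ucW_+$. For the converse I would choose $\ug \in \uG^\up$ with $\ug.\uW = \uW'$, lift it to $g \in G^\up$ (possible because $q$ maps $G^\up$ onto $\uG^\up$), and observe that $g.W$ and $W'$ then have the same image $\uW'$ under $q_\cG$, hence both lie in the fibre $Z^- * W'$ of \eqref{eq:fib}; thus $g.W = \gamma * W' = W^{'\gamma}$ with $\gamma\sigma = g\sigma g^{-1}$ an involution, so $\gamma \in Z^-$ and $W^{'\gamma} = g.W \in \cW_+$. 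For the refinement I would write $W^{'\alpha} = g_0.W$ with $g_0 \in G^\up$; then for $\beta \in Z^-$ we have $W^{'\beta} = (\beta\alpha^{-1}) * W^{'\alpha} = g_0.\big((\beta\alpha^{-1}) * W\big)$ by centrality of $\beta\alpha^{-1}$ in $G^\up$, so $W^{'\beta} \in \cW_+ = G^\up.W$ iff $(\beta\alpha^{-1}) * W = (x,(\beta\alpha^{-1})\sigma)$ lies in $G^\up.W$, i.e.\ iff some $h \in G^\up$ satisfies $h.W = (x,(\beta\alpha^{-1})\sigma)$. Such an $h$ forces $\Ad(h)x = x$ and $q(h)q(\sigma)q(h)^{-1} = q(\sigma)$, hence $h \in G^\up_{\uW}$ and $\partial(h) = \beta\alpha^{-1}$; conversely any $h \in G^\up_{\uW}$ with $\partial(h) = \beta\alpha^{-1}$ gives $h.W = (x,\partial(h)\sigma) = (\beta\alpha^{-1}) * W$. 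Hence $W^{'\beta} \in \cW_+$ iff $\beta\alpha^{-1} \in \partial(G^\up_{\uW}) = Z_2$, equivalently $\beta^{-1}\alpha \in Z_2$ since $Z_2$ is a subgroup of the abelian group $Z$; and then, for $g \in G^\down$, $g *_\beta W_1 = \beta * (g.W_1) = (\beta\alpha^{-1}) * \big(\alpha * (g.W_1)\big) = (\beta\alpha^{-1}) * (g *_\alpha W_1)$, which is the asserted relation $g *_\beta = (\beta\alpha^{-1}) * g *_\alpha$. The main obstacle will be this last step: recognising that the set of admissible twists $\beta$ is a single coset of the image $Z_2 = \partial(G^\up_{\uW})$ of the homomorphism $\partial$ on the stabiliser, which is precisely the point at which the fibre description preceding the lemma is needed.
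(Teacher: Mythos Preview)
Your proof is correct and follows essentially the same route as the paper's own argument; you have simply spelled out in full the computations that the paper compresses into phrases like ``easy to see'' and ``follows easily.'' In particular, your treatment of (f)---lifting an element realising $\uline W' \in \uG^\up.\uline W$, identifying the fibre via the $Z^-$--action, and then characterising the admissible $\beta$ via the homomorphism $\partial \colon G^\up_{\uline W} \to Z^-$---matches the paper's proof step for step, including the final observation that $g *_\beta = (\beta\alpha^{-1}) * g *_\alpha$ for $g \in G^\down$.
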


\begin{prf} (a) and (b) are easy to see. \\
\nin (c) follows from $\alpha \in Z(  G^\up)$ 
and the $G^\up$-equivariance of the complementation map. \\
\nin (d) {is immediate from the definition of 
$\alpha * W$.}\\
\nin (e) That the prescription defines an action follows easily 
from the fact that 
$g_1 *_\alpha (g_2 *_\alpha W)= (g_1 g_2).W$ for $g_1, g_2 \in G^\down$ 
(cf.\ \eqref{eq:gdowneq})). 
The relation \eqref{eq:alpha-act2} follows from 
$\sigma.W = \sigma.(x,\sigma) = (-x,\sigma)$. 
For the last statement, we note that by \eqref{eq:alpha-act2}, 
the relation $W^{'\alpha} \in \cW_+$ implies 
\[ G *_\alpha \cW_+ = \cW_+ \cup \sigma *_\alpha \cW_+ 
=  \cW_+ \cup G^\up.W^{'\alpha} 
=  \cW_+ \cup G^\up \cW_+ = \cW_+.\]
\nin (f) As $q_\cG(\cW_+) = \uline\cW_+ = G^\up.\uline W$ 
and $q_\cG(W^{'\alpha}) = \uline W'$, the inclusion 
$W^{'\alpha} \in \cW_+$ implies that $\uline W' \in \uline\cW_+$. 
If, conversely, $\uline W' \in \uline\cW_+$, then there exists a 
$g \in G^\up$ with 
\[ (-x, q(\sigma)) = g.(x,q(\sigma)) = (\Ad(g)x, q(g\sigma g^{-1})), \] 
so that $\alpha := g\sigma g^{-1} \sigma \in \ker(q) = Z$ satisfies 
\[ \cW_+ \ni g.W = g.(x,\sigma) 
= (-x, g\sigma g^{-1})
= (-x, \alpha \sigma) = \alpha * W' = W^{'\alpha}.\] 
Now suppose that $W^{'\alpha} = \alpha * W' \in \cW_+$. Then 
$W^{'\beta} = \beta * W' \in \cW_+$ is equivalent to 
$\beta\alpha^{-1}* W^{'\alpha} = W^{'\beta} \in \cW_+$, and this is 
equivalent to $\beta^{-1}\alpha * \cW_+ = \cW_+$. 
Next we observe that the relation $\beta\alpha^{-1}* W \in \cW_+$ 
is equivalent to the existence of some 
$g \in G^\up_{\uline W}$ with $g.W = (x,\beta^{-1}\alpha \sigma)$, 
which means that $\beta\alpha^{-1} \in Z_2 = \partial(G^\up_{\uline W})$. 
\end{prf}

\begin{ex} \mlabel{ex:pgl2tilde} 
We show that for $G = \tilde\Mob \rtimes \{\1,\tilde\tau\}$ 
as in Example~\ref{ex:models}(d), 
we have to use twisted complements to obtain a $G^\up$-orbit 
in $\cG_E(G)$ invariant under complementation. 
We have already seen that 
$\cG_E(\tilde\Mob)$ contains two $G^\up$-orbits, 
represented by the couples 
$W_0 = (h,\tilde\tau)$ and $W_1 = (h,\tilde\tau_1)$. 
The complement $W_0' = (-h, \tilde\tau)$ satisfies 
\[ \tilde\rho(\pi)W_0' 
= (h, \tilde\rho(\pi)\tilde\tau\tilde\rho(-\pi))
= (h, \tilde\rho(2\pi)\tilde\tau)
= (h, \tilde\tau_1) = W_1,\] 
so that complementation exchanges the two $G^\up$-orbits 
in $\cG_E(\tilde\Mob)$. On the other hand, 
for the action $*_\alpha$ defined in \eqref{eq:twistact}, 
the full group $G$ preserves both $G^\up$-orbits.

Since $\Ad(\rho(-\pi))h = -h$, 
the element $g := \tilde \rho(-\pi)$ can be used to define a suitable
 $\alpha$-twisted conjugation as follows. We note that 
\[\alpha :=   g (g^{\tilde\tau})^{-1} = \tilde \rho(-\pi) \tilde \rho(-\pi) 
= \tilde \rho(-2\pi) \] 
is a generator of $Z := Z(\tilde\Mob) = Z^-$. 
We now have 
\[ W_0^{'\alpha} 
= (-h, \alpha\tilde\tau) 
= \tilde\rho(-\pi).(h, \tilde\tau) 
= \tilde\rho(-\pi).W_0 \in G^\up.W_0.\] 
Thus $\cG_E(\tilde\Mob_2)$ consists of two 
$G^\up$-orbits, none of which is invariant under complementation, 
but both are invariant under $\alpha$-complementation.  An analogous 
computation leads to the same picture for even coverings of~$\Mob$, 
in particular for the fermionic case.
\end{ex}

\section{Euler elements and $3$-graded Lie algebras}

In this section we exhibit a general relation between two notions that are 
a priori unrelated: complementary and orthogonal wedges. 
For the sake of simplicity we consider 
in this introductory part 
the case of the Poincar\'e group $G=\cP_+$ on $\RR^{1+2}$ 
(cf.\ Example~\ref{ex:models}). 
We have seen that if $W=( k_W,j_W)$ is a wedge of the group $G$, then $W'=(- k_W,j_W)$ is the opposite wedge. The $\pi$-spatial rotation 
$\rho(\pi)$ takes $W$ onto $W'$ and vice versa. 
Thus there exists a group element $g\in G^\uparrow=\cP_+^\uparrow$ such that 
$\Ad(g) k_W =- k_W$, and in this sense $ k_W$ is 
{\it symmetric}. 
This ensures a symmetry between a wedge and its opposite 
wedge, which corresponds to its causal complement in Minkowski spacetime.
 
Typical pairs of orthogonal  wedges are the coordinate wedges 
\begin{equation}\label{eq:axialwedge}
 W_i=\{(t,x)\in\R^{1+2}: |t|<x_i\}\equiv( k_i,j_i) \in \cG_E(G)
\quad \mbox{  for } \quad i=1,2.
\end{equation}
The importance of this couple of wedges comes by the clear geometric relation: the wedge reflection of $W_1$ acts on the orthogonal wedge  as
\[ j_1.W_2 = W_2 \quad \mbox{ resp.} \quad  \Ad (j_1)( k_2)= - k_2.\]
In \cite{GL95} the authors study the orthogonality relation in order to 
extend the unitary covariance representation of 
the Poincar\'e group $\cP_+^\uparrow$ to an (anti-)unitary 
representation of the graded group $\cP_+$ and 
establish the Spin--Statistics Theorem. 
In this extension process, orthogonal Euler wedges 
play a crucial role. This point will be discussed from our 
abstract perspective in Section~\ref{sect:sl2int} below.

In this section we  will see how, in our setting, the existence of a 
symmetric Euler element in the Lie algebra ensures 
the existence of an orthogonal pair.  
For symmetric Euler elements, the orthogonality 
relation for Euler elements is symmetric, and orthogonal pairs of Euler elements 
generate a subalgebra isomorphic to $\fsl_2(\R)$ in~$\fg$.

\subsection{Preliminaries on Lie algebras 
and algebraic groups}

In this subsection we collect some basic facts on 
finite dimensional real Lie algebras and on real algebraic groups 
(see \cite{HN12} for Lie algebras 
and \cite{Ho81} for algebraic groups).
 
A Lie algebra $\g$ is called {\it simple} if $\g$ and $\{0\}$ 
are the only ideals of $\g$. It is called 
{\it semisimple} if 
it is a direct sum of simple ideals $\g = \g_1 \oplus \cdots \oplus \g_n$. 
On the other side of the spectrum, 
we have {\it solvable} Lie algebras. These are the ones 
for which the derived series defined by 
$D^0(\g) := \g$ and $D^{n+1}(\g) := [D^n(\g), D^n(\g)]$ 
satisfies $D^N(\g) = \{0\}$ 
for some $N \in \N$. Here 
\[ [\g,\g] = \Spann \{ [x,y] \: x,y \in \g \} \] 
is the {\it commutator algebra} of $\g$. 

The fundamental theorem on the Levi decomposition asserts that,
if $\fr$ is the maximal solvable ideal of~$\g$, then there exists a 
semisimple subalgebra $\fs$ (a Levi complement), such that 
\[ \g \cong \fr \rtimes \fs \] 
is a semidirect sum, i.e., a vector space direct sum of the ideal 
$\fr$ and the subalgebra $\fs$.

A key feature in the structure theory of semisimple real Lie algebras 
is the concept  of a compactly embedded subalgebra. 
A subalgebra $\fk \subeq \g$ is said to be {\it compactly embedded} 
if the subgroup $\Inn_\g(\fk) = \la e^{\ad \fk} \ra \subeq \Aut(\g)$ has 
compact closure. 
We write $\Inn(\g) := \Inn_\g(\g)$ for the subgroup of 
{\it inner automorphisms} of~$\g$.

An element $x \in \g$ is called 
\begin{itemize}
\item {\it elliptic}, if $\ad x$ is semisimple with purely imaginary 
eigenvalues, which is equivalent to  the one-dimensional Lie 
subalgebra $\R x$ being compactly embedded. 
\item {\it hyperbolic}, if $\ad x$ is diagonalizable. 
\item {\it nilpotent}, if $\ad x$ is nilpotent, i.e., 
$(\ad x)^n = 0$ for some $n \in \N$. 
\end{itemize}
The Cartan--Killing form 
\[ \kappa \: \g \times \g \to \R, \quad 
\kappa(x,y) := \tr(\ad x \ad y) \] 
is a symmetric bilinear form on $\g$ invariant under the automorphism 
group $\Aut(\g)$. Recall that a finite dimensional real Lie algebra 
is semisimple if and only if $\kappa$ is non-degenerate 
(Cartan's criterion). 
Note that $\kappa(x,x) = \tr( (\ad x)^2) \geq 0$ 
if $x$ is hyperbolic and $\kappa(x,x) \leq 0$ if $x$ is elliptic. \\

In the proof of Proposition~\ref{prop:1.1} below we shall use some 
results from the theory of linear algebraic groups. We now recall the basic 
concepts. If $V$ is a finite dimensional real vector space, 
then $\GL(V)$ denotes the group of linear automorphisms of $V$. 
Any polynomial function on the linear space $\End(V)$ defines a 
function on the group $\GL(V)$ and we call a subgroup 
$G \subeq \GL(V)$ {\it algebraic} if it is the zero set of a family 
of polynomial functions $p_j \: \End(V) \to \R$. 
An algebraic group $G$ is said to be 
\begin{itemize}
\item {\it reductive}, if 
each $G$-invariant subspace $V_1 \subeq V$ has a $G$-invariant 
linear complement $V_2$. 
\item {\it unipotent}, if there exists a flag of linear subspaces 
\[ F_0 = \{0\} \subeq F_1 \subeq \cdots \subeq F_n = V \] 
such that $(g-\1)F_j \subeq F_{j-1}$ for $j = 1,\ldots, n$ and 
$g \in G$. 
\end{itemize}
In this context one has a decomposition theorem 
(the Levi decomposition), asserting that 
every algebraic subgroup $G \subeq \GL(V)$ is a semidirect product 
$G \cong U \rtimes L$, where $U$ is unipotent and $L$ is reductive. 
Moreover, for every reductive  subgroup $L_1 \subeq G$ there exists an 
element $g \in G$ with $gL_1 g^{-1} \subeq L$ 
(\cite[Thm.~VIII.4.3]{Ho81}).

\subsection{Symmetric and orthogonal Euler elements} 
\mlabel{app:b.1}

\begin{defn}  A pair $(h,x)$ of Euler elements is called {\it orthogonal} 
if $\sigma_h(x) = - x$ (cf.\ Definition~\ref{def:euler}). 
\end{defn}

\begin{prop}
  \mlabel{prop:1.1} The following assertions hold: 
\begin{itemize}
\item[\rm(i)] An Euler element $h \in \g$ is symmetric, i.e., 
$-h \in \cO_h$, if and only 
if $h$ is contained in a Levi complement $\fs$ and 
$h$ is a symmetric Euler element in~$\fs$.
\item[\rm(ii)] If $\g = \fr \rtimes \fs$ is a Levi decomposition. 
\begin{itemize}
\item[\rm(a)]  If $h \in \g$ is a symmetric Euler element, then 
$\cO_h = \Inn(\g)(\cO_h \cap \fs) = \cO_{q(h)}$, 
where $q \: \g \to \fs$ is the projection map. 
\item[\rm(b)] Two symmetric Euler elements are conjugate under $\Inn(\g)$ 
if and only if their images in $\fs$ are conjugate under~$\Inn(\fs)$. 
\end{itemize}
\end{itemize}
\end{prop}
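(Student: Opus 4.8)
The plan is to make (ii)(a) the substantive statement and to derive (i) and (ii)(b) from it by formal manipulations with the Levi decomposition. Write $\fn\subeq\fr$ for the nilradical of $\g$. I will use two standard facts: first, any two Levi complements of $\g$ are conjugate under $\exp(\ad\fn)$, and since $\fn\cap\fs=\{0\}$, whenever $n\in\fn$, $y\in\fs$ and $e^{\ad n}y\in\fs$ we must have $e^{\ad n}y=y$ (the difference lies in $\fn\cap\fs$); second, if $\phi\in\Inn(\g)$ preserves $\fs$, then $\phi|_\fs\in\Inn(\fs)$ --- indeed $\phi$ automatically preserves the ideal $\fr$, so through the isomorphism $\fs\cong\g/\fr$ its restriction to $\fs$ coincides with the image of $\phi$ under the canonical map $\Inn(\g)\onto\Inn(\g/\fr)\cong\Inn(\fs)$. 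From these one extracts a key auxiliary statement: if $a,b\in\fs$ are conjugate under $\Inn(\g)$, then they are conjugate under $\Inn(\fs)$. For if $\psi(a)=b$ with $\psi\in\Inn(\g)$, then $\psi(\fs)$ and $\fs$ are Levi complements both containing $b$; picking $n\in\fn$ with $e^{\ad n}\psi(\fs)=\fs$, we get $e^{\ad n}b\in\fs$, hence $e^{\ad n}b=b$, so $\phi:=e^{\ad n}\psi$ fixes $\fs$ and sends $a$ to $b$, and $\phi|_\fs\in\Inn(\fs)$. Granting (ii)(a): the implication $\Leftarrow$ in (ii)(b) is clear since $\Inn(\fs)$ maps into $\Inn(\g)$ and $\cO_h=\cO_{q(h)}$, $\cO_{h'}=\cO_{q(h')}$; for $\Rightarrow$, $\cO_h=\cO_{h'}$ gives $\cO_{q(h)}=\cO_{q(h')}$ by (a), whence $q(h)\sim_{\Inn(\fs)}q(h')$ by the auxiliary statement. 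For (i): if $h\in\cE_{\rm sym}(\g)$, then by (a) there is $\phi\in\Inn(\g)$ with $\phi(q(h))=h$, so $h$ lies in the Levi complement $\phi(\fs)$; moreover $-q(h)$ and $q(h)$ lie in $\fs$ and are $\Inn(\g)$-conjugate (using $-h\in\cO_h=\cO_{q(h)}$ and (a)), so $-q(h)\in\Inn(\fs)q(h)$, i.e.\ $q(h)$, and hence $h$, is symmetric in its Levi complement; conversely, if $h$ lies in a Levi complement $\fs'$ and $-h=\bar\phi(h)$ with $\bar\phi\in\Inn(\fs')$, lifting $\bar\phi$ to $\Inn(\g)$ gives $-h\in\cO_h$.

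For (ii)(a), let $h\in\cE_{\rm sym}(\g)$ and fix $\psi\in\Inn(\g)$ with $\psi(h)=-h$. Since $\Inn(\g)$ acts trivially on $\g/[\g,\g]$, the orbit $\cO_h$ lies in $h+[\g,\g]$, so $2h\in[\g,\g]$ and hence $h\in[\g,\g]$. Now pass to the algebraic hull $\mathbf A\subeq\Aut(\g)\subeq\GL(\g)$ of $\Inn(\g)$, with Levi decomposition $\mathbf A=\mathbf U\rtimes\mathbf L$ ($\mathbf U$ unipotent, $\mathbf U\subeq\Inn(\g)$; $\mathbf L$ reductive). As $\ad h$ is semisimple, $\{e^{t\ad h}:t\in\R\}$ is a split torus; conjugating $h$ by a suitable element of $\mathbf U\subeq\Inn(\g)$ we may place this torus in $\mathbf L$, and, arranging $\mathbf L$ to correspond to a Levi complement of $\g$, we may assume $\fs$ itself is $\ad h$-invariant. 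Then $h=q(h)+c$ with $c\in\fr$, and $[c,\fs]=\ad h(\fs)-\ad q(h)(\fs)\subeq\fs\cap\fr=\{0\}$ shows that $c$ centralizes $\fs$; in particular $[q(h),c]=0$, so $\ad c=\ad h-\ad q(h)$ is a difference of commuting semisimple operators and is semisimple, while $\ad q(h)|_\fs=\ad h|_\fs$ is semisimple with spectrum in $\{-1,0,1\}$. Also $q(h)\neq 0$, for otherwise $h$ would centralize $\fs$, forcing $h\in\fr\cap[\g,\g]\subeq\fn$ and contradicting that $\ad h$ is semisimple and nonzero; hence $q(h)\in\cE(\fs)$, and projecting $\psi(h)=-h$ to $\g/\fr\cong\fs$ shows $q(h)\in\cE_{\rm sym}(\fs)$.

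The remaining point, which I expect to be the main obstacle, is that the ``radical part'' $c$ does not enlarge the orbit: $\cO_h=\cO_{q(h)}$. Everything above is bookkeeping, but here one genuinely uses structure theory. After a further conjugation by $\Inn(\g)$ one places $\ad q(h)$ and $\ad c$ (which span an abelian algebra of hyperbolic elements) inside the Lie algebra of a maximal split torus $\mathbf T=\mathbf T_0\cdot\mathbf S$ of $\mathbf A$ adapted to $\mathbf A=\mathbf U\rtimes\mathbf L$, with $\mathbf T_0\subeq[\mathbf L,\mathbf L]$ containing $\ad q(h)$ and $\mathbf S=Z(\mathbf L)^\circ$ the central torus; writing $\ad h=u+v$ with $u\in\mathrm{Lie}(\mathbf T_0)\subeq\ad\fs$ and $v\in\mathrm{Lie}(\mathbf S)$, one uses conjugacy of maximal split tori inside $Z_{\mathbf A}(\ad h)$ to replace $\psi$ by an element of $N_{\mathbf A}(\mathbf T)$ still sending $\ad h$ to $-\ad h$; this element acts on $\mathrm{Lie}(\mathbf T)$ through the Weyl group of $\mathbf L$, which fixes $\mathrm{Lie}(\mathbf S)$ pointwise, so $v=-v$, i.e.\ $v=0$ and $\ad h\in\ad\fs$. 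Thus $h=q(h)+z$ with $z\in Z(\g)$, and a final use of $h\in[\g,\g]$, $\cO_h\subeq h+[\g,\g]$ and the symmetry of $h$ (a nonzero central direction cannot survive, since $-h\in\cO_h$ would pull $\cO_{q(h)}$ off every $\Inn(\g)$-invariant slice through $q(h)$) forces $z=0$. Once $\cO_h=\cO_{q(h)}$ is established, the rest of (ii)(a) is immediate: $q(h)\in\cO_h\cap\fs$, so $\Inn(\g)(\cO_h\cap\fs)\supeq\Inn(\g)q(h)=\cO_{q(h)}=\cO_h$, and the reverse inclusion is clear. The delicate inputs are thus confined to the passage through the algebraic hull --- conjugacy and the Weyl-group action for maximal split tori, together with the vanishing of the residual central ambiguity --- while the reductions to (ii)(a) are purely formal.
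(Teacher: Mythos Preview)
Your reorganization (prove (ii)(a) first, then read off (i) and (ii)(b)) is legitimate, and your auxiliary statement that two elements of $\fs$ which are $\Inn(\g)$-conjugate are already $\Inn(\fs)$-conjugate is correct and cleanly argued. The middle paragraph, where you arrange $\fs$ to be $\ad h$-invariant and deduce that $c=h-q(h)$ centralizes $\fs$ and that $q(h)\in\cE_{\rm sym}(\fs)$, is also fine and parallels what the paper does.

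The genuine gap is the passage from $h=q(h)+z$ with $z\in Z(\g)$ to $z=0$. Your Weyl-group argument, once the bookkeeping with the algebraic hull is sorted out, only yields $\ad h\in\ad\fs$, i.e.\ $h\in\fs+Z(\g)$; it does not touch the central ambiguity. (Indeed, if one arranges $\mathrm{Lie}(\mathbf L)=\ad\fs$ as you suggest, then $\mathbf S=Z(\mathbf L)^\circ$ is trivial because $\fs$ is semisimple, so $v=0$ is automatic and the Weyl-group step is vacuous.) The parenthetical claim that ``a nonzero central direction cannot survive, since $-h\in\cO_h$ would pull $\cO_{q(h)}$ off every $\Inn(\g)$-invariant slice through $q(h)$'' is not a proof: the obvious invariants (Killing form, image in $\g/[\g,\g]$) vanish on $z$, and one can build examples with $Z(\g)\cap[\g,\g]\neq\{0\}$ (e.g.\ $\g=\fsl_2(\R)\ltimes\fn$ with $\fn$ a two-step nilpotent $\fsl_2$-module whose center is $\fsl_2$-trivial and lies in $[\fn,\fn]$), so a structural argument is genuinely needed.

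This is exactly where the paper's proof supplies the missing idea. Having reduced to $h=h_s+h_z$ with $h_s\in\fs$, $h_z\in\fz(\g)$, the paper writes $-h=e^{\ad x}s.h$ with $x\in[\g,\fr]$ and $s\in\Inn_\g(\fs)$; projecting to $\fs$ gives $s.h_s=-h_s$, whence $e^{\ad x}h_s=h_s+2h_z$. Thus the \emph{unipotent} operator $e^{\ad x}$ preserves the two-dimensional subspace $\R h_s+\R h_z$, so $\ad x$ does too, forcing $[h,x]=[h_s,x]\in\g_0(h)$; but then $x\in\g_0(h)$ and $e^{\ad x}h_s=h_s$, a contradiction unless $h_z=0$. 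This short unipotent-invariance argument is what replaces your vague ``slice'' reasoning; without it (or an equivalent) the proof of (ii)(a) does not close.
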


\begin{prf} (i) As $\cO_h \subeq h + [\g,\g]$ follows from the 
invariance of the affine subspace $h + [\g,\g]$ under 
$\Inn(\g)$, the relation 
$-h \in\cO_h$ implies $h \in [\g,\g]$. Let $\g = \fr \rtimes \fs$ 
be a Levi decomposition of $\g$. As $\fs = [\fs,\fs]$, 
the commutator algebra is adapted to this decomposition: 
\[ [\g,\g] = [\fr + \fs,\fr + \fs] = [\g,\fr] + \fs \cong [\fg,\fr] 
\rtimes \fs. \] 
Now $h$ is an Euler element in the ideal 
$[\g,\g] = [\g, \fr] \rtimes \fs$. 
This is the Lie algebra of an algebraic group 
for which $[\g,\fr]$ is the Lie algebra of the unipotent radical 
and $\fs$ the Lie algebra of a reductive complement 
(\cite[Thm.~VIII.3.3]{Ho81}). 
As the algebraic group generated by $\exp(\R \ad h)$ is 
reductive, the conjugacy of Levi decompositions 
(\cite[Thm.~VIII.4.3]{Ho81}) implies that $\ad h$ is contained in some 
Levi complement $\ad \fs$ of $\ad([\g,\g]) = [\ad \g,\ad \g]$. 
Replacing $h$ by another element in $\cO_h$, we 
may thus assume that $h \in \fz(\g) + \fs$ for some 
Levi complement $\fs$ of~$\g$. 
Then $\fr$ and $\fs$ are $\ad h$-invariant, so that the 
$\ad h$-eigenspaces of the restrictions satisfy 
\[ \fr = \fr_1(h) + \fr_0(h) + \fr_{-1}(h) \quad \mbox{ and } \quad 
\fs = \fs_1(h) + \fs_0(h) + \fs_{-1}(h), \] 
and define $3$-gradings of $\fr$ and~$\fs$.
Further $\g_{\pm 1}(h) \subeq [h,\g] \subeq [\g,\g]$ and 
$\fs = [\fs,\fs] \subeq [\g,\g]$ imply that 
$\g =  \fr_0(h) + [\g,\g]$. As $[\g,\g]$ is an ideal and $\fr_0(h)$ 
a subalgebra of $\g$, the subgroup $\Inn_\g([\g,\g])$ of $\Inn(\g)$ is normal, 
and   $\Inn(\g) = \Inn_\g([\g,\g]) \Inn(\fr_0(h))$. 
As $\Inn(\fr_0(h))$ fixes $h$, this in turn shows that 
$\cO_h = \Inn_\g([\g,\g])h = \Inn_\g([\g,\fr]) \Inn_\g(\fs)h$. 
Writing $h = h_z + h_s$ with $h_z \in \fz(\g)$ and $h_s \in \cE(\fs)$, we thus 
find $x \in [\g,\fr]$ and $s \in \Inn_\g(\fs)$ such that
\begin{footnote}{Here we use that the Lie algebra $[\g,\fr]$ is nilpotent, 
so that the exponential function of the corresponding group 
$\Inn_\g([\g,\fr])$ is surjective, see \cite{HN12}.}  
\end{footnote}

\[ -h_z- h_s = -h = e^{\ad x} s.h = h_z + e^{\ad x} s.h_s.\] 
Applying the Lie algebra 
homomorphism $q$ to both sides, \red{we derive from 
$q(h_z) = 0$ and $q \circ e^{\ad x} = q$ that} $-h_s = s.h_s$, and therefore 
\[e^{\ad x} h_s = h_s + 2h_z.\] 
We conclude that the unipotent linear map $e^{\ad x}$ preserves 
the linear subspace $\R h_s + \R h_z$, and this implies that 
$\ad x = \log(e^{\ad x})$ also has this property. 
We thus arrive at 
\[   [h,x] = [h_s, x] \subeq \R h_s + \R h_z \subeq \g_0(h),\] 
so that we must have $x \in \g_0(h) = \g_0(h_s)$, which in turn leads to 
$0 = e^{\ad x}h_s  - h_s = 2h_z$, i.e., $h = h_s\in \fs$. 

To prove the second assertion of (i), we observe that the 
homomorphism $q \: \g \to \fs \cong \g/\fr$ satisfies 
\begin{equation}
  \label{eq:orbproj}
q(\cO_x) = \cO^\fs_{q(x)} \quad \mbox{ for } \quad x \in \g.
\end{equation}
Hence $q(\cE_{\rm sym}(\g)) \subeq \cE_{\rm sym}(\fs)$. 
If, conversely, $h \in \cE_{\rm sym}(\fs)$, then we clearly 
have $-h \in \Inn_\g(\fs)h \subeq \Inn(\g)h$, so that $h \in \cE_{\rm sym}(\g)$.

\nin (ii)(a) As $\cO_h$ intersects $\fs$ by (i), 
$q(\cO_h) \cap \cO_h \not=\eset$, and since $\Inn(\fs)$ acts transitively 
on $q(\cO_h)$ by \eqref{eq:orbproj}, we obtain $q(\cO_h) \subeq \cO_h$ 
and thus $q(\cO_h) = \cO_h \cap \fs$. This further leads to 
\[ \cO_h = \Inn(\g)(\cO_h \cap \fs) = \Inn(\g)q(\cO_h) 
= \Inn(\g) \cO^\fs_{q(h)} = \cO_{q(h)}.\] 

\nin (ii)(b) follows immediately from (a). 
\end{prf}

Proposition~\ref{prop:1.1} reduces the description of symmetric Euler elements 
up to conjugation by inner automorphisms to the case of simple Lie algebras. 

\begin{rem} Suppose that $\g$ is a 
finite dimensional Lie algebra containing a pointed generating 
invariant cone $C$.  If $\g$ is not reductive, then 
$C \cap \fz(\g) \not=\{0\}$ 
(\cite[Thm.~VII.3.10]{Ne99}). 
If $\tau = \sigma_h$ is an involution defined by a 
symmetric Euler element $h$, then $\tau$ fixes every central 
element, so that we cannot have $\tau(C) = - C$ 
if $\g$ is not reductive. 
\end{rem}

\begin{exs} \mlabel{exs:1.1} (a) If $\fs$ is a semisimple Lie algebra and 
$h \in \fs$ an Euler element, then it also is an Euler element in the 
semidirect sum $T\fs := |\fs| \rtimes \fs$, where 
$|\fs|$ is the linear subspace underlying $\fs$, endowed with the 
$\fs$-module structure defined by the adjoint representation. 

\nin (b) In the simple Lie algebra $\g := \fsl_n(\R)$, we write 
$n \times n$-matrices as block $2 \times 2$-matrices 
according to the partition $n = k + (n-k)$. Then 
\[ h_k := \frac{1}{n}\pmat{(n-k) \1_k & 0 \\ 0 & -k\1_{n-k}} \] 
is diagonalizable with the two eigenvalue $\frac{n-k}{n} = 1 - \frac{k}{n}$ 
and $-\frac{k}{n}$. 
Therefore $h_k$ is an Euler element whose $3$-grading is given by 
\begin{align*}
\g_0(h) &= \Big\{ \pmat{a & 0 \\ 0 & d} \: 
a \in \gl_k(\R), d \in \gl_{n-k}(\R), \tr(a) + \tr(d) = 0\Big\},  \\ 
\g_1(h) &= \pmat{ 0 & M_{k,n-k}(\R)\\ 0 & 0}, \quad 
\g_{-1}(h) \cong \pmat{0 & 0 \\ M_{n-k,k}(\R) & 0}.
\end{align*}
\end{exs}

\begin{ex} \mlabel{ex:1.4} For $\g = \fsl_2(\R)$, the Euler element 
\[ h := \frac{1}{2}\pmat{1 & 0 \\ 0 & -1} \quad \mbox{ satisfies } \quad 
\sigma_h\pmat{a & b \\ c &d} 
= \pmat{a & -b \\ -c &d}.\] 
Any element in $\Fix(-\sigma_h)$ is of the form 
$x = \pmat{0 & b \\ c & 0},$ 
and it is an Euler element if and only if $bc = - \det(x) = \frac{1}{4}$. 
If $g \in \SL_2(\R)$ commutes with $h$, then it is diagonal, i.e., 
$g = \pmat{a & 0 \\ 0 & a^{-1}}$, and thus 
\[ \Ad(g)\pmat{0 & b \\ c & 0}
= \pmat{0 & a^2 b \\ a^{-2}c & 0}.\] 
We thus obtain two representatives 
\[ x_\pm = \pm \frac{1}{2} \pmat{0 & 1 \\ 1 & 0}\] 
of conjugacy classes of orthogonal pairs $(h,x)$ of Euler elements 
for $\fsl_2(\R)$. 
The involution corresponding to $x_\pm$ is given by 
\[ \sigma_{x_\pm}\pmat{a & b \\ c &d} = 
 e^{\pi i x_\pm} \pmat{a & b \\ c &d} e^{-\pi i x_\pm} 
= \pmat{0 & i \\ i & 0}\pmat{a & b \\ c &d} 
\pmat{0 & -i \\ -i & 0}
= \pmat{d & c \\ b &a},\]  
which shows in particular that 
\begin{equation}
  \label{eq:sym2}
 \sigma_{x_\pm}(h) = -h. 
\end{equation}
\end{ex}

As a consequence of the preceding discussion, we see that the orthogonality 
relation on $\cE(\fsl_2(\R))$ is symmetric: 
\begin{lem} \mlabel{lem:sl2}
If $(x,y)$ is an orthogonal pair of Euler elements in 
$\fsl_2(\R)$, then $\sigma_y(x) = -x$, so that 
$(y,x)$ is also symmetric.   
\end{lem}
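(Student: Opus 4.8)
The plan is to exploit the explicit classification of orthogonal pairs already obtained in Example~\ref{ex:1.4}, combined with the transitivity of the relevant symmetry group. First I would observe that, up to conjugating by an inner automorphism of $\fsl_2(\R)$, we may assume $x = h = \frac{1}{2}\pmat{1 & 0 \\ 0 & -1}$, since all Euler elements in $\fsl_2(\R)$ are conjugate: indeed $\cE(\fsl_2(\R))$ is a single adjoint orbit (this follows from the description of the fundamental vector fields in Example~\ref{ex:models}(c), or directly from the fact that any Euler element generates an $\fsl_2$-triple together with elements of $\g_{\pm1}$). Conjugation transports the orthogonality relation faithfully, because $\Ad(g)\sigma_x\Ad(g)^{-1} = \sigma_{\Ad(g)x}$, so it suffices to prove the claim for $x=h$.

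Next, with $x=h$ fixed, Example~\ref{ex:1.4} shows that every Euler element $y$ orthogonal to $h$ (i.e.\ with $\sigma_h(y)=-y$) lies in $\Fix(-\sigma_h) = \left\{\pmat{0 & b \\ c & 0}\right\}$ and satisfies $bc = \frac14$, and that there are exactly two conjugacy classes of such $y$ under the centralizer of $h$, represented by $x_\pm = \pm\frac12\pmat{0 & 1 \\ 1 & 0}$. For these two representatives, equation~\eqref{eq:sym2} already records that $\sigma_{x_\pm}(h) = -h$. So the only remaining point is to propagate this from the representatives $x_\pm$ to an arbitrary orthogonal partner $y$ of $h$.

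The key step is therefore to note that if $y = \Ad(g) x_+$ for some $g$ in the centralizer $Z_{\SL_2(\R)}(h)$ of $h$ (the diagonal subgroup), then $\sigma_y = \Ad(g)\,\sigma_{x_+}\,\Ad(g)^{-1}$, hence
\[
\sigma_y(h) = \Ad(g)\,\sigma_{x_+}\bigl(\Ad(g)^{-1} h\bigr) = \Ad(g)\,\sigma_{x_+}(h) = \Ad(g)(-h) = -h,
\]
where we used $\Ad(g)^{-1}h = h$ because $g$ centralizes $h$, and then $\sigma_{x_+}(h) = -h$ from \eqref{eq:sym2}, and finally $\Ad(g)h = h$ again. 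The same computation works with $x_-$ in place of $x_+$. Since every $y$ orthogonal to $h$ is $Z_{\SL_2(\R)}(h)$-conjugate to $x_+$ or to $x_-$, this establishes $\sigma_y(h) = -h$ for all such $y$, which is exactly the assertion $\sigma_y(x) = -x$ in the normalized case, and hence in general by the conjugation argument of the first paragraph.

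The main obstacle, such as it is, is bookkeeping rather than mathematical depth: one must make sure the reduction to $x=h$ is legitimate (i.e.\ that every Euler element of $\fsl_2(\R)$ really is inner-conjugate to $h$, and that orthogonality and the associated involutions transform equivariantly under $\Ad$), and one must confirm that Example~\ref{ex:1.4} genuinely exhausts \emph{all} orthogonal partners of $h$ up to centralizer-conjugacy — which it does, since the centralizer of $h$ in $\SL_2(\R)$ acts on $\Fix(-\sigma_h)\cap\cE(\fsl_2(\R)) = \bigl\{\pmat{0 & b \\ c & 0} : bc=\tfrac14\bigr\}$ with exactly the two orbits $\{b>0\}$ and $\{b<0\}$. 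Once these two points are checked, the symmetry $\sigma_y(x)=-x$ follows formally, and then $(y,x)$ being an orthogonal pair is immediate from the definition.
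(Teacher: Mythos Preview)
Your proof is correct and follows essentially the same approach as the paper, which treats the lemma as an immediate consequence of Example~\ref{ex:1.4} and equation~\eqref{eq:sym2}. You have simply made explicit the two reduction steps (conjugating $x$ to $h$ via $\Inn(\fsl_2(\R))$, then reducing $y$ to $x_\pm$ via the centralizer of $h$) that the paper leaves implicit in the phrase ``as a consequence of the preceding discussion.''
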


\begin{ex} For $\g = \gl_2(\R)$, the Euler element 
\[ h := \pmat{1 & 0 \\ 0 & 0} \quad \mbox{ satisfies } \quad 
 \sigma_h\pmat{a & b \\ c &d} = 
 \pmat{-1 & 0 \\ 0 & 1}\pmat{a & b \\ c &d} 
\pmat{-1 & 0 \\ 0 & 1}
= \pmat{a & -b \\ -c &d},\] 
and we see, as for $\fsl_2(\R)$, that 
the orthogonal Euler elements are given by 
\[ x_\pm = \pm \frac{1}{2} \pmat{0 & 1 \\ 1 & 0} 
\quad \mbox{ with }\quad  \sigma_{x_\pm}\pmat{a & b \\ c &d} 
= \pmat{d & c \\ b &a}.\]  
This shows that 
\begin{equation}
  \label{eq:sym3}
   \sigma_{x_\pm}(h) \not= -h. 
\end{equation}
Therefore $\gl_2(\R)$ contains a pair $(h,x)$ 
of orthogonal Euler elements for which 
$\sigma_x(h) \not= -h$. From $h \not\in [\g,\g]$ it immediately 
follows that $h$ is not symmetric.  
We shall see in Theorem~\ref{thm:automaticsym}  below that this pathology 
of the orthogonality relation on the set of Euler elements does not 
occur for symmetric Euler elements. 
\end{ex}

\begin{ex} For $\g = \fsl_3(\R)$, the Euler element 
\[ h_1 := \frac{1}{3}\pmat{
2 & 0  & 0 \\ 
0 & -1  & 0 \\ 
0 & 0  & -1} \quad \mbox{ satisfies } \quad  \sigma_{h_1}\pmat{a & b \\ c &d} 
= \pmat{a & -b \\ -c &d},\] 
where we write matrices as $2 \times 2$-block matrices according to the 
partition $3 = 1 + 2$. 
Up to conjugacy under the centralizer of $h_1$, 
the symmetric matrices in $\Fix(-\sigma_{h_1})$ are represented by 
\[ x = \pmat{0 & 0 & a \\ 0 & 0 & 0 \\ a & 0 & 0}.\] 
These matrices have three different eigenvalues, 
so that \red{$\ad x$ has five eigenvalues}, and thus $x$  cannot 
be an Euler elements of $\fsl_3(\R)$. 
We conclude that, there exists no Euler element 
$x \in \cE(\fsl_3(\R))$ for which $(h_1,x)$ is orthogonal.

We shall see in Theorem~\ref{thm:automaticsym}(b) below that this never 
happens for symmetric Euler elements, but $h_1$ is not symmetric. 
It corresponds to $h_1$ for the root system $A_2$ 
in the notation of Section~\ref{app:b.2}. 
\end{ex}

\begin{ex} \mlabel{ex:sl4} For $\g = \fsl_4(\R)$, the Euler element 
\[ h_1 := \frac{1}{4}\pmat{
3 & 0  & 0 & 0 \\ 
0 & -1  & 0 & 0 \\ 
0 & 0  & -1 & 0 \\ 
0 & 0  & 0 & -1} \quad \mbox{ satisfies } \quad 
 \sigma_{h_1}\pmat{a & b \\ c &d} 
= \pmat{a & -b \\ -c &d},\] 
where we write matrices as $2 \times 2$-block matrices according to the 
partition $4 = 1 + 3$. 
Up to conjugacy under the centralizer of $h_1$, 
the symmetric matrices in $\Fix(-\sigma_{h_1})$ are represented by 
\[ x = \pmat{
0 & 0 & 0 & a \\
0 & 0 & 0 & 0 \\
0 & 0 & 0 & 0 \\
a & 0 & 0 & 0}.\] 
They all have three different eigenvalues and 
\red{$\ad x$ has five eigenvalues,} 
so that they are not 
Euler elements. We conclude that there exists no Euler element 
$x \in \cE(\fsl_4(\R))$ for which $(h_1,x)$ is orthogonal.

This is different for the symmetric Euler element 
\[ h_2 := \frac{1}{2}\pmat{
1 & 0  & 0 & 0 \\ 
0 & 1  & 0 & 0 \\ 
0 & 0  & -1 & 0 \\ 
0 & 0  & 0 & -1} \quad \mbox{ with } \quad  \sigma_{h_2}\pmat{a & b \\ c &d} 
= \pmat{a & -b \\ -c &d},\] 
where we write matrices as $2 \times 2$-block matrices according to the 
partition $4 = 2 + 2$. 
Up to conjugacy under the centralizer of $h_2$, 
the symmetric matrices in $\Fix(-\sigma_{h_2})$ are represented by 
\[ x = \pmat{
0 & 0 & a & 0 \\
0 & 0 & 0 & b \\
a & 0 & 0 & 0 \\
0 & b & 0 & 0},\] 
and, for $a = b = \shalf$, these are Euler elements orthogonal to $h_2$. 
\end{ex}

\subsection{Euler elements in simple real Lie algebras}
\mlabel{app:b.2}

In this section we take a systematic look at Euler elements 
in simple real Lie algebras. In particular we determine which of them 
are symmetric and show that pairs of orthogonal ones 
generate $\fsl_2$-subalgebras (Theorem~\ref{thm:automaticsym}). 
For the classification of $3$-gradings of simple Lie algebras, 
we refer to \cite{KA88}, the concrete list of the $18$ types 
in \cite[p.~600]{Kan98} which is also listed below, 
and Kaneyuki's lecture notes \cite{Kan00}. 

Let $\g$ is a real semisimple Lie algebra. 
An involutive automorphism $\theta \in \Aut(\g)$ is called a {\it 
Cartan involution} if its eigenspaces 
\[ \fk := \g^\theta = \{ x \in\g \: \theta(x) = x \} \quad \mbox{ and }\quad 
\fp := \g^{-\theta} = \{ x \in\g \: \theta(x) = -x \} \] 
have the property that they are orthogonal with respect to $\kappa$, 
which is  negative definite on $\fk$ and 
positive definite on $\fp$. Then 
\begin{equation}
  \label{eq:cartandec}
\g = \fk \oplus \fp 
\end{equation}
is called a {\it Cartan decomposition}. 
Cartan involutions always exist and two such involutions are conjugate 
under the group $\Inn(\g)$ of inner automorphism, so they produce 
isomorphic decompositions (\cite[Thm.~13.2.11]{HN12}). 

If $\g = \fk \oplus \fp$ is a Cartan decomposition, then 
$\fk$ is a maximal compactly embedded subalgebra of~$\g$, 
$x \in \g$ is elliptic 
if and only if its adjoint orbit $\cO_x =\Inn(\g)x$ intersects $\fk$, 
and $x \in \g$ is hyperbolic if and only if 
$\cO_x \cap \fp \not=\eset$. 

For the finer structure theory, and also for classification 
purposes, one starts with a Cartan involution $\theta$ and 
fixes a maximal abelian subspace $\fa \subeq \fp$. As $\fa$ is abelian, 
$\ad \fa$ is a commuting set of diagonalizable operators, hence 
simultaneously diagonalizable. 
For a linear functional $0 \not=\alpha \in \fa^*$, the simultaneous eigenspaces 
\[ \g_\alpha :=  \{ y \in \g \: (\forall x \in \fa) \ [x,y] = \alpha(x)y\} \] 
are called {\it root spaces} and 
\[ \Sigma := \Sigma(\g,\fa) := \{ \alpha \in \fa^* \setminus \{0\}  \: 
\g_\alpha \not=0\} \] 
is called the set of {\it restricted roots}. 
We pick a set 
\[ \Pi := \{ \alpha_1, \ldots, \alpha_n \} \subeq \Sigma \] 
of {\it simple roots}. This is a subset with the property that every 
root $\alpha \in \Sigma$ is a linear combination 
$\alpha = \sum_{j =1}^n n_j \alpha_j$, where the coefficients 
are either all in $\Z_{\geq 0}$ or in $\Z_{\leq 0}$. The convex cone 
\[ \Pi^\star := \{ x \in \fa \: (\forall \alpha \in \Pi) \ \alpha(x) \geq 0\} \] 
is called the {\it positive (Weyl) chamber corresponding to $\Pi$}. 

We have the {\it root space decomposition} 
\[ \g = \g_0 \oplus \bigoplus_{\alpha \in \Sigma} \g_\alpha 
\quad \mbox{ and }\quad 
\g_0 = \fm \oplus \fa, \quad \mbox{ where } \quad 
\fm = \g_0 \cap \fk.\] 
Now $\theta(\g_\alpha) = \g_{-\alpha}$, and for 
a non-zero element $x_\alpha \in \g_\alpha$, the 
$3$-dimensional subspace spanned by $x_\alpha, \theta(x_\alpha)$ and 
$[x_\alpha, \theta(x_\alpha)] \in \fa$ is a Lie subalgebra 
isomorphic to $\fsl_2(\R)$. In particular, it contains 
a unique element $\alpha^\vee \in \fa$ with $\alpha(\alpha^\vee) = 2$. 
Then 
\[ r_\alpha \: \fa \to \fa, \quad r_\alpha(x) := x - \alpha(x) \alpha^\vee \] 
is a reflection, and the subgroup 
\[ \cW := \la r_\alpha \: \alpha \in \Sigma \ra \subeq \GL(\fa) \]
is called the {\it Weyl group}. 
Its action on $\fa$ provides a good description of the adjoint 
orbits of hyperbolic elements: Every hyperbolic 
element in $\g$ is conjugate to a unique element in $\Pi^\star$ and, 
for $x \in \fa$, the intersection  
$\cO_x \cap \fa = \cW x$ is the Weyl group orbit 
(\cite[Thm.~III.10]{KN96}). 

{\bf From now on we assume that $\g$ is simple.} 
Then $\Sigma$ is an irreducible root system, hence of one of the following types: 
\[ A_n,  \qquad  
 B_n, \qquad  
 C_n,  \qquad  
 D_n, \qquad  
 E_6, E_7, E_8,\ \  F_4,\ \  G_2
\quad \mbox{ or } \quad BC_n, n \geq 1\] 
(cf.\ \cite{Bo90a}).
If $\g$ is a complex simple Lie algebra, then it is also simple 
as a real Lie algebra, and a Cartan decomposition takes the form 
\[ \g = \fk \oplus i \fk,\] 
where $\fk \subeq \g$ is a compact real form. Then $\fa = i \ft$, where 
$\ft \subeq \fk$ is maximal abelian. In particular, the restricted root system 
$\Sigma(\g,\fa)$ coincides with the root system of the complex Lie algebra~$\g$. 
This leads to a one-to-one correspondence between isomorphy classes of simple complex 
Lie algebras and the irreducible reduced root systems. 
If $\g$ is not complex, then 
\red{neither the isomorphy class of $\g$ nor of $\g_\C$} is determined by 
the root system $\Sigma(\g,\fa)$. For instance all Lie algebras 
$\so_{1,n}(\R)$ have the restricted root system $A_1$ with $\dim \fa = 1$, but 
their complexifications $\so_{n+1}(\C)$ have the root systems 
$B_k$ for $n = 2k$ and $D_k$ for $n = 2k-1$. 

The adjoint orbit of an Euler element in $\g$ contains  
a unique $h \in \Pi^\star$. \red{For any Euler element 
$h \in \Pi^\star$, we have $\alpha(h) \in \{0,1\}$ for $\alpha \in \Pi$ 
because the values of the roots on $h$ are the eigenvalues of $\ad h$. 
If such an element exists, then the irreducible root system 
$\Sigma$ must be reduced. Otherwise, for any root $\alpha$ with 
$2\alpha \in \Sigma$, we must have $\alpha(h) = 0$ because 
$\ad x$ has only three eigenvalues. As the set of such roots 
generates the same linear space as $\Sigma$, this leads to the 
contradiction $h = 0$. }
This excludes the non-reduced simple root systems of type~$BC_n$.

To see how many possibilities we have for Euler elements in $\fa$, 
we recall that $\Pi$ is a linear basis of $\fa$, so that, for each $j \in \{1,\ldots, n\}$, there exists a uniquely determined element 
\begin{equation}
  \label{eq:hj}
h_j \in \fa, \quad \mbox{  satisfying  } \quad \alpha_k(h_j) =
\begin{cases}
  1 & \text{ for } \ j = k \\ 
  0 & \text{ otherwise}.
\end{cases}
\end{equation}

A simple Lie algebra $\g = \fk \oplus \fp$ is called {\it hermitian} 
if the center $\fz(\fk) = \{ x \in \fk \:  [x,\fk] = \{0\}\}$ 
of a maximal compactly embedded subalgebra $\fk$ is 
non-zero. For hermitian Lie algebras, 
the restricted root system $\Sigma$ 
is either of type $C_r$ or $BC_r$ (cf.\ Harish Chandra's Theorem 
\cite[Thm.~XII.1.14]{Ne99}), 
and we say that $\g$ is 
{\it of tube type} if the restricted root system is of type $C_r$. 

The following theorem lists for each irreducible root system $\Sigma$ 
the possible Euler elements in the positive chamber $\Pi^\star$. 
Since every adjoint orbit in $\cE(\g)$ has a unique 
representative in $\Pi^\star$, this classifies the 
$\Inn(\g)$-orbits in $\cE(\g)$ for any non-compact simple real Lie algebra. 
For semisimple algebras $\g = \g_1 \oplus \cdots \oplus \g_k$, an 
element $x = (x_1, \ldots, x_n)$ is an Euler element if and only if its 
components $x_j \in \g_j$ are Euler elements, and its orbit is 
\[ \cO_x = \cO_{x_1} \times \cdots \times \cO_{x_k}.\] 
Therefore it suffices to consider simple Lie algebras, 
and for these the root system $\Sigma$ is irreducible. 
As every complex simple Lie algebra $\g$ 
is also a real simple Lie algebra, our discussion also 
covers  complex Lie algebras.

\begin{theorem} \mlabel{thm:classif-symeuler}
Suppose that $\g$ is a non-compact simple 
real Lie algebra, with restricted root system 
$\Sigma \subeq \fa^*$ of type $X_n$. 
We follow the conventions of the tables in {\rm\cite{Bo90a}}
for the classification of irreducible root systems and the enumeration 
of the simple roots $\alpha_1, \ldots, \alpha_n$. 
Then every Euler element $h \in \fa$ on which 
$\Pi$ is non-negative is one of  $h_1, \ldots, h_n$, and for 
every irreducible root system, the Euler elements among the $h_j$ are 
 the following: 
\begin{align} 
&A_n: h_1, \ldots, h_n, \quad 
\ \ B_n: h_1, \quad 
\ \ C_n: h_n, \quad \ \ \ D_n: h_1, h_{n-1}, h_n, \quad 
E_6: h_1, h_6, \quad 
E_7: h_7.\label{eq:eulelts2}
\end{align}
For the root systems $BC_n$, $E_8$, $F_4$ and $G_2$ no Euler element exists 
(they have no $3$-grading). 
The symmetric Euler elements are 
\begin{equation}
  \label{eq:symmeuler}
A_{2n-1}: h_n, \qquad 
B_n: h_1, \qquad C_n: h_n, \qquad 
D_n: h_1, \qquad 
D_{2n}: h_{2n-1},h_{2n}, \qquad 
E_7: h_7.  
\end{equation}
\end{theorem}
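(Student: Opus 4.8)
The plan is to translate both assertions into elementary combinatorics of the restricted root system $\Sigma$ (which is irreducible, and reduced once $BC_n$ is excluded, as already observed before the statement) and then to read off the answer from the Bourbaki tables. Throughout I use the two facts recalled above: every adjoint orbit of an Euler element has a unique representative in $\Pi^\star$, and for $x\in\fa$ one has $\cO_x\cap\fa=\cW x$, where $\cW$ is the restricted Weyl group.

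\emph{Step 1: the Euler elements among the $h_j$.} It suffices to decide which $h\in\Pi^\star$ with $\alpha(h)\in\{0,1\}$ for all $\alpha\in\Pi$ are Euler elements. Since $\Pi$ is a basis of $\fa$, write $h=\sum_j\eps_j h_j$ with $\eps_j=\alpha_j(h)\in\{0,1\}$, and let $\tilde\alpha=\sum_j m_j\alpha_j$ be the highest root of $\Sigma$, so $m_j\ge 1$ for all $j$. As $\tilde\alpha\in\Sigma$, the number $\tilde\alpha(h)=\sum_j m_j\eps_j$ is an eigenvalue of $\ad h$, and it is $\ge 0$; hence if $h$ is Euler then $\sum_j m_j\eps_j\le 1$, which forces $\sum_j\eps_j\le 1$. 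Since $h\ne 0$, exactly one $\eps_j$ equals $1$, so $h=h_j$, and then $m_j=\tilde\alpha(h_j)\le 1$, i.e.\ $m_j=1$. Conversely, if $m_j=1$ then the $\alpha_j$-coefficient of every positive root is $\le m_j=1$ (the highest root dominates all positive roots coefficientwise), so $\alpha(h_j)\in\{-1,0,1\}$ for every $\alpha\in\Sigma$ and $h_j$ is an Euler element. Thus the Euler elements among the $h_j$ are exactly those with unit coefficient in the highest root; the tables in \cite{Bo90a} give precisely the list \eqref{eq:eulelts2}, and they also show $m_j\ge 2$ for every node of $E_8,F_4,G_2$, so these have no $3$-grading.

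\emph{Step 2: which $h_j$ are symmetric.} By definition $h_j$ is symmetric iff $-h_j\in\cO_{h_j}$, and since $-h_j\in\fa$ this is equivalent to $-h_j\in\cW h_j$. Because $\Pi^\star$ is a fundamental domain for $\cW$ on $\fa$, the orbit $\cW(-h_j)=-\cW h_j$ meets $\Pi^\star$ in exactly one point; since the longest element $w_0\in\cW$ satisfies $w_0(\Pi^\star)=-\Pi^\star$ (so $w_0 h_j\in-\Pi^\star$), that point is $-w_0 h_j\in\Pi^\star$. Writing $j\mapsto j^\ast$ for the opposition involution determined by $-w_0\alpha_j=\alpha_{j^\ast}$, one computes $\alpha_k(-w_0 h_j)=\alpha_{k^\ast}(h_j)=\delta_{k,j^\ast}$, hence $-w_0 h_j=h_{j^\ast}$. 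Therefore $h_j$ is symmetric if and only if $j=j^\ast$, i.e.\ $\alpha_j$ is fixed by the opposition involution. Intersecting the fixed set of $-w_0$ with the Euler nodes of Step 1 finishes the proof: $-w_0=\id$ for $B_n$, $C_n$, $E_7$ and for $D_n$ with $n$ even, so there all Euler nodes are symmetric; and $-w_0$ is the non-trivial order-two diagram automorphism for $A_n$ (with $j^\ast=n+1-j$), for $D_n$ with $n$ odd (swapping $\alpha_{n-1}\leftrightarrow\alpha_n$ and fixing $\alpha_1,\dots,\alpha_{n-2}$), and for $E_6$ (swapping $\alpha_1\leftrightarrow\alpha_6$). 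The surviving Euler nodes are $h_n$ for $A_{2n-1}$, $h_1$ for all $D_n$, additionally $h_{n-1},h_n$ for $D_n$ with $n$ even, and $h_7$ for $E_7$, with none for $A_{2n}$ and $E_6$; this is exactly \eqref{eq:symmeuler}.

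\emph{Main obstacle.} There is no conceptual difficulty here; the real work is bookkeeping with the Bourbaki conventions, and two points deserve care. First, the converse in Step 1 genuinely uses that $\Sigma$ is reduced, so that root coefficients are integers bounded by those of the highest root — this is why $BC_n$ must be discarded beforehand. Second, in Step 2 one must pin down the opposition involution for $A_n$, for $D_n$ with $n$ odd, and for $E_6$, and verify by hand that it does \emph{not} fix the Euler nodes $h_1,h_n$ of $A_{2n}$, $h_{n-1},h_n$ of $D_n$ ($n$ odd), and $h_1,h_6$ of $E_6$, since a slip in these conventions would change the final list.
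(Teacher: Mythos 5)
Your proposal is correct and follows essentially the same route as the paper's proof: identifying the Euler elements among the $h_j$ by requiring the coefficient of $\alpha_j$ in the highest root to be $1$, then detecting the symmetric ones via the longest Weyl group element $w_0$ (equivalently, the opposition involution $-w_0$), and reading both off from the Bourbaki tables. You supply a couple of small details the paper leaves implicit --- that $\alpha_{\max}(h)=0$ would force $h=0$, and the converse that $c_j=1$ actually yields an Euler element because root coefficients are dominated by those of the highest root --- but the argument and its decomposition into the two steps are the same.
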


\begin{prf} 
Writing the highest root in $\Sigma$ with respect to the simple system 
$\Pi$  as $\alpha_{\rm max} = \sum_{j = 1}^n c_j \alpha_j$, 
we have $c_j \in \Z_{>0}$ for each $j$. 
If $h \in \Pi^\star$ is an Euler element, then $\Pi(h) \subeq \{0,1\}$, 
and $1 = \alpha_{\rm max}(h) = \sum_{j = 1}^n c_j \alpha_j(h)$ implies
that at most one value $\alpha_j(h)$ can be $1$,  and then the others are~$0$, 
i.e., $h = h_j$ for some $j \in \{1,\ldots, n\}$. 
Moreover, $h_j$ is an Euler element if and only if $c_j = 1$. 
Consulting the tables on the irreducible root systems 
in \cite{Bo90a}, we obtain the Euler elements listed in 
\eqref{eq:eulelts2}. 

To determine the symmetric ones, let $w_0 \in \cW$ be 
the longest element of the Weyl group, which is uniquely determined by 
$w_0^*\Pi = - \Pi$ for the dual action of $\cW$ on $\fa^*$. 
Then $h_j' := w_0(-h_j)$ is the Euler element in 
the positive chamber representing the orbit $\cO_{-h_j}$. 
Therefore $h_j$ is symmetric if and only if 
$-h_j \in \cW h_j$, which is equivalent to $h_j' = h_j$. 
Using the description of $w_0$ and the root systems in \cite{Bo90a},  
now leads to 
\begin{align}
& A_{n-1}: h_j' = h_{n-j}, \quad 
B_n: h_1' = h_1, \quad C_n: h_n' = h_n, \\ 
& D_n: h_1' = h_1, h_n' =
\begin{cases}
  h_{n-1} & \text{ for } n \ \text{ odd},\\ 
  h_n & \text{for } n \ \text{ even}, 
\end{cases}\\
&E_6: h_1' = h_6, \quad E_7: h_7' = h_7. 
\end{align}
Hence the symmetric Euler elements are given by 
the list~\eqref{eq:symmeuler}.
\end{prf}

This theorem requires some interpretation. So let us first 
see what it says about complex simple Lie algebras~$\g$. In \eqref{eq:eulelts2} 
we see that only if $\g$ is not of type $E_8, F_4$ or $G_2$, the Lie algebra $\g$ 
contains an Euler element. As Euler elements correspond to $3$-gradings 
of the root system and these in turn to hermitian real forms $\g^\circ$, 
where $ih_j \in \fz(\fk^\circ)$ generates the center of a maximal 
compactly embedded subalgebra $\fk^\circ$ 
\red{(\cite[Thm.~A.V.1]{Ne99})}. We thus obtain the 
following possibilities. In Table 1, we write 
$\g^\circ$ for the hermitian real form, $\g$ for the complex Lie algebra, 
$\Sigma$ for its root system, and $h_j$ for the corresponding Euler element:\\[2mm]
\begin{tabular}{||l|l|l|l|l||}\hline
{} $\g^\circ$ \mbox{(hermitian)} & $\Sigma(\g^\circ, \fa^\circ)$  & $\g = (\g^\circ)_\C$ & $\Sigma(\g,\fa)$ & 
{\rm Euler element} \\ 
\hline\hline 
$\su_{p,q}(\C), 1 \leq p\leq q$ & $BC_p (p < q)$, $C_p (p=q)$ & $\fsl_{p+q}(\C)$ & $A_{p+q-1}$ & 
$h_p$ \\ 
 $\so_{2,2n-1}(\R), n > 1$ & $C_2$ & $\so_{2n+1}(\C)$ & $B_n$ & $h_1$ \\ 
$\sp_{2n}(\R)$ & $C_n$ & $\fsp_{2n}(\C)$ & $C_n$ & $h_n$ \\ 
 $\so_{2,2n-2}(\R), n > 2$ & $C_2$ & $\so_{2n}(\C)$ & $D_{n}$ & $h_1$ \\ 
 $\so^*(2n)$ & $BC_m (n = 2m+1)$, $C_m (n = 2m)$ & $\so_{2n}(\C)$  & $D_{n}$ 
& $h_{n-1}, h_n$ \\ 
$\fe_{6(-14)}$ & $BC_2$ & $\fe_6$ & $E_6$ & $h_1 = h_6'$ \\ 
$\fe_{7(-25)}$ & $C_3$ & $\fe_7$ & $E_7$ & $h_7$ \\ 
\hline
\end{tabular} \\[2mm] {\rm Table 1: Simple hermitian Lie algebras $\g^\circ$}\\

In this correspondence, those hermitian simple Lie algebras 
corresponding to symmetric Euler elements are of particular interest. 
Comparing with the list of hermitian simple Lie algebras 
of tube type (cf.~\cite[p.~213]{FK94}), we see that they 
correspond precisely the $3$-gradings specified by symmetric Euler elements, 
as listed in~\eqref{eq:symmeuler}. 
Since the Euler elements $h_{n-1}$ and $h_n$ for the root system of type 
$D_n$ are conjugate under a diagram automorphism, they correspond to 
isomorphic hermitian real forms. \\[2mm] 
\begin{tabular}{||l|l|l|l|l||}\hline
{} $\g^\circ$ \mbox{(hermitian)}  & $\Sigma(\g^\circ, \fa^\circ)$ & $\g = (\g^\circ)_\C$ & $\Sigma(\g,\fa)$ & {\rm symm.\ Euler element}\  $h$ \\ 
\hline\hline 
$\su_{n,n}(\C)$ & $C_n$ & $\fsl_{2n}(\C)$ & $A_{2n-1}$ & $h_n$ \\ 
 $\so_{2,2n-1}(\R), n > 1$ & $C_2$ & $\so_{2n+1}(\C)$ & $B_n$ & $h_1$ \\ 
$\sp_{2n}(\R)$ & $C_n$ & $\fsp_{2n}(\C)$ & $C_n$ & $h_n$ \\ 
 $\so_{2,2n-2}(\R), n > 2$ & $C_2$ & $\so_{2n}(\C)$ & $D_{n}$ & $h_1$ \\ 
 $\so^*(4n)$ & $C_n$ & $\so_{4n}(\C)$  & $D_{2n}$ & $h_{2n-1},  h_{2n}$ \\ 
$\fe_{7(-25)}$ & $C_3$ & $\fe_7$ & $E_7$ & $h_7$ \\ 
\hline
\end{tabular} \\[2mm] {\rm Table 2: Simple hermitian Lie algebras $\g^\circ$ 
of tube type}\\

In our context hermitian simple Lie algebras are of particular interest. 
We therefore collect some of their main properties in the following  proposition.
\begin{prop} \mlabel{prop:herm} 
For a simple real Lie algebra, the following assertions hold: 
  \begin{itemize}
  \item[\rm(a)] $\g$ is hermitian if and only if there exists a 
closed convex $\Inn(\g)$-invariant cone $C \not= \{0\},\g$. 
  \item[\rm(b)] A simple hermitian Lie algebra contains an Euler 
element if and only if it is of tube type, and in this case $\Inn(\g)$ 
acts transitively on $\cE(\g)$. 
  \end{itemize}
\end{prop}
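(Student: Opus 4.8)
The plan is to derive both statements from structural facts about hermitian Lie algebras combined with the root-theoretic classification in Theorem~\ref{thm:classif-symeuler}. For part~(a), one direction is essentially the content of Harish-Chandra's description of the invariant cones: if $\g$ is simple hermitian, then $\fz(\fk)\cong\R$, and the minimal and maximal $\Inn(\g)$-invariant cones $C_{\min}\subseteq C_{\max}$ (constructed from the positive energy / highest weight theory, see \cite[Ch.~VII]{Ne99}) are nontrivial pointed generating cones, both different from $\{0\}$ and $\g$. Conversely, suppose $\g$ is simple and carries a closed convex $\Inn(\g)$-invariant cone $C\notin\{\{0\},\g\}$. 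Since $C\cap(-C)$ and $C-C$ are $\Inn(\g)$-invariant subspaces, simplicity forces $C$ to be pointed and generating; then $C$ has nonempty interior, and any interior point is a fixed point of a maximal compact subgroup of $\Inn(\g)$ after averaging, hence gives a nonzero elliptic element in the center of some maximal compactly embedded subalgebra $\fk$. Thus $\fz(\fk)\neq\{0\}$, i.e.\ $\g$ is hermitian. (I would cite \cite[Thm.~VII.3.10]{Ne99} or the analogous statement already invoked in the Remark following Proposition~\ref{prop:1.1}.)

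For part~(b), the first assertion — that a simple hermitian $\g$ contains an Euler element if and only if it is of tube type — is read directly off the two tables: by Harish-Chandra's theorem the restricted root system of a simple hermitian Lie algebra is of type $C_r$ or $BC_r$, and by the argument in the paragraph preceding Theorem~\ref{thm:classif-symeuler}, a $BC_r$ root system admits no Euler element (the condition $\alpha(h)=0$ for roots $\alpha$ with $2\alpha\in\Sigma$ forces $h=0$). So an Euler element exists precisely when the restricted root system is $C_r$, which is exactly the tube type condition. Comparing \eqref{eq:eulelts2} with the tube-type list confirms this: the hermitian algebras with an Euler element are exactly those in Table~2.

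For the transitivity claim, I would argue as follows. For a simple hermitian Lie algebra of tube type with restricted root system $C_r$, Theorem~\ref{thm:classif-symeuler} (specifically \eqref{eq:eulelts2} applied to the $C_n$ line) shows that the only $h_j$ which is an Euler element is $h_r$. Since every $\Inn(\g)$-orbit of Euler elements meets $\Pi^\star$ in a unique point, and that point must be one of $h_1,\dots,h_r$, the set $\cE(\g)$ consists of a single $\Inn(\g)$-orbit, namely $\Inn(\g)h_r$. Hence $\Inn(\g)$ acts transitively on $\cE(\g)$. The one subtlety, which I expect to be the main point requiring care, is the passage from the restricted root system back to the Lie algebra: a priori the type $C_r$ of $\Sigma(\g,\fa)$ does not determine $\g$, but the classification of Euler elements only depends on $\Sigma$ (Euler elements live in $\fa$ and are detected by the values of the restricted roots), so the argument is insensitive to this ambiguity. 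I would make this explicit by noting that the count of $\Inn(\g)$-orbits in $\cE(\g)$ equals the number of $j$ with $h_j$ an Euler element, a purely root-system-theoretic quantity, which is $1$ for type $C_r$.
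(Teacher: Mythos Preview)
Your proposal is correct and follows essentially the same approach as the paper. For (b) your argument is virtually identical to the paper's: use Harish-Chandra's theorem that the restricted root system is $C_r$ or $BC_r$, observe that $BC_r$ admits no Euler element, and then read off from Theorem~\ref{thm:classif-symeuler} that $C_r$ yields exactly one orbit of Euler elements. For (a) the paper simply cites the Kostant--Vinberg Theorem (via \cite[Lemma~2.5.1]{HO96}), whereas you sketch the averaging argument directly; this is the same underlying mathematics, just unpacked rather than black-boxed.
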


\begin{prf} (a) is a consequence of the Kostant--Vinberg Theorem 
(cf.\ \cite[Lemma~2.5.1]{HO96}). 

\nin (b) Since the restricted root system of a hermitian simple Lie algebra 
is of type $C_r$ or $BC_r$, and the first case characterizes the algebras of tube 
type, the assertion follows from 
Theorem~\ref{thm:classif-symeuler} because 
$C_r$ only permits one class of Euler elements. 
\end{prf}

There are many types of simple $3$-graded Lie algebras that are neither 
complex nor hermitian of tube type; for instance the Lorentzian algebras 
$\so_{1,n}(\R)$. We refer to 
\cite[p.~600]{Kan98} or \cite{Kan00}. 
for the list of all $18$ types which is reproduced below. \\[2mm]
\begin{tabular}{||l|l|l|l|l||}\hline
& $\g$  & $\Sigma(\g,\fa)$  & $h$ & $\g_1(h)$  \\ 
\hline\hline 
1 & $\fsl_n(\R)$ & $A_{n-1}$ & $h_j, 1 \leq j \leq n-1$ & $M_{j,n-j}(\R)$  \\ 
2 & $\fsl_n(\H)$ & $A_{n-1}$ & $h_j, 1 \leq j \leq n-1$ & $M_{j,n-j}(\H)$  \\ 
3 & $\su_{n,n}(\C)$ & $C_{n}$ & $h_n$ & $\Herm_n(\C)$  \\ 
4 & $\sp_{2n}(\R)$ & $C_{n}$ & $h_n$ & $\Sym_n(\R)$   \\ 
5 & $\fu_{n,n}(\H)$ & $C_{n}$ & $h_n$ & $\Aherm_n(\H)$  \\ 
6  & $\so_{p,q}(\R)$ & $ B_p\ (p<q),\ D_p\ (p = q)$ & $h_1$ & $\R^{p+q-2}$   \\ 
7  & $\so^*(4n)$ & $C_n$ & $h_n$ & $\Herm_n(\H)$   \\ 
8  & $\so_{n,n}(\R)$ & $C_n$ & $h_n$ & $\Alt_n(\R)$   \\ 
9 & $\fe_6(\R)$ & $E_6$ & $h_1=h_6' $ & $M_{1,2}(\bO_{\rm split})$   \\ 
10 & $\fe_{6(-26)}$ & $A_2$ & $h_1$ & $M_{1,2}(\bO)$   \\ 
11 & $\fe_7(\R)$ & $E_7$ & $h_7 $ & $\Herm_3(\bO_{\rm split})$   \\ 
12 & $\fe_{7(-25)}$ & $C_3$ & $h_3$ & $\Herm_3(\bO)$   \\ 
13 & $\fsl_n(\C)$ & $A_{n-1}$ & $h_j, 1 \leq j \leq n-1$ & $M_{j,n-j}(\C)$  \\ 
14 & $\sp_{2n}(\C)$ & $C_{n}$ & $h_n$ & $\Sym_n(\C)$   \\ 
15a & $\so_{2n+1}(\C)$ & $ B_{n}$ & $h_1$ & $\C^n$   \\ 
15b & $\so_{2n}(\C)$ & $ D_{n}$ & $h_1$ & $\C^n$   \\ 
16 & $\so_{2n}(\C)$ & $ D_{n}$ & $h_{n-1}, h_n$ & $\Alt_n(\C)$   \\ 
17 & $\fe_6(\C)$ & $E_6$ & $h_1 = h_6' $ & $M_{1,2}(\bO)_\C$   \\ 
18 & $\fe_7(\C)$ & $E_7$ & $h_7 $ & $\Herm_3(\bO)_\C$   \\ 
\hline
\end{tabular} \\[2mm] {\rm Table 3: Simple $3$-graded Lie algebras}\\

\begin{rem} As $h \in \fa$ implies $\theta(h) = -h$, the Cartan 
involution $\theta$ always maps $h$ into $-h$, but this only 
implies that $h$ is symmetric if $\theta \in \Inn(\g)$.
This is the case if $\g$ is hermitian, so that in these Lie algebras 
all Euler elements are symmetric. 
\end{rem}

We conclude this section with some finer results concerning orthogonality 
and symmetry of Euler elements. 

\begin{thm} \mlabel{thm:automaticsym} 
If $\g$ is simple and $h \in \cE(\g)$, then 
the following assertions hold: 
\begin{itemize}
\item[\rm(a)] If $x \in \cE(\g)$ is such that 
$(h,x)$ is orthogonal, then 
\begin{itemize}
\item[\rm(i)] $h$ and $x$ are symmetric,  
\item[\rm(ii)] the Lie algebra generated by $h$ and $x$ is isomorphic to 
$\fsl_2(\R)$, and 
\item[\rm(iii)] $\sigma_x(h) = -h$, so that  $(x,h)$ is also orthogonal. 
\end{itemize}
\item[\rm(b)] There exists an Euler element 
$x$ such that $(h,x)$ is orthogonal if and only if $h$ is symmetric. 
\end{itemize}
\end{thm}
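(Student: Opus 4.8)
The plan is to reduce everything to the $\fsl_2(\R)$-case established in Example~\ref{ex:1.4} and Lemma~\ref{lem:sl2}, and then to bootstrap from there. First I would prove the key structural claim underlying (a): if $h,x \in \cE(\g)$ with $\sigma_h(x) = -x$, then the subalgebra $\fq := \la h, x\ra$ generated by $h$ and $x$ is isomorphic to $\fsl_2(\R)$, with $h$ and $x$ as orthogonal Euler elements inside it. For this, decompose $x$ according to the $\ad h$-eigenspaces: since $\sigma_h(x) = -x$ and $\sigma_h$ acts by $(-1)^j$ on $\g_j(h)$, we must have $x \in \g_1(h) \oplus \g_{-1}(h)$, say $x = x_1 + x_{-1}$ with $x_{\pm 1} \in \g_{\pm 1}(h)$. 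Then $[h,x] = x_1 - x_{-1}$, and one computes $[h,[h,x]] = x$, so $(\ad h)^2$ acts as the identity on $\Spann\{x, [h,x]\}$. Setting $e := x_1$, $f := -x_{-1}$, $h' := [e,f]$ (which lies in $\g_0(h)$), one checks the $\fsl_2$-relations; the subtle point is to verify that $e$ and $f$ are nonzero and that $h'$ is a nonzero multiple of an Euler element — this is where I would use that $x$ itself is an Euler element, so $(\ad x)$ has spectrum $\{-1,0,1\}$, which forces the ``Jacobson–Morozov''-type normalization to come out right. A clean way: $x \in \cE(\g)$ means $\ad x$ is diagonalizable with eigenvalues in $\{0,\pm 1\}$; apply $\sigma_h$ and use $\sigma_h x = -x$ to see that $\ad h$ swaps the $\pm 1$ eigenspaces of $\ad x$ appropriately, and conclude the pair $(\ad h, \ad x)$ generates a copy of $\fsl_2(\R)$ acting on $\g$. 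Then $\fq$, being the image of $\fsl_2(\R)$ under a Lie algebra homomorphism with $\ad h \neq 0$, is itself $\cong \fsl_2(\R)$ (it cannot be abelian or the Heisenberg algebra since $\ad h$ is semisimple with eigenvalue $1$).

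Once $\fq \cong \fsl_2(\R)$ is in hand, parts (a)(i) and (a)(iii) follow quickly. Inside $\fq$, by Example~\ref{ex:1.4} the element $h$ is a symmetric Euler element of $\fq$ (there is $g \in \Inn_\g(\fq)$ with $\Ad(g)h = -h$), hence $-h \in \cO_h$, i.e.\ $h$ is symmetric in $\g$; by the symmetry of the roles of $h$ and $x$ in the construction — or by first checking $\sigma_x(h) = -h$ directly inside $\fq$ via \eqref{eq:sym2} — we also get that $x$ is symmetric and that $(x,h)$ is orthogonal, giving (iii). For (a)(ii) this is exactly the statement $\fq \cong \fsl_2(\R)$ already proved. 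The one thing to be careful about is that $\sigma_h$ and $\sigma_x$ are a priori the involutions of $\g$ attached to $h$ and $x$ as Euler elements of $\g$, not of $\fq$; but the $3$-grading of $\g$ restricts to the $3$-grading of $\fq$ because $\fq$ is $\ad h$-invariant, so $\sigma_h|_\fq$ is the Euler involution of $h$ in $\fq$, and likewise for $x$ — this compatibility is what lets the $\fsl_2(\R)$-computation \eqref{eq:sym2} transfer.

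For part (b): the ``only if'' direction is immediate from (a)(i). For the ``if'' direction, suppose $h \in \cE(\g)$ is symmetric. By Proposition~\ref{prop:1.1}(i), up to conjugation $h$ lies in a Cartan subspace $\fa \subeq \fp$ of a Cartan decomposition, and by Theorem~\ref{thm:classif-symeuler} it is conjugate to one of the symmetric $h_j$ in the list \eqref{eq:symmeuler}; since $-h \in \cW h$, there is $w \in \cW$ with $wh = -h$, realized by an inner automorphism fixing $\fa$. Now I want to produce $x \in \cE(\g)$ with $\sigma_h(x) = -x$. The natural candidate: because $h$ is symmetric and lies in a tube-type situation, the root space structure provides a suitable $\fsl_2$-triple through $h$ using a root $\alpha$ with $\alpha(h) = 1$; concretely, pick $0 \neq x_\alpha \in \g_\alpha$ with $\alpha(h) = 1$ and form $x_\alpha - \theta(x_\alpha) \in \g_1(h) \oplus \g_{-1}(h)$, then rescale so that the associated $\fsl_2$-triple's semisimple element $[x_\alpha,\theta(x_\alpha)]$ is $\pm h$ — possible exactly when $h$ is (a multiple of) the coroot-type element, which the symmetric cases in \eqref{eq:symmeuler} are, corresponding to tube type. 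I expect the \textbf{main obstacle} to be precisely this last point: checking that when $h$ is symmetric there genuinely is a root $\alpha$ (equivalently, an $\fsl_2$-subalgebra) for which the constructed $x$ is again an \emph{Euler} element of all of $\g$ — i.e.\ that $\ad x$ has no eigenvalues outside $\{0,\pm 1\}$ — and not merely a nilpotent or a non-Euler hyperbolic element, as the cautionary Examples with $\fsl_3(\R)$ and $\fsl_4(\R)$ (the non-symmetric $h_1$) show can fail. The resolution should use the tube-type/$C_r$ classification from Theorem~\ref{thm:classif-symeuler} and Table~2: in each tube-type case one exhibits the orthogonal $x$ explicitly (as was done for $\fsl_4(\R)$ with partition $4 = 2+2$ in Example~\ref{ex:sl4}, and for $\fsp_{2n}(\R)$, $\so^*(4n)$, $\so_{2,2n-2}(\R)$, $\fe_{7(-25)}$ analogously), or uniformly via the Jordan-algebra description of the $+1$-eigenspace $\g_1(h)$, whose invertible elements furnish the orthogonal Euler partner.
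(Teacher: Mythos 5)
Your plan for part (a) has a genuine gap: the attempt to prove (a)(ii) by direct bracket computation cannot succeed, and the paper's own Example in $\gl_2(\R)$ shows why. You correctly observe that $\sigma_h(x)=-x$ forces $x=x_1+x_{-1}$ with $x_{\pm1}\in\g_{\pm1}(h)$, and that $(\ad h)^2=\id$ on $\Spann\{x,[h,x]\}$. But your next step — setting $e:=x_1$, $f:=-x_{-1}$, $h':=[e,f]\in\g_0(h)$ and "checking the $\fsl_2$-relations" — stalls exactly at the relation $[h',e]=2e$, which does not follow from anything in hand. The element $[e,f]$ is some unknown element of $\g_0(h)$, and there is no a priori reason that $\ad([e,f])$ has $e$ and $f$ as $\pm2$-eigenvectors, nor that $[x,[h,x]]=-2[x_1,x_{-1}]$ is a multiple of $h$. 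Concretely, in $\g=\gl_2(\R)$ with $h=\pmat{1&0\\0&0}$ and $x=\tfrac12\pmat{0&1\\1&0}$ (the paper's Example~3.9), both are Euler, $(h,x)$ is orthogonal, $x$ has $\ad x$-spectrum $\{0,\pm1\}$, and yet $[x,[h,x]]$ is a traceless diagonal matrix not proportional to $h$; the subalgebra generated by $h$ and $x$ is all of $\gl_2(\R)$, not $\fsl_2(\R)$, and $h$ is not symmetric. So the Euler property of $x$ does \emph{not}, on its own, "force the Jacobson--Morozov-type normalization to come out right." The claim "$\fq$ is the image of $\fsl_2(\R)$ under a Lie algebra homomorphism" presupposes the very conclusion you are trying to establish. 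Any proof of (a)(ii) must use simplicity of $\g$ in an essential way; you cannot establish it first by linear algebra and then deduce symmetry.

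The paper's route is logically the reverse of yours. For complex $\g$ it identifies $\g$ as the complexification of a hermitian real form $\g^\circ$, conjugates $x$ into a Cartan subspace $\fa^\circ\subeq\fp^\circ$ using \cite[Cor.~III.9]{KN96}, and then observes that the existence of an Euler element in $\fa^\circ$ rules out restricted root type $BC_r$, forcing type $C_r$, i.e.\ tube type. This already yields (i) (symmetry) via Table~2, and then the strongly orthogonal roots $\{2\eps_1,\dots,2\eps_r\}$ give the $\fsl_2(\R)^r$-subalgebra in which $h$ and $x$ sit as diagonal elements, which delivers (ii) and (iii). The non-complex case is reduced to the complex one by passing to $\g_\C$. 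So simplicity enters through the restricted root structure, precisely the ingredient your direct attempt omits. Your outline for part (b) is closer in spirit to the paper's — both build an Euler partner from strongly orthogonal roots in $\Sigma_1=\{\alpha:\alpha(h)=1\}$ and must verify it is genuinely Euler — but it inherits the same reliance on the tube-type classification that your treatment of (a)(ii) tries to avoid.
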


\begin{prf}  
(a) We split the proof into the two cases, according to whether $\g$ 
is a complex Lie algebra or not. We then reduce the second case to the first one. \\
\nin {\bf Case 1: $\g$ is complex:} A simple complex Lie algebra $\g$ 
contains an Euler element, i.e., it possesses 
a $3$-graded root system, if and only if it has a real form 
$\g^\circ$ which is hermitian, i.e., $\g = (\g^\circ)_\C = \g^\circ \oplus i \g^\circ$. 
This follows for example by comparing the list of 
irreducible root systems for which 
Euler elements exist (see \eqref{eq:eulelts2}) 
with the classification of hermitian simple Lie algebras 
$\g^\circ$ (see \cite[Thm.~A.V.1]{Ne99} and Table~$1$). 
In this case the real Lie algebra $\g^\circ$ has a Cartan decomposition 
$\g^\circ = \fk^\circ \oplus \fp^\circ$ and the center $\fz(\fk^\circ)$ is 
one-dimensional 
 and generated by an element $z$ with 
$\Spec(\ad z) = \{0,\pm i\}$ \red{(\cite[Thm.~A.V.1]{Ne99})}. 
Then $h = i z$ is an Euler element in 
the complexification $\g$ for which $\fk^\circ = \ker(\ad z) \cap \g^\circ$ and 
$[z,\g^\circ] = \fp^\circ$, where $\ad z\res_{\fp^\circ}$ is a complex structure on the real 
vector space~$\fp^\circ$. The corresponding Euler 
involution $\sigma_h = e^{\pi i \ad h} = e^{\pi \ad z} \in \Aut_\C(\g)$ 
thus restricts to the Cartan involution on $\g^\circ$, corresponding to the 
decomposition $\fk^\circ \oplus \fp^\circ$. Accordingly, we obtain 
\[ \fh := \Fix(\sigma_h) = (\fk^\circ)_\C \quad \mbox{ and } \quad 
\fq := \Fix(-\sigma_h) = (\fp^\circ)_\C.\] 

A Cartan decomposition of $\g$ is obtained by 
$\fk = \fk^\circ + i \fp^\circ$ and 
$\fp = \fp^\circ + i \fk^\circ$. If $\ft \subeq \fk^\circ$ is a maximal abelian 
Lie subalgebra, then $\fa := i \ft \subeq \fp$ is a maximal abelian subspace 
which contains $h = i z \in i \fz(\fk^\circ) \subeq i\ft$.  
The orthogonality of the pair $(h,x)$ means that 
$x \in \fq = \Fix(-\sigma_h)$. By \cite[Cor.~III.9]{KN96}, 
$x \in \cE(\g) \cap \fq$ is conjugate under 
the centralizer of $h$ to an element in $\fq \cap \fp = \fp^\circ$.  
Fixing a maximal abelian subspace $\fa^\circ \subeq \fp^\circ$, 
we may therefore assume that $x$ is an Euler element for the corresponding 
restricted root system $\Sigma^\circ := \Sigma(\g^\circ, \fa^\circ)
\subeq (\fa^\circ)^*$, which is of 
type~$C_r$ or $BC_r$ (cf.~\cite[Thm.~XII.1.14]{Ne99}). 
As we have already observed above, 
the existence of an Euler element $x \in \fa^\circ$ 
implies that the restricted root system $\Sigma^\circ$ is reduced, 
which excludes the case $BC_r$. Therefore $\g^\circ$ is of tube type 
(cf.\ Proposition~\ref{prop:herm}) and Table~$2$ thus implies that 
$h$ is symmetric.

The fact that $\g^\circ$ is of tube type implies that 
$x \in \fa^\circ$ corresponds to the unique Euler element 
$h_r$ for the restricted root system $\Sigma^\circ$ of type 
$C_r$ (see \eqref{eq:eulelts2}). From \eqref{eq:symmeuler} it now 
follows that $x$ is symmetric (see also Proposition~\ref{prop:herm}). 
This proves (i).

To verify (ii) and (iii), we observe that 
the root system $C_r$ contains the maximal subset
$\{2 \eps_1, \ldots, 2 \eps_r\}$ of strongly orthogonal roots, 
i.e., neither sums nor differences of these roots are roots.
The multiplicities of these restricted roots 
are $1$ (\cite[Thm.~XII.1.14]{Ne99}), and 
\[ \fs 
:= \bigoplus_{j = 1}^r (\g^\circ_{2\eps_j} + \g^\circ_{-2\eps_j} 
+ \R (2\eps_j)^\vee) 
= \fa^\circ \oplus \bigoplus_{j = 1}^r (\g^\circ_{2\eps_j} + \g^\circ_{-2\eps_j}) 
\cong \fsl_2(\R)^r \]
(cf.~\cite[Lemma~XII.1.11]{Ne99}, \cite[p.~12]{Ta79}). 
As the roots $2\eps_j$ all take the value $1$ on the Euler element 
$x \in \fa^\circ$, 
we have $x = \shalf \sum_{j =1}^r (2\eps_j)^\vee$, which is the diagonal 
element in $\fsl_2(\R)^r$, corresponding to  $\pmat{\shalf & 0 \\ 0 &-\shalf}$.
Likewise, $ih$ is contained in $\fs \cong \fsl_2(\R)^r$ 
as the diagonal element corresponding to 
$\pmat{0 & -\shalf \\ \shalf &0}$. As the Lie subalgebra of 
$\gl_2(\C)$, generated by 
\[ \pmat{0 & -\shalf \\ \shalf &0} \quad \mbox{ and } \quad 
\pmat{\shalf & 0 \\ 0 &-\shalf} \] 
is isomorphic to $\fsl_2(\R)$, the same holds for the 
real Lie subalgebra of $\g$ generated by $h$ and $x$. Now (ii) and (iii) 
follow from Lemma~\ref{lem:sl2}.

\nin {\bf Case 2: $\g$ is not complex:} 
Then $\g_\C$ is a simple complex Lie algebra to which all arguments 
in Case 1 apply. In particular, 
the real Lie subalgebra $\fs$ spanned by $h$, $x$ and $[h,x]$ 
is isomorphic to $\fsl_2(\R)$. This proves (ii) and (iii). 
As $\fs \subeq \g$ and all Euler elements in $\fsl_2(\R)$ 
are symmetric, we also obtain (i). 

\nin (b) If  there exists an Euler element $x$ for which 
$(h,x)$ is orthogonal, then (a)(i) implies that $h$ is a symmetric 
Euler element. 
Suppose, conversely, that $h$ is a symmetric Euler element. 
For a Cartan involution $\theta$ with $\theta(h) = -h$, 
we choose a maximal abelian subspace $\fa \subeq \fp = \Fix(-\theta)$ containing 
$h$ and choose in 
the subset 
\[ \Sigma_1 := \{ \alpha \in \Sigma(\g,\fa) \: \alpha(h) = 1\} \]
a maximal set $\{ \gamma_1, \ldots, \gamma_r\}$ of strongly 
orthogonal roots \red{(cf.\ \cite[p.13]{Ta79} or \cite[p.~134]{Kan00}). 
From these references we further infer the existence of elements }
$e_j \in \g_{\gamma_j}$ such that, 
for each $j$, the subalgebra 
$\fs_j := \Spann_\R \{e_j, \sigma_h(e_j), [e_j, \sigma_h(e_j)]\}$ is isomorphic 
to $\fsl_2(\R)$. We normalize $e_j$ in such a way that, 
for $x_j := [e_j, \sigma_h(e_j)]$, we have $\gamma_j(x_j) = 1$. 
\red{Then loc.\ cit.\ further implies that }
\[ \fa_\fq := \fa \cap \fq = \Spann \{ x_j \: j =1,\ldots, r\} 
\quad \mbox{ for } \quad \fq := \g^{-\sigma_h} \] 
is maximal abelian in $\fq_\fp$. 
Since $h$ is a symmetric Euler element and the root system 
$\Sigma(\g,\fa)$ is irreducible, $h$ corresponds to some 
$h_j$ in the list \eqref{eq:symmeuler}. 
The restricted root system $\Sigma(\g,\fa_\fq)$ is always of type $C_r$. 
The explicit description 
of the restricted roots in \cite[p.~596]{Kan98} now implies that 
$x := \sum_{j = 1}^r x_j \in \fa_\fq$ is an Euler element. 
By construction, it satisfies $\sigma_h(x) = -x$, so that 
$(h,x)$ is orthogonal. 
This completes the proof. 
\end{prf}

\begin{cor} \label{cor:gfinite}Let $\g$ be a finite dimensional Lie algebra 
and $(h,x)$ be orthogonal Euler elements such that $h$ is also symmetric. 
Then the following assertions hold: 
\begin{itemize}
\item[\rm(a)] There exists a Levi complement containing $h$ and $x$. 
\item[\rm(b)] The Lie algebra generated by $h$ and $x$ is isomorphic 
to $\fsl_2(\R)$. 
\item[\rm(c)] $(x,h)$ is also orthogonal. 
\end{itemize}
\end{cor}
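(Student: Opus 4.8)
The plan is to reduce Corollary~\ref{cor:gfinite} to Theorem~\ref{thm:automaticsym} by passing to a Levi complement, using Proposition~\ref{prop:1.1} to control how the symmetric Euler element $h$ and the orthogonal partner $x$ sit relative to the Levi decomposition. The crucial first step is part~(a): showing that $h$ and $x$ together lie in a common Levi complement. Once that is established, (b) and (c) follow immediately from Theorem~\ref{thm:automaticsym} applied inside that Levi complement, which is semisimple; indeed, the Lie algebra generated by $h$ and $x$ is then contained in a semisimple Lie algebra, and by decomposing into simple ideals and projecting, the pair $(h,x)$ is governed by a single simple summand (the orthogonality $\sigma_h(x)=-x$ forces $h$ and $x$ to have nonzero components only in the same simple ideals, and in each such ideal Theorem~\ref{thm:automaticsym}(a) gives the $\fsl_2(\R)$-conclusion and the symmetry of the orthogonality relation).

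For part~(a), I would argue as follows. Write a Levi decomposition $\g = \fr \rtimes \fs$. Since $h$ is a symmetric Euler element, Proposition~\ref{prop:1.1}(i) tells us that, after replacing $h$ by a conjugate under $\Inn(\g)$, we may assume $h \in \fs$ and $h$ is a symmetric Euler element of $\fs$. The conjugating automorphism also moves $x$ to some orthogonal partner of the new $h$, so without loss of generality $h \in \fs$. Now $\sigma_h = e^{\pi i \ad h}$ is an inner automorphism of $\g$ stabilizing both $\fr$ and $\fs$ (because $h \in \fs$ acts on each), and the orthogonality $\sigma_h(x) = -x$ means $x \in \g^{-\sigma_h}$. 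The element $x$ decomposes as $x = x_\fr + x_\fs$ along $\fr \oplus \fs$ (as vector spaces), and since $\sigma_h$ preserves the splitting, both components lie in the respective $(-\sigma_h)$-eigenspaces. The key point is to eliminate the radical component $x_\fr$: because $x$ is an Euler element, $\ad x$ has spectrum in $\{-1,0,1\}$, and I would show that the nilpotent contribution coming from $x_\fr$ would force extra eigenvalues or a contradiction with diagonalizability, analogous to the argument in the proof of Proposition~\ref{prop:1.1}(i) where the unipotent map $e^{\ad x}$ is forced to act trivially. More precisely, since $x_\fs \in \fs$ is the "semisimple part modulo the radical" and $\ad x_\fr$ lands in the nilpotent ideal $\ad([\g,\fr])$-part, the Jordan decomposition of $\ad x$ in $\End(\g)$ must have its semisimple part equal to $\ad$ of an element conjugate to $x_\fs$; comparing with the hypothesis that $\ad x$ is already semisimple, one deduces $x$ is conjugate (by an inner automorphism fixing $h$, hence preserving orthogonality) to an element of $\fs$.

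The main obstacle I anticipate is precisely this last maneuver: verifying carefully that the radical component of $x$ can be removed by an inner automorphism that also fixes $h$, so that orthogonality is preserved throughout. One has to be careful that the automorphism conjugating $x$ into $\fs$ centralizes $h$ — otherwise it would destroy the relation $\sigma_h(x) = -x$. I would handle this by working inside the centralizer $\g^{\sigma_h} = \g^h$ (the $0$-eigenspace of $\ad h$ together with... no, rather $\Fix(\sigma_h)$), noting that $\sigma_h$-eigenspace decomposition is $\ad h$-invariant, and that the relevant conjugating element from the Jordan-decomposition argument can be taken in $\Inn_\g(\g^{\sigma_h} \cap \g_0(h))$ or, more robustly, by invoking that $x$ and $x_\fs$ generate the same algebraic hull so that $\Ad$-conjugacy within the centralizer of $h$ is available via the conjugacy of Levi decompositions of the algebraic group $\overline{\langle \exp \ad \g^{-\sigma_h}_{} \rangle}$ restricted appropriately. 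Once $x \in \fs$, everything else is automatic: $\langle h, x\rangle \subseteq \fs$ semisimple, pass to the simple ideal supporting them, apply Theorem~\ref{thm:automaticsym}.

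\medskip

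\noindent\emph{Remark on write-up.} A cleaner route for (a), avoiding the Jordan-decomposition bookkeeping: apply Proposition~\ref{prop:1.1}(i) to get $h \in \fs$; then note $x$ is a symmetric Euler element too — but this is what we want to conclude, so that is circular. The honest hard step really is the radical-elimination, and I would model it line-by-line on the displayed computation in the proof of Proposition~\ref{prop:1.1}(i) (the part showing $x \in \g_0(h)$ forces the central discrepancy to vanish), adapted from "$x$ generates a unipotent preserving $\R h_s + \R h_z$" to "$x_\fr$ would generate an unwanted nilpotent summand in $\ad x$."
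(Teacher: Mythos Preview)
Your overall strategy---place $h$ in a Levi complement via Proposition~\ref{prop:1.1}, then pull $x$ in as well, then invoke Theorem~\ref{thm:automaticsym}---matches the paper's. The execution of part~(a), however, has a real gap, and the paper closes it by a different mechanism than Jordan decomposition.

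\textbf{On (a).} You correctly isolate the obstacle: the conjugation moving $x$ into $\fs$ must fix $h$. Your proposed remedies (Jordan decomposition of $\ad x$, algebraic hulls, conjugacy of Levi subgroups) are not made precise, and it is not clear how to make them so while guaranteeing the conjugating element centralizes $h$. The paper bypasses this entirely. With $h \in \fs$, set $\fh := \Fix(\sigma_h) = \g_0(h)$ and $\fq := \Fix(-\sigma_h) = \g_1(h) \oplus \g_{-1}(h)$, so $x \in \fq$ and $\fq = \fq_\fr \oplus \fq_\fs$. One first shows that any maximal hyperbolic abelian subspace of $\fq_\fr$ is zero: it lies in $[h,\fr] \subeq [\g,\fr]$, hence consists of $\ad$-nilpotent elements, so is central; but $\ad h\vert_\fq$ is injective, forcing it to vanish. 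Then \cite[Prop.~III.5]{KN96} (on hyperbolic elements in the $(-1)$-eigenspace of an involution) yields that $\fq_\fs$ contains a maximal hyperbolic subspace of $\fq$, and that the hyperbolic element $x$ is conjugate \emph{under $\Inn_\g(\fh)$} into $\fq_\fs \subeq \fs$. Since $\fh = \g_0(h)$ centralizes $h$, this conjugation preserves both $h$ and the orthogonality relation---resolving exactly your obstacle without any Jordan-type bookkeeping.

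\textbf{On (b) and (c).} Two small corrections. The implication ``$h$ and $x$ have nonzero components only in the same simple ideals'' goes only one way: $x_j \neq 0 \Rightarrow h_j \neq 0$ (because $h_j = 0$ makes $\sigma_{h_j} = \id$), but not conversely; so you cannot reduce to ``a single simple summand''. And even after Theorem~\ref{thm:automaticsym} produces an $\fsl_2(\R)$ in each relevant simple ideal, the subalgebra generated by $h$ and $x$ is a priori only inside $\fsl_2(\R)^r$. The paper handles both points at once: reduce to $\g \cong \fsl_2(\R)^r$, use transitivity of $\Aut(\fsl_2(\R))$ on orthogonal pairs (Example~\ref{ex:1.4}) to arrange $h = (h_0,\ldots,h_0)$, $x = (x_0,\ldots,x_0)$, and observe that the generated subalgebra is the \emph{diagonal} $\fsl_2(\R)$. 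Part~(c) then follows from Lemma~\ref{lem:sl2}.
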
 

\begin{prf} By Proposition~\ref{prop:1.1}(i), 
there exists a Levi decompositions 
$\g = \fr \rtimes \fs$ with $h \in \fs$. We then have 
for $\fq := \Fix(-\sigma_h)$ the decompositions 
\[ \fq := \g_1(h) \oplus \g_{-1}(h) = \fq_\fr \oplus \fq_\fs \quad \mbox{ with }\quad 
\fq_\fr = \fq \cap \fr \quad \mbox{ and }\quad 
\fq_\fs = \fq \cap \fs,\] 
and 
$x \in \fq$ is an Euler element, hence in particular hyperbolic. 
Let $\fa_\fr \subeq \fq_\fr$ be a maximal hyperbolic subspace, 
i.e., $\fa_\fr$ is abelian, consists of $\ad$-diagonalizable elements 
and is maximal with respect to this property. 
Then $\fa_\fr \subeq [h,\fr] \subeq [\g,\fr]$ consists also of $\ad$-nilpotent 
elements, hence is central. As $\ad h\res_\fq$ is injective, 
it follows that $\fa_\fr= \{0\}$. 
By \cite[Prop.~III.5]{KN96}, $\fq_\fs$ contains a maximal 
hyperbolic subspace $\fa$ of $\fq$ and $x$ is conjugate under 
$\Inn_\g(\fh)$ to an element of $\fa \subeq \fq_\fs$. This proves(a). 

\nin (b) In view of (a), we may w.l.o.g.\ assume that $\g$ is semisimple,  
and by Theorem~\ref{thm:automaticsym}, which applies to each simple ideal,
 even that 
$\g \cong \fsl_2(\R)^r$ for some $r \in \N$. 
As $\Aut(\fsl_2(\R))\cong \PGL_2(\R)$ acts transitively on the set of 
orthogonal pairs of Euler elements in $\fsl_2(\R)$ 
(Example~\ref{ex:1.4}), we may further assume that 
\[ h = (h_0, \cdots, h_0) \quad \mbox{ and }\quad 
x = (x_0, \cdots, x_0)\quad \mbox{ for }\quad 
h_0 = \pmat{\shalf & 0 \\ 0 &-\shalf}, \quad 
x_0 := \pmat{0 & -\shalf \\ \shalf &0},\] 
so that the Lie subalgebra generated by $x$ and $h$ is the diagonal 
in $\fsl_2(\R)^r$, hence isomorphic to~$\fsl_2(\R)$.   

\nin (c) follows directly from (b) and Lemma~\ref{lem:sl2}.
\end{prf}

\section{Covariant nets of real subspaces}
\mlabel{sec:4}

In this section we develop an axiomatic setting for covariant 
nets of standard subspaces parametrized by $G^\up$-orbits in $\cG_E(G)$. 

\subsection{Standard subspaces}

Here we collect some fundamental notions concerning real subspaces 
of a complex Hilbert space $\cH$ with scalar product 
$\langle\cdot,\cdot\rangle$, linear in the second argument. 
We call a closed real subspace $\sH \subeq \cH$ \textit{cyclic} if $\sH+i\sH$ is dense in $\cH$, \textit{separating} if $\sH\cap i\sH=\{0\}$, and \textit{standard} 
if it is cyclic and separating. The symplectic ``complement'' 
of a real subspace $\sH$ is defined by the symplectic form $\Im \langle\cdot,\cdot\rangle$, namely
\[ \sH'=\{\xi\in\cH:(\forall \eta \in \sH)\ 
 \Im\langle\xi,\eta \rangle=0 \}.\] 
Note that $\sH$ is separating if and only if $\sH'$ is cyclic, hence $\sH$ is standard if and only if $\sH'$ is standard.
For a standard subspace $\sH$, we 
define the {\it Tomita operator} as the closed antilinear involution
\[S_\sH:\sH+i\sH \to \sH + i \sH,\quad 
\xi+i\eta\mapsto \xi-i\eta. \] 
The polar decomposition $S_\sH=J_\sH\Delta_\sH^{\frac12}$ defines an 
antiunitary involution $J_\sH$ and the modular operator~$\Delta_\sH$. 
For the modular group  $(\Delta_\sH^{it})_{t \in \R}$, 
we then have 
\[  J_\sH\sH=\sH' \quad \mbox{ and }\quad 
 \Delta^{it}_\sH\sH=\sH \qquad \mbox{ for every  } \quad 
t\in \R\]
(\cite[Thm.~3.4]{Lo08}). This construction 
leads to a one-to-one correspondence between Tomita operators and 
standard subspaces: 

\begin{proposition}\label{prop:11}{\rm (\cite[Prop.~3.2]{Lo08})} 
The map $\sH\mapsto S_\sH$ 
is a bijection between the set of standard subspaces of $\cH$ 
and the set of closed, densely defined, antilinear involutions on $\cH$. 
Moreover, polar decomposition $S = J\Delta^{1/2}$ 
defines a one-to-one correspondence between such involutions and 
pairs $(\Delta, J)$, where $J$ is a conjugation and $\Delta >0$ 
selfadjoint with $J\Delta J =\Delta^{-1}$. 
\end{proposition}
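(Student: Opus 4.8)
The plan is to prove the two bijections in turn: the first by elementary manipulations with the domain $\sH + i\sH$, and the second by the uniqueness of the polar decomposition together with functional calculus for the antilinear part.

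\emph{Step 1: standard subspaces $\leftrightarrow$ closed antilinear involutions.} Given a standard subspace $\sH$, the separating condition $\sH \cap i\sH = \{0\}$ guarantees that every $\zeta \in \sH + i\sH$ has a \emph{unique} decomposition $\zeta = \xi + i\eta$ with $\xi,\eta \in \sH$, so that $S_\sH\zeta := \xi - i\eta$ is well defined; it is visibly antilinear, and $S_\sH^2 = \id$ on its domain $\sH + i\sH$, which is dense because $\sH$ is cyclic. Closedness of $S_\sH$ follows from closedness of $\sH$: if $\zeta_n \to \zeta$ and $S_\sH\zeta_n \to \omega$, then $\xi_n = \tfrac12(\zeta_n + S_\sH\zeta_n) \to \tfrac12(\zeta+\omega) \in \sH$ and $\eta_n = \tfrac{1}{2i}(\zeta_n - S_\sH\zeta_n) \to \tfrac{1}{2i}(\zeta-\omega) \in \sH$, whence $\zeta \in \sH + i\sH = D(S_\sH)$ and $S_\sH\zeta = \omega$. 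Conversely, given a closed, densely defined antilinear involution $S$ (so $S$ maps $D(S)$ bijectively onto itself with $S^2 = \id$), put $\sH := \Fix(S) = \ker(S-\id)$. Antilinearity makes $\sH$ a \emph{real} linear subspace, and closedness of the graph of $S$ makes it closed in $\cH$. For $\zeta \in D(S)$ one writes $\zeta = u + iw$ with $u := \tfrac12(\zeta + S\zeta) \in \sH$ and $w := -\tfrac{i}{2}(\zeta - S\zeta)$; using $\bar i = -i$ and $S^2 = \id$ one checks $Sw = w$, so $w \in \sH$. Hence $D(S) = \sH + i\sH$, which shows $\sH$ is cyclic, while $Sv = -v$ for $v \in i\sH$ shows $\sH \cap i\sH = \{0\}$, so $\sH$ is separating. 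Finally $S_\sH = S$ on the common domain $\sH + i\sH$, and $\Fix(S_\sH) = \sH$, so the two assignments are mutually inverse.

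\emph{Step 2: closed antilinear involutions $\leftrightarrow$ pairs $(\Delta,J)$.} Let $S = J\Delta^{1/2}$ be the polar decomposition, $\Delta := S^*S$. Since $S$ is an involution on its domain it is injective with dense range, and $\Delta\xi = 0$ forces $\|S\xi\|^2 = 0$, so $\Delta$ is injective positive selfadjoint, i.e.\ $\Delta > 0$, and the partial isometry $J$ is in fact a genuine antiunitary. From $S^{-1} = S$ we get $(S^*)^{-1} = S^*$, hence $\Delta^{-1} = (S^*S)^{-1} = S^{-1}(S^*)^{-1} = SS^* = J\Delta^{1/2}\Delta^{1/2}J^{-1} = J\Delta J^{-1}$. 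Functional calculus for the antilinear $J$ (the identity $Jf(\Delta)J^{-1} = \overline{f}(J\Delta J^{-1})$ applied to the real-valued function $f(t)=t^{1/2}$) yields $J\Delta^{1/2}J^{-1} = \Delta^{-1/2}$, and substituting this into $\id = S^2 = J\Delta^{1/2}J\Delta^{1/2} = (J\Delta^{1/2}J^{-1})\,J^2\,\Delta^{1/2} = \Delta^{-1/2}J^2\Delta^{1/2}$ forces $J^2 = \id$. Thus $J$ is a conjugation and $J\Delta J = \Delta^{-1}$. Conversely, given such a pair, $S := J\Delta^{1/2}$ on $D(\Delta^{1/2})$ is closed, densely defined, antilinear, and $S^2 = (J\Delta^{1/2}J)\Delta^{1/2} = \Delta^{-1/2}\Delta^{1/2} = \id$ on $D(\Delta^{1/2})$ (with $S$ preserving this domain, since $\Delta^{1/2}J = J\Delta^{-1/2}$). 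Uniqueness of the polar decomposition of $S$ shows that $S \mapsto (\Delta,J)$ and $(\Delta,J) \mapsto J\Delta^{1/2}$ are mutually inverse; composing with Step 1 gives the stated correspondence $\sH \mapsto (\Delta_\sH,J_\sH)$.

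\emph{Main obstacle.} The elementary domain-chasing in Step 1 is routine; the delicate point is the unbounded-operator bookkeeping in Step 2 — justifying the antilinear functional-calculus identity $Jf(\Delta)J^{-1} = \overline{f}(J\Delta J^{-1})$, checking that $S$ maps $D(\Delta^{1/2})$ into itself, and verifying that each relation $S^{-1} = S$, $S^2 = \id$, $J\Delta^{1/2}J^{-1} = \Delta^{-1/2}$ holds literally rather than merely up to closure. I would dispose of these points by the spectral theorem, realizing $\Delta$ as multiplication by a positive measurable function on an $L^2$-space, where all identities become pointwise and the intertwining of $\Delta$ with $\Delta^{-1}$ by the conjugation $J$ is transparent; alternatively one may simply cite the corresponding statements from the Tomita--Takesaki literature, as is done in \cite{Lo08}.
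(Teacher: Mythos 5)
Your proposal is correct. The paper itself does not give a proof of this proposition but cites it from Longo's notes \cite[Prop.~3.2]{Lo08}, and your argument is precisely the standard one given there and in the general Tomita--Takesaki literature: the first bijection by the elementary decomposition $\zeta = \xi + i\eta$ with $\xi = \tfrac12(\zeta + S\zeta)$ and $\eta = \tfrac{1}{2i}(\zeta - S\zeta)$, and the second by polar decomposition together with the intertwining relations $J\Delta J^{-1} = \Delta^{-1}$ and $J\Delta^{1/2}J^{-1} = \Delta^{-1/2}$. The unbounded-operator domain bookkeeping you flag in the final paragraph is indeed where the care lies (in particular that $S$ preserves $D(\Delta^{1/2})$, which follows from $\Delta^{1/2}(S\xi) = \Delta^{1/2}J\Delta^{1/2}\xi = J\Delta^{-1/2}\Delta^{1/2}\xi = J\xi$), and your suggested resolution by the spectral theorem is exactly how one closes those gaps.
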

The modular operators of symplectic complements satisfy the following relations 
\[ S_{\sH'}=S_{\sH}^*, 
\qquad \Delta_{\sH'}=\Delta_\sH^{-1}, \qquad J_{\sH'}=J_{\sH}.\] 

From Proposition~\ref{prop:11} we easily deduce: 
{\begin{lemma}\label{lem:sym}{\rm(\cite[Lemma 2.2]{Mo18})}
Let $\sH\subset\cH$ be a standard subspace  and $U\in\AU(\cH)$ 
be a unitary operator. Then $U\sH$ is also standard and 
$U\Delta_\sH U^*=\Delta_{U\sH}^{\eps(U)}$ and $UJ_\sH U^*=J_{U\sH}$.
\end{lemma}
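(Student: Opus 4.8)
The plan is to reduce the whole statement to the uniqueness of the polar decomposition of a standard subspace recorded in Proposition~\ref{prop:11}, via the elementary fact that the Tomita operator of $U\sH$ is simply the $U$-conjugate of that of $\sH$. I would run the argument for an arbitrary $U\in\AU(\cH)$; when $U$ is unitary one has $\eps(U)=1$, which is the form the statement records.

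First I would check that $U\sH$ is standard. Since a real subspace satisfies $-i\sH=i\sH$, one gets $i\,U\sH=U(i\sH)$ whether $U$ is linear or antilinear, hence $U\sH+i\,U\sH=U(\sH+i\sH)$ and $U\sH\cap i\,U\sH=U(\sH\cap i\sH)$. As $U$ is a surjective isometry, the first identity transfers density of $\sH+i\sH$ to density of $U\sH+i\,U\sH$, and the second gives $U\sH\cap i\,U\sH=\{0\}$. So $U\sH$ is cyclic and separating, and $S_{U\sH}$ is a closed antilinear involution with domain $U(\sH+i\sH)$.

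Next I would identify $S_{U\sH}$. Writing a generic element of $U\sH+i\,U\sH$ as $U\xi+i\,U\eta$ with $\xi,\eta\in\sH$, and using $U(i\eta)=-i\,U\eta$ in the antilinear case, a one-line computation gives $U\xi+i\,U\eta=U\zeta$ and $U\xi-i\,U\eta=U\,S_\sH\zeta$ for a corresponding $\zeta\in\sH+i\sH$; as $\zeta$ then ranges over all of $\mathcal D(S_\sH)=\sH+i\sH$, this means $S_{U\sH}=U\,S_\sH\,U^{*}$ (the domains matching by the previous step). Inserting the polar decomposition $S_\sH=J_\sH\Delta_\sH^{1/2}$ gives
\[ S_{U\sH}=(U J_\sH U^{*})\,(U\Delta_\sH^{1/2}U^{*}).\]
Here $U J_\sH U^{*}$ is a conjugation — it is antilinear and squares to $U J_\sH^{2}U^{*}=\mathrm{id}$ — and $U\Delta_\sH^{1/2}U^{*}$ is positive selfadjoint, with $\langle U\Delta_\sH^{1/2}U^{*}\xi,\xi\rangle=\overline{\langle\Delta_\sH^{1/2}U^{*}\xi,U^{*}\xi\rangle}\geq0$ and $(U\Delta_\sH^{1/2}U^{*})^{2}=U\Delta_\sH U^{*}$. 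Thus the displayed factorisation is the polar decomposition of the antilinear involution $S_{U\sH}$, and the uniqueness part of Proposition~\ref{prop:11} forces $J_{U\sH}=U J_\sH U^{*}$ and $\Delta_{U\sH}^{1/2}=U\Delta_\sH^{1/2}U^{*}$, hence $\Delta_{U\sH}=U\Delta_\sH U^{*}$. For unitary $U$ this reads $U\Delta_\sH U^{*}=\Delta_{U\sH}=\Delta_{U\sH}^{\eps(U)}$, as claimed.

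I do not expect a real obstacle: the argument is a short computation followed by an appeal to Proposition~\ref{prop:11}. The two points deserving a sentence of care are the bookkeeping of domains in the identification $S_{U\sH}=U S_\sH U^{*}$ and the verification that the two factors $U J_\sH U^{*}$ and $U\Delta_\sH^{1/2}U^{*}$ have exactly the properties (a conjugation, resp.\ a positive selfadjoint operator) required to invoke uniqueness of the polar decomposition.
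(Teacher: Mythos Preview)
Your argument is correct and is exactly the approach the paper indicates: the paper does not spell out a proof but simply writes ``From Proposition~\ref{prop:11} we easily deduce'' before stating the lemma (citing \cite{Mo18}), and your reduction via $S_{U\sH}=U S_\sH U^{*}$ together with uniqueness of the polar decomposition in Proposition~\ref{prop:11} is precisely that deduction. Your remark that the exponent $\eps(U)$ is redundant here because the hypothesis restricts to unitary $U$ is also on point.
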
}

\begin{lemma}\mlabel{inc}{\rm(\cite[Cor.~2.1.8]{Lo08})}
Let $\sH\subset \cH$ be a standard subspace, and $\sK\subset \sH$ 
be  a closed, real linear subspace of $\sH$. 
If $\Delta_\sH^{it}\sK=\sK$ for all $t\in\R$, then $\sK$ 
is a standard subspace of $\cK:= \overline{\sK+i\sK}$ 
and $\Delta_\sH |_\sK$  is the modular operator of $\sK$ on $\cK$. 
If, in addition, $\sK$ is a cyclic subspace of $\cH$, then $\sH=\sK$.
\end{lemma}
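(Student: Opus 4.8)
The plan is to realise the Tomita operator $S_\sK$ of $\sK$ inside $\cK$ as a restriction of $S_\sH$, and then to read off $\Delta_\sK$ and $J_\sK$ from the polar decomposition $S_\sH = J_\sH\Delta_\sH^{1/2}$, exploiting that $\cK$ reduces the modular flow of~$\sH$.

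First I would check that $\sK$ is standard in $\cK$: it is cyclic in $\cK=\overline{\sK+i\sK}$ by the very definition of $\cK$, and it is separating because $\sK\cap i\sK\subseteq\sH\cap i\sH=\{0\}$. Next I would show that $\cK$ reduces both $\Delta_\sH$ and $J_\sH$. Since $\Delta_\sH^{it}$ is complex linear and fixes $\sK$ setwise, it preserves $\sK+i\sK$ and hence $\cK$; therefore the orthogonal projection onto $\cK$ commutes with $\Delta_\sH$ and with $\Delta_\sH^{1/2}$, so $\Delta_1:=\Delta_\sH|_\cK$ is an injective positive selfadjoint operator on $\cK$ with $\Delta_1^{1/2}=\Delta_\sH^{1/2}|_\cK$. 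For $J_\sH$ I would use that $S_\sH(\sK+i\sK)=\sK+i\sK$; writing $S_\sH=J_\sH\Delta_\sH^{1/2}$ and applying $J_\sH$ gives $\Delta_\sH^{1/2}(\sK+i\sK)=J_\sH(\sK+i\sK)$, and the left-hand side lies in $\cK$ by what was just said, so $J_\sH(\sK+i\sK)\subseteq\cK$; continuity and $J_\sH^2=\id$ then force $J_\sH\cK=\cK$. Thus $J_1:=J_\sH|_\cK$ is a conjugation on $\cK$ with $J_1\Delta_1 J_1=\Delta_1^{-1}$, i.e.\ $(\Delta_1,J_1)$ is an admissible pair in the sense of Proposition~\ref{prop:11}.

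The heart of the argument is the identification $S_\sK=J_1\Delta_1^{1/2}$. On the domain $\sK+i\sK$ of $S_\sK$ one has, for $v,w\in\sK$, $S_\sK(v+iw)=v-iw=S_\sH(v+iw)=J_1\Delta_1^{1/2}(v+iw)$, so $S_\sK$ is the restriction of $J_1\Delta_1^{1/2}$ to $\sK+i\sK$. Now $\sK+i\sK$ is dense in $\cK$, invariant under $\Delta_1^{it}=\Delta_\sH^{it}|_\cK$, and contained in $D(\Delta_1^{1/2})$; by the standard fact that a $\Delta_1^{it}$-invariant dense subspace of $D(\Delta_1^{1/2})$ is a core for $\Delta_1^{1/2}$ (cf.\ \cite{Lo08}), and since $J_1$ is bounded, $\sK+i\sK$ is a core for $J_1\Delta_1^{1/2}$. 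As $S_\sK$ is closed, this yields $S_\sK=J_1\Delta_1^{1/2}$, which is a polar decomposition; the uniqueness statement in Proposition~\ref{prop:11} then gives $\Delta_\sK=\Delta_1=\Delta_\sH|_\cK$ and $J_\sK=J_1=J_\sH|_\cK$. Finally, if $\sK$ is cyclic in $\cH$ then $\cK=\cH$, so the previous line reads $S_\sK=J_\sH\Delta_\sH^{1/2}=S_\sH$, whence $\sK=\Fix(S_\sK)=\Fix(S_\sH)=\sH$ (equivalently, by the bijectivity in Proposition~\ref{prop:11}).

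The only genuinely delicate point is the core statement invoked in the third paragraph; everything else is bookkeeping with the reducing subspace $\cK$. I expect that step to be the main obstacle, and would either prove it by the usual Gaussian-smearing argument (smearing $\Delta_1^{it}\xi$ against a Gaussian to produce entire analytic vectors approximating $\xi$ in the graph norm of $\Delta_1^{1/2}$) or simply cite it from \cite{Lo08}.
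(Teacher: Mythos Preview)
The paper does not prove this lemma; it merely quotes it as \cite[Cor.~2.1.8]{Lo08}. Your argument is the standard one and is essentially what one finds in Longo's notes: reduce $\Delta_\sH$ and $J_\sH$ to $\cK$, identify $S_\sK$ with the restriction of $S_\sH$, and invoke uniqueness of the polar decomposition. The one nontrivial step you correctly flag is the core property of $\sK+i\sK$ for $\Delta_1^{1/2}$, which follows from the invariance under the modular flow by the usual analytic-vector (Gaussian smearing) argument; everything else is routine. Your proof is correct and matches the approach of the cited source.
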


The following theorem relates 
positive generators and inclusions of real subspaces.

\begin{theorem}{\rm(\cite[Thms.~3.15, 3.17]{Lo08}, \cite[Thm.~3.2]{BGL02})} 
\label{Borch} 
Let $\sH\subset\cH$ be a standard subspace and $U(t)=e^{itP}$ 
be a unitary one-parameter group on $\cH$ with  a generator $P$.
\begin{itemize}
\item[\rm(a)] 
If $\pm P > 0$ and $U(t)\sH\subset \sH$ for all $t\geq 0$, then 
\begin{equation}\label{eq:DeltaJ} \Delta_\sH^{-is/2\pi}U(t)\Delta_\sH^{is/2\pi} 
= U(e^{\pm s}t)\quad \mbox{ and } \quad 
J_{\sH}U(t)J_{\sH}=U(-t) \quad \mbox{ for all } \quad 
t,s \in \R.
\end{equation}
\item[\rm(b)] If $\Delta_\sH^{-is/2\pi}U(t)\Delta_\sH^{is/2\pi} 
= U(e^{\pm s}t)$ for $s,t \in \R$, then  the following are equivalent:
\begin{enumerate} 
\item[\rm(1)] $U(t)\sH \subset \sH$ for $ t\geq 0$;
\item[\rm(2)] $\pm P$ is positive.
\end{enumerate}
\end{itemize}
\end{theorem}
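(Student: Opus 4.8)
The plan is to recognize that this "Borchers-type" theorem about one‑parameter groups and standard subspaces is already available in the literature (Longo's lecture notes \cite{Lo08} and Brunetti--Guido--Longo \cite{BGL02}), and the statement as phrased is essentially a packaging of those results; so the "proof" I would give consists in reducing to the cited statements and filling in the short commutation-relation bookkeeping. Concretely, I would first reduce to the $+$ sign: if $P$ generates $U$, then $-P$ generates $U(-t)$, and replacing $\sH$ by $\sH'$ sends $\Delta_\sH \mapsto \Delta_\sH^{-1}$, $J_\sH \mapsto J_\sH$ (the relations recalled just after Proposition~\ref{prop:11}), so every $-$ statement follows from the corresponding $+$ statement after these substitutions. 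Thus it suffices to treat the case $P\ge 0$, $U(t)\sH\subseteq\sH$ for $t\ge 0$, aiming for $\Delta_\sH^{-is/2\pi}U(t)\Delta_\sH^{is/2\pi}=U(e^{s}t)$ and $J_\sH U(t)J_\sH=U(-t)$.

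For part (a), the key step is the classical argument: the function $F(s,t):=\Delta_\sH^{-is/2\pi}U(t)\Delta_\sH^{is/2\pi}$ has, for fixed $t\ge 0$, an extension to a bounded operator-valued function holomorphic in the strip $0<\Im s<2\pi$ (this uses $U(t)\sH\subseteq\sH$ together with $\Delta_\sH^{1/2}$ mapping $\sH+i\sH$ to itself as the Tomita operator, i.e.\ the KMS/analyticity input), and a Phragmén--Lindelöf / three-lines argument pins down $F(s,t)=U(e^{s}t)$; then taking $s=i\pi$ gives $J_\sH U(t)J_\sH=U(-t)$. I would not reproduce this calculation but simply cite \cite[Thm.~3.15]{Lo08} (equivalently \cite[Thm.~3.2]{BGL02}) for it, perhaps remarking that the hypothesis $P\ge 0$ is what makes the relevant matrix coefficients bounded on the strip. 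For part (b), the hypothesis is precisely the conclusion of (a) minus the sign of $P$; the equivalence $(1)\Leftrightarrow(2)$ is then \cite[Thm.~3.17]{Lo08}. The direction $(2)\Rightarrow(1)$ is part (a) applied to $+P$; the converse is the genuinely new content of (b), proved by running the analyticity argument in reverse: the intertwining relation forces $F(s,t)=U(e^{\pm s}t)$ to be the boundary value of a holomorphic strip function, invariance $U(t)\sH\subseteq\sH$ gives boundedness, and boundedness on the strip forces $\pm P\ge 0$ by the spectral theorem.

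The main obstacle, if one wanted a self-contained proof rather than a citation, is the analytic continuation step in (a): establishing that $s\mapsto \langle \xi, \Delta_\sH^{-is/2\pi}U(t)\Delta_\sH^{is/2\pi}\eta\rangle$ extends holomorphically and boundedly to the closed strip $0\le\Im s\le 2\pi$ for a dense set of $\xi,\eta$, which is where one uses that $U(t)$ maps $\sH$ into $\sH$ (not merely that it commutes with something) and the positivity of $P$. Since the excerpt explicitly permits assuming results stated earlier, and Theorem~\ref{Borch} itself is quoted from \cite{Lo08,BGL02}, the honest proof here is: \emph{"This is \cite[Thms.~3.15 and 3.17]{Lo08}; see also \cite[Thm.~3.2]{BGL02}. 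The reduction of the $-$ case to the $+$ case follows by replacing $(\sH,U)$ with $(\sH',U(-\,\cdot\,))$ and using $\Delta_{\sH'}=\Delta_\sH^{-1}$, $J_{\sH'}=J_\sH$."} I would write exactly that, expanded only slightly to make the sign reduction explicit, rather than attempting to redo the Phragmén--Lindelöf estimate.

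\begin{proof}
This is a restatement of \cite[Thms.~3.15 and 3.17]{Lo08} (see also \cite[Thm.~3.2]{BGL02}), so we only indicate how the two sign variants are related. It suffices to treat the $+$ sign. Indeed, if $P$ is the generator of $U$, then $-P$ generates the one-parameter group $t \mapsto U(-t)$, and passing from $\sH$ to its symplectic complement $\sH'$ replaces $\Delta_\sH$ by $\Delta_{\sH'} = \Delta_\sH^{-1}$ and leaves $J_\sH = J_{\sH'}$ unchanged. Hence, applying the $+$ statements to the pair $(\sH', U(-\,\cdot\,))$ and translating back yields the $-$ statements for $(\sH,U)$; the hypothesis $U(t)\sH \subseteq \sH$ for $t \ge 0$ becomes $U(t)\sH' \subseteq \sH'$ for $t \ge 0$ under this substitution, since $U(t)\sH \subseteq \sH$ is equivalent to $U(-t)\sH' \subseteq \sH'$.

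For the $+$ sign in (a), fix $t \ge 0$. Using that $U(t)$ maps $\sH$ into $\sH$ and that $\Delta_\sH^{1/2}$ is the modular part of the Tomita operator $S_\sH$ (Proposition~\ref{prop:11}), the operator-valued map $s \mapsto \Delta_\sH^{-is/2\pi} U(t) \Delta_\sH^{is/2\pi}$ extends to a bounded, weakly holomorphic function on the strip $0 < \Im s < 2\pi$; positivity of $P$ is what guarantees the relevant matrix coefficients stay bounded on the strip. A Phragmén--Lindelöf argument then forces this extension to equal $s \mapsto U(e^{s} t)$ on the strip, and evaluation at $s = i\pi$ gives $J_\sH U(t) J_\sH = U(-t)$. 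This is the content of \cite[Thm.~3.15]{Lo08}.

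For (b), the hypothesis is precisely the intertwining relation appearing in (a), without the sign constraint on $P$. The implication $(2) \Rightarrow (1)$ is then part (a) applied to $\pm P$. For $(1) \Rightarrow (2)$, the intertwining relation again lets one write $s \mapsto \Delta_\sH^{-is/2\pi} U(t) \Delta_\sH^{is/2\pi} = U(e^{\pm s} t)$ as the boundary value of a holomorphic function on the strip $0 < \Im s < 2\pi$; the invariance $U(t)\sH \subseteq \sH$ for $t \ge 0$ makes this function bounded on the strip, and boundedness on the strip forces $\pm P \ge 0$ by the spectral theorem. This is \cite[Thm.~3.17]{Lo08}.
\end{proof}
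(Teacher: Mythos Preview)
Your proposal is correct and matches the paper's own treatment: the paper does not prove this theorem at all but simply records it as a cited result, adding only the remark ``Part (a) is also called the One-particle Borchers Theorem. Borchers originally proved it for von Neumann algebras with a cyclic and separating vectors. Part (b) is in \cite{BGL02}.'' Your write-up goes a bit further by spelling out the $\pm$ sign reduction via $(\sH,U)\mapsto(\sH',U(-\,\cdot\,))$ and sketching the strip-analyticity argument, but this is just added detail on top of the same citation strategy the paper uses.
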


Part {\rm(a)}  is also called the One-particle Borchers Theorem.  
Borchers originally proved it for von Neumann algebras with a cyclic and separating vectors. Part (b) is in \cite{BGL02}. 

With the notation introduced in Examples~\ref{ex:models}(b),  we have 
seen  that any couple $(U,\sH)$ of a one-parameter group 
$(U_t)_{t \in \R}$ 
with positive (resp.~negative) generator and a standard subspace $\sH$ satisfying the assumptions of Theorem~\ref{Borch}(a) defines a unitary, positive energy representation of the affine group $\Aff(\R) \cong \RR\rtimes\RR^\times$ implemented by 
\[ U(\zeta(t)) = U(t), \quad U(\delta(t))=\Delta_\sH^{-\frac{it}{2\pi}},\quad 
U(r_0)=J_\sH \quad \mbox{ for } \quad t \in \R.\]
A representation of $\Aff(\R)$ can also  be obtained by looking at some peculiar relative positions of standard subspaces: The half-sided modular inclusions.
\begin{defn}
An inclusion $\sK \subeq \sH$ of standard subspaces of 
$\cH$ is called a {\it $\pm$half-sided modular inclusion} ($\pm$HSMI) if 
\[ \Delta_\sH^{-it} \sK \subeq \sK \quad \mbox{ for } \quad \pm t \geq 0.\] 
\end{defn}

\begin{theorem}{\rm(\cite[Cor.~3.6.6.]{Lo08}, \cite[Thm.~3.15]{NO17})}
\label{thm:HSMI}
 $\sK \subeq \sH$ is a positive 
half-sided modular inclusion if and only if there exists an 
(anti-)unitary positive energy representation $(U,\cH)$ of 
$\Aff(\R) \cong \R \rtimes \R^\times$ 
with  $U(\delta(t))=\Delta_\sH^{-\frac{it}{2\pi}}$, $U(r_0)=J_\sH$, $U(\delta_{(1,\infty)}(t))=\Delta_\sK^{-\frac{it}{2\pi}}$, $U(r_1)=J_\sK$. In this picture
\[ \sK = \sN((1,1).W_0) \quad \mbox{ and } \quad 
\sH = \sN(W_0),\] where $W_0=(\lambda,r_0)$ corresponds to 
the half-line $(0,\infty)$, the translations satisfy 
\[ \sK = U(1)\sH \quad \mbox{ and } \quad 
{U(1-e^{t})=\Delta_\sK^{-it/2\pi}\Delta_\sH^{it/2\pi}}
\quad \mbox{ for } \quad t\in\R. \] 
\end{theorem}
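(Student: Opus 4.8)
The plan is to show equivalence of the two characterizations of a $+$half-sided modular inclusion $\sK\subseteq\sH$ by reducing everything to the one-particle Borchers Theorem (Theorem~\ref{Borch}) together with the structural results already recorded for standard subspaces. The key observation is that the data $(U(\delta),U(r_0),U(\delta_{(1,\infty)}),U(r_1))$ is nothing but the collection $(\Delta_\sH^{-it/2\pi},J_\sH,\Delta_\sK^{-it/2\pi},J_\sK)$, so the content of the theorem is that an $(U,\cH)$ as claimed exists exactly when the inclusion is $+$HSMI, and that in that case the translation one-parameter group is recovered from $\Delta_\sH,\Delta_\sK$ by the stated formula.

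\textbf{($\Leftarrow$), the easy direction.} Suppose such a representation $(U,\cH)$ of $\Aff(\R)$ with positive energy exists, with $U(\delta(t))=\Delta_\sH^{-it/2\pi}$ and $U(\delta_{(1,\infty)}(t))=\Delta_\sK^{-it/2\pi}$. Writing $U(1)$ for the translation $U(\zeta(1))$, the affine relations in $\Aff(\R)$ give $\delta_{(1,\infty)}(t)=\zeta(1)\delta(t)\zeta(1)^{-1}$, hence $\Delta_\sK^{-it/2\pi}=U(1)\Delta_\sH^{-it/2\pi}U(1)^{-1}$. Since $U$ has positive energy, the generator $P$ of $t\mapsto U(t)$ is positive, and by Theorem~\ref{Borch}(b) (with the $+$ sign, using that conjugation by $\Delta_\sH^{-is/2\pi}$ rescales $U(t)\mapsto U(e^{s}t)$, the same relation that holds between $\delta$ and $\zeta$ in $\Aff(\R)$) positivity of $P$ is equivalent to $U(t)\sH\subseteq\sH$ for $t\geq0$. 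Now $\sK=U(1)\sH$ is the image of $\sH$ under a unitary, hence standard (Lemma~\ref{inc} or directly), and for $t\geq0$, $\Delta_\sH^{-it}\sK=\Delta_\sH^{-it}U(1)\sH=U(e^{2\pi t})\Delta_\sH^{-it}\sH=U(e^{2\pi t})\sH\subseteq U(1)\sH=\sK$, using the covariance relation and $U(t)\sH\subseteq\sH$ for the shift $e^{2\pi t}-$ ... more carefully: $\Delta_\sH^{-it}U(1)\Delta_\sH^{it}=U(e^{2\pi t})$, so $\Delta_\sH^{-it}U(1)\sH=U(e^{2\pi t})\Delta_\sH^{-it}\sH=U(e^{2\pi t})\sH$, and since $e^{2\pi t}\geq 1$ we may write $U(e^{2\pi t})\sH=U(e^{2\pi t}-1)U(1)\sH\subseteq U(1)\sH=\sK$ because $e^{2\pi t}-1\geq0$. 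Thus $\sK\subseteq\sH$ (take $t=0$) and $\Delta_\sH^{-it}\sK\subseteq\sK$ for $t\geq0$, i.e.\ the inclusion is $+$HSMI.

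\textbf{($\Rightarrow$), the substantive direction.} Assume $\sK\subseteq\sH$ is a $+$HSMI. This is where I would invoke the quoted result \cite[Cor.~3.6.6]{Lo08} (or equivalently \cite[Thm.~3.15]{NO17}): a positive half-sided modular inclusion of standard subspaces produces a one-parameter unitary group $U(t)$ with positive generator $P$ such that $U(t)\sH\subseteq\sH$ for $t\geq0$, $U(1)\sH=\sK$, and $\Delta_\sH^{-is/2\pi}U(t)\Delta_\sH^{is/2\pi}=U(e^{s}t)$. Then set $U(\delta(t)):=\Delta_\sH^{-it/2\pi}$ and $U(r_0):=J_\sH$; by Theorem~\ref{Borch}(a) the commutation relations $J_\sH U(t)J_\sH=U(-t)$ hold, so together with $U(t)$ this assembles into a genuine (anti-)unitary positive-energy representation of $\Aff(\R)\cong\R\rtimes\R^\times$, exactly as recorded in the paragraph following Theorem~\ref{Borch}. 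It remains to identify $U(\delta_{(1,\infty)}(t))$ with $\Delta_\sK^{-it/2\pi}$ and $U(r_1)$ with $J_\sK$. Since $\sK=U(1)\sH$, Lemma~\ref{lem:sym} gives $U(1)\Delta_\sH U(1)^{-1}=\Delta_{U(1)\sH}=\Delta_\sK$ and $U(1)J_\sH U(1)^{-1}=J_\sK$; translating by the group law $r_1=\zeta(1)r_0\zeta(1)^{-1}$ and $\delta_{(1,\infty)}(t)=\zeta(1)\delta(t)\zeta(1)^{-1}$ in $\Aff(\R)$, this is precisely $U(\delta_{(1,\infty)}(t))=\Delta_\sK^{-it/2\pi}$ and $U(r_1)=J_\sK$. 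Finally, the displayed formula $U(1-e^{t})=\Delta_\sK^{-it/2\pi}\Delta_\sH^{it/2\pi}$ follows by computing $\Delta_\sK^{-it/2\pi}\Delta_\sH^{it/2\pi}=U(1)\Delta_\sH^{-it/2\pi}U(1)^{-1}\Delta_\sH^{it/2\pi}=U(1)U(-e^{t})=U(1-e^{t})$, using the covariance $\Delta_\sH^{-it/2\pi}U(1)\Delta_\sH^{it/2\pi}=U(e^{t})$ and that translations form a one-parameter group. The membership identifications $\sK=\sN((1,1).W_0)$, $\sH=\sN(W_0)$ are then just the dictionary between abstract wedges (Example~\ref{ex:models}(b)) and the representation just built.

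\textbf{Main obstacle.} The genuinely nontrivial input is the existence statement in the forward direction: that a $+$HSMI of standard subspaces actually gives rise to a translation group $U(t)$ with positive generator satisfying the Borchers covariance relation with $\Delta_\sH$. I would not reprove this — it is the quoted Wiesbrock-type theorem \cite[Cor.~3.6.6]{Lo08}, \cite[Thm.~3.15]{NO17} — but it is the one place where the argument is not a formal consequence of Theorem~\ref{Borch} and Lemma~\ref{lem:sym}. Everything else is bookkeeping: matching the one-parameter subgroups of $\Aff(\R)$ with the modular objects, and checking that conjugation by $J_\sH$ and by $U(1)$ transports the modular data of $\sH$ to that of its image, which is exactly Lemma~\ref{lem:sym}. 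One should also take a small amount of care that the generator $P$ constructed there is positive (not merely nonzero), which is what pins down the $+$ rather than the $-$ case and which is guaranteed by the cited result.
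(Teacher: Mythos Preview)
The paper does not actually prove this theorem: it is stated with attribution to \cite[Cor.~3.6.6]{Lo08} and \cite[Thm.~3.15]{NO17} and no proof is given. Your proposal is a correct unpacking of that citation, and you have correctly located the one nontrivial ingredient---the Wiesbrock-type existence of the positive-generator translation group from a $+$HSMI---as something to be imported from the cited references, with the remaining identifications (via Lemma~\ref{lem:sym} and the group law in $\Aff(\R)$) being straightforward bookkeeping that you carry out accurately.
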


As a consequence, negative
half-sided modular inclusions $\sK \subeq \sH$ are in in 1-1 correspondence with (anti-)unitary negative energy representation $(U,\cH)$ of 
$\Aff(\R) \cong \R \rtimes \R^\times$ with 
\[ \sK = \sN((-1,1).W_0') = U(-1)\sN(W_0)' 
\quad \mbox{ and } \quad 
\sH = \sN(W_0') = \sN(W_0)'\] 
and with  $U(\delta(-t))=\Delta_\sH^{-\frac{it}{2\pi}}$, $U(r_0)=J_\sH$, $U(\delta_{(-\infty,-1)}(t))=\Delta_\sK^{-\frac{it}{2\pi}}$, $U(r_{-1})=J_\sK$.

\begin{corollary} {\rm(\cite[Corollary 2.4.3.]{Lo08}).}  
If $\sK\subset \sH$ is +HSMI, then $\sH'\subset \sK'$ is -HSMI 
\end{corollary}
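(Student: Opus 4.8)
The plan is to strip the statement down, by formal manipulations with the symplectic complement, to a single one-sided inclusion that is directly governed by the $\Aff(\R)$-representation attached to the given $+$HSMI by Theorem~\ref{thm:HSMI}. First, $\sK \subseteq \sH$ forces $\sH' \subseteq \sK'$, since the symplectic complement reverses inclusions, and both $\sH'$ and $\sK'$ are again standard. Using $\Delta_{\sK'} = \Delta_\sK^{-1}$, the assertion that $\sH' \subseteq \sK'$ is a $-$HSMI, i.e. $\Delta_{\sK'}^{-it}\sH' \subseteq \sH'$ for all $t \le 0$, becomes $\Delta_\sK^{it}\sH' \subseteq \sH'$ for $t \le 0$. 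Since $\Im\langle\cdot,\cdot\rangle$ is invariant under unitaries, we have $(W\sM)' = W\sM'$ for every unitary $W$ and closed real subspace $\sM$, and $(\sM')' = \sM$; applying the complement once more and using that it is order-reversing, $\Delta_\sK^{it}\sH' \subseteq \sH'$ is equivalent to $\sH \subseteq \Delta_\sK^{it}\sH$, hence (apply $\Delta_\sK^{-it}$ and rename $t$) the whole statement reduces to
\[ \Delta_\sK^{it}\sH \subseteq \sH \qquad \text{for all } t \ge 0. \]

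To prove this I would invoke Theorem~\ref{thm:HSMI}: the $+$HSMI $\sK \subseteq \sH$ comes with a positive energy representation $U$ of $\Aff(\R)$ in which $U(\delta(s)) = \Delta_\sH^{-is/2\pi}$ and $U(\delta_{(1,\infty)}(s)) = \Delta_\sK^{-is/2\pi}$, where $\delta_{(1,\infty)}(s) = \zeta(1)\delta(s)\zeta(-1)$ is the dilation fixing $1$, and with $U(\zeta(t))\sH \subseteq \sH$ for $t \ge 0$ by Theorem~\ref{Borch}. Since $U$ is a homomorphism and $\delta(s)\zeta(t)\delta(-s) = \zeta(e^s t)$ in $\Aff(\R)$, one has $U(\delta(s))U(\zeta(t))U(\delta(-s)) = U(\zeta(e^s t))$; combining this with $\Delta_\sH^{-is/2\pi}\sH = \sH$ gives
\[ \Delta_\sK^{-is/2\pi}\sH = U(\zeta(1))\,\Delta_\sH^{-is/2\pi}\,U(\zeta(-1))\,\sH = U(\zeta(1))\,U(\zeta(-e^s))\,\sH = U(\zeta(1-e^s))\,\sH. \]
For $s \le 0$ one has $1 - e^s \ge 0$, so $U(\zeta(1-e^s))\sH \subseteq \sH$; since $\Delta_\sK^{it} = U(\delta_{(1,\infty)}(-2\pi t))$ and $t \ge 0$ corresponds to $s = -2\pi t \le 0$, this is exactly $\Delta_\sK^{it}\sH \subseteq \sH$ for $t \ge 0$, and the corollary follows.

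The manipulations with $'$ and the change of variables between the modular parameter of $\sK$ and the group parameter $s$ are routine; the only genuinely nontrivial point — and the sole place where the hypothesis enters — is the identity $\Delta_\sK^{-is/2\pi} = U(\zeta(1))\Delta_\sH^{-is/2\pi}U(\zeta(-1))$ supplied by Theorem~\ref{thm:HSMI}, which converts the fact that the modular flow of $\sH$ moves $\sK$ into itself into control of the modular flow of $\sK$ on $\sH$. An alternative route that avoids the final computation would be to precompose $U$ with the automorphism $(b,a) \mapsto (-b,a)$ of $\Aff(\R)$: this turns the positive energy representation into a negative energy one whose modular data, after the identifications $\Delta_{\sH'} = \Delta_\sH^{-1}$, $J_{\sH'} = J_\sH$ and likewise for $\sK$, should match the data characterizing $\sH' \subseteq \sK'$ as a $-$HSMI in the discussion of negative half-sided modular inclusions following Theorem~\ref{thm:HSMI}; the bookkeeping of the reflection centers and dilation orientations there is the only delicate ingredient.
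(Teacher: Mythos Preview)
Your proof is correct. The paper itself does not supply a proof of this corollary; it simply quotes the statement from \cite[Cor.~2.4.3]{Lo08}. Your argument is a clean way to deduce it from the material the paper does state: the reduction via $\Delta_{\sK'}=\Delta_\sK^{-1}$ and $(W\sM)'=W\sM'$ to the single inclusion $\Delta_\sK^{it}\sH\subseteq\sH$ for $t\ge 0$ is sound, and the computation using the $\Aff(\R)$-representation from Theorem~\ref{thm:HSMI} is correct.

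One small streamlining: Theorem~\ref{thm:HSMI} already records the identity $U(\zeta(1-e^t))=\Delta_\sK^{-it/2\pi}\Delta_\sH^{it/2\pi}$, so you can read off $\Delta_\sK^{-it/2\pi}\sH=U(\zeta(1-e^t))\,\Delta_\sH^{-it/2\pi}\sH=U(\zeta(1-e^t))\sH$ in one line without redoing the conjugation of $\delta$ by $\zeta(1)$. The remaining step, $U(\zeta(s))\sH\subseteq\sH$ for $s\ge 0$, is indeed supplied by Theorem~\ref{Borch}(b) since the commutation relation $\Delta_\sH^{-is/2\pi}U(\zeta(t))\Delta_\sH^{is/2\pi}=U(\zeta(e^s t))$ is built into the group law and the generator of $\zeta$ is positive by construction. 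Your alternative route through the automorphism $(b,a)\mapsto(-b,a)$ of $\Aff(\R)$ also works and matches the spirit of the remark following Theorem~\ref{thm:HSMI} on negative energy representations.
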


\subsection{The axiomatics of abstract covariant nets} 

Hereafter we will make the following assumption on the group $G$.
\begin{assumption}\label{as:group}
{We assume that $\cG_E(G) \not=\eset$ and 
write $G=G^\up\rtimes\{e,\sigma\}$ for some  Euler involution~$\sigma$.}
\end{assumption}

\begin{ex} Note that $G^\down$ may contain involutions which are 
not Euler. 

We consider the graded Lie group $G := \SO_{1,n}(\R)$ with the 
identity component $G^\up = \SO_{1,n}(\R)^\up$. 
For $n \geq 2$, the Lie algebra $\g = \so_{1,n}(\R)$ is simple, 
$\theta(x) = - x^\top$ is a Cartan involution, 
and $\fa := \so_{1,1}(\R) \subeq \fp$ (acting on the first two components) 
is a maximal abelian subspace. 
As the corresponding restricted root system is of type $A_1$, 
our classification scheme (see 
\eqref{eq:symmeuler}  in Theorem~\ref{thm:classif-symeuler}) implies that 
all Euler elements in $\g$ are conjugate to the one corresponding to the 
boost generator 
\[ h(x_0, \ldots, x_n) = (x_1, x_0, 0, \ldots, 0).\] 
Accordingly, an involution $\sigma \in G$ is Euler if and only if 
$\sigma$ or $-\sigma$ is the orthogonal reflection in a $2$-dimensional 
Lorentzian plane. 

However, $G^\down$ contains all reflections of the type 
\[ \tau(x) = (\eps_0 x_0, \ldots,\eps_n x_n) \quad \mbox{ with } \quad 
\eps_j \in \{\pm 1 \}\quad \mbox{ satisfying } \quad  \prod_{j = 0}^n \eps_j = 1.\] 
In particular neither $\Fix(\tau)$ nor $\Fix(-\tau)$ must have 
dimension~$2$. 
\end{ex}

{We now present the analogs of the one-particle Haag--Kastler axioms and further fundamental properties in our general setting. }

\begin{defn}\label{def:net} 
Let {$G = G^\up \rtimes \{e,\sigma\}$ be as above, 
$C \subeq \g$ be a closed convex $\Ad^\eps(G)$-invariant cone in $\g$, 
and fix a $G^\up$-orbit $\cW_+ = G^\up.W \subeq \cG_E(G)$.}
Let $(U,\cH)$ be a unitary representation of $G^\up$  and 
\begin{equation}
  \label{eq:net}
\sN \: \cW_+ \to \Stand(\cH) 
\end{equation}
be a map, also called a {\it net of standard subspaces}. 
In the following we denote this data as $(\cW_+,U,\sN)$.
We consider the following properties: 
\begin{itemize}
\item[\rm(HK1)] {\bf Isotony:} $\sN(W_1) \subeq \sN(W_2)$ for $W_1 \leq W_2$. 
\item[\rm(HK2)] {\bf Covariance:} $\sN(gW) = U(g)\sN(W)$ for 
$g \in {G}^\up$, $W \in \cW_+$. 
\item[\rm(HK3)] {\bf Spectral condition:} 
$ C \subeq C_U := \{ x \in \g \: -i \partial U(x) \geq 0\}.$ 
We then say that $U$ is {\it $C$-positive}.  
\item[\rm(HK4)]{\bf Central twisted locality:} 
{For $\alpha \in Z(G^\up)^-$ 
and $W \in \cW_+$ with $W^{'\alpha}\in\cW_+$, there exists a 
unitary $Z_\alpha\in U(G^\uparrow)'$ satisfying 
\begin{equation}
  \label{eq:ZJ}
Z_\alpha^2 = U(\alpha) \quad \mbox{ and } \quad 
J_{\sN(W)} Z_\alpha J_{\sN(W)}=Z_\alpha^{-1}, 
\end{equation}}
such that 
\begin{equation}\label{eq:tcl}\sN(W^{'\alpha}) \subset Z_\alpha \sN(W)'.
\end{equation} 
Moreoever, such an $\alpha$ exists.
\end{itemize}
When $Z_\alpha$ is trivial, for instance when 
$\partial(G^\uparrow_{\underline W})=\{e\}$, then the central twisted locality reduces to the more familiar locality relation.
\begin{itemize}
\item[\rm(HK4$_e$)]{\bf Locality:} 
{If $W \in \cW$ is such that $W'\in\cW_+$, then $\sN(W') \subset \sN(W)'.$}
\end{itemize}

Concerning (HK3), note that $C_U$ is pointed if and only if $\ker(U)$ is discrete. 
Therefore the assumption that $C$ is pointed is compatible with the 
possible existence of representations with discrete kernel 
satisfying (HK3). Furthermore, if $C=\{0\}$, then (HK3) trivially holds.

The following property will be central in our discussion 
because it connects the modular groups of standard subspaces to the 
unitary representation $U$ of~$G^\up$. 

\begin{itemize}
\item[\rm(HK5)] {\bf Bisognano--Wichmann (BW) property:} 
$U(\lambda_W(t)) = \Delta_{\sN(W)}^{-it/2\pi}$ for 
$W \in \cW_+, t \in \R$.
\end{itemize} 
We will see {in Proposition~\ref{prop:bwdual}} 
that a consequence of {(HK1-5)} is the following 
{stronger form of (HK4):}
\begin{itemize}
\item[(HK6)] \textbf{Central Haag Duality}: $\sN(W^{'\alpha})=Z_\alpha \sN(W)'$ 
{for $\alpha \in Z(G^\up)^-$, $W \in \cW_+$ with $W^{'\alpha}\in\cW_+$ 
and $Z_\alpha$ as in \eqref{eq:ZJ}.}
\end{itemize}
If the representation $U$ extends antiunitarily to $G$ we can further 
require: 
\begin{itemize}
\item[(HK7)] \textbf{G-covariance:} 
For any $\alpha \in Z(G^\up)^-$ such that $W^{'\alpha}\in\cW_+$, there exists an (anti-)unitary extension $U^\alpha$  of $U$ from $G^\up$ to $G$ such that the following condition is satisfied:
\begin{equation}
  \label{eq:twitcovar}
 \sN(g *_\alpha W) = U^\alpha(g) \sN(W) \quad \mbox{ for } \quad 
g \in G,
 \end{equation}
where $*_\alpha$ is the $\alpha$-twisted action \eqref{eq:twistact} of 
$G$ on $\cW_+$ defined in Lemma~\ref{lem:twistact}(e). 
\end{itemize}
It is enough to provide an extension $U^\alpha$ w.r.t.~one $\alpha \in Z(G^\up)^-$ such that $W^{'\alpha}\in\cW_+$. \blue{ All the other extensions come as described in Lemma~\ref{lem:twistact}(f).  }
The modular conjugation of {standard subspaces} 
can have a geometric meaning when the extension $U^\alpha$ from (HK7) has 
the following specific form:
\begin{itemize}
\item[(HK8)] \textbf{Modular reflection:} 
$U^\alpha(\sigma_W)= Z_\alpha J_{\sN(W)}$ for 
$\alpha \in Z(G^\up)^-$, $W \in \cW_+$ with $W^{'\alpha}\in\cW_+$ 
and  $Z_\alpha$ as in  \eqref{eq:ZJ}. 
\end{itemize}

\end{defn}

In the next sections we will show that there exist {nets 
of standard subspaces} satisfying all the above assumptions. 
It is the analog of the BGL construction in this general setting.

\subsubsection{Wedge isotony and half-sided modular inclusions} 

Taking the wedge modular inclusion defined in 
Section~\ref{sect:wedgeinc} into account, we now prove that
isotony can be deduced from covariance, the 
Bisognano--Wichmann property and the $C$-spectral condition. On 
specific models this has been checked in  \cite{BGL02, Lo08}.

\begin{proposition}\label{prop:isotony}
Let $(\cW_+,\sN,U)$ be  a net of  standard subspaces. 
Then the spectral condition {\rm(HK3)}, 
the BW property {\rm(HK5)} and 
$G^\up$-covariance {\rm(HK2)} imply isotony {\rm(HK1)}. 
\end{proposition}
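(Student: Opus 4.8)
The plan is to reduce an arbitrary wedge inclusion $W_1 \leq W_3$ to the case of half-sided modular inclusions, which by Lemma~\ref{lem:isoW} always works, and then to treat the two half-sided cases using Theorem~\ref{Borch}(b). First I would observe that by (HK2) it suffices to prove $\sN(W_1) \subeq \sN(W_3)$ whenever $W_1 \leq W_3$, and by Lemma~\ref{lem:isoW} we may assume that the inclusion is either $+$half-sided modular or $-$half-sided modular; since isotony for a chain $W_1 \leq W_2 \leq W_3$ follows from isotony of each link, I would fix one link, say a $+$half-sided modular inclusion $W_1 = \exp(y).W_0 \leq W_0$ with $y \in C$ and $[x_{W_0},y] = y$.

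The key point is that the data $(x_{W_0}, y)$ generates a copy of the affine Lie algebra $\aff(\R)$ inside $\g$: writing $h := x_{W_0}$, the bracket relation $[h,y]=y$ means $\exp(\R y)$ and $\lambda_{W_0}(\R) = \exp(\R h)$ together form a two-dimensional subgroup isomorphic to $\R \rtimes \R^\times_+$. I would then consider the one-parameter unitary group $U(\exp(t y)) = e^{it P}$ on $\cH$. By the $C$-spectral condition (HK3), since $y \in C \subeq C_U$, the generator $P$ is positive, i.e.\ $P \geq 0$. By the BW property (HK5), $U(\lambda_{W_0}(t)) = \Delta_{\sN(W_0)}^{-it/2\pi}$, and differentiating the group relation $\lambda_{W_0}(s)\exp(t y)\lambda_{W_0}(-s) = \exp(e^{s}t\, y)$ inside $G^\up$ and applying $U$ gives exactly the scaling relation $\Delta_{\sN(W_0)}^{-is/2\pi} U(\exp(t y)) \Delta_{\sN(W_0)}^{is/2\pi} = U(\exp(e^{s} t\, y))$ required in the hypothesis of Theorem~\ref{Borch}(b). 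That theorem then tells us that positivity of $P$ is equivalent to $U(\exp(t y))\sN(W_0) \subeq \sN(W_0)$ for $t \geq 0$; taking $t = 1$ and using covariance (HK2), $\sN(W_1) = \sN(\exp(y).W_0) = U(\exp(y))\sN(W_0) \subeq \sN(W_0) = \sN(W_3)$, which is the desired inclusion. The $-$half-sided case is identical with $y \in -C$, $[h,y] = -y$, and the opposite sign in Theorem~\ref{Borch}(b), yielding $P \leq 0$ and again the inclusion.

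The main obstacle, and the step requiring the most care, is verifying that the infinitesimal relation $[h,y] = y$ really does integrate in $G^\up$ to the commutation relation between $\lambda_{W_0}$ and $\exp(\R y)$ that matches the hypotheses of Theorem~\ref{Borch}(b) — i.e.\ that applying $U$ turns the group-level identity into the precise dilation-translation relation with the modular group, including getting the sign and the factor $2\pi$ right. This is essentially a computation in the affine subgroup together with the identification $U(\lambda_{W_0}(t)) = \Delta^{-it/2\pi}$ from (HK5), so it is routine but is where one must be attentive. A secondary point is to make sure that when chaining the two half-sided inclusions from Lemma~\ref{lem:isoW}, the intermediate wedge $W_2$ lies in $\cW_+$ so that $\sN(W_2)$ is defined and covariance applies — but this is automatic since $W_2 = g_-.W_3$ with $g_- \in G^\up$ and $\cW_+$ is a $G^\up$-orbit. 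I would close by remarking that no use of locality (HK4) or of any antiunitary extension is needed, so isotony is genuinely a consequence of (HK2), (HK3) and (HK5) alone.
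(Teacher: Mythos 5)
Your proposal is correct and takes essentially the same approach as the paper: both reduce isotony to the half-sided modular case via the polar decomposition of the endomorphism semigroup $\cS_{W_0}$ (which is exactly what Lemma~\ref{lem:isoW} encodes), then apply Theorem~\ref{Borch}(b) together with (HK3) for positivity of the generator and (HK5) for the scaling relation. The paper phrases the reduction as $\cS_{W_0} \subeq \cS_{\sH_0}$ rather than invoking Lemma~\ref{lem:isoW} explicitly, but the underlying decomposition and the application of the one-particle Borchers theorem are identical.
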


\begin{prf} Let $W_0 = (h,\tau) \in \cG_E$ and $\sH_0 = \sN(W_0)$. 
By covariance, the net $\sN$ is isotone if and only if 
\[ \cS_{W_0} = G^\up_{W_0} \exp(C_+) \exp(C_-) 
\subeq \cS_{\sH_0} := \{ g \in G^\up \: U(g)\sH_0 \subeq \sH_0\}.\] 
As the stabilizer $G^\up_{W_0}$ stabilizes $\sH_0$ by covariance, 
isotony is equivalent to $\exp(x) \in \cS_{\sH_0}$ for every 
$x \in C_+ \cup C_-$. 

By the spectral condition (HK3), we have $\mp i \partial U(x) \geq 0$. 
Therefore {Theorem~\ref{Borch}}  
shows that isotony is equivalent to 
\begin{equation}
  \label{eq:borrel}
 U^{\sH_0}(e^s) U(\exp tx) U^{\sH_0}(e^{-s})
= U(\exp e^{\pm s} tx) 
\quad \mbox{ for } \quad 
s,t \in \R, x \in C_\pm.
\end{equation}
By the BW property (HK5), 
$U^{\sH_0}(e^s)= \Delta_{\sH_0}^{-is/2\pi} = U(\exp s h)$, 
so that $[h,x]= \pm x$ for $x \in C_\pm$ implies 
\eqref{eq:borrel}. 
\end{prf}

\subsubsection{The Brunetti--Guido--Longo (BGL) construction} 

We have seen in the introduction to Section~\ref{sec:1} that 
each standard subspace $\sH$ specifies a homomorphism 
\begin{equation}
  \label{eq:uv-rep}
 U^\sH \: \R^\times \to \AU(\cH)\quad \mbox{ by } \quad 
U^\sH(e^t) := \Delta_\sH^{-it/2\pi}, \quad 
U^\sH(-1) := J_\sH,
\end{equation}
and that this leads to a bijection 
\[ \Phi \: \Hom_{\rm gr}(\R^\times,\AU(\cH)) \to \Stand(\cH), \quad 
U^\sH \mapsto \sH \] 
between continuous (anti-)unitary representations of the graded 
Lie group $\R^\times$ and standard subspaces (\cite[Prop.~3.2]{NO17}). 
By Lemma~\ref{lem:sym}, 
$\Phi$ is equivariant with respect to the 
natural action of $\AU(\cH)$ on $\Stand(\cH)$ and the action 
\eqref{eq:symact} on $\Hom_{\rm gr}(\R^\times,\AU(\cH))$. 

 Now every (anti-)unitary representation $U \:  G \to \AU(\cH)$ defines 
by composition a natural $G$-equivariant map 
\[ \cG \smapright{\Psi^{-1}} \Hom_{\rm gr}(\R^\times,G) \ssmapright{U \circ} 
 \Hom_{\rm gr}(\R^\times,\AU(\cH)), \qquad 
W \mapsto U \circ \gamma_W. \] 
Combining this with $\Phi$ leads to the 
so-called Brunetti--Guido--Longo (BGL) construction:

\begin{defn}(Brunetti--Guido--Longo (BGL) net)  
If $(U,G)$ is an (anti-)unitary representation,  
then we obtain a $G$-equivariant map 
$\sN_U \:  \cG \to \Stand(\cH)$ 
determined for $W = (k_W, \sigma_W)$ by 
\begin{equation}
  \label{eq:bgl}
J_{\sN_U(W)} = 
U(\sigma_W) \quad \mbox{ and } \quad \Delta_{\sN_U(W)}^{-it/2\pi} 
= U(\exp t k_W) 
\quad \mbox{ for } \quad t \in \R.
\end{equation}
This means that, with respect to Definition \ref{def:1.1}, $U^{\sN_U(W)} = U \circ \gamma_W$ for $W \in \cG$ 
(see \cite{BGL02}, \cite[Prop.~5.6]{NO17}). 

The BGL net associates to every wedge $W\in\cG$ a standard subspace $\sN_U(W)$. We shall denote with $(\cW_+,\sN_U,U)$ the restriction of the BGL net to 
the $G^\up$-orbit $\cW_+ \subeq \cG_E(G)$.
\end{defn}

\begin{theorem}
The restriction of the BGL net $\sN_U$ 
associated to an (anti-)unitary $C$-positive  
representation $U$ of $G=G^\uparrow\rtimes\{e,\sigma\}$ 
to a $G^\up$-orbit $\cW_+\subeq \cG_E$ satisfies all the axioms 
{\rm(HK1)-(HK3)} and {\rm (HK5)}. 
\end{theorem}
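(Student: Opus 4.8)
The statement is essentially a bookkeeping exercise that assembles the previously established lemmas and theorems; the only nontrivial ingredient is the isotony axiom, which is exactly the content of Proposition~\ref{prop:isotony}. I would verify the axioms one at a time, in the order (HK2), (HK3), (HK5), then (HK1), since (HK1) is deduced from the others. The key fact to invoke throughout is that the BGL net is \emph{$G$-equivariant} by construction: with $U^{\sN_U(W)} = U\circ\gamma_W$ and $\Psi(\gamma_W) = W$, equivariance of $\Psi^{-1}$ and of $\Phi$ (Lemma~\ref{lem:sym}) gives $\sN_U(g.W) = U(g)\sN_U(W)$ for all $g\in G$ for which $U(g)$ is defined, in particular for all $g\in G^\up$ since $U$ is a unitary representation of $G^\up$. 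Restricting to $\cW_+\subeq\cG_E$ this is precisely (HK2).

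For (HK5), I would read off directly from \eqref{eq:bgl} that $\Delta_{\sN_U(W)}^{-it/2\pi} = U(\exp t k_W) = U(\lambda_W(t))$ for every $W = (k_W,\sigma_W)\in\cW_+$ and $t\in\R$, which is the Bisognano--Wichmann property by definition. For (HK3): the hypothesis that $U$ is $C$-positive is by definition the inclusion $C\subeq C_U = \{x\in\g\: -i\,\partial U(x)\geq 0\}$, so there is nothing to prove; when $C=\{0\}$ this is vacuous, and otherwise it holds by assumption on $U$. (One should note in passing that (HK3) is a property of the representation, not of the net, so it is really a hypothesis being carried along.)

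For (HK1), I would simply cite Proposition~\ref{prop:isotony}: the triple $(\cW_+,\sN_U,U)$ is a net of standard subspaces in the sense of Definition~\ref{def:net} (it takes values in $\Stand(\cH)$ because $\sN_U$ is defined via Tomita operators, by Proposition~\ref{prop:11}), and we have just verified that it satisfies (HK2), (HK3) and (HK5); therefore it satisfies (HK1). Concretely, the argument of that proposition reduces isotony to the relation $\exp(x)\in\cS_{\sN_U(W_0)}$ for $x\in C_+(W_0)\cup C_-(W_0)$, which via Theorem~\ref{Borch} and the commutation relation $[h,x]=\pm x$ becomes automatic once (HK5) identifies $\Delta_{\sN_U(W_0)}^{-is/2\pi}$ with $U(\exp sh)$ and (HK3) gives $\mp i\,\partial U(x)\geq 0$.

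The only point requiring a word of care — and hence the main (mild) obstacle — is making sure the domain of $\sN_U$ is handled correctly: the BGL net $\sN_U$ is a priori defined on all of $\cG$ using an \emph{(anti-)unitary} representation of the full group $G$, but here we only use its restriction to the $G^\up$-orbit $\cW_+$, where only the unitary part $U|_{G^\up}$ is needed for (HK1)--(HK3) and (HK5). (The antiunitary extension is relevant only for the locality and $G$-covariance axioms (HK4), (HK7), (HK8), which are deliberately omitted from this theorem.) So I would state explicitly at the start that we only use $U|_{G^\up}$, invoke $G^\up$-equivariance of the BGL construction for (HK2), and then the rest follows as above. This yields a proof of three or four short paragraphs with no genuine computation.
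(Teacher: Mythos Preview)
Your proposal is correct and follows essentially the same approach as the paper: the paper's proof simply notes that (HK2) and (HK5) hold by construction, that (HK3) is the $C$-positivity hypothesis, and then invokes Proposition~\ref{prop:isotony} to obtain (HK1). Your version spells out more of the bookkeeping (equivariance via $\Psi^{-1}$ and $\Phi$, the observation that only $U\vert_{G^\up}$ is needed), but the logical structure is identical.
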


\nin We shall see in Proposition \ref{prop:BGLpart2} 
 that the twisted locality (HK4), Central Haag Duality (HK6) 
 and (HK7-8) are also satisfied.

\begin{proof} Let $\cW_+ \subeq \cG_E(G)$ be a $G^\up$-orbit. 
By construction, the restriction of the BGL net $\sN_U$ to $\cW_+$ 
satisfies (HK2) and by  construction it satisfies  (HK5).
By Proposition~\ref{prop:isotony}, isotony (HK1) 
follows from the Spectral Condition (HK3), which is the $C$-positivity of~$U$. 
\end{proof}

As a last remark in this section we stress that two (anti-)unitary extensions of a unitary representation $(U,\cH)$ of $G^\up$ are unitarily equivalent, 
\red{but the corresponding BGL nets depend} 
on the choice of the (anti-)unitary extension. 
The following proposition makes this dependence explicit 
and provides a natural parameter space.

\begin{prop} \mlabel{prop:1.9} {\rm(The space of (anti-)unitary extensions)} 
Fix $(h,\tau) \in \cG$, 
let $U \: G \to \AU(\cH)$ be an (anti-)unitary representation 
and let $\cM := U(G^\up)'$. Then the following assertions hold: 
\begin{itemize}
\item[\rm(i)] All (anti-)unitary representations 
$(\tilde U,\cH)$ extending $U\res_{G^\up}$ are of the form 
$\tilde U = T U T^{*}$ for some $T\in \U(\cM)$. 
{The corresponding BGL nets are related by 
\begin{equation}
  \label{eq:bglrel}
 \sN_{\tilde U}(W) = T\sN_U(W) \quad \mbox{ for } \quad W \in \cG.
\end{equation}}
\item[\rm(ii)] {\rm(Parametrization of (anti-)unitary extensions)} 
Let $J := U(\tau)$, $\tau \in G^\downarrow$. For every 
\[ N \in \U(\cM)^{-} := \{ M \in \U(\cM) \:  J M J = M^{-1} \}, \] 
there exists a unique (anti-)unitary extension $\tilde U$ of $U\res_{G^\up}$ with 
$\tilde U(\tau) = N J$, and we thus obtain a bijection between 
the set $\U(\cM)^-$ and the set of 
(anti-)unitary extensions of $U\res_{G^\up}$ to~$G$.
\end{itemize}
\end{prop}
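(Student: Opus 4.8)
\textbf{Proof plan for Proposition~\ref{prop:1.9}.}

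The plan is to use Proposition~\ref{prop:11} (the bijection between standard subspaces and pairs $(\Delta,J)$) together with Lemma~\ref{lem:sym} to transport the problem of classifying extensions into a statement about the commutant $\cM = U(G^\up)'$. First I would prove (i). Let $(\tilde U,\cH)$ be any (anti-)unitary extension of $U\res_{G^\up}$. Fix an involution $\tau \in G^\down$, so that $G = G^\up \rtimes \{e,\tau\}$; then both $\tilde U(\tau)$ and $U(\tau)$ are antiunitary operators on $\cH$ implementing the same automorphism $\sigma_G$ of $G^\up$ via conjugation on $U(G^\up)$. Hence $T := \tilde U(\tau) U(\tau)^{-1}$ is unitary (a product of two antiunitaries) and for every $g \in G^\up$ one computes $T U(g) T^{-1} = \tilde U(\tau) U(\sigma_G(g)^{-1})^{-1}\cdots$; more cleanly, $T U(g) = \tilde U(\tau) U(\tau)^{-1} U(g) = \tilde U(\tau) U(\tau^{-1} g \tau)^{-1}\cdot$ — unwinding, $T$ commutes with $U(G^\up)$, i.e.\ $T \in \U(\cM)$. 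Conversely, for $T \in \U(\cM)$ the operator $T U(\cdot) T^{*}$ agrees with $U$ on $G^\up$ (since $T$ commutes with $U(G^\up)$) and is an (anti-)unitary representation of $G$ because conjugation by $T$ preserves the relations defining $G$. This establishes that all extensions are exactly $\{TUT^* : T \in \U(\cM)\}$. For the BGL-net identity \eqref{eq:bglrel}: by the defining relations \eqref{eq:bgl}, $\sN_{\tilde U}(W)$ is determined by $J_{\sN_{\tilde U}(W)} = \tilde U(\sigma_W) = T U(\sigma_W) T^*$ and $\Delta_{\sN_{\tilde U}(W)}^{-it/2\pi} = \tilde U(\exp t k_W) = T U(\exp t k_W) T^*$ (the last equality because $\exp(\R k_W) \subseteq G^\up$ and $T \in \cM$, so actually $\tilde U = U$ there). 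Comparing with Lemma~\ref{lem:sym} applied to the unitary $T$, which gives $T \Delta_{\sN_U(W)} T^* = \Delta_{T\sN_U(W)}$ and $T J_{\sN_U(W)} T^* = J_{T \sN_U(W)}$, the uniqueness half of Proposition~\ref{prop:11} forces $\sN_{\tilde U}(W) = T \sN_U(W)$.

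Next I would prove (ii). Set $J := U(\tau)$ and fix $N \in \U(\cM)^{-}$, i.e.\ $N \in \U(\cM)$ with $JNJ = N^{-1}$. Define $N' := NJ$; this is antiunitary (unitary times antiunitary), and I claim $N'$ together with $U\res_{G^\up}$ generates an (anti-)unitary extension $\tilde U$ of $U$ with $\tilde U(\tau) = N'$. The key point is that $\{e,\tau\}$ acting on $G^\up$ extends $U$ to $G$ precisely when we have an antiunitary $V$ with $V U(g) V^{-1} = U(\sigma_G(g))$ for $g \in G^\up$ and $V^2 = e$ (since $\tau^2 = e$). The first relation holds for $V = N' = NJ$ because $N \in \cM$ commutes with $U(G^\up)$ and $J = U(\tau)$ already implements $\sigma_G$. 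For the second: $(NJ)^2 = NJNJ = N \cdot (JNJ) \cdot J^2 = N \cdot N^{-1} \cdot e = e$, using $J^2 = U(\tau)^2 = U(e) = e$ and $JNJ = N^{-1}$ — this is exactly why the constraint $N \in \U(\cM)^-$ is the right one. Thus $\tilde U$, defined by $\tilde U\res_{G^\up} = U\res_{G^\up}$ and $\tilde U(\tau) = NJ$, is a well-defined (anti-)unitary extension. Uniqueness of $\tilde U$ given the value $\tilde U(\tau) = NJ$ is immediate since $G = G^\up \cup G^\up \tau$. Finally, surjectivity of $N \mapsto \tilde U$: if $\tilde U$ is any extension, by (i) we have $\tilde U = TUT^*$ for some $T \in \U(\cM)$, and then $\tilde U(\tau) = TU(\tau)T^* = TJT^*$; setting $N := TJT^* J^{-1} = TJT^*J$ we get $\tilde U(\tau) = NJ$ and one checks $N \in \cM$ (conjugate of $J \cdot J^{-1}$-type expression) and $JNJ = N^{-1}$ directly from $J^2 = e$. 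Injectivity is clear since $N$ is recovered as $\tilde U(\tau)J^{-1}$.

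The routine points are the verifications that products of (anti-)unitaries have the expected unitarity type and that conjugations preserve the defining relations of $G$; I would state these without grinding. \textbf{The main obstacle} I anticipate is bookkeeping around the antiunitarity: one must be careful that $N \in \cM = U(G^\up)'$ is a \emph{linear} (not conjugate-linear) operator, so that the computation $(NJ)^2 = N(JNJ)J^2$ is legitimate, and that the condition $JNJ = N^{-1}$ — rather than, say, $JNJ = N^*$ — is exactly what the involutivity $\tau^2 = e$ demands; conflating these would break the bijection. A secondary subtlety is ensuring in (i) that $T U(\cdot)T^*$ genuinely restricts to $U$ on $G^\up$ and not merely to something conjugate — this is where $T \in \cM$ (as opposed to an arbitrary unitary) is essential, and it is worth making explicit that $T$ commutes with every $U(g)$, $g \in G^\up$, hence with $U(\exp t k_W)$, so that the modular \emph{operator} part of the BGL data is literally unchanged under the extension while only the \emph{conjugation} part gets conjugated by $T$.
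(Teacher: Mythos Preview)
Your argument for (i) has a genuine gap. You set $T := \tilde U(\tau)\,U(\tau)^{-1}$, correctly show $T \in \U(\cM)$, and then conclude that the extensions are exactly $\{SUS^* : S \in \U(\cM)\}$. But you never verify that $\tilde U = TUT^*$ for \emph{your} $T$, and in fact this fails. Writing $J := U(\tau)$ and $N := \tilde U(\tau)J \in \U(\cM)^-$, a direct computation (using that $J$ is antilinear) gives $SJS^{-1} = S\,(JSJ)^{-1}\,J$ for any $S \in \U(\cM)$; in particular, if $JSJ = S^{-1}$ this is $S^2 J$. So $SUS^{*}(\tau) = NJ$ forces $S$ to be a \emph{square root} of $N$ in $\U(\cM)^-$, not $N$ itself. (Concretely, in the irreducible case $\cM = \C\1$: with $S = w\1$ one gets $SJS^{-1} = w^2 J$, so recovering $\tilde U(\tau) = zJ$ needs $w^2 = z$, not $w = z$.) The paper handles the forward direction of~(i) by invoking Proposition~\ref{prop:antiuniext}; alternatively, once (ii) is known, Lemma~\ref{lem:symspace} supplies the required square root. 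Your derivation of \eqref{eq:bglrel} from Lemma~\ref{lem:sym} and Proposition~\ref{prop:11} is fine once (i) itself is established.

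Your direct approach to (ii), on the other hand, is correct and in fact simpler than the paper's: you verify by hand that $\tilde U(\tau) := NJ$ is an antiunitary involution implementing $\sigma_G$, which immediately yields the extension. The paper instead constructs the extension as $TUT^*$ with $T = e^X$ a square root of $N$ furnished by Lemma~\ref{lem:symspace}, so your route avoids that lemma for the existence direction. Note also that your surjectivity argument need not invoke~(i): given any extension $\tilde U$, set $N := \tilde U(\tau)J$; your own computation already shows $N \in \U(\cM)$, and $(NJ)^2 = \tilde U(\tau)^2 = \1$ gives $JNJ = N^{-1}$, so $N \in \U(\cM)^-$ with $\tilde U(\tau) = NJ$. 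This makes (ii) self-contained; (i) then follows from (ii) together with the square-root lemma.
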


\begin{prf} (i) follows from Proposition~\ref{prop:antiuniext} 
and the assertion on the BGL nets is an immediate consequence of the 
definitions. 

\nin (ii) Let $T \in \U(\cM)$, so that 
$\tilde U = T U T^{-1} \: G \to \AU(\cH)$ 
is an (anti-)unitary extension of $U\res_{G^\up}$ with 
$\tilde J := \tilde U(\tau) = T J T^{-1}$. 
Since $\tilde U$ and $U$ extend the same representation of $G^\up$, 
\[ N := \tilde J J = \tilde J J^{-1} \in \U(\cM).\] 
This element satisfies 
$J N J = J \tilde J = N^{-1}$, so that $N \in \U(\cM)^-$ and 
$\tilde J = NJ$. 

If, conversely, $N \in \U(\cM)^-$, 
then Lemma~\ref{lem:symspace} 
implies the existence of an $X = - X^* \in \cM$ with 
$N = e^{2X}$ and $J X J = - X$. For 
$T := e^X \in \U(\cM)$ and $\tilde J := T J T^{-1}$, we then have 
\[ \tilde J J =  T J T^{-1} J  = T^2 = e^{2X} = N.\] 
Therefore the manifold $\U(\cM)^-$ parametrizes the 
(anti-)unitary extensions of~$U\res_{G^\up}$.   
\end{prf}

\subsubsection{Twisted Locality}

{ We have seen in Section~\ref{sect:alftwist} that 
it can happen that $W'\notin\cW_+ = G^\up.W$. 
One can anyway attach to $W'$ a real subspace by the BGL-net and  by construction obtain the relation $\sH(W')=\sH(W)'$. On the other hand one can define 
natural complementary wedges $W^{'\alpha}$ 
indexed by central elements~$\alpha$. 
In this section we will see that in the BGL construction, 
the complementary wedge subspaces satisfy the central Haag duality condition (HK6), hence the twisted locality relation (HK4).}
We start with a lemma on standard subspaces.

\begin{lem} \label{lem:twistsub}Let $\sH\subset\cH$ be a standard subspace, and $U\in \U(\cH)$ be a unitary operator  commuting with $\Delta_\sH$ and satisfying $J_\sH UJ_\sH = U^{-1}$. Let $\sH_1$ be the standard subspace defined by $(\Delta_\sH, UJ_\sH).$ There exists a unitary square root $Z$ of $U$ commuting with $\Delta_\sH$ such 
that $J_\sH ZJ_\sH = Z^{-1}$ and $Z\sH=\sH_1$. 
The standard subspace $\sH_1$ does not depend on this choice of $Z$.
\end{lem}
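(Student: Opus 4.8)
The plan is to construct the square root $Z$ of $U$ explicitly by functional calculus inside the appropriate von Neumann algebra and then to verify the claimed commutation relations and the standardness statement. Since $U$ is a unitary commuting with $\Delta_\sH$ and satisfying $J_\sH U J_\sH = U^{-1}$, I would first pass to the von Neumann algebra $\cM$ generated by $U$ together with the modular unitaries $\Delta_\sH^{it}$; this is abelian if we only use $U$ and $\Delta_\sH$, but in any case $U$ lies in a von Neumann algebra on which $\Ad(J_\sH)$ acts as a $*$-antiautomorphism sending $U$ to $U^{-1}$ and fixing $\Delta_\sH^{it}$. The point of Lemma~\ref{lem:symspace} (invoked already in the proof of Proposition~\ref{prop:1.9}) is exactly that one can write $U = e^{2X}$ with $X = -X^*$ in such an algebra and $J_\sH X J_\sH = -X$; then $Z := e^X$ is a unitary square root of $U$, commutes with $\Delta_\sH$ (because $X$ does, being a function of $U$ in the relevant algebra), and satisfies $J_\sH Z J_\sH = e^{-X} = Z^{-1}$. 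So the first step is simply to quote that lemma in the abelian von Neumann algebra generated by $U$ and $\Delta_\sH^{it}$, taking $J := J_\sH$ as the conjugation implementing the flip $U \mapsto U^{-1}$.

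The second step is to identify $Z\sH$ with $\sH_1$, where $\sH_1$ is the standard subspace associated by Proposition~\ref{prop:11} to the pair $(\Delta_\sH, U J_\sH)$. First I would check that $(\Delta_\sH, U J_\sH)$ really is an admissible pair: $U J_\sH$ is a conjugation because $(U J_\sH)^2 = U J_\sH U J_\sH = U \cdot U^{-1} \cdot J_\sH^2 = 1$ using $J_\sH U J_\sH = U^{-1}$ and $J_\sH^2 = 1$, and the relation $(U J_\sH) \Delta_\sH (U J_\sH) = \Delta_\sH^{-1}$ follows since $U$ commutes with $\Delta_\sH$ and $J_\sH \Delta_\sH J_\sH = \Delta_\sH^{-1}$; so Proposition~\ref{prop:11} produces a standard subspace $\sH_1$. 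Now by Lemma~\ref{lem:sym} applied to the unitary $Z$ (noting $\eps(Z) = 1$), the standard subspace $Z\sH$ has modular operator $Z \Delta_\sH Z^* = \Delta_\sH$ (since $Z$ commutes with $\Delta_\sH$) and modular conjugation $Z J_\sH Z^* = Z J_\sH Z^{-1}$. Using $J_\sH Z J_\sH = Z^{-1}$, i.e.\ $Z^{-1} = J_\sH Z J_\sH$, we compute $Z J_\sH Z^{-1} = Z J_\sH J_\sH Z J_\sH = Z^2 J_\sH = U J_\sH$. Hence $Z\sH$ has the modular data $(\Delta_\sH, U J_\sH)$, and by the uniqueness in Proposition~\ref{prop:11} we conclude $Z\sH = \sH_1$.

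The third and final step is to show $\sH_1$ is independent of the choice of square root $Z$. This is immediate from what we just did: the construction shows that $\sH_1$ is \emph{defined} by the pair $(\Delta_\sH, U J_\sH)$, which does not involve $Z$ at all; so any two valid choices of $Z$ (i.e.\ unitary square roots of $U$ commuting with $\Delta_\sH$ and anticommuting with $J_\sH$ in the sense $J_\sH Z J_\sH = Z^{-1}$) produce the same $Z\sH = \sH_1$. Equivalently, if $Z_1, Z_2$ are two such square roots then $Z_1 Z_2^{-1}$ is a unitary commuting with $\Delta_\sH$ with $J_\sH (Z_1 Z_2^{-1}) J_\sH = Z_1^{-1} Z_2 = (Z_1 Z_2^{-1})^{-1}$ whose square is $1$, hence it stabilizes $\sH_1$; but the cleaner argument is simply the modular-data identification above.

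I do not anticipate a serious obstacle here; the only slightly delicate point is making sure the square root $Z$ can be chosen \emph{simultaneously} to commute with $\Delta_\sH$ and to satisfy $J_\sH Z J_\sH = Z^{-1}$, which is precisely the content of Lemma~\ref{lem:symspace} (already used in the proof of Proposition~\ref{prop:1.9}(ii)) once one works inside the von Neumann algebra generated by $U$ and the modular flow, with $J_\sH$ acting on it as the relevant conjugation. Everything else is a direct application of Proposition~\ref{prop:11} and Lemma~\ref{lem:sym}.
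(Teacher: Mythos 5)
Your proof is correct and follows essentially the same route as the paper: construct the square root $Z$ via the appendix toolbox, then identify $Z\sH$ with $\sH_1$ by matching modular data (you do this through Lemma~\ref{lem:sym}; the paper conjugates the Tomita operator $S_\sH$ directly, which is the same computation). The one cosmetic difference is that the paper invokes Lemma~\ref{lem:app.1}(c) (applied to $\cM = \{\Delta_\sH^{it}\}'$ and $\alpha = \Ad(J_\sH)$), which directly hands you a unitary square root $Z\in\cM$ with the required (anti)commutation relations, whereas you go through Lemma~\ref{lem:symspace} and exponentiate; this works and is used elsewhere in the paper (proof of Proposition~\ref{prop:1.9}(ii)), but note the statement of Lemma~\ref{lem:symspace} gives $U = e^X$, so you should take $Z := e^{X/2}$, and the commutation with $\Delta_\sH$ comes from the fact that $X$ is constructed as a Borel function of $U$ in the abelian von Neumann algebra generated by $U$.
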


\begin{proof}The existence of the square root and the commutation relation with the modular conjugation and the modular operator follows by Lemma~\ref{lem:app.1}. Then 
\[ Z (J_\sH \Delta_\sH^{1/2}) Z^{-1} 
= ZJ_\sH Z^{-1} \Delta_\sH^{1/2} 
= Z^2 J_\sH \Delta_\sH^{1/2} 
= U J_\sH \Delta_\sH^{1/2} \] 
implies that $\sH_1 = Z \sH$. It is clear 
that $\sH_1$ does not depend on the choice of $Z$.
\end{proof}

In order to conclude (HK6), hence the central locality condition on a 
BGL net $\sN_U$, we will 
need an analogous statement relating complementary wedge subspaces. 

\begin{prop} \mlabel{prop:4.14} 
{Let $(U,\cH)$ be an (anti-)unitary representation of the graded group 
\break $G=G^\uparrow\rtimes\{e,\sigma\}$ 
and $\alpha \in Z(G^\up)^-$}. 
Then the commutant $U(G^\up)'$ contains 
a unitary square root $Z_\alpha$ of $U(\alpha)$ satisfying 
\begin{equation}
  \label{eq:rootcond}
{U(g) Z_\alpha U(g)^{-1} = Z_\alpha^{-1}\quad \mbox{ for every } 
\quad g\in G^\down.}
\end{equation}
\end{prop}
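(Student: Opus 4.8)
The plan is to reduce the statement to Lemma~\ref{lem:app.1} (the existence of a square root with prescribed commutation relations, which is cited as already available) applied inside the von Neumann algebra $\cM := U(G^\up)'$. First I would observe that since $\alpha \in Z(G^\up)$, the unitary $U(\alpha)$ commutes with every $U(g)$, $g \in G^\up$, hence $U(\alpha) \in \cM$; moreover $U(\alpha)$ is central in $U(G^\up)'' \vee \cM$ in the sense that it lies in $Z(U(G^\up)'')' \cap \cM$, but what I really need is just $U(\alpha) \in \U(\cM)$. Next, the grading: pick the fixed Euler involution $\sigma$ with $G = G^\up \rtimes \{e,\sigma\}$, so $J := U(\sigma)$ is antiunitary. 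Conjugation by $J$ implements the automorphism $\sigma_G$ of $G^\up$, hence normalizes $U(G^\up)''$ and therefore also normalizes its commutant $\cM$; thus $M \mapsto JMJ$ is a (conjugate-linear, but that is harmless for the algebra structure) automorphism of $\cM$. The defining condition $\alpha \in Z(G^\up)^-$ means $\sigma_G(\alpha) = \alpha^{-1}$, i.e. $\sigma\alpha\sigma = \alpha^{-1}$ in $G$, which gives $J U(\alpha) J = U(\alpha)^{-1}$ in $\cM$.

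With these two facts — $U(\alpha) \in \U(\cM)$ and $J U(\alpha) J = U(\alpha)^{-1}$ with $J$ an antiunitary involution normalizing $\cM$ — I would invoke Lemma~\ref{lem:app.1} (equivalently the argument used in Proposition~\ref{prop:1.9}(ii) via Lemma~\ref{lem:symspace}): writing $U(\alpha) = e^{2X}$ for a skew-adjoint $X = -X^* \in \cM$ with $JXJ = -X$ (such an $X$ exists by taking a Borel functional-calculus logarithm that respects the spectral decomposition and using that $J$ maps the spectral projection of $U(\alpha)$ for $e^{i\theta}$ to that for $e^{-i\theta}$), one sets $Z_\alpha := e^{X} \in \U(\cM)$. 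Then $Z_\alpha^2 = U(\alpha)$ and $J Z_\alpha J = e^{-X} = Z_\alpha^{-1}$. Since $Z_\alpha \in \cM = U(G^\up)'$ it automatically commutes with $U(g)$ for all $g \in G^\up$, and for $g \in G^\down$ we write $g = g_0 \sigma$ with $g_0 \in G^\up$, so $U(g) Z_\alpha U(g)^{-1} = U(g_0) J Z_\alpha J^{-1} U(g_0)^{-1} = U(g_0) Z_\alpha^{-1} U(g_0)^{-1} = Z_\alpha^{-1}$, using $J^{-1} = J$ and that $Z_\alpha^{-1} \in \cM$. This gives exactly \eqref{eq:rootcond}.

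The one point that requires a little care — and the main obstacle — is the existence of the skew-adjoint logarithm $X$ with the symmetry $JXJ = -X$: a naive choice of $\log U(\alpha)$ need not intertwine correctly, because $U(\alpha)$ may have $-1$ in its spectrum (indeed $\alpha^2$ can be a nontrivial central element, e.g. when $U(\alpha) = -\mathbf 1$ on some subrepresentation). I would handle this by decomposing $\cH$ along the spectral measure of the unitary $U(\alpha)$ and noting that $J$ swaps the part of the spectrum at $e^{i\theta}$ with the part at $e^{-i\theta}$; on the complementary pair of spectral subspaces one can diagonalize $Z_\alpha$ by an arbitrary choice of square root on one half and the $J$-conjugate choice on the other, while on the fixed points $\pm 1$ of this flip one uses that $U(\alpha)\restriction$ there is $J$-invariant so a canonical branch works. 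This is precisely the content of the already-cited Lemma~\ref{lem:app.1} (respectively Lemma~\ref{lem:twistsub} applied with $\Delta = \mathbf 1$), so in the write-up I would simply cite it rather than reproduce the spectral argument. I would also remark that $Z_\alpha$ is not unique — any $Z_\alpha W$ with $W \in \U(\cM)$ of order two satisfying $JWJ = W$ works — but uniqueness is not claimed and, as Lemma~\ref{lem:twistsub} records, the resulting twisted-complement subspace is independent of the choice, which is what matters for deducing (HK4) and (HK6) afterwards.
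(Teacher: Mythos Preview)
Your proof is correct and follows essentially the same route as the paper: establish $U(\alpha)\in\cM=U(G^\up)'$, observe that conjugation by $J=U(\sigma)$ is an antilinear automorphism of $\cM$ inverting $U(\alpha)$, invoke Lemma~\ref{lem:app.1}(c) to produce $Z_\alpha\in\U(\cM)$ with $Z_\alpha^2=U(\alpha)$ and $JZ_\alpha J=Z_\alpha^{-1}$, and then extend \eqref{eq:rootcond} to all of $G^\down$ by writing $g=g_0\sigma$ with $g_0\in G^\up$. One small remark: your worry about $-1$ in the spectrum of $U(\alpha)$ is unnecessary here, since that hypothesis appears only in parts (b) and (d) of Lemma~\ref{lem:app.1} (the cases where the antilinear/linear parity is ``wrong''), whereas the relevant case (c) --- antilinear $\alpha$ with $\alpha(U)=U^{-1}$ --- carries no spectral restriction.
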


\begin{prf}{First we note that $U(\alpha) \in \cM := U(G^\up)'$.}
We fix $\sigma_0 \in \Inv(G^\down)$ 
and observe that conjugation with $U(\sigma_0)$ defines an 
antilinear isomorphism $\beta$ of $\cM$. 
As $\beta(U(\alpha)) = U(\alpha)^{-1}$ follows from 
$\alpha \in Z(G^\up)^-$, we find with 
Lemma~\ref{lem:app.1}(c) in the appendix, 
a unitary square root $Z_\alpha$ of $U(\alpha)$ satisfying 
\begin{equation}
U(\sigma_0) Z_\alpha U(\sigma_0) = \beta(Z_\alpha)= Z_\alpha^{-1}. 
\end{equation} 
For any other  $\sigma \in G^\down$ we have
$\sigma = \sigma_0g$ with $g \in G^\up$, so that 
\[ 
U(\sigma) Z_\alpha U(\sigma) 
= U(\sigma) Z_\alpha U(\sigma)^{-1}  
= U(\sigma_0) U(g) Z_\alpha U(g)^{-1} U(\sigma_0) 
= U(\sigma_0) Z_\alpha U(\sigma_0) 
= Z_\alpha^{-1}. \qedhere\] 
\end{prf}

{We are now ready to verify that the BGL net is compatible 
with the twistings appearing in (HK4), (HK6) and (HK7).}

\begin{proposition} \label{prop:BGLpart2} 
For every (anti-)unitary 
representation $(U,\cH)$ of $G$, the BGL net $\sN_U$ satisfies 
{ {\rm(HK4)} and {\rm(HK6)}. 
Moreover, for $\alpha \in Z(G^\up)^-$, 
$W \in \cW_+$ with $W^{'\alpha} \in \cW_+$ and $Z_\alpha \in U(G^\up)'$ satisfying \eqref{eq:ZJ}, 
the (anti-)unitary extension 
$(U^\alpha, \cH)$ of $U\res_{G^\up}$ to $G$, 
determined by $U^\alpha(\sigma_W) := Z_\alpha U(\sigma_W)$, 
satisfies {\rm(HK7)} and {\rm(HK8)}.}
\end{proposition}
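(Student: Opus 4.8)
The plan is to route everything through the square root $Z_\alpha$ supplied by Proposition~\ref{prop:4.14} and then compare modular objects on the two sides of each axiom. Fix $W=(k_W,\sigma_W)\in\cW_+$ and write $J:=J_{\sN_U(W)}=U(\sigma_W)$ and $\Delta:=\Delta_{\sN_U(W)}$, so that by \eqref{eq:bgl} we have $\Delta^{-it/2\pi}=U(\exp tk_W)$, hence $\Delta^{is}\in U(G^\up)$ for all $s$, together with $J\Delta J=\Delta^{-1}$ and $J^2=\1$. For $\alpha\in Z(G^\up)^-$, Proposition~\ref{prop:4.14} gives $Z_\alpha\in U(G^\up)'$ with $Z_\alpha^2=U(\alpha)$ and $U(g)Z_\alpha U(g)^{-1}=Z_\alpha^{-1}$ for all $g\in G^\down$; in particular $Z_\alpha$ commutes with $\Delta$ and $JZ_\alpha J=Z_\alpha^{-1}$, which is exactly \eqref{eq:ZJ}. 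I would also record the converse bookkeeping remark used later: whenever $Z_\alpha\in U(G^\up)'$ satisfies \eqref{eq:ZJ}, then writing $g\in G^\down$ as $g=g'\sigma_W$ with $g'\in G^\up$ gives $U(g)Z_\alpha U(g)^{-1}=U(g')\big(U(\sigma_W)Z_\alpha U(\sigma_W)^{-1}\big)U(g')^{-1}=U(g')Z_\alpha^{-1}U(g')^{-1}=Z_\alpha^{-1}$, so the full $G^\down$-relation is already contained in \eqref{eq:ZJ}.

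For (HK6), hence (HK4): by \eqref{eq:bgl} and \eqref{eq:*act}, the standard subspace attached to $W^{'\alpha}=(-k_W,\alpha\sigma_W)$ has modular objects $\Delta_{\sN_U(W^{'\alpha})}=\Delta^{-1}$ and $J_{\sN_U(W^{'\alpha})}=U(\alpha)J$, i.e.\ precisely the pair $(\Delta_{\sN_U(W)'},\,U(\alpha)\,J_{\sN_U(W)'})$. I would then apply Lemma~\ref{lem:twistsub} with $\sH:=\sN_U(W)'$ and the unitary $U(\alpha)$: its hypotheses hold because $U(\alpha)\in U(G^\up)$ commutes with $\Delta_{\sN_U(W)'}=\Delta^{-1}$, and because $J_{\sN_U(W)'}U(\alpha)J_{\sN_U(W)'}=U(\sigma_W\alpha\sigma_W)=U(\alpha^{-1})$ since $\alpha^{\sigma_W}=\alpha^{-1}$ for $\alpha\in Z(G^\up)^-$. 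As $Z_\alpha$ is one admissible square root, the independence clause of Lemma~\ref{lem:twistsub} yields $\sN_U(W^{'\alpha})=Z_\alpha\sN_U(W)'$, which is (HK6); the inclusion in (HK4) follows a fortiori, and the existence of an admissible $\alpha$ (one with $W^{'\alpha}\in\cW_+$) is exactly Lemma~\ref{lem:twistact}(f) for the chosen orbit $\cW_+$.

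Next I would check that $U^\alpha$, defined by $U^\alpha|_{G^\up}:=U|_{G^\up}$ and $U^\alpha(g\sigma_W):=U(g)Z_\alpha U(\sigma_W)$ for $g\in G^\up$, is a well-defined (anti-)unitary extension of $U|_{G^\up}$ to $G=G^\up\rtimes\{e,\sigma_W\}$. Anti-unitarity of $U^\alpha(\sigma_W)=Z_\alpha U(\sigma_W)$ is clear; $U^\alpha(\sigma_W)$ is an involution because $U(\sigma_W)Z_\alpha U(\sigma_W)=Z_\alpha^{-1}$; and the homomorphism identity on the mixed products reduces to $Z_\alpha\in U(G^\up)'$ (for $G^\up\!\times\!G^\down$ and $G^\down\!\times\!G^\up$) and once more to $U(\sigma_W)Z_\alpha U(\sigma_W)=Z_\alpha^{-1}$ together with $J^2=\1$ (for $G^\down\!\times\!G^\down$). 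Axiom (HK8) is then immediate from the definition: $U^\alpha(\sigma_W)=Z_\alpha U(\sigma_W)=Z_\alpha J_{\sN_U(W)}$.

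Finally, for (HK7) I would verify $\sN_U(g*_\alpha W)=U^\alpha(g)\sN_U(W)$ for $g\in G$. For $g\in G^\up$ this is the $G^\up$-equivariance of the BGL net. For $g\in G^\down$, write $g=g'\sigma_W$; by \eqref{eq:twistact}, \eqref{eq:cG-act}, \eqref{eq:*act} we have $g*_\alpha W=\alpha*(g.W)=(\Ad^\eps(g)k_W,\,\alpha g\sigma_W g^{-1})$, and since $\Ad^\eps(g)=-\Ad(g)$ on $G^\down$, relation \eqref{eq:bgl} together with the anti-unitarity of $U(g)$ and the Tomita relations give $\Delta_{\sN_U(g*_\alpha W)}=U(g)\Delta U(g)^{-1}=U(g')\Delta^{-1}U(g')^{-1}$ and $J_{\sN_U(g*_\alpha W)}=U(\alpha)\,U(g')JU(g')^{-1}$. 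On the other side, as $U^\alpha(g)=U(g')Z_\alpha J$ is anti-unitary, $U^\alpha(g)\sN_U(W)$ has modular objects $U^\alpha(g)\Delta U^\alpha(g)^{-1}$ and $U^\alpha(g)JU^\alpha(g)^{-1}$; expanding these and using $J\Delta J=\Delta^{-1}$, the commutation of $Z_\alpha$ with $\Delta$ and with $U(G^\up)$, $JZ_\alpha J=Z_\alpha^{-1}$, $Z_\alpha^2=U(\alpha)$, $J^2=\1$, and the centrality of $U(\alpha)$, both sides collapse to the same pair of operators, whence the two standard subspaces coincide by Proposition~\ref{prop:11}. The main obstacle, and the only place requiring real care, is precisely this last matching: the anti-unitary conjugations must be tracked so that the $Z_\alpha$-factors recombine correctly (commuting past $\Delta$ for the modular operator, combining as $Z_\alpha JZ_\alpha^{-1}=Z_\alpha^2J=U(\alpha)J$ for the conjugation), so that one genuinely lands on the BGL data of $g*_\alpha W$ and not on a sign-twisted variant; everything else is a direct substitution into \eqref{eq:bgl}.
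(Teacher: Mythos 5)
Your proof is correct, and it arrives at the same conclusions as the paper, but by a somewhat different route in two places. For (HK6)/(HK4), you invoke Lemma~\ref{lem:twistsub}: you identify $\sN_U(W^{'\alpha})$ as the standard subspace with modular pair $\bigl(\Delta_{\sN_U(W)'},\,U(\alpha)J_{\sN_U(W)'}\bigr)$, check that $U(\alpha)$ satisfies the hypotheses of that lemma relative to $\sH:=\sN_U(W)'$, and conclude $\sN_U(W^{'\alpha})=Z_\alpha\sN_U(W)'$ because any admissible square root $Z$ gives $Z\sH=\sH_1$ (the computation in the lemma's proof works for any such $Z$, not only the one it constructs). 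This is arguably the cleaner path -- the paper sets up Lemma~\ref{lem:twistsub} but then in its own proof performs a direct comparison of $\Delta$ and $J$ rather than citing the lemma. For (HK7), you carry out a direct comparison of the modular pairs of $\sN_U(g*_\alpha W)$ and $U^\alpha(g)\sN_U(W)$, unwinding the anti-unitary conjugations; you correctly use $\Delta_{V\sH}=V\Delta_\sH V^{-1}$ and $J_{V\sH}=VJ_\sH V^{-1}$ for anti-unitary $V$, and the $Z_\alpha$-factors recombine as you claim ($Z_\alpha J Z_\alpha^{-1}=Z_\alpha^2 J=U(\alpha)J$). The paper instead exploits the $G$-equivariance of $\sN_U$ already available on all of $\cG$ (not only on $\cW_+$): since $g\sigma_W*_\alpha W = g.(\sigma_W*_\alpha W)$ for $g\in G^\up$, it suffices to establish $\sN_U(\sigma_W*_\alpha W)=U^\alpha(\sigma_W)\sN_U(W)$, which drops out of (HK6) via $\sN_U(W)'=U(\sigma_W)\sN_U(W)$, and then to propagate by $G^\up$-equivariance. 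That route is shorter and avoids the anti-unitary bookkeeping entirely; yours re-derives that equivariance by hand, which is more work but gives an independent sanity check. Your opening observation that \eqref{eq:ZJ} already entails $U(g)Z_\alpha U(g)^{-1}=Z_\alpha^{-1}$ for \emph{all} $g\in G^\down$ is correct and a useful remark, though it is not needed in what follows. The verification that $U^\alpha$ is a well-defined (anti-)unitary extension is right and is a detail the paper leaves implicit.
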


\begin{proof} {Let $\alpha \in Z(G^\up)^-$ and $W = (x,\sigma)
\in \cW_+$ be such that 
$W^{'\alpha} = (-x, \alpha\sigma)\in \cW_+$. 
Proposition~\ref{prop:4.14} implies the 
existence of $Z_\alpha \in U(G^\up)'$ satisfying \eqref{eq:ZJ}. 
Then 
\[ \Delta_{\sN_U(W^{'\alpha})}^{-it/2\pi} 
= U(\exp(- tx)) \quad \mbox{ and } \quad 
J_{\sN_U(W^{'\alpha})} = U(\alpha\sigma) 
= Z_\alpha^2 J_{\sN_U(W)} 
= Z_\alpha J_{\sN_U(W)} Z_\alpha^{-1} \] 
imply that 
$\sN_U(W^{'\alpha}) = Z_\alpha \sN_U(W)'.$ 
This shows that (HK6), hence also (HK4) are satisfied. 
We also have 
\[\sN_U(\sigma *_\alpha W)  = \sN_U(W^{'\alpha}) 
= Z_\alpha \sN_U(W)' = Z_\alpha U(\sigma) \sN_U(W).\] 
Since $\sN_U$ is $G$-equivariant on $\cG$, this leads for 
$g \in G^\up$ to 
\begin{align*}
\sN_U(g\sigma *_\alpha W) 
&= \sN_U(g.(\sigma *_\alpha W)) 
= U(g) \sN_U(\sigma*_\alpha W)
= U(g) Z_\alpha \sN_U(W)' \\
&= U(g) Z_\alpha U(\sigma) \sN_U(W) 
= U(g) U^\alpha(\sigma) \sN_U(W)
= U^\alpha(g\sigma) \sN_U(W).  
\end{align*}
This proves (HK7). As $J_{\sN_U(W)} = U(\sigma_W)$ by definition, we also 
have 
\[ U^\alpha(\sigma_W) = Z_\alpha U(\sigma_W) = Z_\alpha J_{\sN_U(W)},\] 
so that $U^\alpha$ also satisfies~(HK8). 
}
\end{proof}

\begin{rem} (a){If $U\res_{G^\up}$ is irreducible, then 
$U(Z(G^\up)) \subeq \T \1$, so that, we find 
for any $\alpha \in Z(G^\up)$ that 
$U(\alpha) = \zeta \1$ with $|\zeta| = 1$. We may thus put 
$Z_\alpha := z\1$ for any complex number $z$ with $z^2 = \zeta$. 
In this case $J Z_\alpha J = Z_\alpha^*$ holds for any antiunitary operator~$J$.}

\nin (b) {Let $(U,\cH)$ be an (anti-)unitary representation of~$G$.}
For any other square root $Z$ of $U(\alpha)$ satisfying the same 
requirements as $Z_\alpha$, the unitary operator $Z^{-1} Z_\alpha$ is an involution 
commuting with $U(  G)$, so that it leaves all standard subspaces 
$\sN(W)$ {of the BGL net} invariant. 

\nin (c) If $\alpha \in Z(G^\up)$ satisfies $\alpha^\sigma = \alpha$ 
for $\sigma \in G^\down$, then 
$\alpha$ acts trivially on $\cG(  G)$ and, by covariance of $\sN$, 
leaves all standard subspaces $\sN(W)$ invariant. 
This happens in particular if $\alpha^2 = e$. Then also 
$\alpha \in {Z(G^\up)^-}$, 
so that $\alpha$-twisted complements are useful in the context 
of fermionic theories. Here $U(\alpha)$ is an involution and one choice 
of a square root of $U(\alpha)$ is given by 
\begin{equation}\label{eq:ss}
 Z_{\alpha} :=  \frac{\1 + i U(\alpha)}{1+i}.
 \end{equation}
\end{rem}

Given a net satisfying (HK1)-(HK5), the 
commutation relation among twist operators and  the wedge modular operators immediately hold.

\begin{proposition} \mlabel{prop:cov}
Let $(\cW_+,U, \sN)$ be a 
$G$-covariant net satisfying 
{\rm(HK1)-(HK5)}, suppose that $U$ extends to an (anti-)unitary 
representation of $G$, and let $Z_\alpha \in U(G^\up)'$ as in \eqref{eq:ZJ}. Then, for every $W \in \cW_+$, we have 
\[ Z_\alpha \Delta_{\sN(W)} Z_\alpha^{-1} = \Delta_{\sN(W)}.\]
\end{proposition}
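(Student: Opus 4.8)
The plan is to reduce the statement to the observation that, under the Bisognano--Wichmann property (HK5), the modular operator $\Delta_{\sN(W)}$ is governed entirely by the restriction of $U$ to the one-parameter subgroup $\lambda_W(\R) \subeq G^\up$, whereas $Z_\alpha$ lies in the commutant $U(G^\up)'$ by its very construction in \eqref{eq:ZJ} (equivalently, by Proposition~\ref{prop:4.14}). Commutation of $Z_\alpha$ with this one-parameter group will then force commutation with $\Delta_{\sN(W)}$.

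First I would fix $W = (x_W,\sigma_W) \in \cW_+$ and record, using (HK5), that $\Delta_{\sN(W)}^{-it/2\pi} = U(\lambda_W(t)) = U(\exp(t x_W))$ for all $t \in \R$; in particular the modular unitary group $(\Delta_{\sN(W)}^{it})_{t \in \R}$ lies inside $U(G^\up)$. Since $Z_\alpha \in U(G^\up)'$, it commutes with $U(g)$ for every $g \in G^\up$, hence with $U(\lambda_W(t))$ for every $t$, so that $Z_\alpha \Delta_{\sN(W)}^{it} Z_\alpha^{-1} = \Delta_{\sN(W)}^{it}$ for all $t \in \R$.

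Finally I would invoke Stone's theorem (equivalently, uniqueness of the infinitesimal generator, or the Borel functional calculus): a unitary that commutes with every member of the one-parameter group $(\Delta_{\sN(W)}^{it})_{t \in \R}$ commutes with its positive selfadjoint generator, i.e.\ preserves the domain of $\Delta_{\sN(W)}$ and satisfies $Z_\alpha \Delta_{\sN(W)} Z_\alpha^{-1} = \Delta_{\sN(W)}$, which is exactly the assertion. I do not expect any real obstacle: the argument uses only (HK5) together with the defining property $Z_\alpha \in U(G^\up)'$, and the single point needing a word of care is the standard fact that commuting with $\Delta^{it}$ for all $t$ is the same as commuting with the (generally unbounded) positive operator $\Delta$.
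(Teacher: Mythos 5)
Your argument is exactly the one the paper treats as immediate (the paper introduces the proposition with the remark that "the commutation relation among twist operators and the wedge modular operators immediately hold" and gives no written proof): under (HK5) the modular unitary group $\Delta_{\sN(W)}^{it}$ lies in $U(G^\up)$, so $Z_\alpha \in U(G^\up)'$ commutes with it, and hence with its positive selfadjoint generator $\Delta_{\sN(W)}$ by functional calculus. Your step-by-step write-up, including the Stone-theorem justification for passing from $\Delta^{it}$ to $\Delta$, is correct and is precisely what the paper leaves to the reader.
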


The latter proposition allows to conclude that (HK6) is a 
consequence of (HK1)-(HK5). 
\begin{proposition}\label{prop:bwdual}
Let $(\cW_+,\sN,U)$ be a net of standard subspace satisfying 
{\rm(HK1)}-{\rm(HK5)}. Then it also satisfies central Haag 
duality {{\rm(HK6)}}:
\[ \sN(W^{'\alpha})=Z_\alpha \sN(W)'
\quad \mbox{ for } \quad {\alpha \in Z(G^\up)^-, W \in \cW_+, 
W^{'\alpha} \in \cW_+. }
 \] 
In particular, the right hand side 
does not depend on the choice of $Z_\alpha$.
\end{proposition}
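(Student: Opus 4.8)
The plan is to promote the one-sided inclusion provided by central twisted locality (HK4) to an equality. Since the net $\sN$ is only assumed abstract --- not a BGL net --- we have no formula for the modular conjugation $J_{\sN(W^{'\alpha})}$ at our disposal, so rather than compare modular pairs via Proposition~\ref{prop:11} I would exploit that the modular \emph{operators} of the two sides already coincide and then invoke Longo's subspace lemma (Lemma~\ref{inc}).

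First I would fix $W = (x,\sigma) \in \cW_+$ with $W^{'\alpha} = (-x,\alpha\sigma) \in \cW_+$ and pick, via (HK4), a unitary $Z_\alpha \in U(G^\up)'$ satisfying \eqref{eq:ZJ} together with $\sN(W^{'\alpha}) \subseteq Z_\alpha\sN(W)'$. Set $\sH := Z_\alpha\sN(W)'$, which is standard by Lemma~\ref{lem:sym}, with $\Delta_{\sH} = Z_\alpha\Delta_{\sN(W)'}Z_\alpha^{-1} = Z_\alpha\Delta_{\sN(W)}^{-1}Z_\alpha^{-1}$. The heart of the argument is the identity $\Delta_{\sH} = \Delta_{\sN(W^{'\alpha})}$: on one side, $Z_\alpha \in U(G^\up)'$ commutes with $U(\lambda_W(t)) = \Delta_{\sN(W)}^{-it/2\pi}$ by (HK5), hence with $\Delta_{\sN(W)}$, so $\Delta_{\sH} = \Delta_{\sN(W)}^{-1}$; on the other side, $\lambda_{W^{'\alpha}}(t) = \exp(-tx) = \lambda_W(-t)$, so (HK5) applied to $W^{'\alpha}$ gives $\Delta_{\sN(W^{'\alpha})}^{-it/2\pi} = U(\exp(-tx)) = \Delta_{\sN(W)}^{it/2\pi}$, whence $\Delta_{\sN(W^{'\alpha})} = \Delta_{\sN(W)}^{-1}$ as well. (The commutation of $Z_\alpha$ with $\Delta_{\sN(W)}$ is the content of Proposition~\ref{prop:cov}, but, as just noted, under (HK1)--(HK5) it already follows from $Z_\alpha \in U(G^\up)'$ together with (HK5), without assuming that $U$ extends to $G$.)

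With this in hand I would apply Lemma~\ref{inc} to the standard subspace $\sH = Z_\alpha\sN(W)'$ and the closed real subspace $\sK := \sN(W^{'\alpha}) \subseteq \sH$. Since $\Delta_{\sH}^{it} = \Delta_{\sN(W^{'\alpha})}^{it}$ and the modular flow of any standard subspace preserves that subspace, we have $\Delta_{\sH}^{it}\sK = \sK$ for all $t \in \R$; Lemma~\ref{inc} then shows that $\sK$ is standard in $\overline{\sK + i\sK}$. But $\sK = \sN(W^{'\alpha})$ is already cyclic in $\cH$ (being an element of $\Stand(\cH)$), so the final clause of Lemma~\ref{inc} forces $\sH = \sK$, i.e.\ $\sN(W^{'\alpha}) = Z_\alpha\sN(W)'$, which is (HK6). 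Since the left-hand side does not involve $Z_\alpha$, the right-hand side is independent of the chosen square root $Z_\alpha$, giving the last assertion.

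The only step requiring genuine care is the identity $\Delta_{\sH} = \Delta_{\sN(W^{'\alpha})}$ --- i.e.\ keeping straight the sign flip $W^{'\alpha} = (-x,\alpha\sigma)$, the relation $\Delta_{\sH'} = \Delta_{\sH}^{-1}$, and the $-it/2\pi$ normalization in (HK5). Once this is pinned down, Lemma~\ref{inc} does the rest, and the remaining ingredients (standardness of $Z_\alpha\sN(W)'$, commutation of $Z_\alpha$ with $\Delta_{\sN(W)}$, cyclicity of $\sN(W^{'\alpha})$) are each immediate from the cited results.
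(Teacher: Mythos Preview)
Your argument is correct and follows essentially the same route as the paper: both use (HK5) to see that $Z_\alpha$ commutes with $\Delta_{\sN(W)}$, deduce that $\sN(W^{'\alpha})$ and $Z_\alpha\sN(W)'$ share the same modular operator, and then upgrade the (HK4) inclusion to equality via Lemma~\ref{inc}. Your observation that the commutation $Z_\alpha\Delta_{\sN(W)}Z_\alpha^{-1}=\Delta_{\sN(W)}$ follows directly from $Z_\alpha\in U(G^\up)'$ together with (HK5), without the extra hypothesis in Proposition~\ref{prop:cov} that $U$ extends to $G$, is a useful clarification of the paper's citation.
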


\begin{proof}
{By (HK5), the unitary operator $Z_\alpha \in U(G^\up)'$ 
commutes with the modular operator of $\sN(W)$, by Proposition 
\ref{prop:cov}. 
Therefore the two standard subspaces 
$\sN(W^{'\alpha})$ and $Z_\alpha \sN(W)'$ have the same modular 
operator. }
By twisted locality $\sN(W^{'\alpha}) \subeq Z_\alpha\sN(W)'$, so that 
Lemma~\ref{inc} implies that they coincide. 
\end{proof}

\begin{remark}
Let $(\cW_+,\sN,U)$ be  a net of standard subspaces with a unitary 
$C$-positive representation $(U,\cH)$ 
of $G^\uparrow$. Let $W_0=(x,\sigma)\in\cW_+\subset \cG_E$ and $\sH_0 := \sN(W_0)$. 
We claim that (HK1-3) imply that 
\[ \tilde U(\sigma) := J_{\sH_0} \] 
defines an (anti-)unitary extension of $U\res_{G^\up(W_0)}$ to the graded 
subgroup $G(W_0) = G^\up(W_0)\rtimes\{e,\sigma\}$ of~$G$. 
In fact, $J_{\sH_0}$ commutes with $G^\uparrow_{W_0}$ 
by Lemma \ref{lem:sym}. Further, the $C$-positivity 
and Theorem~\ref{Borch}(b) imply that it also has the correct 
commutation relation with $\exp(C_\pm)$, hence also with $G^\up(W_0)$. 
 We shall  see in Section~\ref{sect:sl2int}, when we actually obtain an extension to the full group~$G$. 
\end{remark}

\begin{example} (The Poincar\'e case) 
Let $\uG := \cP_+ := \R^{1,3} \rtimes \SO_{1,3}(\R)_0^\uparrow$ 
be the proper Poincar\'e group and 
\[  G = \tilde \cP_+ = \R^{1,d-1} \rtimes {\Spin_{1,3}(\R)_0}\] 
be its simply connected covering. 
We write $\lambda_\uW$ for the one-parameter group lifting 
the boost group $\Lambda_\uW$ associated to a wedge $\uW \in \cW 
= \uG.W_1$ 
(see e.g. \cite{Mo18}). For 
$\uG^\up$, a wedge is defined by a pair $\uW = (x,r_x)$, where $x$ generates 
$\Lambda_{\uW}$ and $r_x = e^{\pi i x}$ 
is the spacetime reflection in the direction of the wedge. 
Since $Z=Z( G)= \{ \pm \1\}$ is a $2$-element group, 
a wedge $W \in \uline\cG$ has two lifts 
which belong to two different 
$  G^\up$-orbits in $\cG(  G)$. 
To see this, we note that $Z = Z^-$  and $Z_2 = \{e\}$. 
For the second equality we \red{use the isomorphism} 
$\Spin_{1,3}(\R)$ with $\SL_2(\C)$ 
and note that the centralizer of any Euler element~$x$, 
which may be assumed to be 
$x = \pmat{\shalf & 0 \\ 0 & -\shalf}$,  
is connected and isomorphic to the multiplicative group $\C^\times$, 
on which the involution $\sigma_x$ acts trivially. 
Therefore the central elements $\partial(g) =  g^\sigma g^{-1}$, 
$g \in   G^\up_{(x,\sigma_x)}$, are all trivial, which leads to 
$Z_2 = \{e\}$. 

For $\alpha := -\1$, the twisted complement of 
$W = (k_W, \sigma_W)$ is $W^{'-1}=(-k_W, -\sigma_W)$. 
Any lift $\tilde r \: \R \to   G^\up$ of a rotation one-parameter 
group $\rho \: \R \to \SO_2(\R) \into \SO_{1,3}(\R)$ in $\uG^\up$ 
satisfying $\Ad(\rho(\pi)) k_W = - k_W$ now satisfies $\tilde \rho(2\pi) = -\1$. 
This shows that, $W' = (-k_W, \sigma_W) \not\in   G^\up.W$, but that 
$W^{'-1} = (-k_W, -\sigma_W) \not\in   G^\up.W$. 

Let $(U,\cH)$ be an irreducible  unitary positive 
energy representation of $  G^\up$ for which $U(-1) \not= \1$, 
then $U(-1) = - \1$ by Schur's Lemma. 
For the BGL net $\sN \: \cG(  G) \to \Stand(\cH)$ we therefore 
have $\sN(W^{'-1})=i\sN(W)'$ and $Z_\alpha = i\1$ is a suitable twist 
operator (cf.~\cite[Thm.~2.8]{Mo18}).
\end{example}

\begin{example}(Finite coverings of the M\"obius group) 
Consider the $n$-fold covering of the M\"obius group 
$  G^\up:=\Mob^{(n)} \subeq   G:=\Mob^{(n)}_2$, 
where $\uG = \Mob_2$ (cf.~Example~\ref{ex:models}(e)). 
This group is obtained 
from $\tilde\Mob_2$ by factorization of the subgroup $n Z(\tilde \Mob)$.
Then $Z := Z(G^\up) \cong \Z_n$ is a cyclic group of order~$n$. 
Let $\alpha := \tilde \rho(2\pi) \in  Z$ be a generator, 
where $\tilde\rho \: \R\to G^\up$ is the lift of the rotation 
group. 

Let $(U,\cH)$ be an (anti-)unitary 
representation of~$G$ whose restriction to $G^\up$ is 
irreducible.
Then, by Schur's Lemma, 
$U(\alpha^n) = U(\tilde\rho(\pi n))$ is an involution in $\T\1$, 
hence $\pm \1$. 
We now define {\it $n$-twisted local nets of real subspaces} as follows: 
\begin{itemize}
\item \textbf{$n$ is even.} As $\beta^\tau = \beta^{-1}$ for $\beta \in  Z$, we have 
$ Z^- =  Z$ and $ Z_1\cong \Z_{n/2}$ is a subgroup 
of index~$2$. As for $\tilde\Mob_2$, we have $ Z_2 =  Z_1$. 
We therefore obtain for every {Euler couple} $W = (x,\sigma){\in \cG_E(G)}$ 
two $  G^\up$-orbits $  G^\up.{(\pm x,\sigma)}$ covering $\uG^\up.\uW \subeq 
\cG_E(\uline G)$. 
Choosing $  G^\up.(x,\sigma)$, one obtains with the 
BGL construction a net of real subspaces $I\mapsto \sN(I)$, 
 where $I$ denotes an interval of length smaller than $2\pi$ 
in the $\frac n2$-covering $\bS^1_{(n/2)} {\simeq \R/\pi n\Z}$ of $\bS^1$. 
We can realize the net on intervals in $\bS^1_{(n/2)}$ because 
$U(\tilde\rho(n\pi))\sN(I)={\pm}\sN(I)=\sN(I)$. 
For the central element $\alpha=\tilde\rho(-2\pi)\in Z$, 
twisted complements look as follows. 
For $I=(a,b)\subset \R/\pi n\Z$ with $b-a < 2\pi$, we have 
$I^{'\alpha}= I^c$, where $I^c=(b-2\pi,a)$ is the ``complement'' obtained by 
conformal reflection on the left endpoint, cf.~\eqref{eq:left}. All the other twisted complement, belonging to the same orbit, are obtained by covariance.

{The locality relation then is given by 
$$\sN(I^{'\alpha})=\omega^k \sN(I)', \qquad k\in\mathbb Z,$$
where $\alpha=\tilde\rho(2\pi k)$ and $\omega\in \T$ satisfies 
$\omega^2 \1= U(\tilde\rho(2\pi))$. 
Since $U$ is irreducible and $Z$ is a cyclic group of order $n$, $U(\tilde\rho(2\pi))$ is an $n$-th 
root of the unity, hence $\omega^{2n} = 1$ and $Z_\alpha=\omega^k\1$.}

\item \textbf{$n$ is odd.}
Then $ Z^- =  Z_1$ implies that $  G^\up$ acts transitively 
on the inverse images of $\uG^\up$-orbits in $\uline\cG$. 
Fixing the orbit $G^\up.(x,\sigma)$, we  have by the 
BGL construction a net of real subspaces $I\mapsto \sN(I)$, 
 where, again, $I$ is an interval of length smaller than $2\pi$ 
in the $n$-fold covering of $\bS^1$.
Here the locality relation is 
$$\sN(I^{'\alpha})=\omega^k \sN(I)', \qquad k\in\mathbb Z,$$
where $\alpha=\tilde\rho(2\pi k)$ and  $\omega^{2n} = 1$, $I^{'\alpha}$ and $Z_\alpha$ are as above. 

\end{itemize}
\end{example}

\subsection{New models}

{Theorem~\ref{thm:classif-symeuler} provides 
 the list of restricted root systems for real simple 
Lie algebras  containing (symmetric) Euler elements, 
hence supporting (symmetric) Euler wedges. 
Any such Lie algebra $\g$ is the Lie algebra 
of a simply connected Lie group $G^\up$. 
Then  \eqref{eq:eul} defines an Euler involution on the group $G^\up$, 
so that we obtain the extension to $G=G^\uparrow \rtimes\{\id,\sigma\}$. }

{Such a Lie group $G^\up$ has many unitary representations, 
possibly with positive energy if the Lie algebra $\g$ is hermitian. 
By unitary induction, one can construct 
a unitary representation of $G^\up$ from 
a unitary representation of a subgroup, for instance from 
a covering of $\PSL_2(\R)\subset G^\up$ \cite{ma52}. 
It is always possible,  to extend a unitary representation $(U,\cH_U)$ 
of $G^\uparrow$ to an (anti-)unitary representation of $G$ 
by doubling the Hilbert space, if the representation  does not extend on 
 $\cH_U$ itself. 
Indeed,  we can choose any conjugation $C$ on $\cH_U$ and observe that the 
representation defined by 
$\tilde U(g) = U(g) \oplus CU(\sigma g\sigma)C$ on $\cH_U \oplus \cH_U$ 
extends to~$G$ by 
$U(\sigma)=\left(\begin{array}{cc}0&C\\C&0\end{array}\right)$. 
By the BGL construction 
there exists a (twisted-)local one-particle net satisfying (HK1-8).}

As a consequence we have the theorem:
\begin{theorem}\label{thm:newmod}
Let $\fg$ be a simple real Lie algebra containing an Euler element, 
i.e.,  whose restricted root system space occurs in 
{\rm Theorem \ref{thm:classif-symeuler}}. 
Then there exists a graded Lie group $G=G^\up\cup G^\down$ 
as in {\rm Section~\ref{sect:G}} with an (anti-)unitary representations $U$, 
and these in turn define twisted $G$-covariant 
BGL-nets $(\cW_+,U,\sN)$.
\end{theorem}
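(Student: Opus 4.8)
The plan is to assemble the theorem by chaining together the structural results of Sections~2 and 3 with the BGL construction of Section~4. First I would unpack what must be produced: given a simple real Lie algebra $\fg$ containing an Euler element $h \in \cE(\fg)$, I need (1) a graded Lie group $G = G^\up \rtimes \{\id,\sigma\}$ with $\L(G^\up) = \fg$ and a corresponding Euler wedge $(h,\sigma) \in \cG_E(G)$; (2) an (anti-)unitary representation $U$ of $G$; and (3) the verification that the restricted BGL net $(\cW_+, U, \sN_U)$ is twisted $G$-covariant and satisfies (HK1)--(HK8).

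For step (1), take $G^\up$ to be the simply connected Lie group with Lie algebra $\fg$. Since $h$ is an Euler element, $\ad h$ is diagonalizable with $\Spec(\ad h) \subeq \{-1,0,1\}$, so $\sigma_h := e^{\pi i \ad h} \in \Aut(\fg)$ is an involutive automorphism; as $G^\up$ is simply connected, $\sigma_h$ integrates to an automorphism $\sigma_G$ of $G^\up$, and we set $G := G^\up \rtimes \{\id, \sigma\}$ where $\sigma$ implements $\sigma_G$. By construction $\Ad(\sigma) = e^{\pi i \ad h}$, so $(h,\sigma) \in \cG_E(G)$, which is therefore nonempty; Assumption~\ref{as:group} holds and we may fix the orbit $\cW_+ := G^\up.(h,\sigma) \subeq \cG_E(G)$. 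We may take the invariant cone to be $C = \{0\}$ (allowed by Section~\ref{sect:G}), so the spectral condition (HK3) is vacuous and we need not worry about whether $\fg$ admits a pointed invariant cone.

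For step (2), I would invoke the doubling construction described in the paragraph preceding the theorem: $G^\up$ certainly has a nontrivial unitary representation $(U_0, \cH_0)$ (e.g.\ one induced from a representation of a covering of an $\fsl_2(\R)$-subalgebra, which exists by Theorem~\ref{thm:automaticsym}'s guarantee of an $\fsl_2(\R)$-subalgebra when $h$ is symmetric, or more simply any faithful or regular-type unitary representation). If $U_0$ does not already extend antiunitarily to $G$ on $\cH_0$, pick any conjugation $C$ on $\cH_0$ and set $\tilde U(g) := U_0(g) \oplus C U_0(\sigma g \sigma) C$ on $\cH_0 \oplus \cH_0$, extended by the antiunitary $U(\sigma) := \left(\begin{smallmatrix} 0 & C \\ C & 0 \end{smallmatrix}\right)$; one checks directly that this is an (anti-)unitary representation of $G$ restricting to $U_0 \oplus (\text{twist})$ on $G^\up$. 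This gives the required $U$. For step (3), apply the theorem already proved in the excerpt: the BGL net $\sN_U$ associated to an (anti-)unitary $C$-positive representation of $G$ restricts on any $G^\up$-orbit $\cW_+ \subeq \cG_E(G)$ to a net satisfying (HK1)--(HK3) and (HK5) (with $C = \{0\}$ the $C$-positivity is automatic), and by Proposition~\ref{prop:BGLpart2} it also satisfies the twisted locality (HK4), central Haag duality (HK6), $G$-covariance (HK7) and modular reflection (HK8). This yields the asserted twisted $G$-covariant BGL-net.

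The main obstacle is really step (2): ensuring the existence of a \emph{suitable} (anti-)unitary representation, and more delicately, a representation for which the twisted complement relation in (HK4)/(HK7) is meaningful — i.e.\ handling the central subgroup $Z(G^\up)$ correctly when $\cW_+$ is not stable under ordinary complementation, so that one must pass to an $\alpha$-twisted action with $\alpha \in Z(G^\up)^-$ and guarantee such an $\alpha$ with $W^{'\alpha} \in \cW_+$ exists. Here one falls back on Lemma~\ref{lem:twistact}(f): such $\alpha$ exists iff $\uline W' \in G^\up.\uline W$ in the adjoint (center-free) quotient, and since $h$ is an Euler element in a simple Lie algebra and Table~3 / Theorem~\ref{thm:classif-symeuler} show the relevant conjugacy, this reduces to a statement about $\Inn(\fg)$-orbits that Section~3 already settles. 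Everything else is bookkeeping: one only needs to note that passing from the simply connected $G^\up$ to $G$ introduces no new obstruction because $\sigma_h$ always lifts, and that the freedom in the choice of (anti-)unitary extension, parametrized by $\U(\cM)^-$ in Proposition~\ref{prop:1.9}, does not affect the validity of (HK1)--(HK8).
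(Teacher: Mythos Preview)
Your proposal is correct and follows essentially the same route as the paper: the paper's ``proof'' is the paragraph preceding the theorem, which does exactly what you outline---take the simply connected $G^\up$, integrate the Euler involution $\sigma_h = e^{\pi i \ad h}$ to obtain $G = G^\up \rtimes \{\id,\sigma\}$, produce a unitary representation of $G^\up$ (the paper suggests induction from a $\PSL_2(\R)$-type subgroup), extend it to $G$ by the doubling trick if necessary, and then invoke the BGL construction together with Proposition~\ref{prop:BGLpart2}. Your explicit choice $C = \{0\}$ to make (HK3) vacuous is a clean touch the paper leaves implicit.

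One small overclaim: in your final paragraph you assert that the existence of $\alpha \in Z(G^\up)^-$ with $W^{'\alpha} \in \cW_+$ ``reduces to a statement about $\Inn(\fg)$-orbits that Section~3 already settles.'' By Lemma~\ref{lem:twistact}(f) this existence is equivalent to $-h \in \Inn(\g)h$, i.e.\ to $h$ being \emph{symmetric}, and Theorem~\ref{thm:classif-symeuler} shows that not every Euler element in a simple Lie algebra is symmetric (e.g.\ $h_1$ in $\fsl_3(\R)$). So for non-symmetric Euler elements no such $\alpha$ exists and the ``Moreover'' clause of (HK4) fails. The paper glosses over this as well; the phrase ``twisted $G$-covariant BGL-net'' in the theorem should be read as asserting the existence of the BGL net with (HK7)--(HK8) holding conditionally (for those $\alpha$ with $W^{'\alpha}\in\cW_+$, if any), which is what Proposition~\ref{prop:BGLpart2} actually delivers.
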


This theorem shows, for instance, that it is possible to associate a covariant homogeneous net of standard subspace to a Lie algebra $\fg$ 
with restricted root system~$E_7$.  
The subgroups $G_{\pm 1} = \exp(\g_{\pm 1}(x)) \subeq G^\up$ are closed, 
and if $G_0 := \{ g \in G^\up \: \Ad(g)x = x\}$, then so is 
${\bf P} := G_0 G_{-1}$. Then $M := G^\up/{\bf P}$ is a homogeneous space 
whose tangent space in the base point can be identified 
with the eigenspace $\g_1(x) = \ker(\ad x- \1)$. 
If $\g$ is simple hermitian \red{of tube type} 
and $C \subeq \g$ is a pointed generating 
invariant cone, then $C_+ := C \cap \g_1(x)$ 
defines a $G^\up$-invariant causal structure on $M$. 
The so-obtained manifolds include the Jordan spaces-times of 
G\"unaydin \cite{Gu93, Gu00, Gu01} and the simple spacetime manifolds 
in the sense of Mack--de Riese \cite{MdR07}. 
If the rank of the restricted root system $\Sigma$ of $(\g,\fa)$ 
is $2$, then $M$ is a Lorentzian manifold, but in general it is not. 
As a consequence of Proposition \ref{prop:herm} and Table 2, there 
exists a real form with a non-trivial positive cone 
\red{i.e., $\g$ is hermitian of tube type}, 
for every root system appearing in Theorem \ref{thm:classif-symeuler}.  Thus models with a proper notion of positive energy appear can be associated to every root system supporting \red{symmetric} Euler elements.

Recently, in \cite{NO20}  it has been shown that irreducible (anti-)unitary representations 
$(U,\cH)$ of $G$ which are of positive energy in the sense that 
$-i \partial U(y) \geq 0$ for $y \in C$, lead to 
$G$-covariant nets 
$(\sV_\cO)$ of real subspace of $\cH$, indexed by open subsets 
$\cO \subeq M$. If $\cO\not =\eset$, 
then $\sV_\cO$ is generating, and it is standard if $\cO$ is not 
``too big''. In particular, the open subset $\cO = \exp(C_+^0){\bf P} 
\subeq M$ corresponds to a standard subspace with the 
Bisognano--Wichmann property for which the modular group 
is represented by the one-parameter group $(\exp tx)_{t \in \R}$ of $G$ 
(see \cite[\S 5.2]{NO20}).

\subsection{The $\SL_2$-problem, symmetry extension} 
\label{sect:sl2int}

In Section~\ref{app:b.1} we have seen that the existence of orthogonal 
Euler wedges corresponds to the existence of an $\fsl_2$-subalgebra containing 
both Euler elements. In this section we will discuss when we can extend a covariant net of standard subspaces $(\cW_+,\sN, U)$ of Euler wedges 
satisfying (HK1)-(HK5) to a $G$-covariant net. 

We first look at the (anti-)unitary extensions of unitary 
representations of $\widetilde\SL_2(\R)$. In $\fsl_2(\R)$, we consider the two Euler elements 
\begin{equation}\label{eq:handk}
h := \frac{1}{2} \pmat{ 1 & 0 \\ 0 & -1} \quad \mbox{ and }\quad 
 k := \frac{1}{2} \pmat{ 0 & 1 \\ 1 & 0}.
 \end{equation} 
Let $(U,\cH)$ be a unitary representation 
of the group $G := \tilde\SL_2(\R)$ and consider the two selfadjoint 
operators 
\[ H := -2\pi i \partial U(h) \quad \mbox{ and } \quad 
K := -2\pi i \partial U(k).\] 

\begin{thm}
  \mlabel{thm:sl2-ext} 
Every continuous unitary representation of $\tilde\SL_2(\R)$ 
extends to an (anti-)unitary representation of the group 
\[\tilde \GL_2(\R) := \tilde\SL_2(\R) \rtimes \{\1, \tau_G\},\] 
where $\tau_G$ is the involutive automorphism of $\tilde\SL_2(\R)$ 
induced by the Lie algebra automorphism 
\[ \tau\pmat{ a& b \\ c &d} = \pmat{ a& -b \\ -c &d}\] 
corresponding to the Euler element~$h$.
\end{thm}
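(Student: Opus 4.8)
The plan is to reduce the assertion to the construction of a single antiunitary operator and then to analyse it representation-theoretically. Since $\tau_G$ is an involutive automorphism of $\tilde\SL_2(\R)$, giving an (anti-)unitary extension of $(U,\cH)$ to $\tilde \GL_2(\R)=\tilde\SL_2(\R)\rtimes\{\1,\tau_G\}$ is the same as producing an antiunitary involution $J$ on $\cH$ with $J\,U(g)\,J=U(\tau_G(g))$ for all $g\in\tilde\SL_2(\R)$: one then sets $\tilde U(g\tau_G):=U(g)J$, and the defining relations of the semidirect product are matched exactly because $J^2=\1$ (so $J\,U(g)\,J=J\,U(g)\,J^{-1}=U(\tau_G(g))$). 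Thus the whole content of the theorem is the existence of such a $J$.

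First I would show that a not necessarily involutive antiunitary $J_0$ with $J_0\,U(g)\,J_0^{-1}=U(\tau_G(g))$ exists, which is equivalent to the conjugate representation $\bar U$ being unitarily equivalent to $U\circ\tau_G$. Decomposing $U$ into a direct integral of irreducible unitary representations and using that both $V\mapsto\bar V$ and $V\mapsto V\circ\tau_G$ are compatible with the central decomposition, it suffices to verify $\bar\pi\cong\pi\circ\tau_G$ for every irreducible unitary representation $\pi$ of $\tilde\SL_2(\R)$. Here one invokes the classification of the unitary dual (principal and complementary series, the discrete series including the limit cases, and their covering-group analogues parametrised by the $\tilde K$-weight modulo $\Z$), together with two facts about $\tau_G=\sigma_h$: it preserves the Casimir element, and it reverses the sign of the generator of the compact subalgebra $\fk$, since $\sigma_h\big(\tfrac{1}{2}\pmat{0&1\\-1&0}\big)=-\tfrac{1}{2}\pmat{0&1\\-1&0}$ (cf.\ Example~\ref{ex:1.4}). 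This is precisely the effect of complex conjugation on the classifying parameters, so $\pi\circ\tau_G$ and $\bar\pi$ carry the same invariants and hence are equivalent, giving the antiunitary $J_0$.

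It remains to correct $J_0$ to an involution. The operator $u:=J_0^2$ lies in $\cM:=U(\tilde\SL_2(\R))'$, is unitary, and is fixed by the antilinear automorphism $\beta:=\Ad(J_0)$ of $\cM$, because $J_0\,u\,J_0^{-1}=J_0^2=u$. By the appendix lemma on antilinear (twisted) square roots (Lemma~\ref{lem:app.1}), applied to $u^{-1}$ and $\beta$, there is $B\in\U(\cM)$ with $B\,\beta(B)=u^{-1}$; then $J:=B J_0$ is antiunitary, still satisfies $J\,U(g)\,J^{-1}=U(\tau_G(g))$ since $B$ commutes with $U(\tilde\SL_2(\R))$, and $J^2=B\,\beta(B)\,J_0^2=u^{-1}u=\1$. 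Composing with $U$ as in the first paragraph produces the required extension. A more hands-on route for irreducible $\pi$ realises $\pi$ on a space of (sections of line bundles of) functions on the hyperbolic upper half plane $\bH$ or a cover thereof, and implements $\tau_G$ by the manifestly involutive antiunitary $f\mapsto\overline{c\cdot(f\circ\iota)}$ attached to the reflection $\iota(z)=-\bar z$ fixing the imaginary-axis geodesic; one then assembles a measurable field of such involutions.

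The main obstacle is the normalisation $J^2=\1$: a priori, on an irreducible summand one only obtains $J_0^2=\pm\1$, and one must rule out the sign $-1$ (a quaternionic-type obstruction). In the abstract approach this is exactly what the twisted-square-root lemma delivers, once its hypothesis $\beta(u)=u$ has been checked; in the geometric approach it amounts to tracking the cocycle $c$ through the discrete and fractional-weight series so that $f\mapsto\overline{c\cdot(f\circ\iota)}$ squares to the identity. The direct-integral bookkeeping (measure class and multiplicity functions, respectively measurability of $\xi\mapsto J_\xi$) is routine. One cautionary point is that $\tilde\SL_2(\R)$ is not a matrix group, so the reflection must be implemented directly on the simply connected group and its representation spaces rather than imported from an embedding into $\SL_2(\C)$.
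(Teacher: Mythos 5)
Your route is the same as the paper's: reduce by the type~I property to the irreducible case, then verify $\bar U \cong U\circ\tau_G$ (the paper writes this as $U\circ\tau_G\cong U^*$) by observing that $\tau$ fixes the Casimir $\omega$ and negates the compact generator~$u$; where you invoke ``the classification of the unitary dual'', the paper gives a short self-contained argument via the centralizer $\C[\omega,u]$ of $u$ in $\cU(\g)$ and the resulting characters $(\mu,c)$. The genuine gap is in your normalization step $J^2=\1$. You correctly flag the quaternionic obstruction $J_0^2=-\1$, but the appeal to Lemma~\ref{lem:app.1} does not dispatch it: having established $\beta(u)=u$ (rather than $\beta(u)=u^{-1}$), the applicable case is~(d), not~(c), and part~(d) carries the extra hypothesis $\ker(U+\1)=\{0\}$, which for $U=u^{-1}=J_0^{-2}$ demands $-1\notin\spec(J_0^2)$ --- exactly what you are trying to prove. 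The lemma is therefore silent precisely where the obstruction would live, and cannot be used to exclude it.

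In fact the obstruction is absent, for a reason you have all the ingredients for but do not assemble. Since $\tau(u)=-u$ and $J_0$ is antilinear, $J_0$ \emph{commutes} with the selfadjoint operator $i\partial U(u)$, hence preserves each of its eigenspaces; for irreducible $U$ these are one-dimensional, so $J_0(z\xi_\mu)=\bar z\,\lambda\,\xi_\mu$ with $|\lambda|=1$, whence $J_0^2\xi_\mu=\xi_\mu$. Combined with Schur's lemma ($J_0^2\in\T\1$) this forces $J_0^2=\1$ with no correction factor $B$ at all, which is the positivity argument the paper leaves implicit. After tensoring with any conjugation on the multiplicity space, the direct-integral bookkeeping (which, as you note, is routine) completes the proof. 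With this substitution your argument is correct and essentially identical to the paper's; the geometric alternative on $\bH$ that you sketch is also viable but requires the same care with the sign of $J^2$ through the discrete and fractional-weight series.
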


In \cite{GL95, Lo08} this  theorem was  proved for $\SL_2(\R)$-representations of the  principal and discrete series. Here the argument does not depend on the family of the representation. 

\begin{prf} Since $\tilde\SL_2(\R)$ is a type $I$ group, 
every unitary representation has a unique direct integral decomposition 
into irreducible 
unitary representations. 
This reduces the problem to the irreducible case. We have
to show that $U \circ \tau_G \cong U^*$ (the dual representation). 
Let 
\[ u := [h,k] = \frac{1}{2} \pmat{ 0 & -1 \\ 1 & 0}.\]
Then $h,k,u$ is a basis of $\fsl_2(\R)$ and 
\[ \omega := h^2 + k^2 - u^2 \in \cU(\fsl_2(\R))\] 
is a Casimir element, so that 
\[ \partial U(\omega) =  c\1 \quad \mbox{ for some }  \quad c \in \R.\] 
The antilinear extension $\oline\tau$ of $\tau$ to 
$\fsl_2(\R)$ satisfies $\oline\tau(iu) = iu$ and 
the operator $i\partial U(u)$ is selfadjoint and diagonalizable. 
We have 
\[ \partial U^*(u) 
= -\partial U(u) 
= \partial U(\tau(u)),\] 
so that $U^* \circ \tau_G$ is an irreducible with the 
same $u$-weights and the same Casimir eigenvalue~$c$. 
Below we argue that $U$ is uniquely determined by any pair $(\mu,c)$, 
where $\mu$ is an eigenvalue of $i\partial U(u)$ 
\red{occurring in the representation} (\cite{Sa67}, \cite{Lo08}),
and this implies that $U \circ \tau_G \cong U^*$.

To see that $U$ is determined by the pair $(\mu,c)$, we first recall that 
$\cH$ decomposes into one-dimensional eigenspaces of $i\partial U(u)$ and, 
by irreducibility, it is generated by  any eigenvector $\xi_\mu$ of eigenvalue $\mu$. 
Let $\cU(\g)$ denote the complex enveloping algebra of $\g$. Then 
$V_\mu := \cU(\g)\xi_\mu$ is a dense subspace consisting of analytic vectors, 
so that the representation 
$U$ is determined by the $\g$-representation on this space. In $\cU(\g)$ the centralizer 
$\cC_u$ of $u$ is generated by $u$ and the Casimir element. Therefore $\xi_\mu$ is a 
$\cC_u$-eigenvector and the corresponding homomorphism 
$\chi \: \cC_u \to \C$ is determined by $\chi(u) = \mu$ and $\chi(\omega) = c$. 
It is now easy to verify that these two values determine the $\cU(\g)$-module 
structure on $V_\mu$, hence the unitary representation~$U$. 
\end{prf}

\begin{rem}   
Here the determination of the representation is obtained by considering 
in the enveloping algebra $\cU(\fsl_2(\R))$, the centralizer 
subalgebra $\C[\omega,u]$ of $u$. Any 
cyclic weight vector $\xi_{\mu,c}$ defines a character $\chi$ of 
this subalgebra by $\chi(iu)  = \mu$ and $\chi(\omega) = c$, and 
$\cU(\g)\xi_{\mu,c}$ is isomorphic to the quotient of $\cU(\g)$ 
by the left module generated by $\mu \1- iu$ and $\omega - c\1$. 
\end{rem}

Now, we consider the positive selfadjoint operator 
\[ \Delta_h := e^{-H} = e^{2\pi i \partial U(h)}.\] 
By Theorem~\ref{thm:sl2-ext}, $U$ extends to an (anti-)unitary 
representation of $\tilde\GL_2(\R)$, and we put 
\[ J := U(\tau_G), \quad S := J \Delta_h^{-1/2} 
= J e^{-\pi i \partial U(h)}
\quad \mbox{ and }\quad \sV := \Fix(S).\]

\begin{lem} \mlabel{lem:standpair} 
For a unitary operator $T \in \U(\cH)$, 
the following assertions are equivalent: 
  \begin{itemize}
  \item[\rm(a)] $S  T S \subeq T$ holds on a dense subspace. 
  \item[\rm(b)] $T^{-1} \sV \cap \sV$ is standard. 
  \end{itemize}
If these conditions are satisfied, then {\rm(a)} holds on 
$T^{-1} \sV \cap \sV$.
\end{lem}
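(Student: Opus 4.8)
The plan is to translate the algebraic statement $S\,T\,S \subseteq T$ into a statement about the standard subspace $\sV = \Fix(S)$ by using the defining property of the Tomita operator, namely that $S$ is the closed antilinear involution with $\sV = \ker(S - \id)$ on its (dense) domain $\sV + i\sV$. First I would observe that, since $S$ is an antilinear involution, conjugation $A \mapsto S A S$ is an antilinear anti-automorphism on operators (formally, $SAS\cdot SBS = SABS$ because $S^2 = \id$), and it sends unitaries to unitaries; in particular $S T S$ is a densely defined operator which, wherever the compositions make sense, satisfies $(STS)(STS) = S T^2 S$ etc. The content of (a) is that this conjugated operator is dominated by $T$ on a dense subspace, i.e.\ $T$ and $STS$ agree on a dense set, equivalently $STS$ extends $T$ restricted to that set. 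The natural dense subspace on which to test everything is $\sV + i\sV$, the domain of $S$.

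The core computation is the following chain of equivalences. For a vector $\xi$ in an appropriate dense domain, $\xi \in T^{-1}\sV \cap \sV$ means $\xi \in \sV$ and $T\xi \in \sV$, i.e.\ $S\xi = \xi$ and $ST\xi = T\xi$. Using $S\xi = \xi$, the second condition reads $ST\xi = T\xi = TS\xi = (TS)\xi$, and applying $S$ on the left (valid since $S^2=\id$) this is $(STS)(S\xi)$ — wait, more cleanly: $ST\xi = TS\xi$ on the vector $\xi$. Now I would argue that the identity "$ST\xi = TS\xi$ for all $\xi$ in a dense $S$-invariant subspace" is exactly the assertion $STS = T$ (equivalently $ST = TS$ composed appropriately) on a dense subspace, which after the substitution $\eta = S\xi$ becomes $S T S \eta = T\eta$, i.e.\ condition (a). Thus the set of $\xi \in \sV$ with $T\xi \in \sV$ is precisely the set where the operator identity in (a) holds. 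For the implication (a)$\Rightarrow$(b): if $S T S \subseteq T$ on a dense subspace $D$, one checks that $D \cap \sV$ (or rather its closure) lies in $T^{-1}\sV \cap \sV$, and then one must show this real subspace is standard — cyclic and separating. Separating is automatic since $T^{-1}\sV \cap \sV \subseteq \sV$ and $\sV$ is separating (intersections of separating subspaces are separating). Cyclicity is the substantive point: here I would invoke that $S$ restricted to $T^{-1}\sV \cap \sV$ is a closed antilinear involution with that space as its $+1$-eigenspace, hence by Proposition~\ref{prop:11} the closure of $(T^{-1}\sV\cap\sV) + i(T^{-1}\sV\cap\sV)$ carries a standard subspace structure; one then needs the hypothesis (a) to guarantee the intersection is not "too small", i.e.\ that $S$ is still densely defined on it. For the converse (b)$\Rightarrow$(a): if $\sW := T^{-1}\sV \cap \sV$ is standard, then for $\xi \in \sW$ we have $S\xi = \xi$ and $ST\xi = T\xi$ by definition of $\sW$, which gives $STS\xi' = T\xi'$ for $\xi' = S\xi = \xi$, hence (a) holds on the dense (since $\sW$ standard) subspace $\sW + i\sW$.

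The main obstacle I anticipate is the domain bookkeeping for the unbounded antilinear operator $S$: the identity $S T S \subseteq T$ must be handled with care about where each composition is defined, and showing that $T^{-1}\sV \cap \sV$ is cyclic (not merely separating) requires knowing that $S$ remains densely defined after restriction, which is essentially equivalent to the "dense subspace" clause in (a). I would resolve this by always working on the explicit dense core $\sV + i\sV = \mathrm{dom}(S)$, using that $T$ unitary implies $T(\sV + i\sV)$ is dense, and invoking Proposition~\ref{prop:11} to pass between closed antilinear involutions and standard subspaces. The final sentence of the lemma — that (a) then holds on $T^{-1}\sV \cap \sV$ — falls out of the (b)$\Rightarrow$(a) direction, since that argument verified (a) precisely on $\sW + i\sW \supseteq \sW = T^{-1}\sV\cap\sV$.
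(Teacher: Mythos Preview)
Your direction (b)$\Rightarrow$(a) is fine and matches the paper. The gap is in (a)$\Rightarrow$(b), precisely at the cyclicity of $T^{-1}\sV\cap\sV$, which you correctly flag as the substantive point but do not actually prove. Knowing that the domain $\cD$ on which $STS\xi = T\xi$ is dense does \emph{not} by itself imply that $\cD\cap\sV$ has dense complex span: a dense complex subspace can meet the real subspace $\sV$ only in $\{0\}$. Invoking Proposition~\ref{prop:11} here is circular, since that proposition presupposes a closed antilinear involution on a \emph{dense} domain, which is exactly what you are trying to establish for $(T^{-1}\sV\cap\sV)+i(T^{-1}\sV\cap\sV)$.

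The missing step, supplied in the paper, is to show that $\cD$ is $S$-invariant: for $\xi\in\cD$ one has $T\xi\in\cD(S)$ and
\[
STS(S\xi)=ST\xi=S(STS\xi)=T(S\xi),
\]
so $S\xi\in\cD$. Since $\cD$ is also a complex subspace (antilinearity of $S$ cancels in pairs), every $\xi\in\cD$ decomposes as $\tfrac12(\xi+S\xi)+i\cdot\tfrac{1}{2i}(\xi-S\xi)$ with both summands in $\cD\cap\sV$, hence $\cD=(\cD\cap\sV)+i(\cD\cap\sV)$ is dense. Finally one checks that for $\xi\in\sV$ the condition $STS\xi=T\xi$ is equivalent to $T\xi\in\sV$, so $\cD\cap\sV=T^{-1}\sV\cap\sV$, and this real subspace is therefore standard.
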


\begin{prf} If (b) holds, then any $\xi \in T^{-1}\sV \cap \sV$ satisfies 
$S T S \xi = S T \xi = T \xi,$ so that (a) holds. 
  
Conversely, assume that 
\[ \cD := \{ \xi \in \cD(S) \: S T S \xi = T \xi \} \] 
is dense in $\cH$. For any $\xi \in \cD$ we then have 
$T\xi \in \cR(S) = \cD(S)$ and 
\[ STS(S\xi) = ST \xi = S(STS\xi) = T(S\xi),\] 
so that $\cD$ is $S$-invariant. This implies that 
$\cD = (\cD \cap \sV) + i (\cD \cap \sV),$ 
so that $\cD \cap \sV$ is standard. 
For $\xi \in \sV$, we have $T\xi \in \sV$ if and only if 
$\xi \in \cD$, so that 
$\cD \cap \sV = T^{-1} \sV = \sV.$ 
This proves the lemma. 
\end{prf}

\begin{prop} \mlabel{prop:sl2standcond}
The following assertions are equivalent: 
  \begin{itemize}
  \item[\rm(a)] $\Delta_h^{-1/2} e^{itK} \Delta_h^{1/2} \subeq e^{-itK}$ 
holds for every $t \in \R$ on a dense subspace of $\cH$. 
  \item[\rm(b)] $S  e^{itK} S \subeq e^{itK}$ 
holds for every $t \in \R$ on a dense subspace of $\cH$. 
  \item[\rm(c)] $e^{-itK} \sV \cap \sV$ is standard for every $t \in \R$. 
  \end{itemize}
If these conditions are satisfied, then 
{\rm(a)} holds on $J(e^{-it K}\sV \cap \sV)$ and {\rm(b)} 
on $e^{-it K}\sV \cap \sV$. 
\end{prop}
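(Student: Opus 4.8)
The plan is to prove the cycle (a) $\Rightarrow$ (b) $\Rightarrow$ (c) $\Rightarrow$ (a), using Lemma \ref{lem:standpair} as the bridge between the operator-theoretic conditions on $\sV$ and the algebraic relation for $T = e^{itK}$. The key structural input is the commutation relation in $\tilde\SL_2(\R)$ coming from $[h,k] = u$ and $e^{\pi i\,\ad h}k = -k$ (i.e.\ $\sigma_h(k)=-k$, so $(h,k)$ is an orthogonal pair of Euler elements, cf.\ Example \ref{ex:1.4}). This is what makes $\tau$ (hence $J=U(\tau_G)$) conjugate $e^{itK}$ to $e^{-itK}$, and it is what links statement (a) to the modular structure.

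\textbf{Step 1: (a) $\Leftrightarrow$ (b).} Recall $S = J\Delta_h^{-1/2} = Je^{-\pi i\partial U(h)}$ and $J^2=\1$. Since $\tau(k)=-k$ we have $U(\tau_G)e^{itK}U(\tau_G)^{-1} = e^{-itK}$, that is $Je^{itK}J = e^{-itK}$. Therefore on the relevant dense domains
\[
S\,e^{itK}\,S = J\Delta_h^{-1/2} e^{itK}\Delta_h^{-1/2} J
= J\big(\Delta_h^{-1/2} e^{itK}\Delta_h^{1/2}\big)\Delta_h^{-1} J,
\]
and I would like to massage the inclusion $S e^{itK}S \subeq e^{itK}$ into the form $\Delta_h^{-1/2}e^{itK}\Delta_h^{1/2}\subeq e^{-itK}$. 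The clean way is: the relation $Je^{itK}J=e^{-itK}$ and $J\Delta_h J = \Delta_h^{-1}$ give $S e^{itK} S = J\Delta_h^{-1/2}e^{itK}\Delta_h^{-1/2}J$; conjugating the desired inclusion $S e^{itK}S\subeq e^{itK}$ by $J$ on both sides and using $JSJ = \Delta_h^{1/2}J\Delta_h^{1/2}\cdot$(care with domains) turns it into (a). Rather than fight domain subtleties of $\Delta_h^{\pm1/2}$, I would instead observe directly that both (a) and (b) are equivalent to the single group-theoretic fact that the pair $(\Delta_h, e^{itK})$ satisfies the Borchers-type commutation relation of Theorem \ref{Borch}, and check the algebraic identity on the dense space of analytic vectors for the $\fsl_2(\R)$-action (which is $S$-invariant and $\Delta_h^{it}$-invariant), where everything is entire and manipulations are unambiguous.

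\textbf{Step 2: (b) $\Leftrightarrow$ (c).} This is exactly Lemma \ref{lem:standpair} applied to $T = e^{itK}$ for each fixed $t$: (a) of that lemma is our (b), and (b) of that lemma is $e^{-itK}\sV\cap\sV$ standard, i.e.\ our (c). The lemma also records that when these hold, $Se^{itK}S\subeq e^{itK}$ holds on $e^{-itK}\sV\cap\sV$, giving the first half of the final sentence of the Proposition; applying $J$ (and using $Je^{-itK}\sV = e^{itK}\sV' \cap \dots$, more precisely that $J$ intertwines the two sides) yields that (a) holds on $J(e^{-itK}\sV\cap\sV)$, which is the second half. I expect the main obstacle to be precisely this bookkeeping of where $J$ sends the intersection domain $e^{-itK}\sV\cap\sV$: one must use $J\sV=\sV'$, $J\Delta_h J=\Delta_h^{-1}$ and $Je^{itK}J=e^{-itK}$ to see $J(e^{-itK}\sV\cap\sV) = e^{itK}\sV'\cap\sV'$ and then relate membership there to the validity of the inclusion in (a). Everything else is a direct application of the already-established Lemma \ref{lem:standpair} together with the orthogonality relation $\sigma_h(k)=-k$ from Example \ref{ex:1.4}.
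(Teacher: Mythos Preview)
Your overall structure is correct and matches the paper's proof: (a) $\Leftrightarrow$ (b) by conjugating with $J$ (using $Je^{itK}J=e^{-itK}$ from $\tau(k)=-k$), and (b) $\Leftrightarrow$ (c) directly from Lemma~\ref{lem:standpair}. The final domain statements also follow exactly as you say: Lemma~\ref{lem:standpair} gives (b) on $e^{-itK}\sV\cap\sV$, and applying $J$ carries this to (a) on $J(e^{-itK}\sV\cap\sV)$.

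However, your execution of Step~1 is tangled and contains an algebraic slip. You write $S e^{itK} S = J\Delta_h^{-1/2} e^{itK}\Delta_h^{-1/2} J$, but the second $S$ equals $\Delta_h^{1/2}J$ (from $J\Delta_h^{-1/2}J=\Delta_h^{1/2}$), not $\Delta_h^{-1/2}J$. With the correct sign you get immediately
\[
S\,e^{itK}\,S \;=\; J\,\Delta_h^{-1/2} e^{itK} \Delta_h^{1/2}\,J,
\]
so conjugating the inclusion (b) by $J$ gives $\Delta_h^{-1/2} e^{itK} \Delta_h^{1/2}\subeq Je^{itK}J=e^{-itK}$, which is exactly (a). This is the one-line argument the paper uses. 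Your proposed detour through analytic vectors and Borchers-type relations is unnecessary; no domain subtleties arise because $J$ is a bounded bijection carrying dense sets to dense sets. The extra bookkeeping you anticipate for $J(e^{-itK}\sV\cap\sV)$ is likewise not needed: the statement of the proposition records this set as the domain, without requiring any further identification.
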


\begin{prf} (a) $\Leftrightarrow$ (b): From $\tau(k) = -k$ it follows that 
\[ J U(\exp tk)J = U(\tau_G(\exp tk)) = U(\exp(-tk)),\] 
so that conjugating with $J$ translates (a) into (b). 

\nin (b) $\Leftrightarrow$ (c) follows from Lemma~\ref{lem:standpair}.  
\end{prf}
From \cite[Thm.~1.1, Cor.~1.3(c)]{GL95} one can deduce that 
the equivalent conditions 
in Proposition~\ref{prop:sl2standcond} are satisfied 
for  principal series representations and lowest and highest weight representations, {but it is not known for complementary series representations. }

{The following theorem  shows that an  
isotone, central twisted local $G^\up$-covariant net of standard subspaces satisfying the BW property  extends is actually $G$-covariant. The argument  needs the  density property described in Proposition \ref{prop:sl2standcond} for $\widetilde\SL_2(\R)$. 
The extension is done by (HK8).The proof generalizes the argument in \cite{GL95}.}

\begin{theorem} {\rm(Extension Theorem)} \label{thm:ext}
Let $G=G^\uparrow\rtimes\{\id,\sigma\}$ be a graded Lie group, 
where $\sigma$ is  an Euler involution. 
Let $(U,\cH)$ be a unitary $C$-positive representation of $G^\uparrow$, 
$\cW_+\subeq \cG_E(G)$ be a {$G^\up$-orbit}, and 
$(\cW_+, \sN,U)$ be a net of standard subspaces satisfying {\rm(HK1-4)} 
and  the BW property {\rm(HK5)}. 
If $h_1, \ldots, h_n$, $n \geq 2$, is a pairwise orthogonal family 
of Euler elements generating the 
Lie algebra~$\g$, and the conditions in {\rm Proposition~\ref{prop:sl2standcond}} 
hold for the representations of the connected subgroups corresponding to the 
$\fsl_2$-subalgebras generated by $h_1$ and $h_j$ for $j =2,\ldots, n$, 
 then $U$ extends to an (anti-)unitary representation of $G$ such that 
$G$-covariance {\rm(HK7)}  and  modular reflection {\rm(HK8)} hold.
\end{theorem}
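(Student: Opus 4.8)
The plan is to reduce the extension problem to the case of $\widetilde{\SL_2}(\R)$ handled by Theorem~\ref{thm:sl2-ext} and then to glue the resulting antiunitary operators into a single antiunitary extension of $U$ to all of $G$. First I would fix the base wedge $W_0 = (h_1, \sigma_{W_0}) \in \cW_+$, set $\sH_0 := \sN(W_0)$ and $J_0 := J_{\sH_0}$, and recall from the Remark after Proposition~\ref{prop:bwdual} that (HK1-3) together with Theorem~\ref{Borch}(b) already give an (anti-)unitary extension of $U\restriction_{G^\up(W_0)}$ to the graded subgroup $G(W_0) = G^\up(W_0) \rtimes \{e,\sigma_{W_0}\}$ with $\tilde U(\sigma_{W_0}) = J_0$. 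The candidate for the full extension is $U^\alpha(\sigma_{W_0}) := Z_\alpha J_0$ where $\alpha \in Z(G^\up)^-$ and $Z_\alpha \in U(G^\up)'$ is the square root from (HK4)/Proposition~\ref{prop:4.14}; equivalently one extends via the modular reflection prescription (HK8). The point to prove is that the operator $J_0$ (or $Z_\alpha J_0$) conjugates $U(g)$ correctly for \emph{every} $g \in G^\up$, not merely for $g$ in the stabilizer and translation subgroups associated to $W_0$.

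The key step is the following: by Corollary~\ref{cor:gfinite}(b) — or directly by Theorem~\ref{thm:automaticsym}(ii) applied inside the semisimple part — each pair $(h_1, h_j)$, $j = 2,\dots,n$, generates a subalgebra $\fs_j \cong \fsl_2(\R)$ of $\g$ in which $h_1$ plays the role of the diagonal Euler element $h$ of \eqref{eq:handk} and $h_j$ (suitably conjugated) plays the role of $k$; the orthogonality $\sigma_{h_1}(h_j) = -h_j$ from Theorem~\ref{thm:automaticsym}(a)(iii) is exactly what makes $h_j$ the $k$-type element relative to the involution $\sigma_{W_0}$. I would then invoke Proposition~\ref{prop:sl2standcond}, which is assumed in the hypotheses for each $\fs_j$: the standardness of $e^{-itK_j}\sH_0 \cap \sH_0$ forces, via Lemma~\ref{lem:standpair}, the relation $S_0\, U(\exp t k_j)\, S_0 \subseteq U(\exp(-t k_j))$ on a dense domain, where $S_0 = J_0 \Delta_{\sH_0}^{1/2}$. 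Since $\Delta_{\sH_0}^{-is/2\pi} = U(\exp s h_1)$ by (HK5) and commutes correctly, this yields $J_0\, U(\exp t k_j)\, J_0 = U(\exp(-t k_j))$, i.e. $J_0$ implements the Euler involution $\tau_{G}$ of $\fsl_2(\R)$ restricted to $\exp(\R k_j)$. Combined with the already-known commutation relations of $J_0$ with $\exp(\R h_1)$ and with $\exp(\g_{\pm 1}(h_1))$ (from the $G(W_0)$-extension), and using that $h_1$ together with the elements $k_j$ (hence the whole $\fsl_2$-triples $h_1, k_j, [h_1,k_j]$) generate $\g$, one concludes that $\mathrm{Ad}(J_0)$ agrees on all of $G^\up$ (at least on a generating set of one-parameter subgroups, hence on the connected group) with the automorphism $\sigma_G(g) = \sigma_{W_0} g \sigma_{W_0}$. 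Therefore $g \mapsto U(g)$, $\sigma_{W_0} \mapsto J_0$, extends to an (anti-)unitary representation of $G(W_0) = G$; twisting by $Z_\alpha$ gives the extension $U^\alpha$ satisfying (HK8) by construction.

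Once $U^\alpha$ is in hand, (HK7) follows from the fact that the BGL net attached to an (anti-)unitary representation is automatically $G$-equivariant: the net $\sN$ satisfies $\sN(W) = \Fix(J_{\sN(W)}\Delta_{\sN(W)}^{1/2})$ with $(\Delta_{\sN(W)}, J_{\sN(W)})$ determined by $U^\alpha \circ \gamma_W$, exactly as in Proposition~\ref{prop:BGLpart2}; here one uses that $(\cW_+,\sN,U)$ already satisfies the BW property (HK5), so that $\sN$ coincides on $\cW_+$ with the BGL net of $U^\alpha$, and the computation $\sN(g *_\alpha W) = U^\alpha(g)\sN(W)$ for $g \in G^\down$ is the same one carried out in the proof of Proposition~\ref{prop:BGLpart2}, with \eqref{eq:alpha-act2} and the $G^\up$-covariance (HK2) providing the bookkeeping of the central twist. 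I would state (HK7) and (HK8) as the conclusion and remark that the choice of $\alpha$ is immaterial by Lemma~\ref{lem:twistact}(f).

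The main obstacle is the density/standardness input isolated in Proposition~\ref{prop:sl2standcond}: it is precisely the step where one needs, for each $\fsl_2$-subalgebra $\fs_j$, that $e^{-itK_j}\sH_0 \cap \sH_0$ is standard, equivalently the ``positivity of the commutator'' $S_0 e^{itK_j} S_0 \subseteq e^{itK_j}$, which by the discussion after Proposition~\ref{prop:sl2standcond} is known for principal series and for lowest/highest weight representations but \emph{open for complementary series}. This is exactly why the theorem carries it as an explicit hypothesis rather than deriving it; everything else — the $\fsl_2$-structure of orthogonal Euler pairs (Section~\ref{app:b.1}), the existence of $Z_\alpha$ (Proposition~\ref{prop:4.14}), the $G$-equivariance of BGL nets — is already available in the excerpt and the gluing argument above is then essentially formal.
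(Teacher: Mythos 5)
Your proposal follows essentially the same strategy as the paper's proof: reduce to the $\fsl_2(\R)$-subalgebras generated by each orthogonal pair $(h_1,h_j)$ via Corollary~\ref{cor:gfinite}, use Proposition~\ref{prop:sl2standcond} to upgrade the dense-domain inclusion to the bounded-operator identity $J_0\,U(\exp tk_j)\,J_0 = U(\exp(-tk_j))$, glue using the hypothesis that $h_1,\dots,h_n$ generate $\g$, and then invoke Proposition~\ref{prop:BGLpart2} (after twisting by $Z_\alpha$) to obtain (HK7) and (HK8). One small slip: Lemma~\ref{lem:standpair} with $T = U(\exp tk_j)$ yields $S_0\,T\,S_0 \subseteq T$ (same operator on both sides), not $S_0\,T\,S_0 \subseteq U(\exp(-tk_j))$; your final conclusion is nevertheless correct once this is combined with Proposition~\ref{prop:sl2standcond}(a) exactly as in the paper.
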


\begin{proof} Let $(\cW_+,\sN,U)$ be a 
net of standard subspaces satisfying (HK1-5).
The Bisognano--Wichmann property (HK5) implies  
Central Haag Duality (HK6) by Proposition \ref{prop:bwdual}. 
Let  $H_j:=-i\partial U(h_j)$ be the selfadjoint generators of the 
unitary one-parameter group corresponding to~$h_j$.
By Corollary~\ref{cor:gfinite}, every pair $(h_1,h_j)$ 
generates a subalgebra isomorphic to $\fsl_2(\RR)$ and the generators 
$H_1$ and $H_j$ integrate to a representation  of $\tilde\SL_2(\R)$.  
Consider the Euler wedges 
$W_1, W_j\in \cW_+$ 
associated to $h_1$ and $h_j$, respectively. 

We claim that Proposition \ref{prop:sl2standcond} implies that 
$U(\sigma_{h_1}) := J_{\sN(W_1)}$ 
associated to the standard subspace $\sN(W_1)$ 
extends the $\tilde\SL_2(\R)$-representation to 
an (anti-) unitary representation of $\tilde\PGL_2(\R)$. 
Indeed, by Proposition~\ref{prop:sl2standcond}(b) we have that
\begin{equation}\label{eq:ext1}
\Delta_{h_1}^{-1/2} e^{itH_j} \Delta_{h_1}^{1/2} \subset J_{\sN(W_1)} e^{itH_j} J_{\sN(W_1)} 
\end{equation} 
on the dense domain $J_{\sN(W_1)}(e^{-itH_j} \sV \cap \sV)$ 
with $\sV=\Fix(J_{\sN(W_1)}\Delta_{\sN(W_1)}^{1/2}) = \sN(W_1)$, 
cf.~condition (c) in Proposition~\ref{prop:sl2standcond}. 
On the previous domain we then have 
\[{ \Delta_{h_1}^{-1/2} e^{itH_j} \Delta_{h_1}^{1/2} 
\subset U(\sigma_{h_1})e^{itH_j} U(\sigma_{h_1})}.\] 
With Proposition~\ref{prop:sl2standcond}(a) we can now conclude that 
\begin{equation}\label{eq:ext2}
U(\sigma_{h_1})e^{itH_j} U(\sigma_{h_1})=e^{-itH_j}\quad \mbox{ for }  \quad t\in\RR
\end{equation}
because both sides are  bounded operators which coincide on a 
dense subspace. 
Since the Lie algebra $\g$ is generated by $h_1,\ldots,h_n,$ we obtain 
\begin{equation}\label{eq:ext3}
U(\sigma_{h_1})U(g)U(\sigma_{h_1})=U(\sigma_{h_1}g\sigma_{h_1})
\quad \mbox{ for all } \quad g\in G^\uparrow.\end{equation}
In particular, $U$ defines an (anti-)unitary representation of~$G$. 
Pick $\alpha \in Z(G^\up)^-$ such that (HK4) is satisfied 
and consider the twisted representation of $G$ defined by 
$U^\alpha(\sigma_{h_1}):=Z_\alpha J_{\sN(W_1)} = Z_\alpha U(\sigma_{h_1})$. 
Since $\sN$ coincides with the restriction to $\cW_+$ of  
the BGL net of the (anti-)unitary representation $U$ of $G$, 
the representation $U^\alpha$ satisfies (HK7) and (HK8) 
by Proposition~\ref{prop:BGLpart2}.
\end{proof}

Note that the density property as well as the existence of orthogonal wedges are sufficient but not necessary to have a $G$-covariant action: 
Consider the BGL net associated to the unique irreducible positive energy representation $U$ of the $G = \Aff(\R)$ on the real line. 
Then the standard 
subspaces $\sN_U(a,\infty)$ and $\sN_U(-\infty, b)$  are associated to positive and negative half-lines and satisfy (HK1)-(HK5). There are no-orthogonal wedges in this 
case but the extension to an  (anti-)unitary 
representation of $G$ is given by
$$U(\sigma_W)=J_{\sN_U(W)}.$$
We further remarks that in this case $\sigma_W$ does not preserve the wedge family $\cW_+$.

For the Poincar\'e group, with the identification of wedge regions 
and Euler elements (see \eqref{eq:abs-conc-wedge-poincare}), 
the axial wedges 
\[ W_j=\{(t,x)\in\RR^{1+d}:|t|<x_j\}, \quad j=1,\ldots,d,\] 
define a family of {orthogonal wedge regions, namely wedge regions associated to orthogonal Euler elements.} 
Considering wedges as subsets of Minkowski spaces one can define further regions by wedge intersection. Spacelike cones are particularly important: they are defined, up to translations by finite intersection of  wedges obtained by 
Lorentz transforms of $W_1$. Analogously one can define, by intersecting wedge subspaces, subspace associated to any spacelike cone. In principle this can also be  trivial, but if they are standard, the cyclicity assumption of \ref{prop:sl2standcond}(c) is ensured, cf.~\cite{GL95}. 

Consider $G=\widetilde\Mob\rtimes\{\id,\tilde\tau\}$. Let $(\cW_+,U,\sN)$ be a net of standard subspaces satisfying (HK1)-(HK5). Let $\tilde I_\supset \subeq \R$ 
be an interval with 
$q(\tilde I_\supset)=I_\supset$ where the latter is the right semicircle 
with endpoints $(-i,i)\subset\bS^1$. Then the dilation generators $\tilde \delta_\cap$ and $\tilde \delta_\supset$ define orthogonal Euler elements generating $\widetilde{\Mob}$. Considering the wedges $W_\cap=(x_\cap, \sigma_\cap)$ and  $W_\supset=(x_\supset, \sigma_\supset)$ with $W_\cap=\tilde\rho(\pi/2)W_\supset$,
 the intersection is again a wedge interval $\tilde I=\tilde I_\cap\cap\tilde I_\supset$. In particular, by isotony,  $\sN(\tilde I_\cap)\cap\sN(\tilde I_\supset)\supset \sN(\tilde I)$ is standard and condition (c) in Proposition \ref{prop:sl2standcond} holds.

\appendix
\section{Toolbox} 
\mlabel{app:a}

\begin{prop} \mlabel{prop:antiuniext}  
{\rm(\cite[Thm.~2.11(a)]{NO17})} 
If $(U,\cH)$ is a unitary representation of $G^\up$, then any two 
(anti-)unitary extensions $(\tilde U_j, \cH)$, $j =1,2$, of $U$ to $G$ 
are unitarily 
equivalent, i.e., there exists $\Gamma \in U(G^\up)'$ with 
\[ \Gamma \circ \tilde U_1(g) = \tilde U_2(g) \circ \Gamma \quad \mbox{ for } \quad 
g \in G.\] 
\end{prop}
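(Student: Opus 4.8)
The plan is to translate the statement into a question about the two antiunitary operators implementing the grading. Fix $\sigma_0 \in G^\down$, so that $G = G^\up \,\dot\cup\, G^\up\sigma_0$, and set $J_i := \tilde U_i(\sigma_0)$ for $i = 1,2$. Each $J_i$ is antiunitary with $J_i^2 = U(\sigma_0^2)$ and $J_i U(g) J_i^{-1} = U(\sigma_0 g\sigma_0^{-1})$ for $g \in G^\up$, and $\tilde U_i$ is recovered from $U\res_{G^\up}$ together with $J_i$. Writing $\cM := U(G^\up)'$, the key observation is that a unitary $\Gamma \in \cM$ intertwines $\tilde U_1$ and $\tilde U_2$ on all of $G$ precisely when $\Gamma J_1 = J_2\Gamma$: for $g \in G^\up$ this is the statement $\Gamma \in \cM$, and for $g = g_0\sigma_0 \in G^\down$ one gets $\Gamma\tilde U_1(g) = \Gamma U(g_0)J_1 = U(g_0)\Gamma J_1 = U(g_0)J_2\Gamma = \tilde U_2(g)\Gamma$. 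So everything reduces to conjugating $J_1$ into $J_2$ by a unitary from $\cM$.

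Next I would set up the relevant algebraic structure. Since $J_1$ and $J_2$ implement the \emph{same} automorphism of $G^\up$ in the representation $U$, the unitary $N := J_1^{-1}J_2$ lies in $\cM$. Moreover $\beta(x) := J_1 x J_1^{-1}$ restricts to an antilinear $*$-automorphism of $\cM$, because $\beta(U(g)) = U(\sigma_0 g\sigma_0^{-1}) \in U(G^\up)$, and it is \emph{involutive} on $\cM$: $\beta^2(x) = U(\sigma_0^2)\, x\, U(\sigma_0^2)^{-1} = x$ for $x \in \cM$, since $U(\sigma_0^2) \in U(G^\up)$ commutes with $\cM$. From $J_2^2 = J_1^2$ and $J_2 = J_1 N$ one extracts $NJ_1N = J_1$, hence $\beta(N) = N^{-1} = N^*$. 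A short rewriting, using $J_2 = J_1 N$ and $\beta^2\res_{\cM} = \id$, then shows that the condition $\Gamma J_1 = J_2\Gamma$ is equivalent to the cocycle-type equation $\beta(\Gamma)\Gamma^{-1} = N$ for a unitary $\Gamma \in \cM$.

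The heart of the matter is then solving $\beta(\Gamma)\Gamma^{-1} = N$, which I would do by producing a $\beta$-symmetric unitary square root: a unitary $T \in \cM$ with $T^2 = N$ and $\beta(T) = T^{-1}$, whereupon $\Gamma := T^{-1}$ works, since $\beta(T^{-1})(T^{-1})^{-1} = \beta(T)^{-1}T = T^2 = N$. Such a $T$ comes from the Borel functional calculus: pick a measurable square-root function $s$ on the circle $\T$ (e.g. $s(e^{i\theta}) := e^{i\theta/2}$ for $\theta \in (-\pi,\pi]$), and set $T := s(N)$, which lies in $\cM$ and satisfies $T^2 = N$. Because $\beta$ is a normal antilinear $*$-automorphism with $\beta(N) = N^{-1}$, one has $\beta(f(N)) = \overline f(N)$ for every bounded Borel function $f$ on $\spec(N)$, where $\overline f(z) := \overline{f(z)}$; and since $s$ takes values in $\T$ this yields $\beta(T) = \overline s(N) = s(N)^{-1} = T^{-1}$. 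With automatic equivalence on $G^\up$, the unitary $\Gamma = T^{-1} \in U(G^\up)'$ is then the required intertwiner of $\tilde U_1$ and $\tilde U_2$.

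I expect the only genuinely delicate points to be the two structural facts feeding into the square-root step: first, that $\beta$ descends to an \emph{involutive} antilinear automorphism of $\cM$ — this is exactly where one must use that $\sigma_0^2 \in G^\up$, so $U(\sigma_0^2)$ is central in $\cM$; and second, the identity $\beta(f(N)) = \overline f(N)$, which makes the choice of square-root branch harmless — in particular there is no obstruction at the point $-1 \in \spec(N)$, because the square root is taken valued in $\T$ rather than being required to commute with complex conjugation on $\T$. Everything else — that $N \in \cM$, the rewriting of $\Gamma J_1 = J_2\Gamma$ into $\beta(\Gamma)\Gamma^{-1} = N$, and the final check that a unitary $\Gamma \in \cM$ with $\Gamma J_1 = J_2\Gamma$ intertwines the two $G$-representations — is routine. (If one prefers, the whole square-root step can be cited from a standard lemma in the toolbox on symmetric logarithms of unitaries in a von Neumann algebra.)
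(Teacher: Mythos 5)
Your proof is correct and follows the route that the cited reference \cite[Thm.~2.11(a)]{NO17} takes; the paper itself does not reprove it. The reduction to conjugating $J_1$ into $J_2$ by a unitary in $\cM = U(G^\up)'$, the observation that $\beta := J_1(\cdot)J_1^{-1}$ is an involutive antilinear $*$-automorphism of $\cM$ with $\beta(N) = N^{-1}$ for $N := J_1^{-1}J_2$, the rewriting of $\Gamma J_1 = J_2\Gamma$ as $\beta(\Gamma)\Gamma^{-1} = N$, and the resolution via a $\beta$-anti-invariant unitary square root of $N$ is precisely the standard argument — and that last step is exactly Lemma~\ref{lem:app.1}(c) in the toolbox appendix, which you have correctly rederived via Borel functional calculus and the identity $\beta(f(N)) = \overline{f}(N)$.
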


\begin{lem} \mlabel{lem:symspace} 
Let $\cM \subeq B(\cH)$ be a von Neumann algebra 
and $J \in \Conj(\cH)$ such that $J \cM J = \cM$. 
Then the exponential function of the Banach symmetric space 
\[ \U(\cM)^{J,-} := \{ U \in \U(\cM) \: J U J = U^{-1} \}  \] 
is surjective, i.e., for every $U \in \U(\cM)^{J,-}$ there exists 
an element $X = - X^* \in \cM$ with $J X J = - X$ such that $U = e^X$. 
\end{lem}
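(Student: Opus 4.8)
The goal is to show surjectivity of the exponential map for the Banach symmetric space $\U(\cM)^{J,-}$. The plan is to reduce the statement to a functional–calculus construction inside the abelian von Neumann algebra generated by a single unitary $U \in \U(\cM)^{J,-}$, and then to take care, separately, of the $J$-symmetry condition.

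First I would fix $U \in \U(\cM)^{J,-}$ and let $\cA := \{U\}'' \subeq \cM$ be the (abelian) von Neumann algebra generated by $U$. By the spectral theorem, $U = \int_{\T} z\, dP(z)$ for a projection-valued measure $P$ on the unit circle, supported in $\cA$. Since $-1$ may lie in the spectrum, one cannot simply take a continuous logarithm on all of $\T$; instead I would choose the branch of $\log$ on $\T \setminus \{-1\}$ with values in $i(-\pi,\pi)$ and set $X_0 := \log U = \int_{\T} (\log z)\, dP(z)$, which is a skew-adjoint element $X_0 = -X_0^* \in \cA \subeq \cM$ with $e^{X_0} = U$. (The possible atom at $-1$ contributes $i\pi$ times a projection, which is still skew-adjoint, so there is no obstruction to existence of \emph{some} skew-adjoint logarithm.) This already gives surjectivity of $\exp$ onto $\U(\cM)$; the remaining work is to arrange $JXJ = -X$.

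The key step is to symmetrize $X_0$. Conjugation by $J$ is a conjugate-linear $*$-automorphism of $\cM$, and since $JUJ = U^{-1}$, it maps $\cA$ onto itself and sends $U$ to $U^{-1}$. Applying $J(-)J$ to the integral representation and using conjugate-linearity, one checks that $J X_0 J = \int_{\T} \overline{\log z}\, dP(z^{-1})$. Because the chosen branch satisfies $\overline{\log z} = -\log z = \log(z^{-1})$ for $z \in \T\setminus\{-1\}$ (and the atom at $-1$ is handled by $\overline{i\pi} = -i\pi = \log((-1)^{-1})$ with this branch convention, or one can absorb a $\pm i\pi$ ambiguity harmlessly), we get $J X_0 J = \int_{\T} \log(w)\, dP(w) = X_0$ — wait, that is the wrong sign, so I must instead observe that the branch with $\overline{\log z} = \log(z^{-1})$ is precisely what is needed, giving $J X_0 J = X_0$ only if I used the wrong substitution; the correct computation yields $J X_0 J = -X_0$ directly once the substitution $z \mapsto z^{-1}$ in $dP$ is combined with $\overline{\log z} = \log(\bar z) = \log(z^{-1})$. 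Thus $X := X_0$ already satisfies $J X J = -X$, $X = -X^*$, and $e^X = U$.

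The main obstacle is the careful bookkeeping of the branch of the logarithm on $\T$ — in particular the behaviour at the point $-1$ where no continuous branch exists and where $J$ acts. I would resolve this by splitting $P = P(\{-1\}) + P(\T\setminus\{-1\})$: on the complement, the continuous branch $\arg \in (-\pi,\pi)$ gives a skew-adjoint logarithm with the required $J$-antisymmetry by the substitution argument above; on the atom $Q := P(\{-1\})$, which is a projection fixed by $J(-)J$ since $JUJ = U^{-1}$ fixes the $-1$ eigenspace setwise and $J$ is antiunitary, the element $i\pi Q$ is skew-adjoint and satisfies $J(i\pi Q)J = -i\pi\, JQJ = -i\pi Q$, so it too lies in $\U(\cM)^{J,-}$'s Lie algebra. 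Adding the two pieces yields the desired $X$. This completes the proof; note that one could alternatively cite the general theory of exponential maps of Banach--Lie symmetric spaces, but the explicit functional-calculus argument is self-contained and is what I would write out.
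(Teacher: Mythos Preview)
Your proof is correct and follows essentially the same approach as the paper: reduce to the abelian von Neumann algebra $\cN$ generated by $U$ and produce a skew-adjoint logarithm via Borel functional calculus. The paper's execution is cleaner, though: since $J(\cdot)J$ is an antilinear multiplicative map on $\cN$ sending $U$ to $U^* = U^{-1}$, it coincides with the $*$-operation on all of $\cN$, so \emph{any} skew-adjoint $X = f(U) \in \cN$ with $e^X = U$ automatically satisfies $JXJ = X^* = -X$; this bypasses your branch bookkeeping and the separate treatment of the atom at~$-1$.
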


\begin{prf} We consider the antilinear automorphism 
\[ \alpha \: \cM \to \cM, \quad 
\alpha(M) := J M J \] 
of the von Neumann algebra $\cM$. 
Let $\cN \subeq \cM$ be the abelian von Neumann algebra generated 
by a fixed element~$U \in \U(\cM)^{J,-}$. 
Then $\alpha(U) = U^{-1} = U^*$ implies that 
$\alpha(\cN) = \cN$ with $\alpha(A) = A^*$ for every $A \in \cN$. 
Any spectral resolution of $U$ in $\cN$ and any bounded measurable function 
$f \: \T \to i\R$ with $e^{f} = \id_\T$ yields an element 
$X := f(U)\in \cN$ with $X^* = -X$ and $e^{X} = U$. 
Then $J X J = \alpha(X) = X^* = - X.$
\end{prf}

The following lemma is \cite[Lemma~A.1]{NO17}: 

\begin{lemma} \mlabel{lem:app.1}
Let $\cM\subeq \cH$ be a von Neumann algebra, 
$\alpha \: \cM \to \cM$ a real-linear weakly continuous automorphism and $U \in \U(\cM)$ be a unitary element. Then 
the following assertions hold: 
\begin{itemize}
\item[\rm(a)] If $\alpha$ is complex linear and $\alpha(U)= U$, 
then there exists a $V \in \U(\cM)$ with $\alpha(V) = V$ and $V^2 = U$.
\item[\rm(b)] If $\alpha$ is complex linear and $\alpha(U)= U^{-1}$ 
with $\ker(U+\1) = \{0\}$, then there exists a $V \in \U(\cM)$ with 
$\alpha(V) = V^{-1}$ and $V^2 = U$.
\item[\rm(c)] If $\alpha$ is antilinear and $\alpha(U)= U^{-1}$, 
then there exists a $V \in \U(\cM)$ with $\alpha(V) = V^{-1}$ and $V^2 = U$.
\item[\rm(d)] If $\alpha$ is antilinear and $\alpha(U)= U$  
with $\ker(U+\1) = \{0\}$, then there exists a $V \in \U(\cM)$ with 
$\alpha(V) = V$ and $V^2 = U$.
\end{itemize}
\end{lemma}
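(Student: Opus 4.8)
The statement in question is Lemma~\ref{lem:app.1}, which lists four square-root-extraction results inside a von Neumann algebra $\cM$ with a real-linear weakly continuous automorphism $\alpha$. Let me think about how to prove it.

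**Approach.** The plan is to use functional calculus inside an abelian von Neumann subalgebra chosen to be $\alpha$-invariant, reducing everything to a choice of a measurable logarithm-like branch on the unit circle $\T$. The key observation is that in every case we want $V$ in the abelian von Neumann algebra $\cN$ generated by $U$ (and its spectral projections), so that $V$ automatically commutes with $U$; then the constraint on $\alpha(V)$ is imposed by choosing the branch of the square root so that it intertwines correctly with the (complex-linear or antilinear) action of $\alpha$ on $\cN$.

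**Key steps.**

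For (a): Since $\alpha$ is complex linear and $\alpha(U)=U$, the abelian von Neumann algebra $\cN = \{U\}''$ is $\alpha$-invariant and $\alpha$ fixes $U$ and hence (by weak continuity and the double commutant) fixes every element of $\cN$. Writing $U = \int_{\T} z \, dE(z)$ via the spectral theorem, pick any Borel branch $s\:\T\to\T$ of the square root, i.e.\ $s(z)^2=z$ (for instance $s(e^{i\theta})=e^{i\theta/2}$ for $\theta\in[0,2\pi)$), and set $V := s(U) = \int_\T s(z)\,dE(z) \in \cN$. Then $V\in\U(\cM)$, $V^2=U$, and $\alpha(V)=V$ since $V\in\cN$. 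This is the base case.

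For (c): Here $\alpha$ is antilinear with $\alpha(U)=U^{-1}=U^*$. Again $\cN=\{U\}''$ is $\alpha$-invariant; antilinearity plus $\alpha(U)=U^*$ gives $\alpha(A)=A^*$ for every $A\in\cN$ (check on polynomials in $U,U^*$, then pass to the weak closure). Now I need a Borel branch $s\:\T\to\T$ with $s(z)^2=z$ and $\overline{s(z)} = s(\bar z)$, because then $V:=s(U)$ satisfies $\alpha(V)=V^*=\overline{s}(U)$ evaluated appropriately $=s(\bar U)\cdots$; concretely the condition $\overline{s(z)}=s(\bar z)$ forces $\alpha(V)=s(U^*)$, and one checks $s(U^*)=V^{-1}$ using $s(\bar z)s(z) = s(\bar z z)\cdot(\text{sign factor})$. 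The branch $s(e^{i\theta})=e^{i\theta/2}$, $\theta\in(-\pi,\pi]$, has $\overline{s(e^{i\theta})}=e^{-i\theta/2}=s(e^{-i\theta})$ for $\theta\in(-\pi,\pi)$, so the symmetry holds off a single point, which has measure zero for the relevant spectral measure only if $-1\notin\spec(U)$ — but here we do \emph{not} assume $\ker(U+\1)=\{0\}$. The fix: at the point $z=-1$ the two square roots are $\pm i$, which are swapped by conjugation, so simply choose $V$ on the eigenspace $\ker(U+\1)$ to be $iP_{-1}$ plus its orthogonal complement contribution — wait, that breaks $\alpha(V)=V^{-1}$ on that piece since $\alpha(iP_{-1}) = -iP_{-1} = (iP_{-1})^{-1}$, which is exactly what we want. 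So on $\ker(U+\1)$ take $V=iP$; there $V^2=-P=U|_{\ker(U+\1)}$ and $\alpha(V)=-iP=V^{-1}$. Good. On the complement use the symmetric branch. This handles (c) without the extra hypothesis.

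For (b): $\alpha$ complex linear, $\alpha(U)=U^{-1}$, $\ker(U+\1)=\{0\}$. Then $\cN$ is $\alpha$-invariant and $\alpha$ acts on $\cN\cong C(\spec U)$-generated algebra by the map induced by $z\mapsto \bar z$ on $\T$ (since $\alpha(U)=U^{-1}=\bar U$ in the abelian picture) — i.e.\ $\alpha(f(U))=f(\bar U)$ for continuous $f$, extended weakly. I need a Borel branch $s\:\T\to\T$ with $s(z)^2=z$ and $s(\bar z)=\overline{s(z)}= s(z)^{-1}$, so that $\alpha(V)=s(\bar U)=s(U)^{-1}=V^{-1}$. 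The branch $s(e^{i\theta})=e^{i\theta/2}$, $\theta\in(-\pi,\pi]$, satisfies $s(\overline{e^{i\theta}})=s(e^{-i\theta})=e^{-i\theta/2}=s(e^{i\theta})^{-1}$ for all $\theta\in(-\pi,\pi)$; the only failure is at $z=-1$, which is excluded by $\ker(U+\1)=\{0\}$. Hence $V:=s(U)$ works.

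For (d): $\alpha$ antilinear, $\alpha(U)=U$, $\ker(U+\1)=\{0\}$. Now $\alpha$ acts on $\cN$ by $f(U)\mapsto \overline{f}(U)$... more precisely, antilinearity and $\alpha(U)=U$ force $\alpha(f(U)) = \bar f(\bar U)$ in the functional-calculus picture, i.e.\ the transformation $f\mapsto \overline{f\circ(\bar\cdot)}$ on functions on $\spec U\subeq\T$. I need $s$ with $s^2=\mathrm{id}$ and $\overline{s(\bar z)}=s(z)$, so that $\alpha(V)=V$. Check $s(e^{i\theta})=e^{i\theta/2}$, $\theta\in(-\pi,\pi]$: $\overline{s(\overline{e^{i\theta}})}=\overline{s(e^{-i\theta})}=\overline{e^{-i\theta/2}}=e^{i\theta/2}=s(e^{i\theta})$ for $\theta\in(-\pi,\pi)$; again the sole exception is $z=-1$, excluded by hypothesis. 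So $V:=s(U)$ does the job.

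**Main obstacle.** The only genuinely delicate point is the behavior at the eigenvalue $-1$ of $U$: the canonical principal branch of $\sqrt{\cdot}$ on $\T$ is discontinuous precisely there, and the two square roots $\pm i$ of $-1$ are interchanged by complex conjugation. In cases (b) and (d) this is why the hypothesis $\ker(U+\1)=\{0\}$ is needed and there is nothing to do. In case (c) — where no such hypothesis is available — one must handle $\ker(U+\1)$ separately, and there the ``bad'' branch point is actually \emph{favorable}: on that spectral subspace $\alpha(iP)=-iP=(iP)^{-1}$ automatically, so picking $V=iP$ there and the symmetric branch off it assembles the required $V$. Everything else is routine spectral calculus: verifying that $\cN=\{U\}''$ is $\alpha$-invariant (weak continuity of $\alpha$ plus density of polynomials), identifying the induced action of $\alpha$ on $\cN$, and checking $V^2=U$ and $V\in\U(\cM)$. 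The whole lemma is essentially a bookkeeping exercise once one fixes the branch; I would present (a) and (c) first as the ``no extra hypothesis'' cases and then (b), (d) as the easier ones where the branch point is simply absent from the spectrum.

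Since this lemma is quoted from \cite[Lemma~A.1]{NO17}, I would in practice just cite that reference; but the self-contained argument above via $\{U\}''$ and a measurable branch of the square root is the natural proof.
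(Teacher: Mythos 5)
The paper does not prove this lemma at all: it is quoted verbatim from \cite[Lemma~A.1]{NO17} immediately after the statement, so there is no in-paper argument to compare against. Your proposal therefore stands or falls on its own, and it is essentially correct. The strategy---work inside $\cN := \{U\}''$, identify the induced action of $\alpha$ on $\cN$ via the spectral theorem as one of the four point transformations $z\mapsto z$ or $z\mapsto\bar z$ (with or without complex conjugation of the values), and then pick a Borel branch $s\colon\T\to\T$ of the square root compatible with that transformation---is the natural route, and it correctly isolates the role of the hypothesis $\ker(U+\1)=\{0\}$ in (b) and (d): the principal branch of $\sqrt{\cdot}$ is discontinuous at $-1$, and the relevant compatibility $s(\bar z)=\overline{s(z)}$ (resp.\ $\overline{s(\bar z)}=s(z)$) fails exactly there, which matters if and only if $E(\{-1\})\neq 0$.

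One remark on (c), where your presentation is more circuitous than necessary. Once you observe (as you do) that $\alpha$ is an antilinear $*$-automorphism sending $U$ to $U^{*}$, it follows on monomials $U^{m}(U^{*})^{n}$---and hence, by antilinearity and weak continuity, on all of the abelian algebra $\cN$---that $\alpha(A)=A^{*}$. Consequently \emph{any} unitary $V\in\cN$ with $V^{2}=U$ automatically satisfies $\alpha(V)=V^{*}=V^{-1}$; no branch symmetry condition, and no separate treatment of $\ker(U+\1)$, is needed. The special handling at $z=-1$ that you insert is therefore a detour: it re-derives, on a single spectral fiber, the identity $\alpha(V)=V^{*}$ that already holds globally. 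Your argument does reach the correct conclusion, but it would be cleaner to open case (c) with the observation $\alpha|_{\cN}=\ast$ and take $V=s(U)$ for the same branch used in (a). The rest of the reasoning in (a), (b), (d)---including the verification that $\cN$ is $\alpha$-invariant by weak continuity, and that $V\in\U(\cM)$ with $V^{2}=U$---is correct.
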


\end{document}